\newtheorem{lemma}{Lemma}
\newtheorem{theorem}{Theorem}
\newtheorem{conjecture}{Conjecture}
\newtheorem{corollary}{Corollary}
\newenvironment{myremark}{$\triangleleft$ Remark:}{$\triangleright$}
\theoremstyle{remark}
\newcommand{\dd}{\mathrm{d}}
\newcommand{\ii}{\mathrm{i}}
\newcommand{\ee}{\mathrm{e}}
\newcommand{\del}{\partial}
\newcommand{\delbar}{\overline{\partial}}
\newcommand{\re}{\mathrm{Re}}
\newcommand{\imag}{\mathrm{Im}}
\title{A Study of the Direct Spectral Transform for the Defocusing Davey-Stewartson II Equation in the Semiclassical Limit }
\author{O. Assainova}
\address{Institut de Math\'ematiques, Universit\'e de Bourgogne, Dijon, France}
\author{C. Klein} 
\address{Institut de Math\'ematiques, Universit\'e de Bourgogne, Dijon, France}
\author{K. D. T.-R. McLaughlin}
\address{Department of Mathematics, Colorado State University, Fort Collins, CO}
\author{P. D. Miller}
\address{Department of Mathematics, University of Michigan, Ann Arbor, MI}
\date{\today}
\begin{document}
\begin{abstract}  
The defocusing Davey-Stewartson II equation has been shown in numerical experiments to exhibit behavior in the semiclassical limit that qualitatively resembles that of its one-dimensional reduction, the defocusing nonlinear Schr\"odinger equation, namely the generation from smooth initial data of regular rapid oscillations occupying domains of space-time that become well-defined in the limit.  As a first step to study this problem analytically using the inverse-scattering transform, we consider the direct spectral transform for the defocusing Davey-Stewartson II equation for smooth initial data in the semiclassical limit.  The direct spectral transform involves a singularly-perturbed elliptic Dirac system in two dimensions.  We introduce a WKB-type method for this problem, prove that it makes sense formally for sufficiently large values of the spectral parameter $k$ by controlling the solution of an associated nonlinear eikonal problem, and we give numerical evidence that the method is accurate for such $k$ in the semiclassical limit.  Producing this evidence requires both the numerical solution of the singularly-perturbed  Dirac system and the numerical solution of the eikonal problem.  The former is carried out using a method previously developed by two of the authors and we give in this paper a new method for the numerical solution of the eikonal problem valid for sufficiently large $k$.  For a particular potential we are able to solve the eikonal problem in closed form for all $k$, a calculation that yields some insight into the failure of the WKB method for smaller values of $k$.  Informed by numerical calculations of the direct spectral transform we then begin a study of the singularly-perturbed  Dirac system for values of $k$ so small that there is no global solution of the eikonal problem.  We provide a rigorous semiclassical analysis of the solution for real radial potentials at $k=0$, which yields an asymptotic formula for the reflection coefficient at $k=0$ and suggests an annular structure for the solution that may be exploited when $k\neq 0$ is small.  The numerics also suggest that for some potentials the reflection coefficient converges pointwise as $\epsilon\downarrow 0$ to a limiting function that is supported in the domain of $k$-values on which the eikonal problem does not have a global solution.  It is expected that singularities of the eikonal function play a role similar to that of turning points in the one-dimensional theory.
\end{abstract}
\maketitle

\section{Introduction}
By the semiclassical limit for the defocusing Davey-Stewartson II (DS-II) equation\footnote{A more physically relevant way to write \eqref{eq:DSII} is to introduce the real-valued mean flow $\Phi:=\re(g)-|q|^2$.  Then applying $\del$ to the second equation in \eqref{eq:DSII} and taking the real part one arrives at the form
\[
\begin{split}
\ii\epsilon q_t + \epsilon^2(q_{xx}-q_{yy}) + 2(\Phi + |q|^2)q&=0\\
\Phi_{xx}+\Phi_{yy} +2(|q|^2)_{xx}&=0.
\end{split}
\]%
} we mean the following
Cauchy initial-value problem parametrized by $\epsilon>0$:
\begin{equation}
\begin{split}
\ii\epsilon q_t + 2\epsilon^2\left(\delbar^2+\del^2\right)q + \left(g+\overline{g}\right)q&=0\\
\delbar g +\del\left(|q|^2\right)&=0,
\end{split}
\label{eq:DSII}
\end{equation}
for a complex-valued field $q=q^\epsilon (x,y,t)$ where
\begin{equation}
\del:=\frac{1}{2}\left(\frac{\partial}{\partial x}-\ii\frac{\partial}{\partial y}\right)\quad\text{and}\quad
\delbar:=\frac{1}{2}\left(\frac{\partial}{\partial x}+\ii\frac{\partial}{\partial y}\right),
\label{eq:d-dbar}
\end{equation}
subject to an initial condition of  ``oscillatory wavepacket'' or ``WKB'' form:
\begin{equation}
q^\epsilon(x,y,0)=A(x,y)\ee^{\ii S(x,y)/\epsilon},\quad A>0,\quad S\in\mathbb{R}.
\label{eq:initial-data}
\end{equation}
Here $A$ and $S$ are functions independent of $\epsilon$, and the precise meaning of the equation $\delbar g + \del (|q|^2)=0$ in \eqref{eq:DSII} is that for each $t\ge 0$, $g$ is eliminated from the equation governing $q$ by the solid Cauchy transform:  $g=-\delbar^{-1}(\del (|q|^2))$, where for a suitable function $F:\mathbb{R}^2\to\mathbb{C}$,
\begin{equation}
\overline{\partial}^{-1}F(x,y):=-\frac{1}{\pi}\iint_{\mathbb{R}^2}\frac{F(x',y')\,\dd \mathbb{A}(x',y')}{(x'-x)+\ii (y'-y)},
\label{eq:solid-Cauchy}
\end{equation}
where $\dd\mathbb{A}(x,y)$ denotes the area differential in the plane.
The dependence on $\epsilon>0$ in this problem enters both through the phase factor in the initial data and the coefficients of the DS-II equation.  

The initial-value problem \eqref{eq:DSII}--\eqref{eq:initial-data} is globally well-posed in $H^{1,1}(\mathbb{R}^2)$ \cite{Perry12}, a result that has recently been extended to $L^2(\mathbb{R}^2)$ \cite{NachmanRT17}.  Significantly, these results are completely insensitive to the value of $\epsilon>0$.  Thus, for $A\in L^2(\mathbb{R}^2)$ there exists a unique global solution of the initial-value problem for every value of $\epsilon>0$, 
and the question we wish to address is \emph{how does this well-defined solution $q=q^\epsilon(x,y,t)$ behave asymptotically as $\epsilon\downarrow 0$?}  This problem is
interesting because it sets up a competition between two space-time scales:
\begin{itemize}
\item On the one hand, the system \eqref{eq:DSII} admits solutions with rapid space-time variations:  indeed setting $q^\epsilon(x,y,t)=Q(x/\epsilon,y/\epsilon,t/\epsilon)$, one sees that $Q$ satisfies
\eqref{eq:DSII} with $\epsilon=1$.  Thus one may expect that ``typical'' solutions of \eqref{eq:DSII} exhibit wavelengths and periods proportional to $\epsilon$.
\item On the other hand, the functions $A$ and $S$ in the initial data vary on spatial scales that are fixed as $\epsilon$ tends to zero.  
\end{itemize}
Therefore, the ``natural'' space-time scales of the system 
\eqref{eq:DSII} are $\Delta x\sim\Delta y\sim\Delta t=O(\epsilon)$ while those of the functions $A$ and $S$ in the initial data are $\Delta x\sim\Delta y=O(1)$.  Consequently, one expects that the solution $q^\epsilon(x,y,t)$ will exhibit a multiscale structure when $\epsilon$ is small and the solution is examined on $O(1)$ space-time scales.  In dispersive partial differential equations formulated in one space dimension, this competition of scales results in a back-and-forth process in which the solution first evolves for a time according to a simplified dispersionless model (e.g., Madelung quantum hydrodynamic system, see Section~\ref{sec:Madelung} below) until one or more singularities form in the approximating solution.   The singularities are resolved by dispersive terms in the full equation which produce small-scale oscillations.  Then the process repeats, as the oscillations develop smooth modulations that may be captured by a more complicated dispersionless approximating system (eg., Whitham's modulation equations \cite{DubrovinN89,Whitham65}), solutions of which may also become singular at a later time, and so on.  The wild oscillations are confined to space-time domains that are increasingly well-defined in the limit $\epsilon\downarrow 0$.  The modulated oscillatory structures that appear in such a way are called \emph{dispersive shock waves}.  Numerical simulations show that similar phemonena also occur in $2+1$ systems such as \eqref{eq:DSII}; see Figure~\ref{fig:KR-numerics}.
\begin{figure}[h]
\begin{center}
\includegraphics[width=0.3\linewidth]{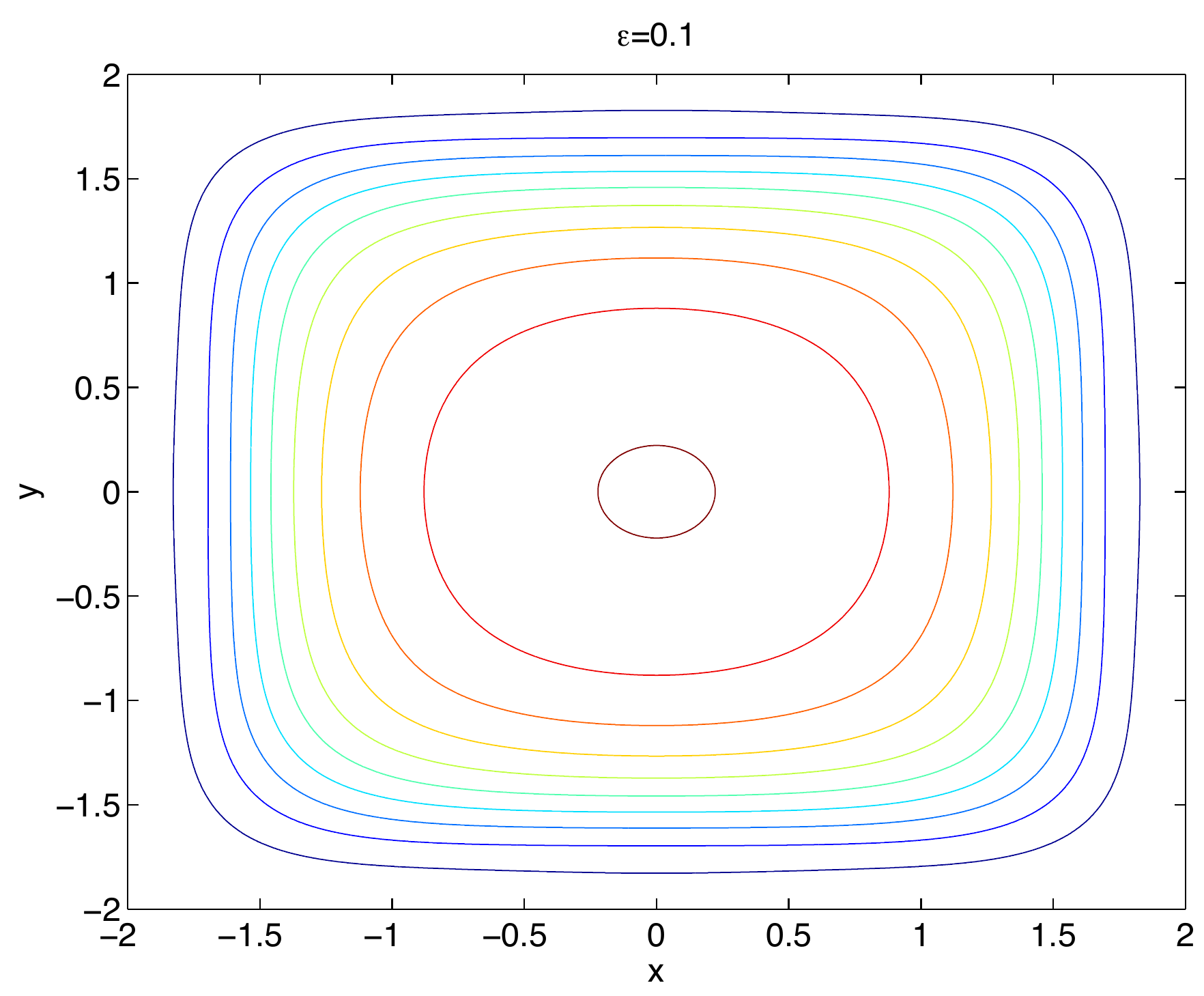}%
\hspace{0.01\linewidth}%
\includegraphics[width=0.3\linewidth]{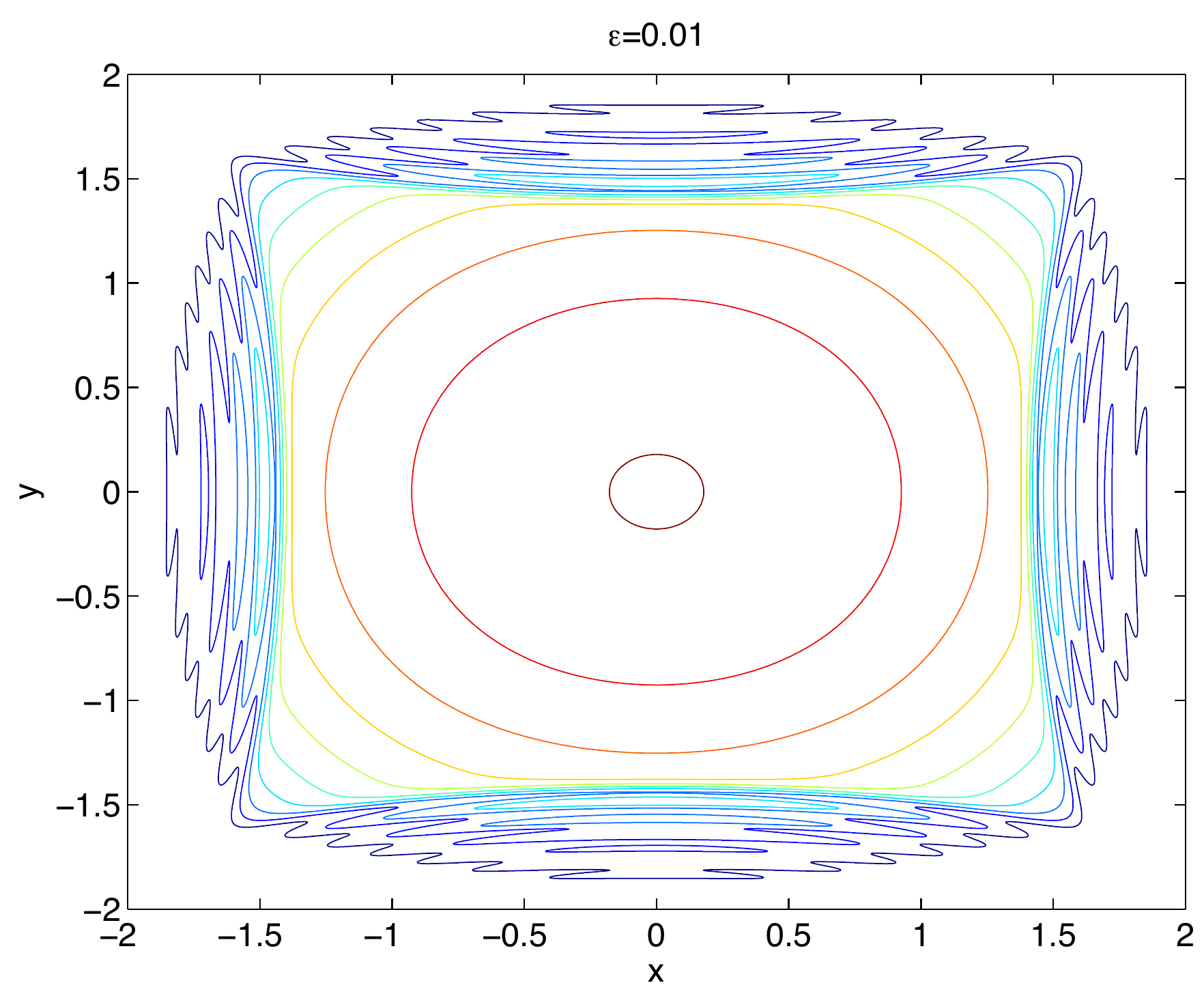}%
\hspace{0.01\linewidth}%
\includegraphics[width=0.3\linewidth]{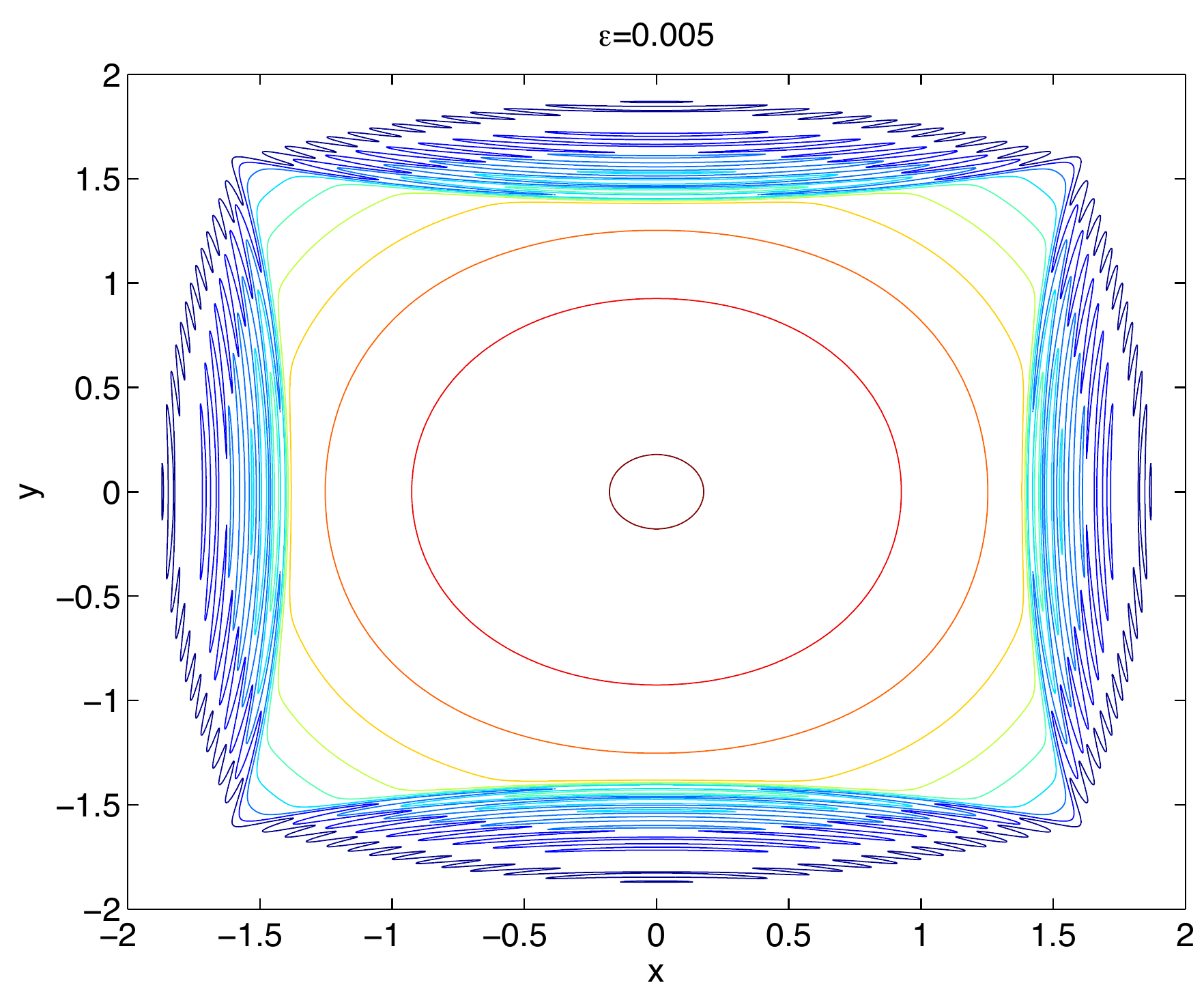}
\end{center}
\caption{Contour plots of $|q^\epsilon(x,y,1)|^2$ on the fixed region $(x,y)\in [-2,2]\times [-2,2]$ for Gaussian initial data of the form \eqref{eq:initial-data} with $S(x,y)\equiv 0$ and $A(x,y):=\ee^{-(x^2+y^2)}$.  Left:  $\epsilon=0.1$.  Center:  $\epsilon=0.01$.  Right:  $\epsilon=0.005$.  Figure taken from \cite{KleinR14}.  Observe that as $\epsilon$ decreases, the four lens-shaped regions that appear to confine the $O(\epsilon)$-wavelength oscillations become better defined in the $(x,y)$-plane for fixed $t=1$.}
\label{fig:KR-numerics}
\end{figure}

Another way to think about the semiclassical limit is to consider the system \eqref{eq:DSII} with $\epsilon=1$ but for a family of initial data parametrized by $\epsilon$ that becomes large in a certain sense as $\epsilon\to 0$.  For example, by the scaling of $q^\epsilon(x,y,t)=Q(x/\epsilon,y/\epsilon,t/\epsilon)$ we  
arrive at the system \eqref{eq:DSII} with $\epsilon=1$ for $Q$, but also with the side effect of stretching the initial data for $Q$ in the $(x,y)$-directions by a factor of $\epsilon^{-1}$; this makes the norm of the initial data for $Q$ very large as $\epsilon\downarrow 0$.  Alternatively, 
one may take the $\epsilon=1$ form of \eqref{eq:DSII} for $Q$ with pointwise large initial data $Q(x,y,0)$ proportional to $\epsilon^{-1}$.
In this interpretation, the semiclassical limit at first seems similar to the \emph{strong coupling limit} discussed by Ablowitz and Clarkson \cite[Section 5.5.4]{AblowitzC91}.  However, the simplified dynamics of the strong coupling limit essentially transpires on the time scale of length $t=O(\epsilon)$ in the variables of \eqref{eq:DSII}, rendering the interesting dynamics observed numerically on time scales $t=O(1)$  out of reach. 

\subsection{Quantum hydrodynamics}
\label{sec:Madelung}
Eliminating the imaginary part of $g$ from \eqref{eq:DSII} yields the form
\begin{equation}
\begin{split}
\ii\epsilon q_t +\epsilon^2\left(q_{xx}-q_{yy}\right) + 2Mq&=0\\
M_{xx}+M_{yy}&=\left(|q|^2\right)_{yy}-\left(|q|^2\right)_{xx},
\end{split}
\label{eq:DSII-again}
\end{equation}
where $M=M^\epsilon(x,y,t):=\re(g)$.  This form reveals a certain formal connection with dispersive nonlinear equations in $1+1$ dimensions.  Indeed, if one takes $q$ and $M$ independent of $y$ and accepts the solution $M=-|q|^2$ of the Poisson equation governing $M$ one finds
that $q$ solves the \emph{defocusing nonlinear Schr\"odinger equation}
\begin{equation}
\ii\epsilon q_t +\epsilon^2q_{xx} - 2|q|^2q=0.
\label{eq:defocNLS}
\end{equation}
Similarly, taking $q$ and $M$ independent of $x$ and related by $M=|q|^2$ one arrives instead at 
\begin{equation}
-\ii\epsilon q_t +\epsilon^2q_{yy}-2|q|^2q=0,
\end{equation}
which is also a (conjugated, or time-reversed) defocusing nonlinear Schr\"odinger equation.  Note however, that these reductions are not obviously consistent with the elimination of $g$ using $\delbar^{-1}$ given by \eqref{eq:solid-Cauchy}, which assumes some sort of decay of $\del(|q|^2)$ \emph{in all directions of the $(x,y)$-plane}.

The simplest interpretation of the semiclassical limit is that afforded by the quantum hydrodynamic system that one can derive from \eqref{eq:DSII} by following the ideas of Madelung \cite{Madelung26}.  
%
%
Let us assume only that $|q|>0$ for all $(x,y,t)$, and represent $q$ in the form (resembling the initial data):
\begin{equation}
q^\epsilon(x,y,t)=A^\epsilon(x,y,t)\ee^{\ii S^\epsilon(x,y,t)/\epsilon}.
\end{equation}
Inserting this form into \eqref{eq:DSII-again}, dividing out the common phase factor from the first equation and separating it into real and imaginary parts gives, without approximation, the following system governing the three real-valued fields $A=A^\epsilon(x,y,t)$, $S=S^\epsilon(x,y,t)$, and $M=M^\epsilon(x,y,t)$:
\begin{equation}
\begin{split}
S_t +S_x^2 -S_y^2 -2M &= \epsilon^2\frac{A_{xx}-A_{yy}}{A}\\
A_t + 2A_xS_x +AS_{xx} - 2A_yS_y-AS_{yy}&=0\\
M_{xx}+M_{yy}+\left(A^2\right)_{xx}-\left(A^2\right)_{yy}&=0.
\end{split}
\label{eq:exactsystem}
\end{equation}
This system is to be solved with the $\epsilon$-independent initial data $A^\epsilon(x,y,0)=A(x,y)$
and $S^\epsilon(x,y,0)=S(x,y)$.  This situation obviously invites the neglect of the formally small terms proportional to $\epsilon^2$ on the right-hand side of \eqref{eq:exactsystem}.  Setting $\epsilon=0$ in \eqref{eq:exactsystem} yields the
 \emph{dispersionless DS-II system}, which may be expected to govern the semiclassical evolution of $q^\epsilon(x,y,t)$ in the initial phase of the dynamics (until singularities form in its solution).  To write the dispersionless DS-II system in quantum hydrodynamic form, we introduce
Madelung's quantum fluid density $\rho$ and quantum fluid velocity $\mathbf{u}$ by 
\begin{equation}
\rho=\rho^\epsilon(x,y,t):=A^\epsilon(x,y,t)^2\quad\text{and}\quad
\mathbf{u}=\mathbf{u}^\epsilon(x,y,t):=\nabla S^\epsilon(x,y,t),
\end{equation}
where $\nabla$ is the gradient in the spatial variables $(x,y)$.
The dispersionless DS-II system then becomes the \emph{quantum hydrodynamic system} associated to \eqref{eq:DSII}:
\begin{equation}
\begin{split}
\mathbf{u}_t +\nabla \left(\mathbf{u}\cdot\sigma_3\mathbf{u}\right) -2\nabla M&=\mathbf{0}\\
\rho_t + 2\mathrm{div}\left(\rho\sigma_3\mathbf{u}\right)&=0\\
\Delta M+\mathrm{div}\left(\sigma_3\nabla\rho\right)&=0.
\end{split}
\label{eq:dispersionless-DSII}
\end{equation}
Here $\sigma_3=\mathrm{diag}(1,-1)$ is a Pauli matrix.  We see that $M$ has the interpretation of a kind of fluid pressure, and if we were to replace $\sigma_3$ by the identity matrix, these would essentially be the Euler equations of motion for a physical compressible fluid.  


If the dispersionless problem (i.e., \eqref{eq:exactsystem} with $\epsilon=0$, or equivalently \eqref{eq:dispersionless-DSII}, with $\epsilon$-independent initial data) is locally well-posed, then it is reasonable to expect that the solution of
\eqref{eq:DSII} can be approximated for small $\epsilon$ by the dispersionless limit over some finite time interval independent of $\epsilon$.  In the case of the defocusing nonlinear Schr\"odinger equation in $1+1$ dimensions, the corresponding $1+1$ reduction of \eqref{eq:dispersionless-DSII} is a hyperbolic quasilinear system, and local well-posedness is guaranteed; the accuracy of the dispersionless approximation as $\epsilon\downarrow 0$ has been proven in this case by several different methods including energy estimates applied to a Madelung-type ansatz \cite{Grenier98}, Lax-Levermore variational theory \cite{JinLM99}, and matrix steepest-descent type Riemann-Hilbert techniques  \cite{Miller16} following similar steps as were earlier developed for the small-dispersion limit of the Korteweg-de Vries equation \cite{DeiftVZ97}.  In all these cases, shock formation and the accompanying dispersive regularization precludes global well-posedness.

Integrable dispersionless systems in higher dimensions such as \eqref{eq:dispersionless-DSII} admit certain specialized techniques \cite{FerapontovK04,ManakovS06}, and some aspects of these techniques have been developed in the specific setting of the dispersionless Davey-Stewartson system \cite{Konopelchenko07,Yi14}.  However, even if one has local well-posedness for \eqref{eq:dispersionless-DSII}, one does not expect 
to have global well-posedness.  One expects instead that the solution of the dispersionless system develops singularities (shocks, gradient catastrophes, or caustics) in finite time.  As the singularity is approached, the terms
proportional to $\epsilon^2$ on the right-hand side of \eqref{eq:exactsystem} 
can no longer be discarded and must instead be included at the same order, resulting in the generation of short-wavelength oscillations near the shock point.  
Once these structures form near the shock point, a different kind of ansatz is required locally for $q$, and a more complicated system obtained by Whitham averaging would be expected to take the place of \eqref{eq:dispersionless-DSII}.  

\subsection{Inverse scattering transform}
In order to study the semiclassical limit for the Cauchy problem \eqref{eq:DSII}--\eqref{eq:initial-data}, we wish to exploit the complete integrability of this problem to express its solution via the corresponding \emph{inverse scattering transform}, which was introduced by Ablowitz and Fokas \cite{AblowitzF82, AblowitzF83, AblowitzF84, Fokas83} and refined by many others, including Beals and Coifman \cite{BealsC85,BealsC89}, Sung \cite{Sung94}, Perry \cite{Perry12}, and Nachman, Regev, and Tataru \cite{NachmanRT17}.

To introduce the necessary formulae, we follow the notation
of \cite{Perry12} and introduce the appropriate $\epsilon$ scalings.
\subsubsection{Direct transform}
Consider the Dirac system of linear equations
\begin{equation}
\begin{split}
\epsilon\delbar\psi_1&=\tfrac{1}{2}q\psi_2\\
\epsilon\del\psi_2 &=\tfrac{1}{2}\overline{q}\psi_1
\end{split}
\label{eq:1.10}
\end{equation}
to which we seek for each fixed time $t$ the unique (complex geometrical optics) solution $\psi_j=\psi_j^\epsilon(z;k,t)$ parametrized by the additional complex parameter $k\in\mathbb{C}$ that satisfies the asymptotic conditions:
\begin{equation}
\begin{split}
\lim_{|z|\to\infty}\psi_1^\epsilon(z;k,t)\ee^{-kz/\epsilon}&=1\\
\lim_{|z|\to\infty}\psi_2^\epsilon(z;k,t)\ee^{-\overline{k}\overline{z}/\epsilon}&=0,
\end{split}
\label{eq:psi-norm}
\end{equation}
where $z=x+\ii y$.  
The \emph{reflection coefficient} $R=R^\epsilon(k;t)$ is defined in terms of $\psi_2^\epsilon(z;k,t)$ as follows:
\begin{equation}
\ee^{-kz/\epsilon}\overline{\psi_2^\epsilon(z;k,t)}=\tfrac{1}{2}R^\epsilon(k;t) z^{-1}+ O(|z|^{-2}),\quad |z|\to\infty.
\label{eq:r-def}
\end{equation}
It can be easily shown that if $q$ has radial symmetry, i.e., depends only on $|z|$ and not $\arg(z)$, then also $R^\epsilon$ has radial symmetry, i.e., depends only on $|k|$ and not $\arg(k)$.

\begin{myremark}
The notation $\psi_j^\epsilon(z;k,t)$ is shorthand for $\psi_j^\epsilon((x,y);(\re(k),\imag(k)),t)$ and is not meant to suggest analytic dependence on either $z$ or $k$.  Similar notational conventions hold for other functions throughout this paper.
\end{myremark}

\subsubsection{Time dependence}
As $q$ evolves in time $t$ according to \eqref{eq:DSII}, the reflection coefficient evolves by a trivial phase factor:
\begin{equation}
R^\epsilon(k;t)=R^\epsilon_0(k)\ee^{4\ii t\re(k^2)/\epsilon},\quad R^\epsilon_0(k):=R^\epsilon(k;0).
\end{equation}
For convenience we define
\begin{equation}
R^\epsilon(k;z,t):=R^\epsilon(k;t)\ee^{2\ii \imag(kz)/\epsilon}.
\end{equation}
\subsubsection{Inverse transform}
%
The related quantities defined by 
\begin{equation}
\nu_1=\nu^\epsilon_1(k;z,t):=
\ee^{-kz/\epsilon}\psi_1\quad\text{and}\quad
\nu_2=\nu^\epsilon_2(k;z,t):=
\ee^{-kz/\epsilon}\psi_2
\label{eq:nu-psi}
\end{equation}
can then be shown to satisfy, for each fixed $z\in\mathbb{C}$,
the linear differential (with respect to $k$) equations
\begin{equation}
\begin{split}
\epsilon\delbar_k\nu_1&=\tfrac{1}{2}\overline{R^\epsilon(k;z,t)}\overline{\nu}_2\\
\epsilon\delbar_k\nu_2&=\tfrac{1}{2}\overline{R^\epsilon(k;z,t)}\overline{\nu}_1
\end{split}
\label{eq:dbar-k}
\end{equation}
where, writing $k=\kappa+\ii\sigma$ for $(\kappa,\sigma)\in\mathbb{R}^2$,
\begin{equation}
\delbar_k:=\frac{1}{2}\left(\frac{\partial}{\partial\kappa}+\ii\frac{\partial}{\partial\sigma}\right),
\end{equation}
and the asymptotic conditions
\begin{equation}
\lim_{|k|\to\infty}\nu^\epsilon_1(k;z,t)=1\quad\text{and}\quad
\lim_{|k|\to\infty}\nu^\epsilon_2(k;z,t)=0.
\label{eq:nu-asymp}
\end{equation}
The inverse scattering problem is then to recover $\nu_{1,2}^\epsilon(k;z,t)$ given $R^\epsilon_0(k)$ from \eqref{eq:dbar-k}--\eqref{eq:nu-asymp}, a problem that formally very closely resembles the direct scattering problem \eqref{eq:1.10}--\eqref{eq:psi-norm}.
Using the definitions \eqref{eq:nu-psi}
and the complex conjugate of the second equation of the system \eqref{eq:1.10} gives the reconstruction formula
\begin{equation}
q^\epsilon(x,y,t)=2\epsilon\overline{\left[\frac{\del\psi_2}{\psi_1}\right]}=2\epsilon\frac{\delbar\,\overline{\psi}_2}{\overline{\psi}_1}=2\frac{\overline{k}\,\overline{\nu}_2 +\epsilon\delbar\,\overline{\nu}_2}{\overline{\nu}_1}.
\label{eq:q-find-1}
\end{equation}
The right-hand side is in fact independent of $k$, so we can let $|k|\to\infty$
and use the asymptotics for $\nu_j$ with respect to $k$ to get the reconstruction formula for the solution of the Cauchy problem \eqref{eq:DSII}--\eqref{eq:initial-data}.
\begin{equation}
q^\epsilon(x,y,t)=2\lim_{|k|\to\infty}\overline{k}\,\overline{\nu^\epsilon_2(k;x+\ii y,t)}
\label{eq:q-find-2}
\end{equation}
\subsection{Results and outline of the paper}
Both the direct and inverse transforms involve singularly-perturbed linear elliptic problems in the plane.  This paper concerns the development of tools for the study of the direct scattering problem in the semiclassical limit $\epsilon\downarrow 0$.  The ultimate goal is to determine an asymptotic formula for the reflection coefficient $R_0^\epsilon(k)$ associated with suitably general real-valued amplitude and phase functions $A$ and $S$.  Such a formula should contain sufficient information about the latter functions to allow their reconstruction via the inverse problem\footnote{An important observation is that, even though the direct and inverse problems are formally similar, they may be quite different in character in the semiclassical limit, if it happens that the reflection coefficient depends on $\epsilon$ in a more subtle way than does the initial data $q^\epsilon(x,y,0) = A(x,y)\ee^{\ii S(x,y)/\epsilon}$.}, also considered in the semiclassical limit $\epsilon\downarrow 0$.  

\subsubsection{Aside:  an analogous, better-understood, problem} 
It is useful to have in mind the simpler example of the integrable defocusing nonlinear Schr\"odinger equation in the form \eqref{eq:defocNLS} with initial data $q^\epsilon(x,0)=A(x)\ee^{\ii S(x)/\epsilon}$, assuming, say, that $A$ and $S''$ are Schwartz functions. 
The direct transform in this case involves the calculation of the Jost solution $\boldsymbol{\psi}$ for $\lambda\in\mathbb{R}$ of the Zakharov-Shabat system \cite{ZakharovS73}
\begin{equation}
\epsilon\frac{\dd\boldsymbol{\psi}}{\dd x}=\begin{bmatrix}-\ii\lambda & A(x)\ee^{\ii S(x)/\epsilon}\\ A(x)\ee^{-\ii S(x)/\epsilon} & \ii\lambda\end{bmatrix}\boldsymbol{\psi}
\label{eq:ZS-system}
\end{equation}
that satisfies the boundary conditions
\begin{equation}
\boldsymbol{\psi}(x)=\begin{cases} \displaystyle\begin{bmatrix}\ee^{-\ii\lambda x/\epsilon}\\0 \end{bmatrix}+R_0^\epsilon(\lambda)\begin{bmatrix}0 \\ \ee^{\ii\lambda x/\epsilon}\end{bmatrix}+o(1),&\quad x\to +\infty\\
\displaystyle T_0^\epsilon(\lambda)\begin{bmatrix}\ee^{-\ii\lambda x/\epsilon}\\0\end{bmatrix}+o(1),&\quad x\to -\infty
\end{cases}
\label{eq:Jost-BC}
\end{equation}
defining the reflection coefficient $R_0^\epsilon(\lambda)$ and transmission coefficient $T_0^\epsilon(\lambda)$ for $\lambda\in\mathbb{R}$.  We note, in comparison with the direct spectral problem \eqref{eq:1.10}--\eqref{eq:psi-norm}, the association $k=-\ii\lambda$.
Asymptotic formulae for $R_0^\epsilon(\lambda)$ and $|T_0^\epsilon(\lambda)|^2$ valid in the semiclassical limit $\epsilon\downarrow 0$ may be obtained under some further conditions on $A$ and $S$ by using the WKB method to study the problem \eqref{eq:ZS-system}--\eqref{eq:Jost-BC} and construct an approximation to $\boldsymbol{\psi}$ uniformly valid for $x\in\mathbb{R}$ as $\epsilon\downarrow 0$.
The simplest ansatz for $\boldsymbol{\psi}$ is to assume that for some scalar exponent function $E=E(x,\lambda)$ independent of $\epsilon$, $\boldsymbol{\psi}=\ee^{E/\epsilon}\boldsymbol{\phi}$ where the vector function $\boldsymbol{\phi}$ has an asymptotic power series expansion in $\epsilon$.  This ansatz forces $E$ to satisfy the eikonal equation
\begin{equation}
\left(\frac{\dd E}{\dd x}\right)^2 +\left(\lambda+\frac{1}{2}S'(x)\right)^2=A(x)^2.
\label{eq:Jost-eikonal}
\end{equation}
It is convenient to assume now that the functions 
\begin{equation}
a(x):=-\frac{1}{2}S'(x)-A(x)\quad\text{and}\quad b(x):=-\frac{1}{2}S'(x)+A(x)
\end{equation}
each have only one critical point, a minimizer $x_a$ for $a$ with value $\lambda^-:=a(x_a)$ and a maximizer $x_b$ for $b$ with value $\lambda^+:=b(x_b)$.  Then, if either $\lambda<\lambda^-$ or $\lambda>\lambda^+$, the exponent $E$ is purely imaginary and  therefore the Jost solution $\boldsymbol{\psi}(x)$ is rapidly oscillatory and one can prove that its WKB approximation is indeed uniformly accurate for all $x\in\mathbb{R}$, leading to the conclusion that $R_0^\epsilon(\lambda)$ is small beyond all orders in $\epsilon$ (similar to ``above barrier'' reflection in quantum mechanics).  However, if $\lambda^-<\lambda<\lambda^+$, there exist exactly two \emph{turning points} $x_-(\lambda)<x_+(\lambda)$ such that $E$ is imaginary giving rapidly oscillatory WKB approximations for $x<x_-(\lambda)$ and $x>x_+(\lambda)$, but $E$ is real in the intermediate region $x_-(\lambda)<x<x_+(\lambda)$ and the solutions are instead rapidly exponentially growing or decaying.  The WKB method (in its traditional Liouville-Green form, that is) fails in the vicinity of the two turning points, but uniform accuracy may be recovered with the use of Langer transformations and approximations based on Airy functions \cite[Section 7.2]{Miller06}.  It is the connection, through the two turning points, of oscillatory solutions onto exponential solutions and back again that yields nontrivial reflection in the semiclassical limit.  One obtains from this procedure the asymptotic formulae (see \cite[Appendix B]{MillerQ15} for all details of virtually the same calculation)
\begin{equation}
R_0^\epsilon(\lambda)=\ee^{2\ii\theta_0(\lambda)/\epsilon}(1+O(\epsilon)) \quad\text{and}\quad
|T_0^\epsilon(\lambda)|^2=1-|R_0^\epsilon(\lambda)|^2=\ee^{-2m(\lambda)/\epsilon}(1+O(\epsilon)),\quad\epsilon\downarrow 0,
\label{eq:1-D-reflection}
\end{equation}
where
\begin{equation}
m(\lambda):=\int_{x_-(\lambda)}^{x_+(\lambda)}\sqrt{(\lambda-a(x))(b(x)-\lambda)}\,\dd y
\label{eq:m-lambda}
\end{equation}
and
\begin{equation}
\theta_0(\lambda):=-\frac{1}{2}S(x_+(\lambda))-\lambda x_+(\lambda) + \int_{x_+(\lambda)}^{+\infty}\left[\sigma\sqrt{(\lambda-a(x))(\lambda-b(x))}-(\lambda+\tfrac{1}{2}S'(y))\right]\,\dd y
\label{eq:theta0-lambda}
\end{equation}
where $\sigma:=\mathrm{sgn}(\lambda+\tfrac{1}{2}S'(+\infty))$.  The only points $\lambda\in\mathbb{R}$ not covered by these approximations are those very close to the values $\lambda^\pm$ at which the two turning points coalesce.  In any case, the point is that these formulae represent the leading term of the reflection coefficient and its exponentially small deviation from unit modulus ($|T_0^\epsilon(\lambda)|^2$ is exponentially small, an analogue of a quantum tunneling amplitude) in the interval $(\lambda^-,\lambda^+)$ as explicit integral transforms of the given functions $A$ and $S$, via also the related functions $a$ and $b$.  This is sufficient information to allow these functions to be recovered by using the leading term in place of the actual reflection coefficient in the inverse spectral transform, in this case a Riemann-Hilbert problem of quite a different character than the direct spectral problem.  See \cite{Miller16} for details of these calculations.  

\subsubsection{Generalization to DS-II}
It should be noted that the WKB approach to the calculation of $R_0^\epsilon(\lambda)$ for the problem \eqref{eq:ZS-system}--\eqref{eq:Jost-BC} can also be motivated by the existence of several potentials $A(x)\ee^{\ii S(x)/\epsilon}$ for which the solution $\boldsymbol{\psi}$ can be calculated explicitly for all $\lambda$ and $\epsilon>0$ in terms of special functions.  Examples include piecewise constant potentials for which $\boldsymbol{\psi}$ is constructed by solving constant-coefficient systems in different intervals joined by continuity at the junction points, and Schwartz potentials of the form $A(x)=A\,\mathrm{sech}(x)$ and $S(x)=\mu\tanh(x)$ for arbitrary constants $A$ and $\mu$, for which the equation \eqref{eq:ZS-system} can be reduced to the Gauss hypergeometric equation and hence $R^\epsilon_0(\lambda)$ expressed explicitly in terms of gamma functions (see \cite{SatsumaY74} and \cite{TovbisV00} where this is done for the focusing case; the calculations for the defocusing case are similar).  By contrast, for the two-dimensional problem \eqref{eq:1.10}--\eqref{eq:psi-norm}, \emph{there exist no known potentials other than $q\equiv 0$ for which the direct spectral problem can be solved and the reflection coefficient $R^\epsilon_0(k)$ recovered for all $k\in\mathbb{C}$ and a set of $\epsilon>0$ with accumulation point $\epsilon=0$.}

In absence of any exact solutions, a natural approach to the two-dimensional generalization \eqref{eq:1.10}--\eqref{eq:psi-norm} of this problem is to mimic the WKB ansatz that produces such useful and explicit formulae as \eqref{eq:1-D-reflection}--\eqref{eq:theta0-lambda}.  As will be shown in Section \ref{sec:WKB}, this leads one to consider a certain complex-valued \emph{eikonal function} $f(x,y;k)$ that is an analogue of the WKB exponent $E$ trivially computed for the Zakharov-Shabat system \eqref{eq:ZS-system} as the antiderivative of an eigenvalue of the coefficient matrix in the system obtained from \eqref{eq:ZS-system} by a simple gauge transformation to remove the oscillatory factors $\ee^{\pm\ii S(x)/\epsilon}$.  In the two-dimensional setting, this function satisfies the \emph{eikonal equation}:
\begin{equation}
\left[2\delbar f + \ii\delbar S\right]\left[\vphantom{\delbar}2\del f - \ii\del S\right]=A^2,
\label{eq:eikonal}
\end{equation}
a nonlinear partial differential equation in the $(x,y)$-plane.  If $A$, $S$, and $E:=f-\re(kz)$ are independent of $y$, with $k=-\ii\lambda$ and $\lambda\in\mathbb{R}$ this equation reduces to the far simpler (solvable by quadrature) one-dimensional eikonal equation \eqref{eq:Jost-eikonal}.  For formal validity of the WKB ansatz near $|z|=\infty$ we insist that $f$ satisfy the asymptotic condition
\begin{equation}
\lim_{|z|\to\infty} \left(f+\frac{\ii}{2}S-kz\right)=0,\quad z=x+\ii y.
\label{eq:f-norm}
\end{equation}
We refer to the problem of finding a function $f=f(x,y;k)$ that satisfies \eqref{eq:eikonal}--\eqref{eq:f-norm} as the \emph{eikonal problem}.  An important point is that unlike the direct scattering problem \eqref{eq:1.10}--\eqref{eq:psi-norm}, the eikonal problem is independent of the parameter $\epsilon>0$, so although it is nonlinear it is not a singularly perturbed problem at all.  This may be viewed as a distinctive advantage\footnote{Note, however, that unlike the one-dimensional reduction of the eikonal problem which is explicitly solvable by quadratures and square roots, to our knowledge there is no analogous elementary integration procedure for the two-dimensional eikonal problem \eqref{eq:eikonal}--\eqref{eq:f-norm}, which is studied in detail in Section~\ref{sec:eikonal-problem}.} of the WKB approach to the direct scattering problem.  In Section~\ref{sec:eikonal-problem} we prove the following result.  Here $\mathrm{W}(\mathbb{R}^2)$ is the Wiener algebra of functions with Lebesgue integrable Fourier transforms, equipped with the norm $\|\cdot\|_\mathrm{W}$ defined in \eqref{eq:W-norm}.
\begin{theorem}
Suppose that $u=A^2\in L^p(\mathbb{R}^2)\cap \mathrm{W}(\mathbb{R}^2)$ for some $p\in [1,2)$, that $S\in C^1(\mathbb{R}^2)$, and that $v=\partial S\in \mathrm{W}(\mathbb{R}^2)$.  Then for every $B>\|v\|_\mathrm{W}$, if $k$ satisfies the inequality 
\begin{equation}
|k|>B + \max\left\{\frac{1}{4}\frac{\|u\|_\mathrm{W}}{B-\|v\|_\mathrm{W}},\frac{1}{2}\sqrt{\|u\|_\mathrm{W}}\right\},
\label{eq:mod-k-lower-bound}
\end{equation}
there is a unique global classical ($C^1(\mathbb{R}^2)$) solution $f(x,y;k)$ of the eikonal problem \eqref{eq:eikonal}--\eqref{eq:f-norm} that satisfies the estimate
\begin{equation}
\left\|\del f-k-\frac{1}{2}\ii\del S\right\|_\mathrm{W}\le B.
\label{eq:f-estimate}
\end{equation}
In particular, \eqref{eq:mod-k-lower-bound} and \eqref{eq:f-estimate} imply that $\ii\partial S-2\partial f$ is bounded away from zero on $\mathbb{R}^2$.
\label{theorem:eikonal-solve}
\end{theorem}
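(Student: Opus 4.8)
The plan is to recast the eikonal problem \eqref{eq:eikonal}--\eqref{eq:f-norm} as a single fixed-point equation in the Wiener algebra for the unknown
\[
w := \del f - k - \tfrac{1}{2}\ii\del S,
\]
which is exactly the quantity whose $\mathrm{W}$-norm is controlled in \eqref{eq:f-estimate}, and then to solve it by the contraction mapping principle, the largeness of $|k|$ supplying the contraction. Writing $u=A^2$ and $v=\del S$, and using that $S$ is real (so $\delbar S=\overline{\del S}=\overline{v}$), the substitution $\del f=w+k+\tfrac12\ii v$ collapses the second factor of \eqref{eq:eikonal} to $2\del f-\ii\del S=2(w+k)$, so that \eqref{eq:eikonal} reads $2\delbar f+\ii\overline{v}=u/(2(w+k))$. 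Introducing $\phi:=f+\tfrac{\ii}{2}S-kz$, the quantity required to vanish at infinity by \eqref{eq:f-norm}, a short computation gives $\delbar\phi=u/(4(w+k))$ and $\del\phi=w+\ii v$. Since $\phi\to 0$ one has $\del\phi=\mathcal{T}(\delbar\phi)$ with $\mathcal{T}:=\del\delbar^{-1}$ the Beurling transform, a Fourier multiplier of unimodular symbol. This produces the fixed-point equation
\[
w=\mathcal{F}(w):=\mathcal{T}\!\left(\frac{u}{4(w+k)}\right)-\ii v.
\]

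Next I would fix the functional-analytic framework. The algebra $\mathrm{W}(\mathbb{R}^2)$ is a Banach algebra under pointwise multiplication that embeds continuously in $C_0(\mathbb{R}^2)$ with $\|\cdot\|_{L^\infty}\le\|\cdot\|_{\mathrm{W}}$, and because $\mathcal{T}$ has unimodular symbol it acts as a contraction, $\|\mathcal{T}\|_{\mathrm{W}\to\mathrm{W}}\le 1$. The one delicate point is that the constant $k$ is not an element of $\mathrm{W}$, so $w+k\notin\mathrm{W}$ and $1/(w+k)$ cannot be estimated directly. I would handle this by passing to the unitization $\mathrm{W}^+:=\mathbb{C}\oplus\mathrm{W}$, a unital Banach algebra in which $\mathrm{W}$ sits as an ideal; there $w+k=k(1+w/k)$ is invertible whenever $|k|>\|w\|_{\mathrm{W}}$, with $\|(w+k)^{-1}\|_{\mathrm{W}^+}\le(|k|-\|w\|_{\mathrm{W}})^{-1}$ by a Neumann series, and the ideal property then yields $u/(w+k)\in\mathrm{W}$ with $\|u/(w+k)\|_{\mathrm{W}}\le\|u\|_{\mathrm{W}}/(|k|-\|w\|_{\mathrm{W}})$.

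With these bounds the two hypotheses of Banach's theorem on the closed ball $\{\|w\|_{\mathrm{W}}\le B\}$ reproduce exactly the two competing terms in \eqref{eq:mod-k-lower-bound}. The self-mapping estimate $\|\mathcal{F}(w)\|_{\mathrm{W}}\le \|u\|_{\mathrm{W}}/(4(|k|-B))+\|v\|_{\mathrm{W}}\le B$ is equivalent, since $B>\|v\|_{\mathrm{W}}$, to $|k|\ge B+\tfrac14\|u\|_{\mathrm{W}}/(B-\|v\|_{\mathrm{W}})$, the first term; and the Lipschitz estimate, obtained from the resolvent identity $(w_1+k)^{-1}-(w_2+k)^{-1}=(w_2-w_1)(w_1+k)^{-1}(w_2+k)^{-1}$, gives $\|\mathcal{F}(w_1)-\mathcal{F}(w_2)\|_{\mathrm{W}}\le \tfrac14\|u\|_{\mathrm{W}}(|k|-B)^{-2}\|w_1-w_2\|_{\mathrm{W}}$, so contraction is equivalent to $|k|>B+\tfrac12\sqrt{\|u\|_{\mathrm{W}}}$, the second term. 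Hence \eqref{eq:mod-k-lower-bound} produces a unique fixed point $w$ with $\|w\|_{\mathrm{W}}\le B$.

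Finally I would reconstruct and verify. Set $\phi:=\delbar^{-1}(u/(4(w+k)))$; since $u\in L^p$ with $p\in[1,2)$ and $(w+k)^{-1}$ is bounded, the integrand lies in $L^p$ and its solid Cauchy transform \eqref{eq:solid-Cauchy} is continuous and vanishes at infinity, so \eqref{eq:f-norm} holds for $f:=\phi-\tfrac{\ii}{2}S+kz$. Unwinding the identities gives $\del f=w+\tfrac12\ii v+k$ and $\delbar f=u/(4(w+k))-\tfrac{\ii}{2}\overline{v}$, and substituting back confirms $(2\delbar f+\ii\delbar S)(2\del f-\ii\del S)=u$, i.e. \eqref{eq:eikonal}. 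Both derivatives lie in $\mathrm{W}+\mathbb{C}\subset C_0+\mathbb{C}$, hence are continuous, so $f\in C^1(\mathbb{R}^2)$; and any solution obeying \eqref{eq:f-estimate} yields the same $w$, whence uniqueness. The closing assertion follows from $\ii\del S-2\del f=-2(w+k)$ together with $|w+k|\ge|k|-\|w\|_{\mathrm{W}}\ge|k|-B>0$. I expect the main obstacle to be the care forced by the non-membership of $k$ in $\mathrm{W}$ — the unitization, the ideal and resolvent estimates, and the bookkeeping that makes the two Banach thresholds coincide exactly with the two terms of \eqref{eq:mod-k-lower-bound} — rather than the reconstruction, which is routine once the Beurling identity $\del\phi=\mathcal{T}\delbar\phi$ is justified for decaying $\phi$.
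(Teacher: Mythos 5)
Your proposal is correct and takes essentially the same approach as the paper: the same unknown $w=\del f-k-\tfrac{1}{2}\ii\del S$ (the paper's $b$), the same Beurling-transform fixed-point equation $w=\mathcal{B}\bigl[u/(4(w+k))\bigr]-\ii v$, the same contraction argument in the Wiener algebra producing the two thresholds in \eqref{eq:mod-k-lower-bound} (your unitization and resolvent-identity bookkeeping merely repackage the paper's geometric-series estimates for $(k+b)^{-1}-k^{-1}$), and the same reconstruction of $f$ via $\delbar^{-1}$ with the same $C^1$ and uniqueness conclusions. The one imprecision is your claim that $L^p$ membership with $p\in[1,2)$ of the integrand alone gives continuity and decay of its solid Cauchy transform — one also needs its boundedness (the local integrability of the Cauchy kernel against an $L^p$ function with $p<2$ can fail), which you do in fact have since the integrand lies in $\mathrm{W}(\mathbb{R}^2)\subset L^\infty(\mathbb{R}^2)$, exactly as the paper invokes membership in $L^{p'}(\mathbb{R}^2)$ for every $p'\ge p$.
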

One interpretation of Theorem~\ref{theorem:eikonal-solve} is that the eikonal problem \eqref{eq:eikonal}--\eqref{eq:f-norm} is of nonlinear elliptic type for sufficiently large $|k|$.
The leading term of the WKB approximation explained in Section~\ref{sec:WKB} is proportional to a complex-valued function $\alpha_0=\alpha_0(x,y;k)$ that is required to solve the following linear equation in which the eikonal function $f$ appears as a coefficient:
\begin{equation}
\mathcal{L}\alpha_0:=(2\delbar f+\ii\delbar S)\del (A\alpha_0 )+A\delbar ((2\del f-\ii\del S)\alpha_0)=0.
\label{eq:alpha-equation-general}
\end{equation}
\begin{theorem}
Under the same conditions on $u=A^2$, $S$, $v=\partial S$, and $k$ as in Theorem~\ref{theorem:eikonal-solve}, there is a unique solution $\alpha_0$ of \eqref{eq:alpha-equation-general} for which $\alpha_0^2-1=m\in \mathrm{W}(\mathbb{R}^2)$.
\label{theorem:alpha-solve}
\end{theorem}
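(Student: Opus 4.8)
The plan is to reduce the homogeneous linear equation $\mathcal{L}\alpha_0=0$ of \eqref{eq:alpha-equation-general} to a fixed-point equation in the Wiener algebra $\mathrm{W}(\mathbb{R}^2)$ and to close a contraction for $|k|$ satisfying \eqref{eq:mod-k-lower-bound}, reusing the operator estimates developed for Theorem~\ref{theorem:eikonal-solve}. First I would collect from that theorem the facts that, writing $P:=2\del f-\ii\del S$ and $Q:=2\delbar f+\ii\delbar S$, one has $P-2k\in\mathrm{W}(\mathbb{R}^2)$ with $P$ bounded away from $0$, while the eikonal equation \eqref{eq:eikonal} gives the identity $QP=u$, whence $Q=u/P\in\mathrm{W}(\mathbb{R}^2)$ and $\mu:=Q/P=u/P^2\in\mathrm{W}(\mathbb{R}^2)$, both small (of size $O(|k|^{-1})$ and $O(|k|^{-2})$ in $\mathrm{W}(\mathbb{R}^2)$), and $1/P\in (2k)^{-1}+\mathrm{W}(\mathbb{R}^2)$ by inverse-closedness of the Wiener algebra.

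Next I would rewrite the equation so that the analytically awkward factor $A=\sqrt{u}$ (which need not lie in $\mathrm{W}(\mathbb{R}^2)$) disappears. Dividing \eqref{eq:alpha-equation-general} by $A$ and using the eikonal-derived identities $Q/A=A/P$, $A\,\del A=\tfrac12\del u$, and $A^2/P=Q$ yields the equivalent equation
\[
\delbar(P\alpha_0)+Q\,\del\alpha_0+\frac{\del u}{2P}\,\alpha_0=0,
\]
in which only $u$, $P$, and $Q$ appear. Its principal part $\delbar(P\alpha_0)$ is a $\delbar$-equation for $P\alpha_0$, perturbed by the small Beltrami term $Q\,\del\alpha_0$ and a small zeroth-order term. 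Applying the solid Cauchy transform $\delbar^{-1}$ of \eqref{eq:solid-Cauchy} and imposing $\alpha_0\to 1$ (equivalently $P\alpha_0-2k\to 0$) converts this into
\[
\alpha_0=\frac{1}{P}\left(2k-\delbar^{-1}\!\left[Q\,\del\alpha_0+\frac{\del u}{2P}\,\alpha_0\right]\right)=:\mathcal{F}(\alpha_0),
\]
a map I would study on the affine space $1+\mathrm{W}(\mathbb{R}^2)$.

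To make $\mathcal{F}$ act on $\mathrm{W}(\mathbb{R}^2)$ I would express every derivative of the unknown through the Beurling transform $\mathcal{S}:=\del\,\delbar^{-1}$, a Fourier multiplier of modulus one, which is therefore bounded on $\mathrm{W}(\mathbb{R}^2)$ with norm at most $1$ (any $L^\infty$ multiplier maps $\mathrm{W}(\mathbb{R}^2)$ to itself). Combined with the Banach-algebra inequality $\|fg\|_\mathrm{W}\le\|f\|_\mathrm{W}\|g\|_\mathrm{W}$, the smallness of $Q$ and $\mu$, and the control of $1/P$ from Theorem~\ref{theorem:eikonal-solve}, I would bound the Lipschitz constant of $\mathcal{F}$ on $1+\mathrm{W}(\mathbb{R}^2)$ and show it is $<1$ once \eqref{eq:mod-k-lower-bound} holds. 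The Banach fixed-point theorem then produces a unique $\alpha_0$ with $\alpha_0-1\in\mathrm{W}(\mathbb{R}^2)$; since $\alpha_0\to 1$ the correct branch is selected and $m=\alpha_0^2-1=2(\alpha_0-1)+(\alpha_0-1)^2\in\mathrm{W}(\mathbb{R}^2)$ by the algebra property. Uniqueness in the stated class follows because any two solutions with $\alpha_0^2-1\in\mathrm{W}(\mathbb{R}^2)$ differ by a $\mathrm{W}(\mathbb{R}^2)$-solution of the homogeneous equation, which the contraction forces to vanish.

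The main obstacle, and the step demanding genuine work, is the low regularity of the coefficients: Theorem~\ref{theorem:eikonal-solve} controls $P-2k$, $v=\del S$, and $u$ only in $\mathrm{W}(\mathbb{R}^2)$, with no control on their derivatives. One therefore cannot simply commute the multiplications by $Q$ and $1/(2P)$ past $\delbar^{-1}$ to expose $\mathcal{S}$, since each such commutation generates a term carrying an uncontrolled derivative $\del P$ or $\del u$. The crux is to prove that the residual operators $g\mapsto\delbar^{-1}[\mu\,\del g]$ and $g\mapsto\delbar^{-1}[(\del u)\,g/(2P)]$ are nonetheless bounded on $\mathrm{W}(\mathbb{R}^2)$ with operator norm $O(|k|^{-1})$. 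I expect this to follow by keeping each derivative of the unknown and of $u$ permanently paired with $\delbar^{-1}$ (never a bare coefficient derivative), by exploiting the eikonal identity $QP=u$ to trade $A\,\del A$ for $\tfrac12\del u$ and so eliminate $A=\sqrt{u}$, and by borrowing the commutator and multiplier estimates already established for the eikonal problem, with the smallness from large $|k|$ finally rendering the contraction constant less than one.
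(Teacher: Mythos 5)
Your reduction of \eqref{eq:alpha-equation-general} to $\delbar(P\alpha_0)+Q\,\del\alpha_0+\tfrac{1}{2}P^{-1}(\del u)\,\alpha_0=0$ is correct algebra, and you have correctly located the crux: everything hinges on the boundedness on $\mathrm{W}(\mathbb{R}^2)$ of the operators $g\mapsto\delbar^{-1}[\mu\,\del g]$ and $g\mapsto\delbar^{-1}[(\del u)\,g/(2P)]$. But this is precisely the step you never carry out, and it does not follow from the tools you invoke. Integration by parts trades $\mu\,\del g$ for $\del(\mu g)-(\del\mu)g$; the first term is handled by the Beurling transform, but the second carries $\del\mu$ (equivalently $\del b$, $\del u$), which is exactly what Theorem~\ref{theorem:eikonal-solve} does \emph{not} control --- the fixed point $b$ lies in $\mathrm{W}(\mathbb{R}^2)$ but nothing is known about its derivatives. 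The alternative of writing $\delbar^{-1}[\mu\,\del g]=\mu\,\mathcal{B}g+[\delbar^{-1},\mu]\del g$ runs into the same wall: a commutator of multiplication by a merely-Wiener-class function with $\delbar^{-1}\del$ has no useful $\mathrm{W}$-bound without derivative control on the coefficient. So the contraction cannot be closed as written; the ``expected'' commutator estimates are the missing proof, not a routine borrowing from the eikonal analysis.

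The paper escapes this trap with one algebraic move absent from your plan: work with $\alpha_0^2$, not $\alpha_0$. Multiplying \eqref{eq:alpha-equation-general} by $A$ and using \eqref{eq:eikonal} factors the equation as $Q\bigl[A\del(A\alpha_0)+P\delbar(P\alpha_0)\bigr]=0$; setting the bracket to zero and multiplying it by $2\alpha_0$ gives the \emph{divergence-form} equation
\begin{equation*}
\del\bigl(u\,\alpha_0^2\bigr)+\delbar\bigl(P^2\alpha_0^2\bigr)=0,\qquad P=2\del f-\ii\del S=2(k+b),
\end{equation*}
in which no bare derivative of any coefficient ever appears. Applying $\delbar^{-1}$ with the normalization $P^2\alpha_0^2\to 4k^2$ yields $4(k+b)^2\alpha_0^2=4k^2-\mathcal{B}(u\,\alpha_0^2)$, and substituting $\alpha_0^2=1+m$ produces a \emph{linear} equation $m-\mathcal{K}m=h$ with $\|\mathcal{K}\|_{\mathrm{W}}\le\|u\|_\mathrm{W}/(4(|k|-B)^2)<1$ under \eqref{eq:mod-k-lower-bound}, solved by Neumann series. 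This is why the theorem is stated in terms of $\alpha_0^2-1\in\mathrm{W}(\mathbb{R}^2)$ rather than $\alpha_0-1\in\mathrm{W}(\mathbb{R}^2)$: the natural unknown is the square, and passing back from $\alpha_0^2$ to $\alpha_0$ (your closing uniqueness argument) would itself require a Wiener--L\'evy square-root step that your sketch also leaves unaddressed. If you incorporate the $2\alpha_0$-multiplication trick, your fixed-point framework collapses into the paper's Neumann-series argument and the obstruction you flagged disappears entirely.
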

These two results provide conditions on $k$ sufficient to guarantee the \emph{formal} validity of the WKB expansion in the whole $(x,y)$-plane.  As will be shown in Section~\ref{sec:WKB}, global validity of the WKB expansion for a given $k\in\mathbb{C}$ implies that the reflection coefficient $R^\epsilon_0(k)$ tends to zero with $\epsilon$.  This situation is therefore completely analogous to the fact that in the one-dimensional analogue of this problem, if $|\lambda|$ is sufficiently large there are no turning points and hence a globally defined (purely imaginary) WKB exponent function $E(x;\lambda)$ exists and leads to negligible reflection.  \emph{Theorems~\ref{theorem:eikonal-solve} and \ref{theorem:alpha-solve} would therefore provide a two-dimensional analogue of the fact that in the one-dimensional setting the reflection coefficient is asymptotically supported on the finite interval $[\lambda^-,\lambda^+]$.} 

In Section~\ref{sec:numerical-examples} we give convincing numerical evidence that in the situation covered by Theorems~\ref{theorem:eikonal-solve} and \ref{theorem:alpha-solve} (and more generally, that the eikonal problem \eqref{eq:eikonal}--\eqref{eq:f-norm} has a global classical solution) the leading term of the WKB expansion indeed gives the expected order of relative accuracy as $\epsilon\downarrow 0$.  
Unfortunately, a proof of accuracy of the method, even in the favorable situation of global existence of the eikonal function, eludes us. Nonetheless the numerical results suggest the following conjecture:
\begin{conjecture}
Suppose (for instance) that $A$ and $S-S_\infty$ are Schwartz-class functions for some constant $S_\infty\in\mathbb{R}$, and that $k\in\mathbb{C}\setminus\{0\}$ is such that there exists a global classical solution $f(x,y;k)$ of the eikonal problem \eqref{eq:eikonal}--\eqref{eq:f-norm}.  Then the solution $\boldsymbol{\psi}^\epsilon(z;k)$ of the direct scattering problem \eqref{eq:1.10}--\eqref{eq:psi-norm} at $t=0$, well defined for all $\epsilon>0$, satisfies
\begin{equation}
\ee^{-f(x,y;k)/\epsilon}\ee^{-\ii S(x,y)\sigma_3/(2\epsilon)}\boldsymbol{\psi}^\epsilon(x+\ii y;k)=\frac{\alpha_0(x,y;k)}{2k}\begin{bmatrix}2\del f(x,y;k)-\ii \del S(x,y)\\ A(x,y)\end{bmatrix}+o(1),\quad\epsilon\downarrow 0
\label{eq:conjecture-equation}
\end{equation}
with the convergence measured in a suitable norm and the $o(1)$ symbol on the right-hand side can be uniquely continued to a full asymptotic power series in positive integer powers of $\epsilon$.
\label{conjecture:WKB}
\end{conjecture}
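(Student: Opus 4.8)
The plan is to validate the WKB series by the classical \emph{approximate-solution plus uniform-resolvent} scheme, carried out entirely in the gauge in which the conjectured limit is $O(1)$. Writing $\boldsymbol{\psi}^\epsilon = \ee^{f/\epsilon}\ee^{\ii S\sigma_3/(2\epsilon)}\mathbf{v}$ and abbreviating $P:=\delbar f + \tfrac{1}{2}\ii\delbar S$ and $Q:=\del f - \tfrac{1}{2}\ii\del S$ (so that the eikonal equation \eqref{eq:eikonal} reads $4PQ=A^2$), the Dirac system \eqref{eq:1.10} becomes the gauged system
\begin{equation}
(\epsilon\delbar + P)v_1 = \tfrac{1}{2}Av_2,\qquad (\epsilon\del + Q)v_2 = \tfrac{1}{2}Av_1,
\end{equation}
with the normalization \eqref{eq:psi-norm} passing, via \eqref{eq:f-norm}, to $v_1\to 1$ and $v_2\to 0$ as $|z|\to\infty$. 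First I would construct a formal solution $\mathbf{v}\sim\sum_{n\ge 0}\epsilon^n\boldsymbol{\alpha}_n$: at order $\epsilon^0$ solvability of the algebraic leading system forces the eikonal equation and fixes the leading direction $\propto(2\del f-\ii\del S,A)^{\mathrm{T}}$; at order $\epsilon^1$ the solvability condition is precisely $\mathcal{L}\alpha_0=0$, solved by Theorem~\ref{theorem:alpha-solve}; and at each higher order $\alpha_n$ solves an inhomogeneous equation $\mathcal{L}\alpha_n=(\text{data built from }\alpha_0,\dots,\alpha_{n-1})$, solvable because the invertibility of $\mathcal{L}$ established in the proof of Theorem~\ref{theorem:alpha-solve} persists on the relevant complement. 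Truncating at order $N$ yields $\mathbf{v}_N$ whose gauged residual is $O(\epsilon^{N+1})$ in $\mathrm{W}(\mathbb{R}^2)$.

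The second step is the error estimate. Setting $\mathbf{v}=\mathbf{v}_N+\mathbf{e}$ gives a gauged Dirac equation for $\mathbf{e}$ with $O(\epsilon^{N+1})$ data and decaying boundary conditions, so everything reduces to a uniform-in-$\epsilon$ bound for the inverse of the gauged operator. I would obtain this by a Schur-complement (block-triangular) reduction exploiting the conclusion of Theorem~\ref{theorem:eikonal-solve} that $2Q=2\del f-\ii\del S$ is bounded away from zero: solving the second equation for $v_2=(\epsilon\del + Q)^{-1}(\tfrac{1}{2}Av_1+\cdots)$ and substituting produces the effective scalar operator $\epsilon\delbar + P - \tfrac{1}{4}A(\epsilon\del + Q)^{-1}A$ acting on $v_1$. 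Because $(\epsilon\del + Q)^{-1}=Q^{-1}+O(\epsilon)$ on slowly varying inputs and $A^2/(4Q)=P$, the zeroth-order part cancels and the effective operator equals $\epsilon$ times a first-order transport operator whose principal part is exactly $\mathcal{L}$ (after extracting the leading amplitude $v_1\propto 2Q$). Inverting it therefore costs one power of $\epsilon$ but is controlled by Theorem~\ref{theorem:alpha-solve}, so with $N$ chosen large the error $\mathbf{e}$ is $o(1)$, and in fact $o(\epsilon^M)$ for any fixed $M$, which yields the full asymptotic series asserted in the statement.

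The hard part — and the reason this remains a conjecture — is making the uniform resolvent bound rigorous. The operator $(\epsilon\del + Q)^{-1}$ is \emph{not} uniformly bounded on $L^2$ or $\mathrm{W}(\mathbb{R}^2)$: with $Q\approx k$ its symbol $\tfrac{1}{2}\epsilon(\eta+\ii\xi)+k$ vanishes at the frequency $(\xi,\eta)\sim(-2\imag(k),-2\re(k))/\epsilon$, so the Neumann expansion in $\epsilon$ is valid only on inputs whose Fourier content stays at $O(1)$ frequencies. One must therefore work in a space enforcing this frequency localization and prove that the block reduction, the transport inversion, and the nonlinear feedback through the amplitude $A$ all preserve it — legitimate in principle because the eikonal data, and hence every $\boldsymbol{\alpha}_n$, are fixed $\epsilon$-independent smooth functions, but delicate to propagate through the error iteration. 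A further complication is the asymmetry $P=A^2/(4Q)\to 0$ as $|z|\to\infty$ while $Q\to k\neq 0$, so the reduction must be carried out on the $Q$-side, and, unlike the one-dimensional turning-point analysis, there is no ODE or Airy-function structure to fall back on. The alternative route — building a parametrix from the WKB series and summing a Neumann series for the identity-plus-small-operator factorization of the gauged Dirac operator — meets the same bottleneck of establishing that the error operator has norm below $1$ uniformly in $\epsilon$. I regard this frequency-localized uniform resolvent estimate as the single essential obstacle.
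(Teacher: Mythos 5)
The statement you are addressing is Conjecture~\ref{conjecture:WKB}, and the paper itself offers no proof of it: the authors state that ``a proof of accuracy of the method, even in the favorable situation of global existence of the eikonal function, eludes us,'' and they support the conjecture only by the formal WKB construction of Section~\ref{sec:WKB-formalism}, the remarks on rigorous analysis in Section~\ref{sec:accuracy-proof}, and the numerics of Section~\ref{sec:numerical-examples}. Your proposal likewise does not prove it --- you say so yourself --- so it cannot be accepted as a proof; the honest assessment is that you have reconstructed the state of the art rather than advanced past it. The first half of your argument (gauge transformation, singularity of the augmented coefficient matrix forcing the eikonal equation, leading direction proportional to $(2\del f-\ii\del S,\,A)^{\mathrm{T}}$, solvability conditions $\mathcal{L}\alpha_0=0$ and $\mathcal{L}\alpha_n=(\text{lower-order data})$) coincides essentially line by line with the paper's formalism. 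You also share a secondary gap with the paper: Theorem~\ref{theorem:alpha-solve} inverts $\mathcal{L}$ only on functions with $\alpha_0^2-1\in\mathrm{W}(\mathbb{R}^2)$, whereas the higher-order terms require the inhomogeneous inversion with $\alpha_n\to 0$; the paper merely asserts that ``invertibility of $\mathcal{L}$ on a suitable space of decaying functions'' suffices, and neither you nor the paper establishes it.

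Where you genuinely differ from the paper is in how the obstruction is framed. The paper writes the error equation $[\epsilon\mathcal{D}-\mathbf{M}]\tilde{\boldsymbol{\phi}}^{(N),\epsilon}=\epsilon^{N+1}\boldsymbol{\gamma}^{(N)}$ and reduces the whole difficulty to a uniform-in-$\epsilon$ bound on $(\epsilon\mathcal{D}-\mathbf{M})^{-1}$ restricted to forcing terms lying pointwise in $\mathrm{ran}(\mathbf{M}(x,y;k))$. You instead perform a Schur-complement reduction and identify concretely why naive estimates must fail: the symbol of $\epsilon\del+Q$ (with $Q\approx k$ at infinity) vanishes at frequencies of size $|k|/\epsilon$ --- equivalently, $\epsilon\del+k$ has approximate kernel spanned by $\ee^{-kz/\epsilon}$ --- so a uniform resolvent bound can only hold on a frequency-localized subspace, and propagating that localization through the error iteration is the open problem. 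Your computation that the zeroth-order terms cancel via $A^2/(4Q)=P$ and that the effective operator collapses, after extracting the amplitude $2Q$, to $\epsilon$ times a transport operator tied to $\mathcal{L}$ is formally correct and is a useful, more explicit version of the paper's ``range of $\mathbf{M}$'' advice; note, however, that the precise identity is $A\cdot(\text{effective operator})=\epsilon\,\mathcal{L}+O(\epsilon^2)$, so the transport operator is really $A^{-1}\mathcal{L}$, which degenerates where $A$ decays --- an additional wrinkle your $Q$-side reduction does not fully dispose of. In short: your strategy is sound as far as it goes and your diagnosis of the essential obstacle is sharper and more quantitative than the paper's, but the conjecture is exactly as open after your proposal as before it.
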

Some of the issues that would need to be addressed to give a proper 
proof of Conjecture~\ref{conjecture:WKB} are mentioned in 
Section~\ref{sec:accuracy-proof}. The accuracy of the WKB approximation predicted by Conjecture~\ref{conjecture:WKB} is illustrated in 
Figure~\ref{fig:conjecture}, in which the solution of the Dirac 
problem \eqref{eq:1.10}--\eqref{eq:psi-norm} for a Gaussian potential $A(x,y)\ee^{\ii S(x,y)/\epsilon}=\ee^{-(x^2+y^2)}$ at $k=1$ and 
$\epsilon=1/16$ is plotted in the upper row (for numerical reasons we plot the 
modulus of the components multiplied by $\ee^{-kz/\epsilon}$ in order to have functions 
bounded at infinity), while plots for the corresponding WKB 
approximation indicated in the conjecture are shown in the lower row. The qualitative accuracy of the 
approximation is obvious from these plots, however a more systematic numerical study of 
these questions is presented in Section~\ref{sec:numerical-examples}.
\begin{figure}[htb!]
  \includegraphics[width=0.49\textwidth]{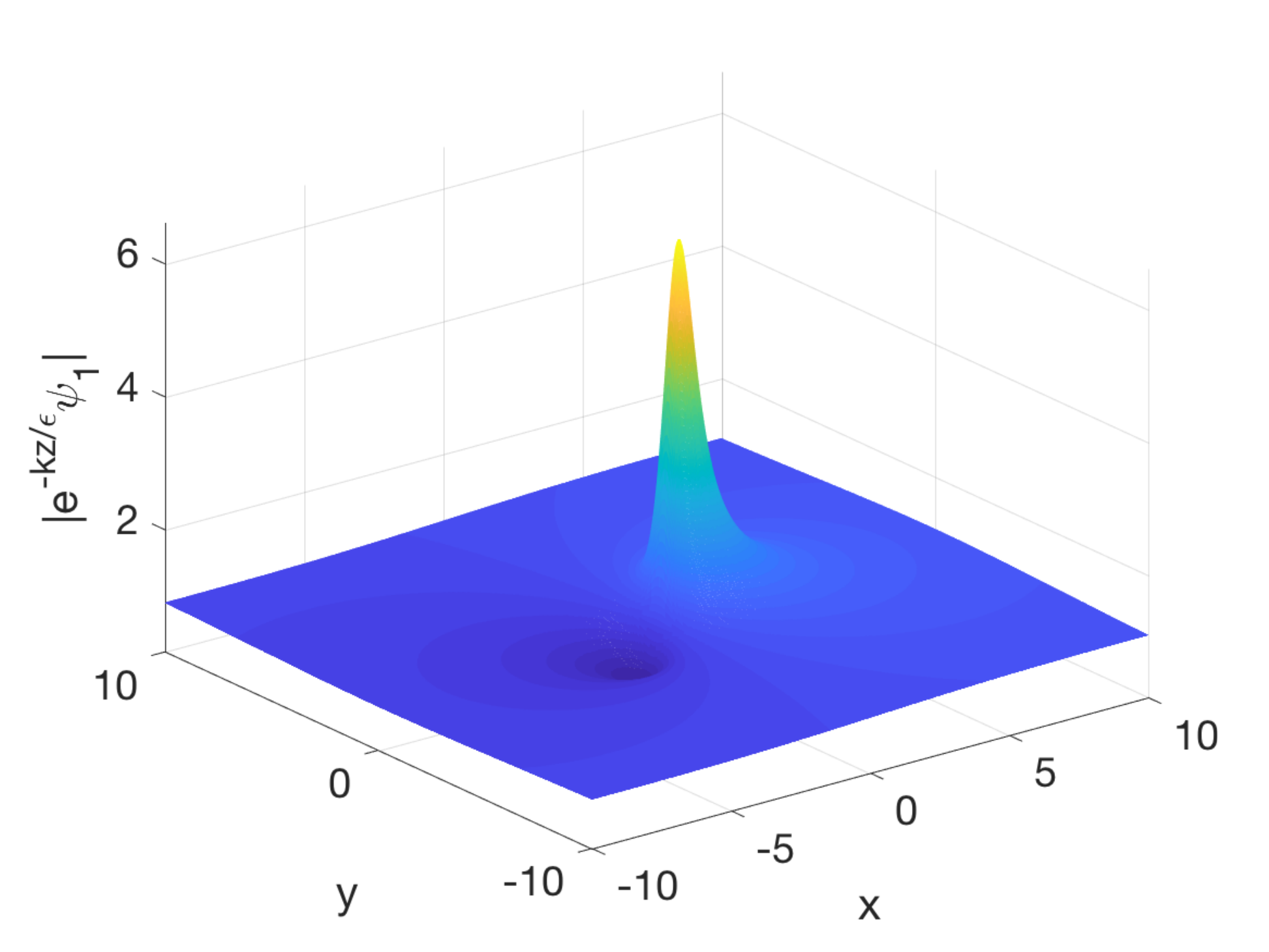}
  \includegraphics[width=0.49\textwidth]{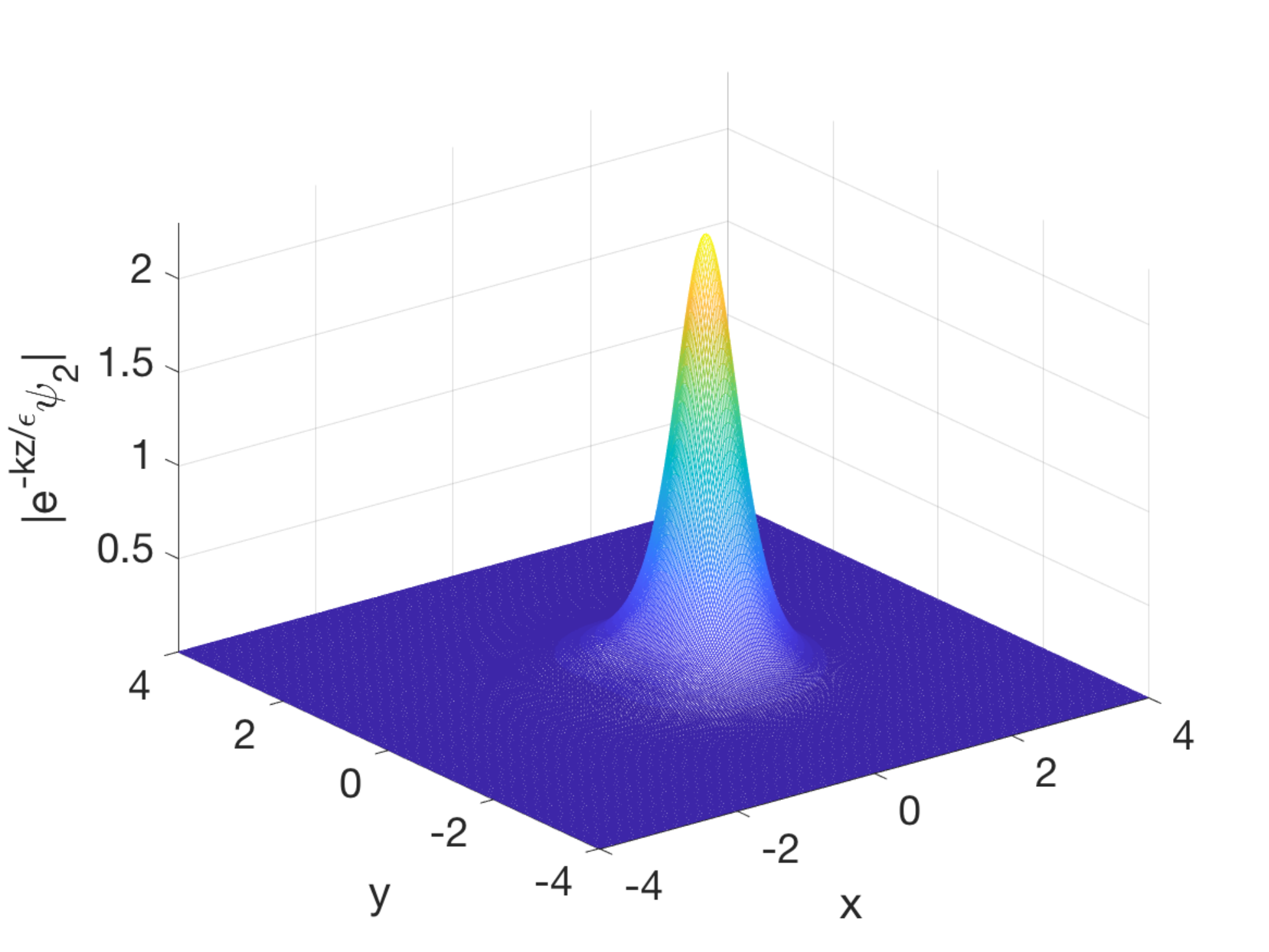}\\
  \includegraphics[width=0.49\textwidth]{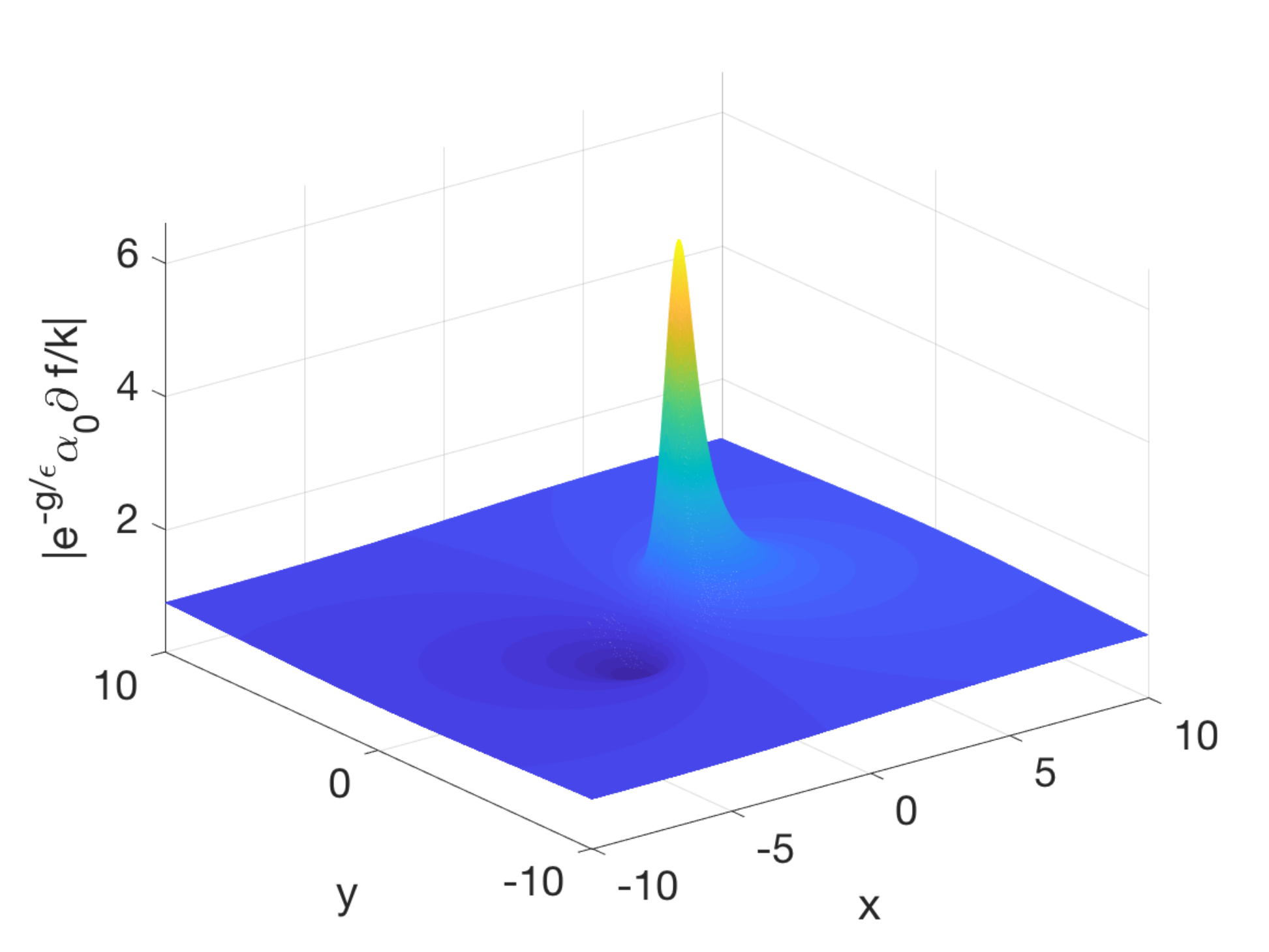}
  \includegraphics[width=0.49\textwidth]{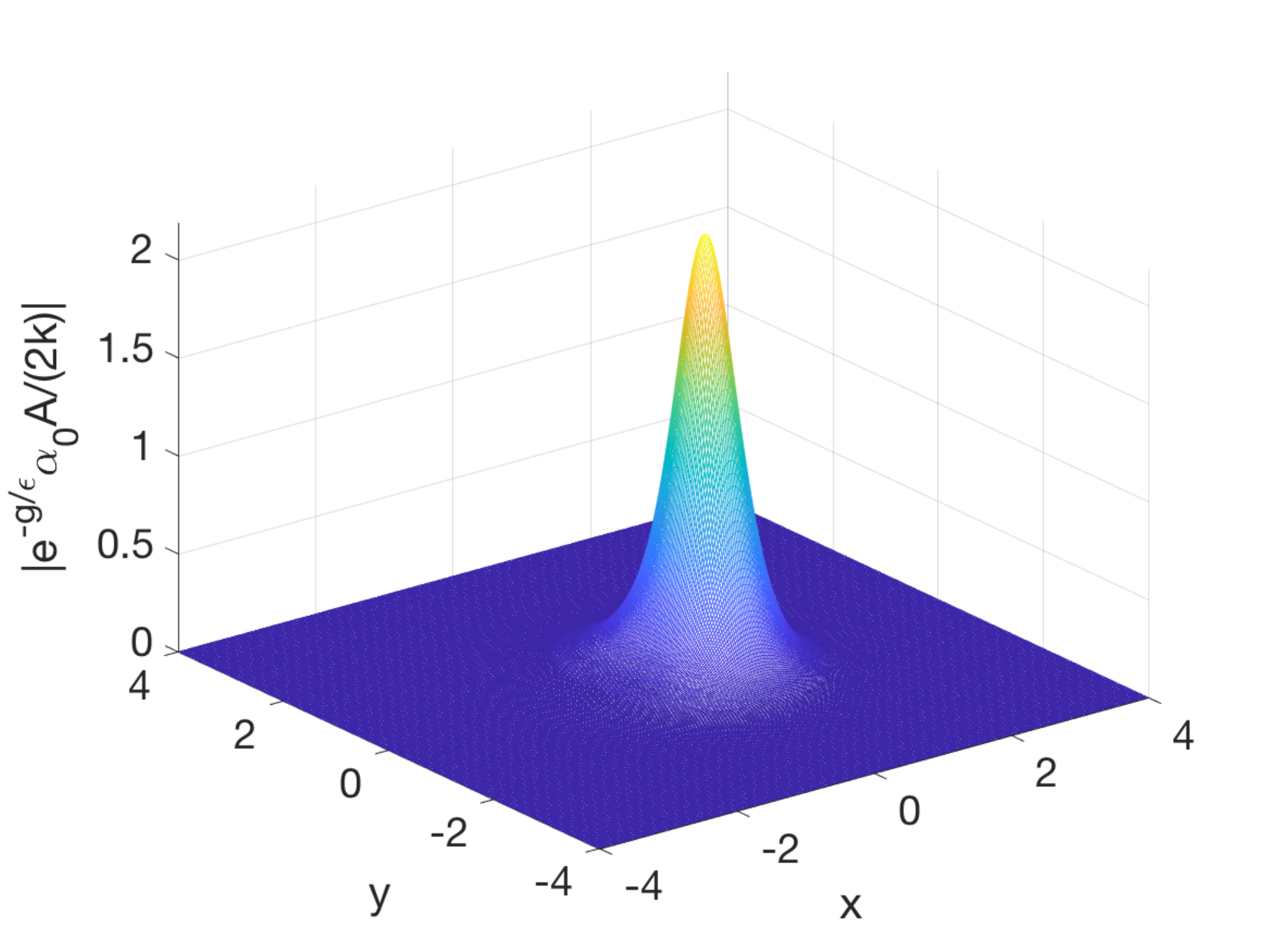}
 \caption{Comparison between the solution to the Dirac system \eqref{eq:1.10}--\eqref{eq:psi-norm} with Gaussian 
 potential $\ee^{-(x^2+y^2)}$ for $k=1$ and $\epsilon=1/16$ with the WKB approximation. First row:  the modulus of 
 $\ee^{-kz/\epsilon}\psi_{1}$ (left) and of  
 $\ee^{-kz/\epsilon}\psi_{2}$ (right). Second row: the corresponding WKB 
 approximations of Conjecture~\ref{conjecture:WKB}. }
 \label{fig:conjecture}
\end{figure}

A corollary of the existence of the full asymptotic expansion anticipated by Conjecture~\ref{conjecture:WKB} is the following stronger control on the reflection coefficient.
\begin{corollary}
Under the same conditions as in Conjecture~\ref{conjecture:WKB}, the reflection coefficient satisfies $R_0^\epsilon(k)=O(\epsilon^p)$ as $\epsilon\downarrow 0$ for all $p$.
\label{corollary:negligible-reflection}
\end{corollary}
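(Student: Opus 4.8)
The plan is to combine an exact integral representation of the reflection coefficient with the conjectured WKB asymptotics and a non-stationary phase argument. First I would derive the representation. Conjugating the second equation of \eqref{eq:1.10} gives $\epsilon\delbar\overline{\psi_2^\epsilon}=\tfrac12 q\overline{\psi_1^\epsilon}$, so the function $w:=\ee^{-kz/\epsilon}\overline{\psi_2^\epsilon}$, which decays at infinity by \eqref{eq:psi-norm}, satisfies $\delbar w=\tfrac{1}{2\epsilon}\ee^{-kz/\epsilon}q\overline{\psi_1^\epsilon}$ because $\delbar\ee^{-kz/\epsilon}=0$. Applying the solid Cauchy transform \eqref{eq:solid-Cauchy} and extracting the coefficient of $z^{-1}$ as $|z|\to\infty$ through the definition \eqref{eq:r-def} yields, at $t=0$ with $q=A\ee^{\ii S/\epsilon}$, the exact formula
\begin{equation}
R_0^\epsilon(k)=\frac{1}{\pi\epsilon}\iint_{\mathbb{R}^2}A\,\ee^{\ii S/\epsilon}\,\ee^{-kz/\epsilon}\,\overline{\psi_1^\epsilon}\,\dd\mathbb{A},\qquad z=x+\ii y,
\label{eq:R-integral-rep}
\end{equation}
valid for every $\epsilon>0$.

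Next I would insert the conjectured asymptotics. Writing $g:=f+\tfrac{\ii}{2}S-kz$, so that $g\to0$ by \eqref{eq:f-norm}, Conjecture~\ref{conjecture:WKB} applied to the first component states that $\phi^\epsilon:=\ee^{-f/\epsilon}\ee^{-\ii S/(2\epsilon)}\psi_1^\epsilon$ has a complete asymptotic expansion $\phi^\epsilon\sim\sum_{j\ge0}\epsilon^j\phi_j$ in a suitable norm, with leading term $\phi_0=\alpha_0(2\del f-\ii\del S)/(2k)$. Substituting $\psi_1^\epsilon=\ee^{f/\epsilon}\ee^{\ii S/(2\epsilon)}\phi^\epsilon$ into \eqref{eq:R-integral-rep} collapses the prefactors to
\begin{equation}
R_0^\epsilon(k)=\frac{1}{\pi\epsilon}\iint_{\mathbb{R}^2}A\,\ee^{\Theta/\epsilon}\,\overline{\phi^\epsilon}\,\dd\mathbb{A},\qquad \Theta:=\tfrac{\ii}{2}S+\overline{f}-kz.
\label{eq:R-phase}
\end{equation}
A direct computation with the Cauchy--Riemann operators gives $\del\Theta=\ii\del S+\del\overline{g}-k$ and $\delbar\Theta=\ii\delbar S+\delbar\overline{g}+\overline{k}$, which tend respectively to $-k$ and $\overline{k}$ as $|z|\to\infty$; by Theorem~\ref{theorem:eikonal-solve} and the lower bound \eqref{eq:mod-k-lower-bound} the corrections $\del S$, $\delbar S$, $\del\overline g$, $\delbar\overline g$ are small in $\mathrm{W}(\mathbb{R}^2)\hookrightarrow L^\infty$ relative to $|k|$, so $\delbar\Theta$ in particular is bounded away from zero on all of $\mathbb{R}^2$. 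Moreover $\re\Theta=\re g\to0$ at infinity, and since $A$ is Schwartz while the $\phi_j$ are smooth with controlled derivatives, the amplitudes $A\phi_j$ are Schwartz-class.

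The core estimate is then a non-stationary phase argument. For the truncation $\sum_{j=0}^N\epsilon^j\phi_j$ the $j$-th term contributes $\epsilon^{j-1}\pi^{-1}\iint A\,\overline{\phi_j}\,\ee^{\Theta/\epsilon}\,\dd\mathbb{A}$; writing $\ee^{\Theta/\epsilon}=\epsilon(\delbar\Theta)^{-1}\delbar\ee^{\Theta/\epsilon}$ and integrating by parts $M$ times in $\delbar$, with boundary terms vanishing because $A\phi_j$ is Schwartz and $(\delbar\Theta)^{-1}$ has bounded derivatives, gains a factor $\epsilon^M$, so each such term is $O(\epsilon^\infty)$. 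For the remainder $\rho_N^\epsilon:=\phi^\epsilon-\sum_{j=0}^N\epsilon^j\phi_j=O(\epsilon^{N+1})$ I would bound its contribution in modulus by $\pi^{-1}\epsilon^{-1}\iint|A|\,\ee^{\re g/\epsilon}\,|\rho_N^\epsilon|\,\dd\mathbb{A}$; provided Conjecture~\ref{conjecture:WKB} is known in a norm controlling this weighted integral, this is $O(\epsilon^{N})$, and since $N$ is arbitrary one concludes $R_0^\epsilon(k)=O(\epsilon^p)$ for all $p$.

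The hard part will be the interplay of the $\epsilon^{-1}$ prefactor in \eqref{eq:R-phase} with the sign of $\re\Theta=\re g$, which is not definite in general: for large $|k|$ one has $g\approx(4k)^{-1}\delbar^{-1}(A^2)$, whose real part changes sign. Where $\re g>0$ the weight $\ee^{\re g/\epsilon}$ grows, so the crude remainder bound only succeeds if the conjectured convergence holds in an exponentially weighted norm absorbing $\ee^{\re g/\epsilon}$ — equivalently, if one can show that $\psi_1$ is everywhere subdominant, or else deform the $(x,y)$-integration into complexified space and track Stokes contributions, which is precisely the multidimensional turning-point analysis that Conjecture~\ref{conjecture:WKB} leaves open. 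This mirrors the one-dimensional mechanism whereby the absence of real turning points renders the reflection integral a non-stationary oscillatory integral with beyond-all-orders value. A secondary technical point is verifying the Schwartz decay and uniform derivative bounds of all coefficients $\phi_j$ needed to iterate the integration by parts, which I expect to follow from the recursive transport structure behind \eqref{eq:alpha-equation-general} together with the hypotheses that $A$ and $S-S_\infty$ are Schwartz.
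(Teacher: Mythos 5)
Your exact integral representation $R_0^\epsilon(k)=\dfrac{1}{\pi\epsilon}\iint_{\mathbb{R}^2}q\,\ee^{-kz/\epsilon}\,\overline{\psi_1^\epsilon}\,\dd\mathbb{A}$ is correct (it follows from \eqref{eq:1.10}, \eqref{eq:psi-norm}, \eqref{eq:r-def}, and expanding the Cauchy kernel at $z=\infty$), and this bulk-integral route is genuinely different from the paper's. But the argument has a fatal gap, and it is not confined to the remainder term as you suggest: it already invalidates your treatment of the main terms. Your phase satisfies $\re\Theta=\re g$ with $g=f+\tfrac{\ii}{2}S-kz$, and $\re g$ is \emph{positive on an open set where $A=O(1)$}: for instance, with $S\equiv 0$, radial $A$, and large real $k>0$, the paper's series solution gives $g\approx(4kz)^{-1}\int_0^{|z|^2}a(m)^2\,\dd m$, whose real part is positive throughout the half-plane $x>0$. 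Hence $|\ee^{\Theta/\epsilon}|=\ee^{\re g/\epsilon}$ is exponentially large there. Your integration by parts in $\delbar$ is exact and gains one factor of $\epsilon$ per step, but every resulting integral still carries $\ee^{\Theta/\epsilon}$, so any modulus bound has the form $\epsilon^{M}\ee^{c/\epsilon}$ with $c=\sup\re g>0$; this proves nothing. Non-stationary phase requires $\re\Theta\le 0$ — one-dimensional model integrals $\int a\,\ee^{\theta/\epsilon}\,\dd x$ with $\theta'\neq 0$ but $\re\theta>0$ somewhere are typically exponentially \emph{large}, not small — and for merely Schwartz (non-analytic) $A$ and $S$ there is no reason the individual terms $\iint A\,\overline{\phi_j}\,\ee^{\Theta/\epsilon}\,\dd\mathbb{A}$ are negligible. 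The smallness of $R_0^\epsilon(k)$ comes from cancellation \emph{between} the main terms and the remainder, which is exactly what a term-by-term modulus estimate destroys; and the fixes you name (exponentially weighted norms, complexified deformation, global subdominance of $\psi_1$) go strictly beyond what Conjecture~\ref{conjecture:WKB} supplies, so the proof cannot close.

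The paper's proof avoids all of this by never writing $R_0^\epsilon(k)$ as a bulk integral. Using \eqref{eq:r-def} together with the gauge transformations and \eqref{eq:f-norm}, it extracts the reflection coefficient at $z=\infty$, as in \eqref{eq:reflection-phi}: $R_0^\epsilon(k)=2\lim_{z\to\infty}z\,\ee^{-2\ii\imag(kz)/\epsilon}\,\overline{\phi_2^\epsilon(x,y;k)}$. At infinity $g\to 0$, so no exponential weights ever enter. The $\epsilon$-independent WKB coefficients $\phi_2^{(n)}$ must each be $o(z^{-1})$ — this is forced by the a priori existence of $R_0^\epsilon(k)$ for every $\epsilon>0$ — so each contributes zero to the limit, and the uniform $o(\epsilon^N)$ remainder control assumed from Conjecture~\ref{conjecture:WKB} (in a norm strong enough to control the $z$-weighted limit) immediately yields $R_0^\epsilon(k)=o(\epsilon^N)$ for every $N$. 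In short: the quantity you need is a coefficient at $z=\infty$, where the WKB exponent is neutralized by construction; pushing the estimate into the interior of the $(x,y)$-plane, as your representation does, is precisely what re-introduces the turning-point/Stokes difficulty that the corollary is designed to sidestep.
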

The proof is simply based on the a priori existence of the reflection coefficient and is given in Section~\ref{sec:WKB-formalism}.

If $|k|$ becomes too small, we can no longer guarantee the existence of a global solution to the eikonal problem \eqref{eq:eikonal}--\eqref{eq:f-norm}.  As long as $k\neq 0$ it is, however, possible
to find a solution in a $k$-dependent neighborhood of $z=\infty$:
\begin{theorem}
Suppose that $u=A^2\in L^p(\mathbb{R}^2)\cap \mathrm{W}(\mathbb{R}^2)$ for some $p\in [1,2)$, that $S\in C^1(\mathbb{R}^2)$, $S-S_\infty\in \mathrm{W}(\mathbb{R}^2)$ for some constant $S_\infty\in\mathbb{R}$, and $v=\partial S\in \mathrm{W}(\mathbb{R}^2)$, and let $k\neq 0$ be a given complex value.  Then there exists $\rho>0$ such that there is a classical solution $f(x,y;k)$ of \eqref{eq:eikonal}--\eqref{eq:f-norm} defined for $|z|\ge \rho$.  There is also a corresponding classical solution $\alpha_0(x,y;k)$ of \eqref{eq:alpha-equation-general} in the same domain $|z|\ge \rho$ satisfying $\alpha_0\to 1$ as $|z|\to\infty$.
\label{theorem:squash}
\end{theorem}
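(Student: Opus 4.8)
The plan is to deduce the exterior solution from the global existence result Theorem~\ref{theorem:eikonal-solve} by replacing the data $(u,S)$ with cut-off data that agree with $(u,S)$ near $z=\infty$ but whose relevant Wiener norms are so small that the large-$|k|$ condition \eqref{eq:mod-k-lower-bound} holds for the \emph{given, fixed} value $k\neq 0$. First I would fix a smooth radial cut-off $\chi\colon\mathbb{R}^2\to[0,1]$ with $\chi\equiv 0$ on $|z|\le 1$ and $\chi\equiv 1$ on $|z|\ge 2$, set $\chi_\rho(z):=\chi(z/\rho)$ and $\phi_\rho:=1-\chi_\rho$ (compactly supported), and define the modified data $u_\rho:=\chi_\rho u$ and $S_\rho:=S_\infty+\chi_\rho(S-S_\infty)$, so that $v_\rho:=\del S_\rho=\chi_\rho v+(\del\chi_\rho)(S-S_\infty)$. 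By construction $u_\rho\equiv u$, $S_\rho\equiv S$ and $v_\rho\equiv v$ on the exterior $|z|\ge 2\rho$, while $u_\rho\in L^p(\mathbb{R}^2)$ with $|u_\rho|\le|u|$ and $S_\rho\in C^1(\mathbb{R}^2)$.

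The crux is to show that $\|u_\rho\|_\mathrm{W}\to 0$ and $\|v_\rho\|_\mathrm{W}\to 0$ as $\rho\to\infty$. Here I would exploit three structural facts about the Wiener algebra: it is a Banach algebra under pointwise multiplication; its norm is invariant under dilations; and $H^s(\mathbb{R}^2)\hookrightarrow\mathrm{W}(\mathbb{R}^2)$ for every $s>1$ (by Cauchy--Schwarz, since $(1+|\xi|^2)^{-s/2}\in L^2$ exactly when $s>1$). Dilation invariance gives $\|\phi_\rho\|_\mathrm{W}=\|\phi_1\|_\mathrm{W}$, so multiplication by $\chi_\rho=1-\phi_\rho$ is bounded on $\mathrm{W}$ with operator norm $\le 1+\|\phi_1\|_\mathrm{W}$, uniformly in $\rho$. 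For $u\in\mathrm{W}$ I then approximate $u$ in $\mathrm{W}$ by a Schwartz function $u_c$ and split $\chi_\rho u=\chi_\rho(u-u_c)+\chi_\rho u_c$: the first term is $O(\|u-u_c\|_\mathrm{W})$ uniformly in $\rho$, while $\|\chi_\rho u_c\|_\mathrm{W}\le C\|\chi_\rho u_c\|_{H^2}\to 0$ because $u_c$ is Schwartz and the derivatives of $\chi_\rho$ are uniformly bounded (so the $H^2$-norm is controlled by $L^2$-tails of $u_c$ on $|z|\ge\rho$). Letting $\rho\to\infty$ and then $\|u-u_c\|_\mathrm{W}\to 0$ yields $\|u_\rho\|_\mathrm{W}\to 0$. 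The identical argument applied to $v\in\mathrm{W}$ controls $\chi_\rho v$, and the remaining piece obeys $\|(\del\chi_\rho)(S-S_\infty)\|_\mathrm{W}\le\|\del\chi_\rho\|_\mathrm{W}\,\|S-S_\infty\|_\mathrm{W}=\rho^{-1}\|\del\chi_1\|_\mathrm{W}\,\|S-S_\infty\|_\mathrm{W}\to 0$, which is precisely where the extra hypothesis $S-S_\infty\in\mathrm{W}$ of this theorem (absent from Theorem~\ref{theorem:eikonal-solve}) is used.

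With the smallness in hand the conclusion follows by transfer. For the fixed $k\neq 0$, choose $B=B_\rho\to 0$ with $B_\rho>\|v_\rho\|_\mathrm{W}$ and $B_\rho-\|v_\rho\|_\mathrm{W}$ bounded below by a fixed multiple of $B_\rho$ (e.g.\ $B_\rho:=2\|v_\rho\|_\mathrm{W}+\sqrt{\|u_\rho\|_\mathrm{W}}$); then the right-hand side of \eqref{eq:mod-k-lower-bound} tends to $0$, so for all sufficiently large $\rho$ the cut-off data $(u_\rho,S_\rho)$ satisfy the hypotheses of Theorems~\ref{theorem:eikonal-solve} and \ref{theorem:alpha-solve} at the given $k$. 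These supply a global classical $f_\rho$ solving the modified eikonal problem (that is, \eqref{eq:eikonal}--\eqref{eq:f-norm} with $(u,S)$ replaced by $(u_\rho,S_\rho)$) together with the estimate \eqref{eq:f-estimate}, and a corresponding $\alpha_{0,\rho}$ with $\alpha_{0,\rho}^2-1\in\mathrm{W}$. Since the eikonal equation \eqref{eq:eikonal} and the operator $\mathcal{L}$ in \eqref{eq:alpha-equation-general} are purely local (pointwise) differential expressions whose coefficients are built from $u$, $S$, and $f$, and since $u_\rho,S_\rho$ coincide with $u,S$ on $|z|\ge 2\rho$, the restrictions $f:=f_\rho$ and $\alpha_0:=\alpha_{0,\rho}$ to $|z|\ge 2\rho$ solve the original equations there; the normalization \eqref{eq:f-norm} transfers because $S_\rho\equiv S$ near infinity, and $\alpha_0\to 1$ follows from $\alpha_{0,\rho}^2-1\in\mathrm{W}\subset C_0$. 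Renaming $2\rho$ as the theorem's $\rho$ completes the argument.

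The hard part will be the smallness step $\|u_\rho\|_\mathrm{W},\|v_\rho\|_\mathrm{W}\to 0$. Unlike $L^p$ or sup-norm tails, Wiener-algebra tails are not automatically small, because sharp spatial truncation destroys Fourier integrability; a naive indicator cut-off is not even a Wiener multiplier. The resolution rests entirely on combining dilation invariance of $\|\cdot\|_\mathrm{W}$ (to bound the multiplier norm of the \emph{smooth} cut-off $\chi_\rho$ uniformly in $\rho$) with the density-plus-$H^s$-embedding estimate; once this is secured, all remaining steps are a routine transfer of the already-established global results through the locality of the equations, and no new fixed-point analysis is required.
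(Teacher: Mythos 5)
Your proposal is correct, and its skeleton coincides with the paper's own proof: damp the data near $z=\infty$ with a dilated cutoff, verify that the resulting Wiener norms are so small that the hypothesis \eqref{eq:mod-k-lower-bound} of Theorems~\ref{theorem:eikonal-solve} and \ref{theorem:alpha-solve} holds at the given fixed $k\neq 0$, and then restrict the global solutions $\tilde f$, $\tilde\alpha_0$ to the exterior region where the modified data agree with $(u,S)$; the paper also isolates the same commutator term (a derivative of the cutoff times $S$), bounds it by $O(\rho^{-1})$ via scale invariance and the Banach algebra property, and this is exactly where the extra hypothesis $S-S_\infty\in\mathrm{W}(\mathbb{R}^2)$ enters in both arguments. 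Where you genuinely diverge is in the proof of the crucial smallness statement. The paper takes a compactly supported $n\in\mathrm{W}(\mathbb{R}^2)$ with $n(0,0)=1$ and $\del n\in\mathrm{W}(\mathbb{R}^2)$, sets $n_\rho:=n(\cdot/\rho)$, and observes that $\widehat{n_\rho}(\xi)=\rho^2\hat{n}(\rho\xi)$ is an approximate identity of unit mass (precisely because $n(0,0)=1$), so that $\|n_\rho h-h\|_\mathrm{W}=\|\widehat{n_\rho}\ast\hat{h}-\hat{h}\|_{L^1}\to 0$ for every $h\in\mathrm{W}(\mathbb{R}^2)$ by the standard $L^1$ approximation theorem \cite[Theorem 2.16]{LiebL01}; hence $\|(1-n_\rho)h\|_\mathrm{W}\to 0$. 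You instead prove the same fact by combining the dilation-invariant multiplier bound $\|\chi_\rho h\|_\mathrm{W}\le(1+\|\phi_1\|_\mathrm{W})\|h\|_\mathrm{W}$ with density of Schwartz functions in $\mathrm{W}(\mathbb{R}^2)$ (which is automatic from the paper's definition of $\mathrm{W}(\mathbb{R}^2)$ as a completion of $\mathscr{S}(\mathbb{R}^2)$) and the Cauchy--Schwarz embedding $H^2(\mathbb{R}^2)\hookrightarrow\mathrm{W}(\mathbb{R}^2)$ applied to $\chi_\rho u_c$ for Schwartz $u_c$. Both routes are sound. The paper's is shorter and demands almost nothing of the cutoff beyond compact support, $n\in\mathrm{W}(\mathbb{R}^2)$, $\del n\in\mathrm{W}(\mathbb{R}^2)$ (no smoothness), but leans on an external convolution theorem; yours is more self-contained, using only the algebra and dilation structure of $\mathrm{W}(\mathbb{R}^2)$ plus an elementary Sobolev tail estimate, at the cost of an extra approximation layer. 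The remaining bookkeeping (your choice $B_\rho=2\|v_\rho\|_\mathrm{W}+\sqrt{\|u_\rho\|_\mathrm{W}}$ versus the paper's fixed $B\in(0,\tfrac{1}{2}|k|)$ with $\|\tilde u\|_\mathrm{W}\le|k|B$, $\|\tilde v\|_\mathrm{W}\le\tfrac{1}{2}B$) and the transfer of \eqref{eq:f-norm} and of $\alpha_0\to 1$ through locality are equivalent in the two arguments.
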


The proof is given in Section~\ref{sec:squash}.  This result begs the question of what goes wrong with the eikonal problem if, given $k\neq 0$ with $|k|$ sufficiently small, one tries to continue the solution inwards from $z=\infty$.  Here we cannot say much yet; however we can present a potentially illustrative example.  Namely, if $A(x,y)=(1+x^2+y^2)^{-1}$ (a Lorentzian potential) and $S(x,y)\equiv 0$, we show in Section~\ref{sec:Lorentzian} that for $|k|>\tfrac{1}{2}$, 
the eikonal problem \eqref{eq:eikonal}--\eqref{eq:f-norm} has the explicit global solution
\begin{equation}
f(x,y;k)=kz +\frac{1}{2}\arcsin(W)+\frac{(1-W^2)^{1/2}-1}{2W},\quad W:=\frac{\overline{z}}{k(1+z\overline{z})},
\label{eq:fW-Lorentzian}
\end{equation}
and the equation \eqref{eq:alpha-equation-general} has the explicit solution
\begin{equation}
\alpha_0(x,y;k)=\sqrt{2}\left(\left(1-W^2\right)^{1/2}\left(1+\left(1-W^2\right)^{1/2}\right)\right)^{-1/2}
\label{eq:alphaW-Lorentzian}
\end{equation}
for which $\alpha_0(x,y;k)\to 1$ as $|z|\to\infty$.  Note that $\alpha_0$ is well-defined and smooth as long as $|k|>1/2$, i.e., exactly the same condition under which $f(x,y;k)$ is smooth.

With these explicit formulae in hand, we can begin to address the question of what happens to $f(x,y;k)$ when $|k|<1/2$, a necessary condition for the reflection coefficient to be non-negligible in the limit $\epsilon\to 0$.  It is easy to see that the whole complex $z$-plane is mapped onto the closed disk $D_{1/(2|k|)}$ in the $W$-plane of radius $(2|k|)^{-1}$ centered at $W=0$.  Each point in the interior of 
$D_{1/(2|k|)}$ has exactly two preimages in the $z$-plane along the ray satisfying $\arg(z)+\arg(k)+\arg(W)=0$, one with $|z|<1$ and one with $|z|>1$, while the map is one-to-one from the unit circle in the $z$-plane onto the boundary of $D_{1/(2|k|)}$.  When $|k|<1/2$, the disk $D_{1/(2|k|)}$ necessarily intersects both branch cuts emanating from $W=\pm 1$.  Pulling the parts of the branch cuts in $D_{1/(2|k|)}$ back to the $z$-plane, one sees that $f(x,y;k)$ is well-defined and smooth with the exception of two cuts, each of which connects the two preimages in the $z$-plane of the branch points $W=\pm 1$, joining them through the point on the unit circle in the $z$-plane corresponding to where the branch cut in the $W$-plane meets the boundary of $D_{1/(2|k|)}$.  Assuming $|k|<1/2$, the two preimages of $W=\pm 1$ are
\begin{equation}
z=\pm\frac{1}{2k}\left[1+\sigma\sqrt{1-4|k|^2}\right],\quad\sigma^2=1.
\end{equation}
This calculation is interesting because it shows that at branch points of $f$, which may be compared with turning points in the one-dimensional problem, the amplitude function $\alpha_0$ given by \eqref{eq:alphaW-Lorentzian} exhibits $-1/4$ power singularities, exactly as in the one-dimensional problem (see \cite[Section 7.2]{Miller06} and \cite[Appendix B.2]{MillerQ15}).  \emph{This suggests that the branch points might play the role in the two-dimensional problem that turning points play in the one-dimensional problem.}  
The branch points and cuts for $f(x,y;k)$ are shown in the $z$-plane for two values of $k$ in Figure~\ref{fig:LorentzianBranchCuts}.
\begin{figure}[h]
\begin{center}
\includegraphics{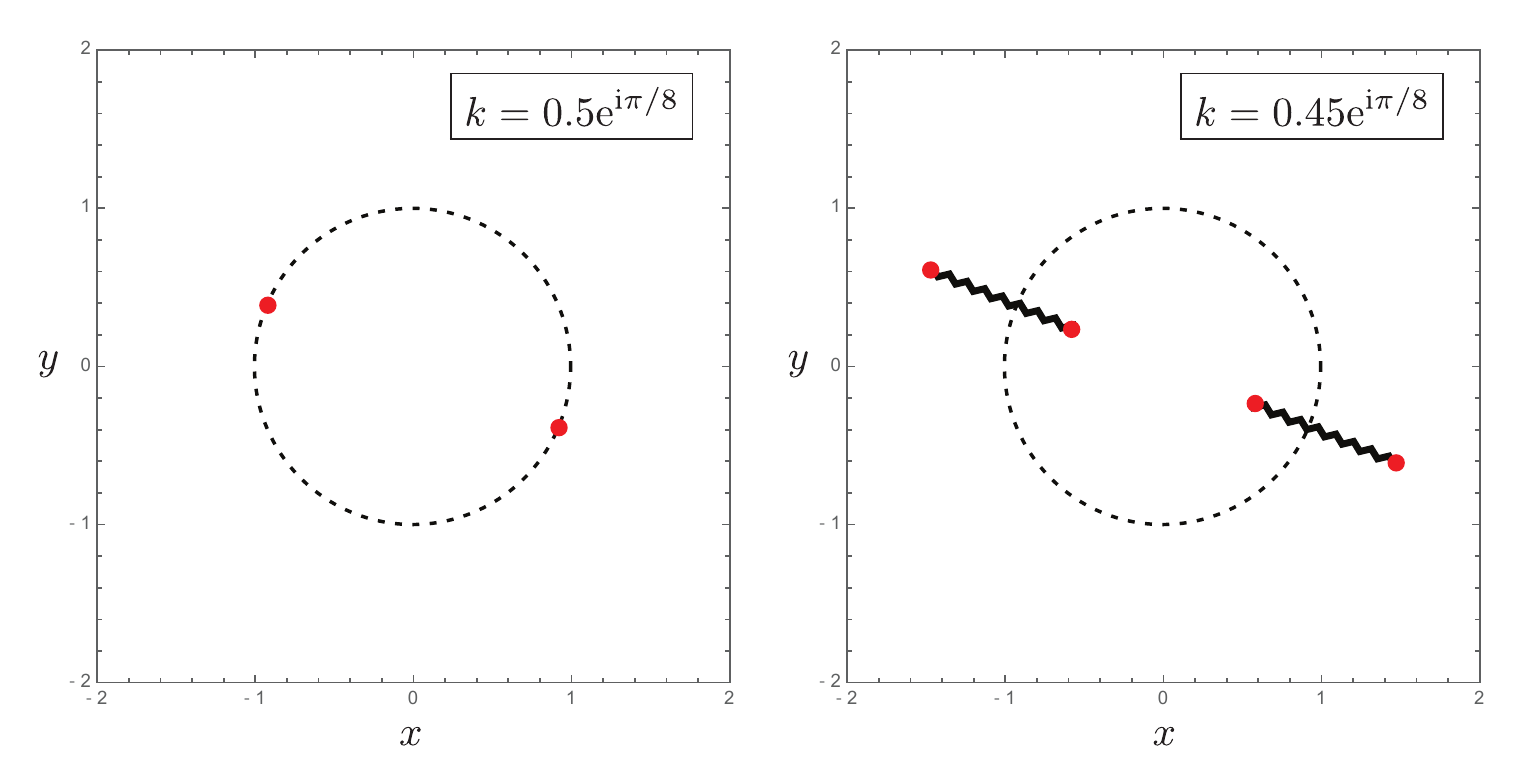}
\end{center}
\caption{The branch points (red) and cuts (wavy lines) in the $z$-plane for the continuation below $|k|=\tfrac{1}{2}$ of the solution \eqref{eq:fW-Lorentzian} of the eikonal problem \eqref{eq:eikonal}--\eqref{eq:f-norm} for the Lorentzian potential $A(x,y)=(1+x^2+y^2)^{-1}$ with $S(x,y)\equiv 0$.  Left:  $k=\tfrac{1}{2}\ee^{\ii\pi/8}$.  Right:  $k=0.45\ee^{\ii\pi/8}$.  For reference, the unit circle is shown with a dashed line.}
\label{fig:LorentzianBranchCuts}
\end{figure}

In the one-dimensional problem, the reflection coefficient fails to converge to zero with $\epsilon$ as soon as turning points appear in the problem, and one might therefore be led to believe that in the two-dimensional problem something similar occurs when $k$ decreases within a finite radius (e.g., $|k|=1/2$ for $A(x,y)=(1+x^2+y^2)^{-1}$ and $S(x,y)\equiv 0$) at which point singularities first appear in the solution of the eikonal problem.  \emph{Numerical reconstructions of the reflection coefficient for small $\epsilon$ suggest that this is indeed the case.}  In Figure~\ref{fig:reflection-Gaussian} we plot the reflection coefficient as a function of $|k|$ for the Gaussian potential $A(x,y)=\ee^{-(x^2+y^2)}$ with $S(x,y)\equiv 0$.  
\begin{figure}[h]
\begin{center}
\includegraphics{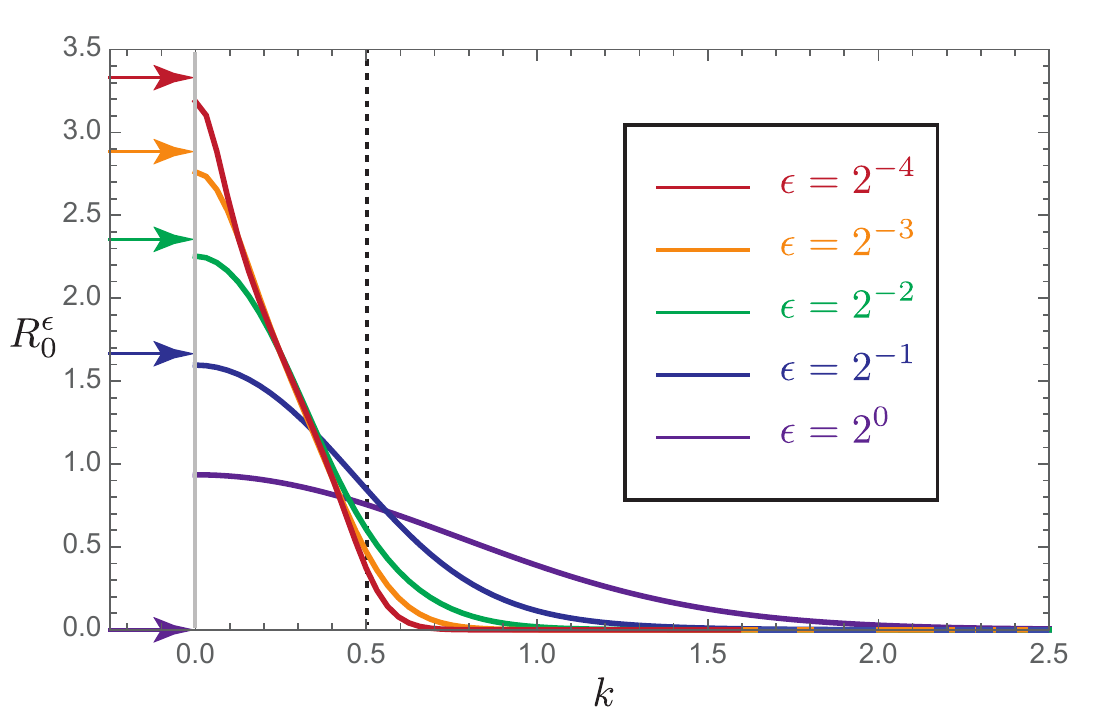}
\end{center}
\caption{Numerical calculations of the (radially symmetric and real-valued) reflection coefficient $R_0^\epsilon(k)$ for the Gaussian potential $q=A(x,y)\ee^{\ii S(x,y)/\epsilon}$ with $S(x,y)\equiv 0$ and $A(x,y)=\ee^{-(x^2+y^2)}$, plotted as a function of $k>0$ for several values of $\epsilon>0$.  The dotted vertical line at $k=1/2$ is the numerically-predicted threshold below which the eikonal problem does not have a smooth global solution and the support of the reflection coefficient appears to concentrate with decreasing $\epsilon$.  Also indicated with arrows on the vertical axis are the corresponding values of the relatively accurate approximate formula $2\sqrt{\ln(\epsilon^{-1})}$ for $R_0^\epsilon(0)$ as predicted by Theorem~\ref{theorem:Riccati-rigorous}.}
\label{fig:reflection-Gaussian}
\end{figure}
The reflection coefficient was calculated by solving the direct scattering problem \eqref{eq:1.10}--\eqref{eq:psi-norm} numerically using the scheme of \cite{KleinMcL} summarized in Section~\ref{sec:numerical-dbar}.  
These plots show that as $\epsilon\downarrow 0$ the support of the reflection coefficient $R^\epsilon_0(k)$ appears to reduce to a bounded region as $\epsilon\downarrow 0$, perhaps the domain $|k|\le\tfrac{1}{2}$.  
Now, as will be explained in Section~\ref{sec:Gaussian}, Theorem~\ref{theorem:eikonal-solve} predicts the existence of a global solution of the eikonal problem \eqref{eq:eikonal}--\eqref{eq:f-norm} if $|k|>1$ for the potential $A(x,y)=\ee^{-(x^2+y^2)}$ with $S(x,y)\equiv 0$, but our numerical calculations described in Section~\ref{sec:numerical-Gaussian} suggest that this is not a sharp bound, and moreover they suggest that the correct value at which singularities first form in $f(x,y;k)$ is again $|k|=\tfrac{1}{2}$, exactly as is known to be true for the Lorentzian potential.
\emph{Therefore,
like in the one-dimensional problem, we expect that the existence of singularities in the solution of the eikonal problem \eqref{eq:eikonal}--\eqref{eq:f-norm} leads to nontrivial reflection in the semiclassical limit}.

Despite this connection with the one-dimensional problem, it is worth dwelling on the stark qualitative differences between the asymptotic behavior of $R_0^\epsilon(k)$ for the two-dimensional problem as illustrated in Figure~\ref{fig:reflection-Gaussian}
and that of $R_0^\epsilon(\lambda)$ for the one-dimensional problem as given by \eqref{eq:1-D-reflection}
for $\lambda_-<\lambda<\lambda_+$ (and $R_0^\epsilon(\lambda)=o(1)$ as $\epsilon\downarrow 0$ for $\lambda\in\mathbb{R}\setminus [\lambda_-,\lambda_+]$).  Apparently, $R_0^\epsilon(k)$ is real-valued, non-oscillatory, and develops a growing peak near $k=0$ as $\epsilon\downarrow 0$, while $R_0^\epsilon(\lambda)$ is complex, rapidly oscillatory, and essentially of unit modulus within its asymptotic support.  \emph{Moreover, it seems obvious to the eye that as $\epsilon\downarrow 0$, $R_0^\epsilon(k)$ is converging pointwise to a real, radially-symmetric function with compact support on the disk $|k|\le\tfrac{1}{2}$ and that blows up as $|k|\downarrow 0$.}

We do not yet have a good explanation for most of these interesting features of $R^\epsilon_0(k)$.  However, motivated by the numerical observation of the growth of $R_0^\epsilon(k)$ near $k=0$, in Section \ref{sec:Riccati} we show how the solution of the direct spectral problem can be calculated for small $\epsilon$ at $k=0$ for radially symmetric potentials $A$ (and $S\equiv 0$).  This analysis is based on a radial ordinary differential equation, and it results in an asymptotic formula for $R_0^\epsilon(0)$ that we prove is accurate as $\epsilon\downarrow 0$.  
\begin{theorem}
Suppose that $S(x,y)\equiv 0$ and that $A=A(r=\sqrt{x^2+y^2})$, where $A(r)$ is a continuous nonincreasing function with $A(r)>0$ for all $r>0$ such that the function $rA(r)$ has a single maximum. Assume further that for some positive constants $L\le U$, $b$ and $p$, the inequalities $L\ee^{-br^p}\le A(r)\le U\ee^{-br^p}$ hold for $r$ sufficiently large.  Then $R^\epsilon_0(0)=2(b^{-1}\ln(\epsilon^{-1}))^{1/p}(1+o(1))$ as $\epsilon\downarrow 0$.
\label{theorem:Riccati-rigorous}
\end{theorem}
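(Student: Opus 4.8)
The plan is to exploit the radial symmetry to collapse the two-dimensional Dirac system \eqref{eq:1.10}--\eqref{eq:psi-norm} (with $S\equiv 0$, $k=0$, so that $q=A$ is real and radial) to a scalar Riccati equation. First I would seek the CGO solution in the form $\psi_1^\epsilon=u(r)$ and $\psi_2^\epsilon=\ee^{\ii\theta}v(r)$, where $z=r\ee^{\ii\theta}$. Substituting into \eqref{eq:1.10} and using $\del=\tfrac12\ee^{-\ii\theta}(\del_r-\ii r^{-1}\del_\theta)$ and $\delbar=\tfrac12\ee^{\ii\theta}(\del_r+\ii r^{-1}\del_\theta)$ reduces the system to the \emph{real} ODEs $\epsilon u'=Av$ and $\epsilon(v'+r^{-1}v)=Au$, subject to $u(\infty)=1$, $v(\infty)=0$ and the regularity conditions $v(0)=0$ and $u'(0)=0$ (the latter being automatic). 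Since the data and coefficients are real, the solution is real, and by uniqueness of the CGO solution this ansatz \emph{is} the sought solution. Matching $\overline{\psi_2^\epsilon}=\tfrac12 R_0^\epsilon(0)z^{-1}+O(|z|^{-2})$ then yields the clean identity $R_0^\epsilon(0)=2\lim_{r\to\infty}rv(r)$.

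Next I would pass to the ratio $y(r):=v(r)/u(r)$, which satisfies the Riccati equation
\begin{equation}
y'=\frac{A(r)}{\epsilon}\left(1-y^2\right)-\frac{y}{r},\qquad y(0)=0,
\label{eq:riccati-plan}
\end{equation}
and since $u\to1$ one has $R_0^\epsilon(0)=2\lim_{r\to\infty}ry(r)$. A short invariant-region argument (the field is $A/\epsilon>0$ at $y=0$ and $-1/r<0$ at $y=1$) gives $0\le y(r)\le 1$, which forces global existence and, since $A$ decays super-polynomially, the decay $y\sim c/r$ with $c=\tfrac12 R_0^\epsilon(0)$. The decisive step is to set $g:=ry$, for which \eqref{eq:riccati-plan} becomes $g'=\epsilon^{-1}rA(1-y^2)\ge0$; thus $g$ is nondecreasing, $0\le g\le r$, and
\begin{equation}
R_0^\epsilon(0)=2g(\infty)=\frac{2}{\epsilon}\int_0^\infty rA(r)\left(1-y(r)^2\right)\,\dd r.
\label{eq:c-integral-plan}
\end{equation}

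The heuristic organizing the estimate of \eqref{eq:c-integral-plan} is that $y=1$ is the stable equilibrium of the fast dynamics, so in the bulk $y$ is pinned near $1$ with a quasi-static defect $\eta:=1-y$ satisfying $\tfrac{2A}{\epsilon}\eta\approx r^{-1}$, i.e. $\eta\approx\epsilon/(2rA)$; then the integrand $\epsilon^{-1}rA(1-y^2)=\epsilon^{-1}rA\,\eta(2-\eta)\approx1$ throughout the bulk and falls off once $\eta$ becomes $O(1)$, which occurs at a transition radius $r_\epsilon$ determined by $A(r_\epsilon)\approx\epsilon/(2r_\epsilon)$. Consequently $c\approx r_\epsilon$. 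The tail bounds $L\ee^{-br^p}\le A\le U\ee^{-br^p}$ give $br_\epsilon^p=\ln(\epsilon^{-1})+O(\ln r_\epsilon)=\ln(\epsilon^{-1})(1+o(1))$, because $\ln r_\epsilon=O(\ln\ln(\epsilon^{-1}))$ and the constants $L,U$ shift $-\ln A$ only by $O(1)$; hence $r_\epsilon=(b^{-1}\ln(\epsilon^{-1}))^{1/p}(1+o(1))$, and $R_0^\epsilon(0)=2c$ matches the claimed asymptotics.

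To make this rigorous I would prove matching two-sided bounds on $c$ by ODE comparison. For the \emph{lower} bound, monotonicity gives $c\ge g(r_-)=r_-\,y(r_-)\ge(1-\delta)r_-$, where $r_-$ is the radius at which $y=1-\delta$; a subsolution for $\eta$, built from $\eta'=-\tfrac{A}{\epsilon}\eta(2-\eta)+r^{-1}(1-\eta)$ by bounding $2-\eta$ and $1-\eta$, shows $y\ge1-\delta$ out to $r_-=(b^{-1}\ln(\epsilon^{-1}))^{1/p}(1+o(1))$ for any fixed $\delta>0$. For the \emph{upper} bound, I would use $h:=r-g$, which obeys $h'\ge1-\tfrac{2A}{\epsilon}h$ (from $r+g\le2r$); a subsolution $\underline h\approx\epsilon/(2A)$ of the corresponding linear equation yields $g(r)\le r-\underline h(r)$ for all $r$, whence $c=\sup_r g(r)\le\max_r\bigl(r-\underline h(r)\bigr)$, and this maximum is again $(b^{-1}\ln(\epsilon^{-1}))^{1/p}(1+o(1))$. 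The single-maximum hypothesis on $rA(r)$ guarantees the super-level sets $\{rA>\mathrm{const}\cdot\epsilon\}$ form a single interval, localizing the analysis to one transition region and making the inner layer near $r=0$ (where $y\sim A(0)r/(2\epsilon)$ over width $O(\epsilon)$) contribute only $O(\epsilon)$ to \eqref{eq:c-integral-plan}. \textbf{The main obstacle} will be justifying the quasi-static approximation \emph{uniformly through the slowly drifting transition region}: the equilibrium $\epsilon/(2rA)$ itself moves as $A$ varies, so one must show the true solutions of the linear comparison equations track their equilibria with relative error $o(1)$, and that the accumulated error in $c$ is $o(r_\epsilon)$ even as $r_\epsilon\to\infty$, so that the sharp leading constant survives. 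Controlling this drift, rather than the routine reduction to \eqref{eq:c-integral-plan} or the elementary solution of $A(r_\epsilon)\approx\epsilon/r_\epsilon$, is where the real work lies.
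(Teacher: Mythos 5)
Your reduction is exactly the paper's: your $y=v/u$ is the function $X$ of \eqref{eq:Riccati-2}, your formula $R_0^\epsilon(0)=2\lim_{r\to\infty}ry(r)$ is \eqref{eq:r-zero-from-X}, and your lower-bound plan (pin $y$ above $1-\delta$ by barrier arguments out to the larger root of an equation of the form $rA(r)=\mathrm{const}\cdot\epsilon/\delta$, cf.\ \eqref{eq:r1-equation}, then remove $\delta$ at the end) is the paper's argument with its barriers $\varphi_2(r)=r$ and $\varphi_3(r)=1-\delta$; the single-maximum hypothesis on $rA(r)$ plays the same role in both, and the asymptotics of the transition radius from $L\ee^{-br^p}\le A(r)\le U\ee^{-br^p}$ are handled identically (the paper takes $\delta=1/\ln(\epsilon^{-1})$ rather than fixed $\delta$, a cosmetic difference). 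Your one genuinely new ingredient is the monotone quantity $g:=ry$, with $g'=\epsilon^{-1}rA(1-y^2)\ge 0$ and the resulting integral representation of $R_0^\epsilon(0)$; this is not in the paper and is a real simplification, since it makes existence of the limit immediate and reduces the upper bound to a tail estimate.

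However, your upper bound as stated has a genuine flaw. You claim a subsolution $\underline h\approx\epsilon/(2A)$ gives $g(r)\le r-\underline h(r)$ \emph{for all} $r$, hence $c\le\max_r\bigl(r-\underline h(r)\bigr)$. Such a global comparison is impossible: since $A$ decays like $\ee^{-br^p}$, $\epsilon/(2A(r))$ eventually exceeds $r$, so $r-\underline h(r)\to-\infty$ while $g\ge 0$. The breakdown occurs exactly at the transition region: the subsolution inequality $\underline h'\le 1-(2A/\epsilon)\underline h$ with $\underline h=(1-\mu)\epsilon/(2A)$ requires $(1-\mu)\,\epsilon\,(1/(2A))'\le\mu$, and $\epsilon\,(1/A)'$ stops being small precisely when $rA(r)\sim\epsilon$ (for the Gaussian it is of order $r_\epsilon^2$ there); so the uniform quasi-static tracking you flag as ``the main obstacle'' is not merely hard, it is false at and beyond the transition, where $h=r-g$ grows linearly rather than like $\epsilon/(2A)$. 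Fortunately the fix is far easier than the drift control you anticipate and lives inside your own framework: since $g$ is nondecreasing and $0\le y\le 1$, split at $r_\mathrm{Match}(\epsilon)$ (the largest root of $rA(r)=\epsilon$) and bound $1-y^2\le 1$ in the tail, giving $c\le g(r_\mathrm{Match}(\epsilon))+\epsilon^{-1}\int_{r_\mathrm{Match}(\epsilon)}^\infty rA(r)\,\dd r\le r_\mathrm{Match}(\epsilon)+o(r_\mathrm{Match}(\epsilon))$, the tail being controlled by the two-sided bound on $A$ and dominated convergence. This is precisely the paper's barrier $\varphi_5$ and its upper bound $\overline{R_0^\epsilon}$ in \eqref{eq:r-zero-bounds}; with that substitution your proposal becomes a complete proof along the paper's lines.
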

The Gaussian $A(r)=\ee^{-r^2}$ satisfies the hypotheses of Theorem~\ref{theorem:Riccati-rigorous} with $L=U=b=1$ and $p=2$, and we conclude that $R^\epsilon_0(0)=2\sqrt{\ln(\epsilon^{-1})}(1+o(1))$
as $\epsilon\downarrow 0$.  The divergence of this approximation as $\epsilon\downarrow 0$ explains the rising peak at $k=0$ seen in Figure~\ref{fig:reflection-Gaussian}; the exact values of the approximate formula for $R_0^\epsilon(0)$ are indicated with arrows for comparison. 
The heuristic analysis in Section~\ref{sec:Riccati-formal} leading up to the proof of Theorem~\ref{theorem:Riccati-rigorous} indicates that a similar approximation of $R_0^\epsilon(0)$ holds true for compactly supported amplitude functions $A=A(r)$, in which $(b^{-1}\ln(\epsilon^{-1}))^{1/p}$ is replaced with the largest value of $r>0$ in the support of $A$, which in this case is independent of $\epsilon$.  The proof of Theorem~\ref{theorem:Riccati-rigorous} is given in Section~\ref{sec:Riccati-rigorous}, and in Section~\ref{sec:characteristic-function-exact} we show how the direct spectral problem \eqref{eq:1.10}--\eqref{eq:psi-norm} can be solved explicitly in terms of special functions when $S(x,y)\equiv 0$ and $A$ is a positive multiple of the characteristic function of the disk of radius $\rho$ centered at the origin yielding the rigorous (but specialized to this particular example) result that $R^\epsilon_0(0)=2\rho+o(1)$ as $\epsilon\downarrow 0$, consistent with the general principle for compactly supported radial amplitudes indicated above.   

The analysis in Section~\ref{sec:Riccati} shows that the solution of the direct scattering problem \eqref{eq:1.10}--\eqref{eq:psi-norm} at $k=0$ for radial potentials $A=A(r)$ with $S\equiv 0$ is only consistent with the WKB expansion method in an annulus in the $(x,y)$-plane centered at the origin with an inner radius proportional to $\epsilon$ and an outer radius proportional to $r_\mathrm{Match}(\epsilon)$, defined as the largest solution $r$ of the equation $rA(r)=\epsilon$.  In this case, the eikonal problem \eqref{eq:eikonal}--\eqref{eq:f-norm} has an exact radial solution that is smooth except for a conical singularity at the origin (this solution is described in Section~\ref{sec:radial}). Our analysis shows that the $\epsilon$-dependent problem \eqref{eq:1.10}--\eqref{eq:psi-norm} regularizes the effect of this singularity within a small neighborhood of the origin, and behaves as if $A\equiv 0$ for $r>r_\mathrm{Match}(\epsilon)$.  This observation suggests that if one wants to capture the behavior of the reflection coefficient for values of $k$ of modulus sufficiently small that the eikonal problem does not have a global smooth solution, it may be necessary to construct the solution in nested approximately annular domains as is known to yield accurate approximations for $k=0$.  This is a subject for future investigation.

In Section~\ref{sec:numerics} we provide new numerical algorithms for computing the eikonal function $f(x,y;k)$ and WKB amplitude $\alpha_0(x,y;k)$, assuming that $|k|$ is sufficiently large.  These algorithms are tested on the known exact solutions \eqref{eq:fW-Lorentzian} and \eqref{eq:alphaW-Lorentzian} respectively.  One of the algorithms for computing the eikonal function $f(x,y;k)$ (a series-based method applicable to radial potentials with $S(x,y)\equiv 0$ that is described in Section~\ref{sec:numerical-Fourier-series}) also gives a method of estimating the critical radius for $|k|$ below which singularities of some sort certainly appear in the eikonal function.  This method predicts the threshold value of $|k|=\tfrac{1}{2}$ for the Gaussian $A(x,y)=\ee^{-(x^2+y^2)}$ that matches with the numerical computations of $R_0^\epsilon(k)$ shown in Figure~\ref{fig:reflection-Gaussian}. We also briefly review the method advanced in an earlier work \cite{KleinMcL} of two of the authors for solving the $\epsilon$-dependent direct scattering problem \eqref{eq:1.10}--\eqref{eq:psi-norm}.  In Section~\ref{sec:numerical-examples} we use the developed numerical methods to make quantitative comparisons with the WKB method and provide quantitative justification of Conjecture~\ref{conjecture:WKB}.

\subsection{Acknowledgements}
OA and CK acknowledge support by the program PARI and the FEDER 
2016 and 2017 as well as  the  I-QUINS project. The research of KDTRM was supported by the 
National Science Foundation under grants DMS-1401268 and DMS-1733967. 
The research of PDM was supported by the National Science Foundation 
under grants DMS-1206131 and DMS-1513054 and by the Simons Foundation 
under grant 267106.  The authors are grateful to Kari Astala, Sarah 
Hamilton, Michael Music, Peter Perry, Samuli Siltanen, Johannes 
Sj\"ostrand and Nikola Stoilov for useful discussions.  

The authors benefited from participation in a Focused Research Group on ``Inverse Problems, Nonlinear Waves, and Random Matrices'' at the Banff International Research Station in 2012, the ``Exceptional Circle'' workshop at the University of Helsinki in 2013, a conference on ``Scattering and Inverse Scattering in Multi-Dimensions'' at the University of Kentucky in 2014 (funded by the National Science Foundation under grant DMS-1408891), and a Research in Pairs meeting entitled ``Semiclassical Limit of the Davey-Stewartson Equations'' at the Mathematisches Forschungsinstitut Oberwolfach in 2016.

\section{WKB Method for Calculating the Reflection Coefficient}
\label{sec:WKB}
\subsection{WKB formalism}
\label{sec:WKB-formalism}
If the initial data is given in the form \eqref{eq:initial-data}, then \eqref{eq:1.10} takes the form of a
linear system of partial differential equations with highly oscillatory coefficients:
\begin{equation}
\epsilon\mathcal{D}\boldsymbol{\psi}=\frac{1}{2}\begin{bmatrix}0 & A(x,y)\ee^{\ii S(x,y)/\epsilon}\\
A(x,y)\ee^{-\ii S(x,y)/\epsilon} & 0\end{bmatrix}\boldsymbol{\psi},\quad\boldsymbol{\psi}=\begin{bmatrix}\psi_1\\\psi_2\end{bmatrix},\quad\mathcal{D}:=\begin{bmatrix}\delbar&0\\0 & \del\end{bmatrix}.
\label{eq:semi-system}
\end{equation}
Let us assume for simplicity that $A(x,y)$ is a strictly positive Schwartz-class function, and that
the real-valued phase is asymptotically linear: $S(x,y)=wz+\overline{w}\overline{z}+O(1)$ as $z\to\infty$ for some $w\in\mathbb{C}$, in the sense that
\begin{equation}
\partial S(x,y)\to w\quad\text{and}\quad\overline{\partial}S(x,y)\to \overline{w},\quad z\to\infty.
\label{eq:S-asymp}
\end{equation}
The parameter $w\in\mathbb{C}$ has the effect of introducing a shift of the value of the spectral parameter $k\in\mathbb{C}$.    Indeed if $S=wz+\overline{w}\overline{z}+\tilde{S}$ and $\tilde{\boldsymbol{\psi}}(z;k)$ corresponds to $(A,\tilde{S})$ while $\boldsymbol{\psi}(z;k)$ corresponds to $(A,S)$, then $\tilde{\psi}_1(z;k-\ii w)=\psi_1(z;k)\ee^{-\ii wz/\epsilon}$ and $\tilde{\psi}_2(z;k-\ii w)=\psi_2(z;k)\ee^{\ii\overline{w}\overline{z}/\epsilon}$.  Without loss of generality, we will therefore assume throughout this paper that $w=0$.  For classical solutions of \eqref{eq:semi-system}
we require $\boldsymbol{\psi}\in C^1(\mathbb{R}^2)$, and similarly for $\boldsymbol{\chi}$ and $\boldsymbol{\phi}$ to be defined shortly.

The oscillatory factors $\ee^{\pm \ii S(x,y)/\epsilon}$ can be removed from the coefficients in \eqref{eq:semi-system} by the substitution
\begin{equation}
\boldsymbol{\psi}=\ee^{\ii S(x,y)\sigma_3/(2\epsilon)}\boldsymbol{\chi}
\label{eq:psi-chi}
\end{equation}
leading to the equivalent system
\begin{equation}
\epsilon \mathcal{D}\boldsymbol{\chi}=\frac{1}{2}
\begin{bmatrix}-\ii\delbar S & A\\A & \ii\del S\end{bmatrix}
\boldsymbol{\chi}.
\label{eq:chi-system}
\end{equation}
This problem is not directly amenable to a perturbation approach, because if $\epsilon=0$ there can only exist nonzero solutions $\boldsymbol{\chi}$ 
if the coefficient matrix on the right-hand side is singular, which can be assumed to be a non-generic (with respect to $(x,y)\in\mathbb{R}^2$) phenomenon.

One way around this difficulty is to introduce a complex scalar field $f:\mathbb{R}^2\to\mathbb{C}$ and make an exponential gauge transformation of the form
\begin{equation}
\boldsymbol{\chi} = \ee^{f(x,y)/\epsilon}\boldsymbol{\phi}.
\label{eq:chi-phi}
\end{equation}
This transforms \eqref{eq:chi-system} into the form
\begin{equation}
\epsilon\mathcal{D}\boldsymbol{\phi}
=\mathbf{M}(x,y)\boldsymbol{\phi},
\label{eq:phi-system}
\end{equation}
where $\mathbf{M}(x,y)$ is the $\epsilon$-independent matrix
\begin{equation}
\mathbf{M}(x,y):=\frac{1}{2}\begin{bmatrix}
-\ii\delbar S-2\delbar f & A\\A & \ii\del S-2\del f
\end{bmatrix}.
\label{eq:M-def}
\end{equation}
Now we have both the vector unknown $\boldsymbol{\phi}$ and the scalar unknown $f$, but we may now take advantage of the extra degree of freedom by choosing $f$ in such a way that the augmented coefficient matrix $\mathbf{M}(x,y)$ on the right-hand side of \eqref{eq:phi-system}
is singular for all $(x,y)\in\mathbb{R}^2$.  A direct calculation shows that the condition $\det(\mathbf{M}(x,y))=0$ is precisely the eikonal equation \eqref{eq:eikonal} for $f$.  If $f$ is any solution of this nonlinear partial differential equation, it follows that there exist nonzero solutions
of \eqref{eq:phi-system} when $\epsilon=0$, and such a solution can be used as the leading term in a formal asymptotic power series expansion in powers of $\epsilon$.

Next recall the asymptotic normalization conditions \eqref{eq:psi-norm} on the functions $\psi_j$ as $z\to\infty$, which in terms of $\boldsymbol{\phi}$ imply
\begin{equation}
\begin{split}
\lim_{|z|\to\infty}\phi_1\ee^{f/\epsilon}\ee^{\ii S/(2\epsilon)}\ee^{-kz/\epsilon}&=1\\
\lim_{|z|\to\infty}\phi_2\ee^{f/\epsilon}\ee^{-\ii S/(2\epsilon)}\ee^{-\overline{k}\overline{z}/\epsilon}&=0
\end{split}
\end{equation}
Since $S$ is real, and since the second limit is zero, these two conditions can be combined to read
\begin{equation}
\lim_{|z|\to\infty}\boldsymbol{\phi}\exp\left(\frac{1}{\epsilon}\left[f+\frac{\ii}{2}S-kz\right]\right)=\begin{bmatrix}1\\0\end{bmatrix}.
\label{eq:phi-norm}
\end{equation}
Since we want to be able to accurately represent $\boldsymbol{\phi}$ using asymptotic power series in $\epsilon$, in particular we want $\boldsymbol{\phi}$ to have simple asymptotics as $z\to\infty$, so we now impose on the eikonal function $f$ the normalization condition \eqref{eq:f-norm}.  
Under this condition, \eqref{eq:phi-norm} becomes simply
\begin{equation}
\lim_{|z|\to\infty}\boldsymbol{\phi}=\begin{bmatrix}1\\0\end{bmatrix}.
\label{eq:phi-norm-2}
\end{equation}
Since the conditions \eqref{eq:eikonal}--\eqref{eq:f-norm} on the eikonal function $f$ explicitly involve the spectral parameter $k\in\mathbb{C}$, we denote any solution of the eikonal problem by $f=f(x,y;k)$.  Similarly, the matrix $\mathbf{M}$ defined in \eqref{eq:M-def} now depends on $k$ via $f$ and will be denoted $\mathbf{M}(x,y;k)$, a singular matrix for all $(x,y)\in\mathbb{R}^2$.

Given a suitable value of $k\in\mathbb{C}$ and a corresponding solution $f(x,y;k)$ of the eikonal problem \eqref{eq:eikonal}--\eqref{eq:f-norm}, we may now try to determine the terms in an asymptotic power series expansion of $\boldsymbol{\phi}=\boldsymbol{\phi}^\epsilon(x,y;k)$:
\begin{equation}
\boldsymbol{\phi}^\epsilon(x,y;k)\sim\sum_{n=0}^\infty\boldsymbol{\phi}^{(n)}(x,y;k)\epsilon^n,\quad\epsilon\to 0.
\label{eq:phi-series}
\end{equation}
Substituting into \eqref{eq:phi-system} and matching the terms with the same powers of $\epsilon$
one finds firstly that 
\begin{equation}
\boldsymbol{\phi}^{(0)}(x,y;k)\in\mathrm{ker}(\mathbf{M}(x,y;k))
=\mathop{\mathrm{span}}_{\mathbb{C}(x,y)}\begin{bmatrix}2\del f(x,y;k)-\ii\del S(x,y)\\
A(x,y)\end{bmatrix}.
\end{equation}
This determines $\boldsymbol{\phi}^{(0)}(x,y;k)$ up to a scalar multiple, which in general can depend on $(x,y)\in\mathbb{R}^2$ and $k\in\mathbb{C}$.  We may therefore write
$\boldsymbol{\phi}^{(0)}(x,y;k)$ in the form
\begin{equation}
\boldsymbol{\phi}^{(0)}(x,y;k)=\frac{\alpha_0(x,y;k)}{2k}\begin{bmatrix}2\del f(x,y;k)-\ii\del S(x,y)\\ A(x,y)
\end{bmatrix}
\label{eq:Phi-zero}
\end{equation}
for a scalar field $\alpha_0(x,y;k)$ to be determined.
Then from the higher-order terms one obtains the recurrence relations:
\begin{equation}
\mathbf{M}(x,y;k)\boldsymbol{\phi}^{(n+1)}(x,y;k)=\mathcal{D}
\boldsymbol{\phi}^{(n)}(x,y;k), \quad n=0,1,2,3,\dots.
\label{eq:WKB-recurrence}
\end{equation}
As $\mathbf{M}(x,y;k)$ is singular, at each order there is a solvability condition to be enforced, namely that
\begin{equation}
\mathcal{D}\boldsymbol{\phi}^{(n)}(x,y;k)\in 
\mathrm{ran}(\mathbf{M}(x,y))=\mathop{\mathrm{span}}_{\mathbb{C}(x,y)}
\begin{bmatrix}-\ii\delbar S(x,y)-2\delbar f(x,y;k)\\A(x,y)\end{bmatrix},\quad n=0,1,2,3,\dots,
\end{equation}
which we write in Wronskian form as
\begin{equation}
\det\left(\begin{bmatrix}-\ii\delbar S(x,y)-2\delbar f(x,y;k)\\A(x,y)\end{bmatrix},\mathcal{D}\boldsymbol{\phi}^{(n)}(x,y;k)\right)=0,\quad n=0,1,2,3,\dots.
\label{eq:alpha-equation}
\end{equation}
Assuming that \eqref{eq:alpha-equation} holds for a given $n$, the general solution of \eqref{eq:WKB-recurrence} is
\begin{equation}
\boldsymbol{\phi}^{(n)}(x,y;k)=\boldsymbol{\phi}^{(n)}_\mathrm{p}(x,y;k)  + \frac{\alpha_n(x,y;k)}{2k}\begin{bmatrix}2\del f(x,y;k)-\ii\del S(x,y)\\A(x,y)\end{bmatrix},\quad n=1,2,3,\dots,
\label{eq:Phi-n}
\end{equation}
where 
\begin{equation}
\boldsymbol{\phi}^{(n)}_\mathrm{p}(x,y;k):=\frac{2\del\phi_2^{(n-1)}(x,y;k)}{\ii\del S(x,y)-2\del f(x,y;k)}\begin{bmatrix}0\\1\end{bmatrix}
\label{eq:particular}
\end{equation}
is a particular solution and
$\alpha_n(x,y;k)$ is a scalar field to be determined that parametrizes the homogeneous solution.

The calculation of the terms in the formal series \eqref{eq:phi-series} therefore has been reduced to the sequential solution of the scalar equation \eqref{eq:alpha-equation} for $\alpha_{n-1}(x,y;k)$, for $n=1,2,3,\dots$.  We interpret the boundary condition \eqref{eq:phi-norm-2} in light of the formal series \eqref{eq:phi-series} as:
\begin{equation}
\lim_{|z|\to\infty}\boldsymbol{\phi}^{(0)}(x,y;k)=\begin{bmatrix}1\\0\end{bmatrix},\quad
\lim_{|z|\to\infty}\boldsymbol{\phi}^{(n)}(x,y;k)=\mathbf{0},\quad n=1,2,3,\dots.
\label{eq:phi-norm-terms}
\end{equation}
Taking into account \eqref{eq:f-norm}, \eqref{eq:S-asymp} for $w=0$, and \eqref{eq:phi-norm-terms} we require the solution of \eqref{eq:alpha-equation} subject to the boundary condition: 
\begin{equation}
\lim_{|z|\to\infty}\alpha_n(x,y;k)=\begin{cases}1,&\quad n=0\\
0,&\quad n=1,2,3,\dots.
\end{cases}
\label{eq:alpha-n-BC}
\end{equation}


A direct calculation shows that (suppressing the arguments)
\begin{equation}
\det\left(\begin{bmatrix}-\ii\delbar S - 2\delbar f\\ A\end{bmatrix},
\mathcal{D}\frac{\alpha_{n}}{2k}\begin{bmatrix}2\del f-\ii\del S\\A\end{bmatrix}\right)=-\frac{1}{2k}\mathcal{L}\alpha_{n}
\end{equation}
where the differential operator $\mathcal{L}$ is defined in \eqref{eq:alpha-equation-general}.  Therefore, assuming $k\neq 0$, taking $n=0$ in \eqref{eq:alpha-equation} and using \eqref{eq:Phi-zero} immediately yields \eqref{eq:alpha-equation-general} for $\alpha_0$, which by \eqref{eq:alpha-n-BC} is to be solved subject to the boundary condition $\alpha_0\to 1$ as $|z|\to\infty$.  Similarly, taking $n>0$ in \eqref{eq:alpha-equation} and using \eqref{eq:Phi-n} gives a related non-homogeneous equation
\begin{equation}
\mathcal{L}\alpha_n = 2k\det\left(\begin{bmatrix}-\ii\delbar S-2\delbar f\\A\end{bmatrix},\mathcal{D}\boldsymbol{\phi}^{(n)}_\mathrm{p}\right)=2k(-\ii\delbar S-2\delbar f)\del\left[\frac{2\del\phi^{(n-1)}_2}{\ii\del S-2\del f}\right],\quad n=1,2,3,\dots,
\end{equation}
which by \eqref{eq:alpha-n-BC} is to be solved subject to the boundary condition $\alpha_n\to 0$ as $|z|\to\infty$.  We remark that under the conditions of Theorem~\ref{theorem:eikonal-solve} the denominator $\ii\del S-2\del f$ is bounded away from zero, so we may expect that the forcing term on the right-hand side is a smooth function of $(x,y)\in\mathbb{R}^2$ that decays as $|z|\to\infty$ due in part to the fact that $\ii\delbar S+2\delbar f\to 0$ as $|z|\to\infty$.  Therefore, invertibility of $\mathcal{L}$ on a suitable space of decaying functions is sufficient to guarantee the existence of all terms of the WKB expansion.  

Now we give the proof (conditioned on Conjecture~\ref{conjecture:WKB}) of Corollary~\ref{corollary:negligible-reflection}.
Observe that using \eqref{eq:r-def}, \eqref{eq:f-norm}, \eqref{eq:psi-chi}, and \eqref{eq:chi-phi}, the reflection coefficient $R^\epsilon_0(k)$ can be written in terms of the (well-defined for all $(x,y)\in\mathbb{R}^2$, $k\in\mathbb{C}$, and $\epsilon>0$) solution $\boldsymbol{\phi}^\epsilon(x,y;k)$ of \eqref{eq:phi-system} and \eqref{eq:phi-norm-2} as 
\begin{equation}
R^\epsilon_0(k)=2\lim_{z\to\infty}z\ee^{-2\ii\imag(kz)/\epsilon}\overline{\phi^\epsilon_2(x,y;k)}.
\label{eq:reflection-phi}
\end{equation}
Suppose that the WKB expansion can be successfully and uniquely constructed through terms of order $\epsilon^N$, in which case we may write $\boldsymbol{\phi}^\epsilon(x,y;k)$ unambiguously in the form
\begin{equation}
\boldsymbol{\phi}^\epsilon(x,y;k)=\sum_{n=0}^N\boldsymbol{\phi}^{(n)}(x,y;k)\epsilon^n + \tilde{\boldsymbol{\phi}}^{(N),\epsilon}(x,y;k).
\label{eq:WKB-with-error}
\end{equation}
Suppose also that the remainder term $\tilde{\boldsymbol{\phi}}^{(N),\epsilon}(x,y;k) = o(\epsilon^N)$ uniformly in $(x,y)\in\mathbb{R}^2$.  Then, since
the rapidly oscillatory factor $\ee^{-2\ii\imag(kz)/\epsilon}$ is bounded despite having no limit as $|z|\to\infty$ unless $k=0$, the (known) existence of $R^\epsilon_0(k)$ for all $k\in\mathbb{C}$ and $\epsilon>0$ implies that
$\phi^{(n)}_2(x,y;k)=o(z^{-1})$ as $|z|\to\infty$ for all $n=0,1,2,\dots,N$, and we conclude that
$R_0^\epsilon(k)=o(\epsilon^N)$ as $\epsilon\downarrow 0$.  This is rather obvious for the case of $N=0$;
indeed, replacing $\phi^\epsilon_2$ with its leading-order approximation $\phi^{(0)}_2(x,y;k)=\alpha_0(x,y;k)A(x,y)/(2k)$ yields under the assumption that $\tilde{\boldsymbol{\phi}}^{0,\epsilon}(x,y;k)$ is uniformly $o(1)$ the approximate formula
\begin{equation}
R^\epsilon_0(k)=\frac{1}{\overline{k}}\lim_{z\to\infty}z\ee^{-2\ii\imag(kz)/\epsilon}\overline{\alpha_0(x,y;k)}A(x,y)+o(1) = o(1)
\end{equation}
(the explicit limit is zero because $\alpha_0\to 1$ and $A$ is Schwartz-class).

\subsection{Some notes on rigorous analysis}
\label{sec:accuracy-proof}
Assuming for a given $k\in\mathbb{C}\setminus\{0\}$ that the terms $\boldsymbol{\phi}^{(0)},\dots,\boldsymbol{\phi}^{(N)}$ have been determined, the error term $\tilde{\boldsymbol{\phi}}^{(N),\epsilon}(x,y;k)$ in \eqref{eq:WKB-with-error} satisfies the equation
\begin{equation}
\left[\epsilon\mathcal{D}-\mathbf{M}\right]\tilde{\boldsymbol{\phi}}^{(N),\epsilon} =\epsilon^{N+1}\boldsymbol{\gamma}^{(N)}(x,y;k),\quad \boldsymbol{\gamma}^{(N)}(x,y;k):=-\mathcal{D}\boldsymbol{\phi}^{(N)}.
\end{equation}
Note that $\boldsymbol{\gamma}^{(N)}(x,y;k)$ is independent of $\epsilon>0$ and is, for each $(x,y)\in\mathbb{R}^2$, a vector in $\mathrm{ran}(\mathbf{M}(x,y;k))$ as a consequence of the equation (cf., \eqref{eq:alpha-equation}) satisfied by $\alpha_N(x,y;k)$.  

In general, the singularly-perturbed differential operator $\epsilon\mathcal{D}-\mathbf{M}$, although certainly invertible on suitable spaces ultimately as a consequence of the Fredholm theory and vanishing lemma described in \cite[Lemma 2.3]{Perry12}, will have a very large inverse when $\epsilon$ is small.  
Controlling this inverse is obviously the fundamental analytical challenge in establishing the validity of the WKB expansion.  

Here we offer only the following advice to assist in the necessary estimation:  the inverse $(\epsilon\mathcal{D}-\mathbf{M})^{-1}$ need only be controlled on the subspace of vector-valued functions $\boldsymbol{\gamma}^{(N)}$ that lie pointwise in the $\mathbb{C}^2(x,y)$ subspace $\mathrm{ran}(\mathbf{M}(x,y;k))$.  Such uniform control would automatically imply that the norm of $\tilde{\boldsymbol{\phi}}^{(N),\epsilon}$ is $O(\epsilon^{N+1})$, as continuing the WKB expansion to higher order would suggest.  

\section{The Eikonal Problem}
\label{sec:eikonal-problem}
In this section, we consider the problem of how to construct solutions of the eikonal problem consisting of the nonlinear equation \eqref{eq:eikonal} and the boundary condition \eqref{eq:f-norm}.  We also 
consider the related problem of finding the leading-order WKB amplitude function $\alpha_0(x,y;k)$.

\subsection{Global existence of $f(x,y;k)$ and $\alpha_0(x,y;k)$ for $|k|$ sufficiently large}
\label{sec:k-large}
We first consider solving the eikonal problem \eqref{eq:eikonal}--\eqref{eq:f-norm} for $f(x,y;k)$.
To study a function that tends to zero at infinity, we define 
\begin{equation}
g(x,y;k):=f(x,y;k)-kz+\ii\frac{1}{2}S(x,y),
\label{eq:f-to-g}
\end{equation}
upon which \eqref{eq:eikonal} can be rearranged to read
\begin{equation}
\delbar g=\frac{u}{4(k-\ii v+\del g)},\quad\text{where $u:=A^2$ and $v:=\del S$.}
\end{equation}
Differentiation via the operator $\del$ and assuming that $g$ is twice continuously differentiable gives an equation for $b=\del g-\ii v$:
\begin{equation}
\delbar b = -\ii\delbar v+\del\left[\frac{u}{4(k+b)}\right].
\end{equation}
Since we expect $\del g\to 0$, and we may assume $v=\del S\to 0$ as $|z|\to\infty$, we may invert $\delbar$ with the solid Cauchy transform $\delbar^{-1}$ defined by \eqref{eq:solid-Cauchy},
and hence obtain the fixed-point equation
\begin{equation}
b = F(b),
\label{eq:fixed-point-equation}
\end{equation}
where $F$ is the nonlinear mapping
\begin{equation}
F(b):= -\ii v + \mathcal{B}\left[\frac{u}{4(k+b)}\right],
\label{eq:b-fixed-point}
\end{equation}
in which $\mathcal{B}$ denotes the \emph{Beurling transform} defined by $\mathcal{B}:=\delbar^{-1}\del=\del\delbar^{-1}$.

We will seek a solution $b\in \mathrm{W}(\mathbb{R}^2)$, where $\mathrm{W}(\mathbb{R}^2)$ denotes the \emph{Wiener space} \cite{DymMcK72} defined as the completion of the Schwartz space $\mathscr{S}(\mathbb{R}^2)$ under the norm
\begin{equation}
\|b\|_\mathrm{W}:=\iint_{\mathbb{R}^2}|\hat{b}(\xi_x,\xi_y)|\,\dd\xi_x\,\dd\xi_y,\quad\hat{b}(\xi_x,\xi_y):=\frac{1}{4\pi^2}\iint_{\mathbb{R}^2}b(x,y)\ee^{-\ii(\xi_xx+\xi_yy)}\,\dd x\,\dd y,
\label{eq:W-norm}
\end{equation}
i.e., the Wiener norm is just the $L^1$ norm in the Fourier transform domain.  Observe that since the inverse Fourier transform is given by
\begin{equation}
b(x,y)=\iint_{\mathbb{R}^2}\hat{b}(\xi_x,\xi_y)\ee^{\ii(\xi_xx+\xi_yy)}\,\dd\xi_x\,\dd\xi_y,
\end{equation}
it follows that whenever $b$ is a function with a nonnegative Fourier transform $\hat{b}(\xi_x,\xi_y)\ge 0$, the Wiener norm is given simply by the value of $b$ at the origin:  $\|b\|_\mathrm{W}=b(0,0)$.  By the Riemann-Lebesgue lemma, functions in $\mathrm{W}(\mathbb{R}^2)$ are continuous and decay to zero as $|z|\to\infty$, and $\|b\|_\infty\le \|b\|_\mathrm{W}$.  A key property of the Wiener space is that it is a Banach algebra as a consequence of the convolution theorem:
\begin{equation}
\begin{split}
\|b_1b_2\|_\mathrm{W}&=\iint_{\mathbb{R}^2}|\widehat{b_1b_2}(\xi_x,\xi_y)|\,\dd\xi_x\,\dd\xi_y\\
&=\iint_{\mathbb{R}^2}|\hat{b}_1\ast\hat{b}_2(\xi_x,\xi_y)|\,\dd\xi_x\,\dd\xi_y\\
&=\iint_{\mathbb{R}^2}\left|\iint_{\mathbb{R}^2}\hat{b}_1(\xi_x',\xi_y')
\hat{b}_2(\xi_x-\xi_x',\xi_y-\xi_y')\,\dd\xi_x'\,\dd\xi_y'\right|\,\dd\xi_x\,\dd\xi_y\\
&\le\iint_{\mathbb{R}^2}\iint_{\mathbb{R}^2}|\hat{b}_1(\xi_x',\xi_y')|
|\hat{b}_2(\xi_x-\xi_x',\xi_y-\xi_y')|\,\dd\xi_x\,\dd\xi_y\,\dd\xi_x'\,\dd\xi_y'\\
&=\|b_1\|_\mathrm{W}\|b_2\|_\mathrm{W}.
\end{split}
\label{eq:Banach-algebra}
\end{equation}
Another important property obvious from the definition \eqref{eq:W-norm} is scale invariance: if $b\in \mathrm{W}(\mathbb{R}^2)$ and for $\rho>0$, $b_\rho(x,y):=b(x/\rho,y/\rho)$, then $\|b_\rho\|_\mathrm{W}=\|b\|_\mathrm{W}$ for all $\rho>0$.
The Wiener space is also well-behaved with respect to the Beurling transform, whose action in the Fourier domain is given by 
\begin{equation}
\widehat{\mathcal{B}b}(\xi_x,\xi_y)=-\frac{\xi_x+\ii\xi_y}{\xi_x-\ii\xi_y}\hat{b}(\xi_x,\xi_y),
\end{equation}
so as the Fourier multiplier has unit modulus for all $(\xi_x,\xi_y)\in\mathbb{R}^2$, $|\widehat{\mathcal{B}b}(\xi_x,\xi_y)|=|\hat{b}(\xi_x,\xi_y)|$, and therefore
\begin{equation}
\|\mathcal{B}b\|_\mathrm{W}=\|b\|_\mathrm{W},\quad\forall b\in \mathrm{W}(\mathbb{R}^2).
\label{eq:Beurling-norm}
\end{equation}
While all of these properties are useful to us, it is really the combination of the Banach algebra property \eqref{eq:Banach-algebra} with the unitarity of the Beurling transform expressed in \eqref{eq:Beurling-norm} that makes $\mathrm{W}(\mathbb{R}^2)$ a useful space for us to work with when  dealing with nonlinear problems involving the operator $\mathcal{B}$ such as \eqref{eq:fixed-point-equation}--\eqref{eq:b-fixed-point}.

To view \eqref{eq:fixed-point-equation}--\eqref{eq:b-fixed-point} as a fixed-point equation on $\mathrm{W}(\mathbb{R}^2)$, we first assume that $u\in \mathrm{W}(\mathbb{R}^2)$ and $v\in \mathrm{W}(\mathbb{R}^2)$.  We then need to guarantee that $F(b)\in \mathrm{W}(\mathbb{R}^2)$ provided that $b\in \mathrm{W}(\mathbb{R}^2)$.  We write $F(b)$ in the slightly-modified form
\begin{equation}
F(b)=-\ii v +\frac{1}{4k}\mathcal{B}u +\mathcal{B}\left[\frac{1}{4}u\left(\frac{1}{k+b}-\frac{1}{k}\right)\right].
\end{equation}
Due to \eqref{eq:Banach-algebra} and \eqref{eq:Beurling-norm}, it is sufficient that $b\mapsto (k+b)^{-1}-k^{-1}$ takes $\mathrm{W}(\mathbb{R}^2)$ into itself.  This will be the case provided that $|k|$ is sufficiently large.  Indeed,
consider the geometric series
\begin{equation}
\frac{1}{k+b}-\frac{1}{k}=\frac{1}{k}\cdot \frac{1}{1-(-b/k)}-\frac{1}{k} = -\sum_{n=1}^{\infty}\left(-\frac{1}{k}\right)^{n+1}b^n.
\label{eq:geometric-series}
\end{equation}
Since due to the homogeneity property of the norm and the Banach algebra property \eqref{eq:Banach-algebra},
\begin{equation}
\left\|\left(-\frac{1}{k}\right)^{n+1}b^n\right\|_\mathrm{W} = \frac{1}{|k|^{n+1}}\|b^n\|_\mathrm{W}
\le \frac{\|b\|_\mathrm{W}^n}{|k|^{n+1}},\quad n=1,2,3,\dots,
\end{equation}
the geometric series on the right-hand side of \eqref{eq:geometric-series} converges in the Wiener space $\mathrm{W}(\mathbb{R}^2)$ provided that $|k|>\|b\|_\mathrm{W}$.  Moreover, given any $B>0$, under the condition $\|b\|_\mathrm{W}\le B$ and $|k|>B$
we have $(k+b)^{-1}-k^{-1}\in \mathrm{W}(\mathbb{R}^2)$ with 
\begin{equation}
N_k[b]:=\left\|\frac{1}{k+b}-\frac{1}{k}\right\|_\mathrm{W}\le \sum_{n=1}^\infty \frac{\|b\|_\mathrm{W}^n}{|k|^{n+1}}
\le\sum_{n=1}^\infty\frac{B^n}{|k|^{n+1}}
= \frac{1}{|k|-B}-\frac{1}{|k|}=\frac{B}{|k|(|k|-B)}.
\label{eq:fraction-estimate}
\end{equation}
Under the same conditions, an estimate for the action of the nonlinear operator $F$ given by \eqref{eq:b-fixed-point} is
\begin{equation}
\|F(b)\|_\mathrm{W}\le
\|v\|_\mathrm{W} + \frac{\|u\|_\mathrm{W}}{4(|k|-B)},\quad \|b\|_\mathrm{W}\le B,\quad |k|>B.
\end{equation}
It follows that $F$ is a mapping from the closed $B$-ball in $\mathrm{W}(\mathbb{R}^2)$ into itself provided that $k$ and $B$ are chosen so that 
\begin{equation}
\|v\|_\mathrm{W} + \frac{\|u\|_\mathrm{W}}{4(|k|-B)}\le B\quad\text{and}\quad |k|>B.
\end{equation}
This proves the following result.
\begin{lemma}
Suppose that $u$ and $v$ are functions in the Wiener space $\mathrm{W}(\mathbb{R}^2)$ with $\|u\|_\mathrm{W}>0$.  Then, for every $B>\|v\|_\mathrm{W}$, the mapping $b\mapsto F(b)$ defined by \eqref{eq:b-fixed-point} takes the closed $B$-ball in $\mathrm{W}(\mathbb{R}^2)$ into itself if
\begin{equation}
|k|\ge B+\frac{\|u\|_\mathrm{W}}{4}\cdot\frac{1}{B-\|v\|_\mathrm{W}}>B.
\end{equation}
\label{lemma:ball-to-ball}
\end{lemma}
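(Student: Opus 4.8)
The plan is to extract the self-mapping property directly from the norm estimates already assembled, reducing the lemma to a single scalar inequality. First I would confirm that $F(b)$ genuinely lies in $\mathrm{W}(\mathbb{R}^2)$ for every $b$ in the closed $B$-ball. When $\|b\|_\mathrm{W}\le B<|k|$, the geometric series \eqref{eq:geometric-series} converges in $\mathrm{W}(\mathbb{R}^2)$ by the Banach algebra property \eqref{eq:Banach-algebra}, so $(k+b)^{-1}$ differs from the constant $k^{-1}$ by an element of $\mathrm{W}(\mathbb{R}^2)$; multiplying by $u\in \mathrm{W}(\mathbb{R}^2)$ and applying $\mathcal{B}$, which preserves the Wiener norm by \eqref{eq:Beurling-norm}, shows that each summand of $F$ in \eqref{eq:b-fixed-point} lies in $\mathrm{W}(\mathbb{R}^2)$, hence so does $F(b)$.

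Next I would quantify the size of $F(b)$. Using the split form of $F$, the triangle inequality, the multiplicativity \eqref{eq:Banach-algebra}, the isometry \eqref{eq:Beurling-norm}, and the bound $N_k[b]\le B/(|k|(|k|-B))$ from \eqref{eq:fraction-estimate}, I would obtain, uniformly for $\|b\|_\mathrm{W}\le B$ and $|k|>B$,
\begin{equation*}
\|F(b)\|_\mathrm{W}\le \|v\|_\mathrm{W}+\frac{\|u\|_\mathrm{W}}{4|k|}+\frac{\|u\|_\mathrm{W}}{4}N_k[b]\le \|v\|_\mathrm{W}+\frac{\|u\|_\mathrm{W}}{4(|k|-B)},
\end{equation*}
the final simplification being the identity $|k|^{-1}+B/(|k|(|k|-B))=(|k|-B)^{-1}$.

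Finally I would check that the hypothesis on $|k|$ pins this bound below $B$. Because $B>\|v\|_\mathrm{W}$ and $\|u\|_\mathrm{W}>0$, the term $\|u\|_\mathrm{W}/(4(B-\|v\|_\mathrm{W}))$ is strictly positive, so the assumed inequality $|k|\ge B+\|u\|_\mathrm{W}/(4(B-\|v\|_\mathrm{W}))$ already delivers the strict bound $|k|>B$ needed above, and rearranges to $|k|-B\ge \|u\|_\mathrm{W}/(4(B-\|v\|_\mathrm{W}))$. Inverting and multiplying by $\|u\|_\mathrm{W}/4$ yields $\|u\|_\mathrm{W}/(4(|k|-B))\le B-\|v\|_\mathrm{W}$, so the displayed estimate gives $\|F(b)\|_\mathrm{W}\le B$; thus $F$ carries the closed $B$-ball into itself.

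There is no deep analytic obstacle, since the substantive ingredients — the Banach algebra structure of $\mathrm{W}(\mathbb{R}^2)$, the unitarity of $\mathcal{B}$ on it, and the convergence of the geometric series for $(k+b)^{-1}$ — are already in place. The only point needing care is the \emph{well-definedness} of $F(b)$ on the ball (rather than a mere a priori bound on its norm), which the geometric-series argument secures; the lemma itself is then the stated algebraic rearrangement of the resulting estimate.
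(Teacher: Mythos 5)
Your proposal is correct and follows essentially the same route as the paper: the split form of $F$ in \eqref{eq:b-fixed-point}, convergence of the geometric series \eqref{eq:geometric-series} in the Wiener algebra, the isometry \eqref{eq:Beurling-norm}, the bound \eqref{eq:fraction-estimate} on $N_k[b]$, and the same algebraic simplification to $\|F(b)\|_\mathrm{W}\le\|v\|_\mathrm{W}+\|u\|_\mathrm{W}/(4(|k|-B))$. Your final rearrangement of the hypothesis on $|k|$ is exactly how the paper closes the argument, so there is nothing to add.
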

We next consider under what additional conditions the mapping $F$ defines a contraction on the $B$-ball in $\mathrm{W}(\mathbb{R}^2)$.  Suppose that $b,b'\in \mathrm{W}(\mathbb{R}^2)$ with $\|b\|_\mathrm{W}\le B$ and $\|b'\|_\mathrm{W}\le B$.  Then,
\begin{equation}
\begin{split}
\|F(b')-F(b)\|_\mathrm{W} &= \left\|\mathcal{B}\left[\frac{u}{4(k+b')}-\frac{u}{4(k+b)}\right]\right\|_\mathrm{W} \\
&= \left\|\frac{u}{4(k+b')}-\frac{u}{4(k+b)}\right\|_\mathrm{W}
\\ &=\frac{1}{4}\left\|
\frac{(b'-b)u}{(k+b')(k+b)}\right\|_\mathrm{W}.
\end{split}
\end{equation}
Adding and subtracting $k^{-1}$ from $(k+b')^{-1}$ and $(k+b)^{-1}$, the triangle inequality and
the Banach algebra property \eqref{eq:Banach-algebra} give
\begin{equation}
\|F(b')-F(b)\|_\mathrm{W}\le\frac{1}{4}\|u\|_\mathrm{W}\left(\frac{1}{|k|^2}+\frac{N_k[b']}{|k|} +
\frac{N_k[b]}{|k|}+N_k[b']N_k[b]\right)
\|b'-b\|_\mathrm{W},
\end{equation}
where the notation in the parentheses is defined in \eqref{eq:fraction-estimate}.
Using the inequality \eqref{eq:fraction-estimate} and the given bounds on $b$ and $b'$, we therefore get
\begin{equation}
\|F(b')-F(b)\|_\mathrm{W}\le \frac{\|u\|_\mathrm{W}}{4(|k|-B)^2}\|b'-b\|_\mathrm{W}.
\end{equation}
Combining this estimate with Lemma~\ref{lemma:ball-to-ball}, we have proved the following.
\begin{lemma}
Suppose that $u$ and $v$ are functions in the Wiener space $\mathrm{W}(\mathbb{R}^2)$.  Then, for every $B>\|v\|_\mathrm{W}$, the mapping $b\mapsto F(b)$ defined by \eqref{eq:b-fixed-point} is a contraction mapping on the closed $B$-ball in $\mathrm{W}(\mathbb{R}^2)$ if $k$ satisfies the inequality \eqref{eq:mod-k-lower-bound}.
\label{lemma:contraction}
\end{lemma}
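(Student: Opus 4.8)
The plan is to establish contractivity by the standard Banach fixed-point recipe: show that $F$ both maps the closed $B$-ball into itself and is Lipschitz on that ball with constant strictly below $1$, and then verify that the single inequality \eqref{eq:mod-k-lower-bound} enforces both requirements at once. The self-mapping half is already in hand from Lemma~\ref{lemma:ball-to-ball}, so the substantive work is the Lipschitz estimate. First I would form the difference $F(b')-F(b)$ for two elements $b,b'$ of the $B$-ball. The additive term $-\ii v$ in \eqref{eq:b-fixed-point} is independent of $b$ and cancels, and the unitarity \eqref{eq:Beurling-norm} lets me strip the Beurling operator from the Wiener norm, reducing the problem to bounding $\tfrac14\bigl\|(b'-b)u\,[(k+b')(k+b)]^{-1}\bigr\|_\mathrm{W}$.

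The key step is to control the product of the two resolvent factors in the Wiener norm. Writing $(k+b)^{-1}=k^{-1}+[(k+b)^{-1}-k^{-1}]$ and likewise for $b'$, expanding the product, and estimating each of the four resulting terms by the Banach-algebra property \eqref{eq:Banach-algebra} together with the geometric-series bound $N_k[\cdot]\le B/(|k|(|k|-B))$ from \eqref{eq:fraction-estimate}, the coefficient bracket $|k|^{-2}+2|k|^{-1}N_k+N_k^2$ collapses to the perfect square $(|k|-B)^{-2}$. Feeding this back through the Banach algebra yields the clean Lipschitz estimate
\begin{equation}
\|F(b')-F(b)\|_\mathrm{W}\le \frac{\|u\|_\mathrm{W}}{4(|k|-B)^2}\,\|b'-b\|_\mathrm{W}.
\end{equation}

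With this bound in hand, contractivity is simply the demand that the prefactor be strictly less than $1$. Rearranging $\|u\|_\mathrm{W}/(4(|k|-B)^2)<1$ gives $|k|>B+\tfrac12\sqrt{\|u\|_\mathrm{W}}$. Combining this with the ball-invariance threshold $|k|\ge B+\tfrac14\|u\|_\mathrm{W}/(B-\|v\|_\mathrm{W})$ supplied by Lemma~\ref{lemma:ball-to-ball}, I would finish by observing that requiring $|k|-B$ to exceed the larger of the two gaps forces both conditions simultaneously, which is exactly \eqref{eq:mod-k-lower-bound}. I expect no serious obstacle here: the genuine analytic content—the Beurling unitarity and Banach-algebra estimates behind the Lipschitz bound—is routine given the earlier lemmas, and the only point needing care is the elementary bookkeeping that the single max-inequality really does secure self-mapping and a contraction factor below $1$ at the same time, while keeping $|k|-B>0$ so that all the geometric-series estimates remain valid.
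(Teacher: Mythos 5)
Your proposal is correct and follows essentially the same route as the paper: strip the Beurling transform by unitarity, add and subtract $k^{-1}$ in each resolvent factor so that the bracket $|k|^{-2}+2|k|^{-1}N_k+N_k^2$ collapses to $(|k|-B)^{-2}$, obtain the Lipschitz constant $\|u\|_\mathrm{W}/(4(|k|-B)^2)$, and observe that the max in \eqref{eq:mod-k-lower-bound} enforces both this constant being below $1$ and the self-mapping condition from Lemma~\ref{lemma:ball-to-ball}. No gaps.
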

Now we may give the proof of Theorem~\ref{theorem:eikonal-solve}.
\begin{proof}[Proof of Theorem~\ref{theorem:eikonal-solve}]
Because $u\in \mathrm{W}(\mathbb{R}^2)$ and $v\in\mathbb{R}^2$, the given condition on $k$ implies, via the contraction mapping theorem and Lemma~\ref{lemma:contraction}, the existence of a unique solution $b$ of the fixed-point equation $b=F(b)$ with $\|b\|_\mathrm{W}\le B$.  To obtain $f$ from $b$, recall that $f=g+kz+\tfrac{1}{2}\ii S$ where $\del g=b+\ii v=F(b)+\ii v$.  Applying $\del^{-1}$ as defined by the conjugate solid Cauchy transform and using $\del^{-1}\mathcal{B}=\delbar^{-1}$,  we obtain
\begin{equation}
g=\overline{\partial}^{-1}\left[\frac{u}{4(k+b)}\right].
\end{equation}
Because $\|b\|_\infty\le \|b\|_\mathrm{W}\le B$, the condition \eqref{eq:mod-k-lower-bound} on $k$ implies that $(k+b)^{-1}\in L^\infty(\mathbb{R}^2)$, so since $u\in L^p(\mathbb{R}^2)$ and $u\in \mathrm{W}(\mathbb{R}^2)\subset L^\infty(\mathbb{R}^2)$, $g$ is $\delbar^{-1}$ applied to a function that is in $L^{p'}(\mathbb{R}^2)$ for every $p'\ge p$.  It follows from \cite[Theorem 4.3.11]{AstalaIM09} that $g$ is continuous and tends to zero as $|z|\to\infty$, proving the asymptotic boundary condition \eqref{eq:f-norm}.  
Now, as $\del g = b+\ii v\in \mathrm{W}(\mathbb{R}^2)$, in particular $\del g$ is continuous.  Furthermore,
$\delbar g=\mathcal{B}^{-1}\del g$ so since $\mathcal{B}^{-1}$ maps $\mathrm{W}(\mathbb{R}^2)$ onto itself, $\delbar g$ is also in $\mathrm{W}(\mathbb{R}^2)$ and hence continuous.  It follows that $g$ is actually of class $C^1(\mathbb{R}^2)$, and so is $f=g+kz+\tfrac{1}{2}\ii S$.  Therefore $f$ is a classical solution of \eqref{eq:eikonal}.  Finally, the estimate \eqref{eq:f-estimate} follows from $\|b\|_\mathrm{W}\le B$ because $b=\partial f-k-\tfrac{1}{2}\ii v=\partial f-k-\tfrac{1}{2}\ii \partial S$.
As $b$ is the unique Wiener space solution of the fixed-point equation $b=F(b)$ with $\|b\|_\mathrm{W}\le B$, $f$ is the only classical solution of \eqref{eq:eikonal} satisfying the condition \eqref{eq:f-estimate}.
\end{proof}
Some comments:
\begin{itemize}
\item The lower bound \eqref{eq:mod-k-lower-bound} on $|k|$ that implies existence of a global solution depends on $B$, and it is attractive to try to choose $B$ in order to guarantee a solution for $|k|$ as small as possible.  The lower bound on $|k|$ is continuous with respect to $B$ and grows both as $B\downarrow \|v\|_\mathrm{W}$ and as $B\uparrow\infty$, guaranteeing a strictly positive minimum value depending only on $\|u\|_\mathrm{W}$ and $\|v\|_\mathrm{W}$. There exists a solution of the eikonal problem \eqref{eq:eikonal}--\eqref{eq:f-norm} with the desired asymptotics whenever $|k|$ exceeds this minimum value. 
When $v=0$, the lower bound for $|k|$ can be made as small as $\sqrt{\|u\|_\mathrm{W}}$ by taking the optimal value of $B=\tfrac{1}{2}\sqrt{\|u\|_\mathrm{W}}$.
\item The contraction mapping theorem guarantees that there is exactly one solution within the $B$-ball in $\mathrm{W}(\mathbb{R}^2)$.  There could in principle be other solutions as well, with larger Wiener norms.
\end{itemize}
Next, we consider the existence of the leading-order WKB amplitude $\alpha_0(x,y;k)$.  We will show that under the same conditions that a unique $f$ is determined for sufficiently large $|k|$, we also obtain a suitable function $\alpha_0$ solving \eqref{eq:alpha-equation-general}  under the boundary condition $\alpha_0\to 1$ as $|z|\to\infty$.  That this problem has a solution when $|k|$ is sufficiently large is the content of Theorem~\ref{theorem:alpha-solve} which we now prove.
\begin{proof}[Proof of Theorem~\ref{theorem:alpha-solve}]
Multiplying \eqref{eq:alpha-equation-general} by $A$ and using the eikonal equation \eqref{eq:eikonal} gives
\begin{equation}
(2\delbar f+\ii\delbar S)\left[A\partial(A\alpha_0) + (2\del f-\ii\del S)\delbar((2\del f-\ii\del S)\alpha_0)\right]=0.
\end{equation}
We can choose to satisfy this equation by equating the second factor to zero; multiplying through by $2\alpha_0$ (assuming $\alpha_0\neq 0$) we obtain the equation
\begin{equation}
\delbar((2\del f-\ii v)^2\alpha_0^2)+\del(u\alpha_0^2)=0,\quad u:=A^2\;\text{and}\;v:=\partial S.
\end{equation}
Now in terms of the quantity $b$ satisfying the fixed point equation $b=F(b)$ \eqref{eq:fixed-point-equation}--\eqref{eq:b-fixed-point} equivalent to the eikonal problem \eqref{eq:eikonal}--\eqref{eq:f-norm}, we have $2\del f-\ii v =2(k+b)$, so the equation for $\alpha$ can be written as
\begin{equation}
4\delbar((k+b)^2\alpha_0^2)+\partial(u\alpha_0^2)=0.
\end{equation}
Noting that since $b\in \mathrm{W}(\mathbb{R}^2)$ decays to zero as $|z|\to\infty$, we have $(k+b)^2\alpha^2\to k^2$ as $|z|\to\infty$,  and taking this into account we invert the operator $\delbar$ and obtain
\begin{equation}
4(k+b)^2\alpha_0^2 = 4k^2-\mathcal{B}(u\alpha^2).
\end{equation}
Now, to get into the Wiener space, we seek $\alpha_0^2$ in the form $\alpha_0^2=1+m$ with $m\in \mathrm{W}(\mathbb{R}^2)$.  Thus the problem becomes
\begin{equation}
m-\mathcal{K}m=h,\quad \mathcal{K}m:=-\frac{\mathcal{B}(um)}{4(k+b)^2},\quad h:=-\frac{8kb+4b^2+\mathcal{B}(u)}{4(k+b)^2}.
\end{equation}
Now observe that under the inequality \eqref{eq:mod-k-lower-bound}, we have $h\in \mathrm{W}(\mathbb{R}^2)$ with
\begin{equation}
\|h\|_\mathrm{W}\le \frac{8|k|B+4B^2+\|u\|_\mathrm{W}}{4(|k|-B)^2}.
\end{equation}
Also, since
\begin{equation}
\|\mathcal{K}m\|_\mathrm{W}\le \frac{\|u\|_\mathrm{W}}{4(|k|-B)^2}\|m\|_\mathrm{W},
\end{equation}
the inequality \eqref{eq:mod-k-lower-bound} implies that the operator norm of $\mathcal{K}$ on $\mathrm{W}(\mathbb{R}^2)$ satisfies $\|\mathcal{K}\|_\mathrm{W}<1$.  Hence $1-\mathcal{K}$ has a bounded inverse on $\mathrm{W}(\mathbb{R}^2)$ given by the Neumann series $1+\mathcal{K}+\mathcal{K}^2+\cdots$.
\end{proof}
We remark that this proof shows the bounded invertibility of the linear differential operator $\mathcal{L}$ defined in \eqref{eq:alpha-equation-general} on a space of functions whose squares differ from unity by a function in $\mathrm{W}(\mathbb{R}^2)$.

\subsection{Existence of $f(x,y;k)$ for $|z|$ sufficiently large given arbitrary $k\neq 0$.}
\label{sec:squash}
%
\begin{proof}[Proof of Theorem~\ref{theorem:squash}]  
%
Let $n\in \mathrm{W}(\mathbb{R}^2)$ be a function with compact support in the unit disk satisfying $n(0,0)=1$, and suppose that $\del n\in \mathrm{W}(\mathbb{R}^2)$ as well.  For $\rho>0$, denote by $n_\rho\in \mathrm{W}(\mathbb{R}^2)$ the function defined by
$n_\rho(x,y):=n(x/\rho,y/\rho)$.  Then for each $h\in \mathrm{W}(\mathbb{R}^2)$, $n_\rho h\to h$ in $\mathrm{W}(\mathbb{R}^2)$ as $\rho\to\infty$.
Indeed, we have 
\begin{equation}
\|n_\rho h-h\|_\mathrm{W}:=\iint_{\mathbb{R}^2}|\widehat{n_\rho h}(\xi_x,\xi_y)-\hat{h}(\xi_x,\xi_y)|\,\dd\xi_x\,\dd\xi_y
=\iint_{\mathbb{R}^2}|\widehat{n_\rho}\ast\hat{h}(\xi_x,\xi_y)-\hat{h}(\xi_x,\xi_y)|\,\dd\xi_x\,\dd\xi_y.
\label{eq:cutoff-estimate}
\end{equation}
Also, note that $\widehat{n_\rho}(\xi_x,\xi_y)=\rho^2\hat{n}(\rho\xi_x,\rho\xi_y)$ behaves as an approximate delta function when $\rho$ is large, having unit integral on $\mathbb{R}^2$ independently of $\rho$.  Since 
$\hat{n},\hat{h}\in L^1(\mathbb{R}^2)$, it follows from \eqref{eq:cutoff-estimate} that $\|n_\rho h-h\|_\mathrm{W}\to 0$ as $\rho\to\infty$; see \cite[Theorem 2.16]{LiebL01}.  
Therefore $(1-n_\rho)h\to 0$ in $\mathrm{W}(\mathbb{R}^2)$ as $\rho\to\infty$, and $(1-n_\rho)h$ agrees exactly with $h$ for $|z|>\rho$.

Given $k\neq 0$, we use the function $1-n_\rho$ for $\rho$ sufficiently large (given $k$) to modify the functions $u$ and $v$ appearing in the fixed-point iteration for \eqref{eq:eikonal}--\eqref{eq:f-norm} in such a way that the inequality \eqref{eq:mod-k-lower-bound} holds and therefore Theorem~\ref{theorem:eikonal-solve} applies to the modified $u$ and $v$.  Concretely, given $\rho$ we set
\begin{equation}
\tilde{u}:=(1-n_\rho)u\quad\text{and}\quad\tilde{S}:=(1-n_\rho)S.
\end{equation}
Recalling $v=\partial S$, the latter definition implies that 
\begin{equation}
\tilde{v}:=\del\tilde{S}=(1-n_\rho)v-\frac{1}{\rho}S\del n(x/\rho,y/\rho).
\end{equation}
Note that the second term above has a Wiener norm of order $O(\rho^{-1})$ because $\del n\in \mathrm{W}(\mathbb{R}^2)$ and $S$ differs from a function in $\mathrm{W}(\mathbb{R}^2)$ by a constant, so the claim follows from the scale invariance and Banach algebra properties of the Wiener space.
The value of $\rho$ will be chosen as follows.  Choose $B\in (0,\tfrac{1}{2}|k|)$.  Then take $\rho>0$ so large that 
$\|\tilde{u}\|_\mathrm{W}\le |k|B$ and $\|\tilde{v}\|_\mathrm{W}\le \tfrac{1}{2}B$.  
It then follows that $B>\|\tilde{v}\|_\mathrm{W}$, and that
\begin{equation}
B+\frac{1}{4}\cdot\frac{\|\tilde{u}\|_\mathrm{W}}{B-\|\tilde{v}\|_\mathrm{W}}<\frac{1}{2}|k|+\frac{1}{2}|k|= |k|
\end{equation}
and
\begin{equation}
B+\frac{1}{2}\sqrt{\|\tilde{u}\|_\mathrm{W}}<\frac{1}{2}|k| +\frac{1}{2\sqrt{2}}|k|<|k|
\end{equation}
so the inequality \eqref{eq:mod-k-lower-bound} holds true.  Therefore, by Theorem~\ref{theorem:eikonal-solve}, there is a unique global classical solution $\tilde{f}(x,y;k)$ of \eqref{eq:eikonal}--\eqref{eq:f-norm}, in which $u=A^2$ is replaced with $\tilde{u}$ and $S$ is replaced with $\tilde{S}$,
that satisfies $\|\partial \tilde{f}-k-\tfrac{1}{2}\ii\tilde{v}\|_\mathrm{W}\le B$.  Since $\tilde{u}(x,y)=u(x,y)=A(x,y)^2$ and $\tilde{S}(x,y)=S(x,y)$
both hold for $|z|>\rho$ due to the compact support in the unit disk of $n$, the construction of $f(x,y;k)$ given $k\neq 0$ is finished upon defining $f:=\tilde{f}$ for $|z|>\rho$.  According to Theorem~\ref{theorem:alpha-solve}, corresponding to $\tilde{f}$ defined on $\mathbb{R}^2$ there is a unique classical solution $\tilde{\alpha}_0$ of \eqref{eq:alpha-equation-general} with the appropriate substitutions for which $\tilde{\alpha}_0\to 1$ as $|z|\to\infty$, and defining $\alpha_0:=\tilde{\alpha}_0$ for $|z|>\rho$ finishes the proof.
\end{proof}

\subsection{Series solutions of the eikonal problem}\label{series}
Here, we develop a method based on infinite series that reproduces some of the above results by different means, and that can lead to an effective, sometimes explicit, solution of the eikonal problem.

\subsubsection{Series expansions of $f(x,y;k)$ for $S(x,y)=0$}
\label{sec:series-for-f}
Suppose that $S(x,y)\equiv 0$.  If also $A(x,y)\equiv 0$, then the exact solution of the eikonal problem \eqref{eq:eikonal}--\eqref{eq:f-norm} is  $f(x,y;k)=kz$ regardless of the value of $k\in\mathbb{C}$.  This fact suggests a perturbative approach to the latter problem in which, for fixed $k$, a measure of the amplitude $A(x,y)$ is taken to be the small parameter.  Such an approach is to be contrasted with that of Section~\ref{sec:k-large} in which for fixed $A$ and $S$, $k$ was taken to be a large parameter. 

Let $\delta>0$ be a parameter, and consider the $S\equiv 0$ form of the eikonal equation \eqref{eq:eikonal} in which $A^2/4$ is replaced with $\delta A^2$:
\begin{equation}
\overline{\partial}f(x,y;k)\partial f(x,y;k)=\delta A(x,y)^2,
\label{eq:fdeltaeqn}
\end{equation}
We try to solve \eqref{eq:fdeltaeqn} by a formal series
\begin{equation}
f(x,y;k)\sim kz+\sum_{n=1}^\infty\delta^nf_n(x,y;k),\quad \delta\to 0,
\label{eq:fseries}
\end{equation}
where the coefficient functions $f_n(x,y;k)$ are to be determined.  Since the leading term builds in the leading asymptotics of $f(x,y;k)$ for large $|z|$, we insist that $f_n(x,y;k)\to 0$ as $|z|\to\infty$ for all $n$ for consistency with \eqref{eq:f-norm}.  We intend to set $\delta=\tfrac{1}{4}$ once these have been determined and then assess the possible convergence of the series.

Substituting the series \eqref{eq:fseries} into \eqref{eq:fdeltaeqn} and collecting together the terms with the same powers of $\delta$ yields the following hierarchy of equations:
\begin{equation}
\delbar f_1(x,y;k)=\frac{1}{k}A(x,y)^2,
\label{eq:f1eqn}
\end{equation}
and
\begin{equation}
\delbar f_n(x,y;k)=-\frac{1}{k}\sum_{\ell=1}^{n-1}\delbar f_\ell(x,y;k)\del f_{n-\ell}(x,y;k),\quad n=2,3,4,\dots,
\label{eq:fneqn}
\end{equation}
The boundary condition $f_n(x,y;k)\to 0$ as $|z|\to\infty$ requires that we 
invert $\delbar$ on the right-hand side by the solid Cauchy transform \eqref{eq:solid-Cauchy}, however in certain situations the inversion can be carried out explicitly.
We will make this procedure effective in the special case that $A$ is a function with radial symmetry below in Section~\ref{sec:radial}.

Setting $u_n:=k^{-1}\delbar f_n$, the hierarchy \eqref{eq:f1eqn}--\eqref{eq:fneqn} becomes
\begin{equation}
u_1=k^{-2}A(x,y)^2,\quad u_n=-\sum_{\ell=1}^{n-1}u_\ell\mathcal{B}u_{n-\ell},\quad n=2,3,4,\dots.
\label{eq:un-recurrence}
\end{equation}
The space $\mathrm{W}(\mathbb{R}^2)$ is a convenient choice to analyze the terms $u_n$ for the same reasons as in the preceding study of the fixed point problem \eqref{eq:fixed-point-equation}--\eqref{eq:b-fixed-point}, namely the combination of nonlinearity with the presence of the Beurling transform $\mathcal{B}$ in the recurrence relation \eqref{eq:un-recurrence}.
Using the triangle inequality in the space $\mathrm{W}(\mathbb{R}^2)$ along with the Banach algebra property \eqref{eq:Banach-algebra}
and the identity \eqref{eq:Beurling-norm}, we then get
\begin{equation}
\|u_n\|_\mathrm{W}\le\sum_{\ell=1}^{n-1}\|u_\ell\|_\mathrm{W}\|u_{n-\ell}\|_\mathrm{W},\quad n=2,3,4,\dots.
\label{eq:pre-Catalan}
\end{equation} 
Now we renormalize $u_n$ as follows:  $u_n=\|u_1\|_\mathrm{W}^n v_n$, such that \eqref{eq:pre-Catalan} becomes
\begin{equation}
\|v_1\|_\mathrm{W}=1,\quad \|v_n\|_\mathrm{W}\le\sum_{\ell=1}^{n-1}\|v_\ell\|_\mathrm{W}\|v_{n-\ell}\|_\mathrm{W},\quad n=2,3,4,\dots.
\end{equation}
Recall the \emph{Catalan numbers} that satisfy the recurrence relation
\begin{equation}
C_{n}=\sum_{\ell=0}^{n-1}C_{\ell} C_{n-1-\ell},\quad n=1,2,3,4,\dots
\label{eq:CatalanRecurrence}
\end{equation}
subject to the initial condition $C_0=1$.  Explicitly, the Catalan numbers are given by the formula
\begin{equation}
C_n=\frac{(2n)!}{(n+1)!n!},\quad n\ge 0.
\label{eq:Catalan-numbers}
\end{equation}
From these definitions, we see that
\begin{equation}
\|v_n\|_\mathrm{W}\le C_{n-1}=\frac{(2n-2)!}{n!(n-1)!},\quad n\ge 1.
\label{eq:vn-estimate}
\end{equation}
Now we consider the convergence of the series (using $\delta=1/4$)
\begin{equation}
\sum_{n=1}^\infty\delta^n\delbar f_n = k\sum_{n=1}^\infty 4^{-n}u_n = k\sum_{n=1}^\infty (\tfrac{1}{4}\|u_1\|_\mathrm{W})^nv_n.
\label{eq:series-I}
\end{equation}
Since, by Stirling's formula, 
\begin{equation}
C_{n-1}=\frac{4^{n}}{4\sqrt{\pi}n^{3/2}}(1+O(n^{-1})),\quad n\to\infty,
\label{eq:Catalan-Stirling}
\end{equation}
the series \eqref{eq:series-I} is convergent in the space $\mathrm{W}(\mathbb{R}^2)$ provided that $\|u_1\|_\mathrm{W}\le 1$, i.e., that
\begin{equation}
|k|^2\ge \|A^2\|_\mathrm{W}.
\label{eq:k-assumption-series}
\end{equation}
Under this assumption on $|k|$, we then set
\begin{equation}
f(x,y;k)=kz + \delbar^{-1}\sum_{n=1}^\infty\delta^n\delbar f_n(x,y)=kz+k\delbar^{-1}\sum_{n=1}^\infty \frac{1}{4^n}u_n(x,y),
\label{eq:f-dbar-series}
\end{equation}
under the additional assumption that 
$\delbar^{-1}$ makes sense when applied to the particular element of $\mathrm{W}(\mathbb{R}^2)$ given by the convergent series.  
Note that the assumption \eqref{eq:k-assumption-series} on $k$ coincides with \eqref{eq:mod-k-lower-bound} in the case that $v=0$ and $B=\tfrac{1}{2}\sqrt{\|u\|_\mathrm{W}}$.  As has been pointed out, the latter is the optimal choice of $B$ given $v=0$ in \eqref{eq:mod-k-lower-bound}.

\subsubsection{Explicit inversion of $\delbar$ for $A(x,y)$ radially symmetric}
\label{sec:radial}
Specializing further, let us now suppose that $A(x,y)$ is a radially-symmetric function, that is,
\begin{equation}
A(x,y)=a(m),\quad m:=x^2+y^2=z\overline{z}
\label{eq:Aradial}
\end{equation}
for a suitable function $a:\mathbb{R}_+\to\mathbb{R}_+$.  We will show how in this case the iterative construction of series terms $f_n$ can be made explicit, avoiding the solution of partial differential equations or convolution with the Cauchy kernel (cf., \eqref{eq:solid-Cauchy}) at each order.

In Section~\ref{sec:Riccati} we will be interested in the solution of the eikonal problem \eqref{eq:eikonal}--\eqref{eq:f-norm} for radial phase-free potentials at $k=0$, so before implementing the series procedure described in Section~\ref{sec:series-for-f} we briefly discuss this special case.  With $S\equiv 0$ and $A$ given in the form \eqref{eq:Aradial}, observe that for $k=0$ one may
seek $f$ as a function of $m=x^2+y^2=z\overline{z}$ alone by writing $f(x,y;0)=F(m)$ by analogy with \eqref{eq:Aradial}.  The eikonal equation \eqref{eq:eikonal} for $S\equiv 0$ and $A$ of the form \eqref{eq:Aradial} then becomes simply
\begin{equation}
4mF'(m)^2=a(m)^2.
\label{eq:eikonal-S-zero-radial}
\end{equation}
This equation has two solutions that are smooth for all $m>0$ and that decay to zero as $m\to\infty$:
\begin{equation}
F(m)=\pm\frac{1}{2}\int_m^\infty\frac{a(\mu)}{\mu^{1/2}}\,\dd\mu=\pm\int_{m^{1/2}}^\infty a(s^2)\,\dd s.
\label{eq:eikonal-S-zero-radial-kzero-two-solutions}
\end{equation}
On the other hand, both of these solutions $f(x,y;0)=F(x^2+y^2)$ exhibit conical singularities at the origin $r=0$ unless $a(0)=0$.

Now we return to the series approach described in Section~\ref{sec:series-for-f}.
The equation \eqref{eq:f1eqn} for $f_1$ in the current setting reads
\begin{equation}
\delbar f_1=\frac{1}{k}a(z\overline{z})^2.
\end{equation}
This equation is easily integrated under the condition that $f_1$ should be smooth at the origin:
\begin{equation}
f_1=\frac{1}{kz}\int_0^{z\overline{z}}a(m)^2\,\dd m.
\end{equation}
Assuming that $a\in L^2(\mathbb{R}_+)$, we see easily that 
\begin{equation}
|f_1|\le\frac{\|a\|_2^2}{|kz|},
\end{equation}
an estimate that provides decay as $z\to\infty$.  Assuming also that $a$ is continuous down to $m=0$ shows that
\begin{equation}
f_1=\frac{a(0)^2}{k}\overline{z} +o(|z|),\quad z\to 0,
\end{equation}
indicating that $f_1$ is smooth near $z=0$ as well.  We next claim that for all $n=1,2,3,\dots$ it is consistent with \eqref{eq:f1eqn} and \eqref{eq:fneqn} to write $f_n$ in the form
\begin{equation}
f_n=\frac{G_n(m)}{(2n-1)(kz)^{2n-1}},\quad m=z\overline{z},
\label{eq:fnradialform}
\end{equation}
where $G_n$ is a smooth function.  (Precisely, the assertion is that $(2n-1)(kz)^{2n-1}f_n$ is a radial function of $(x,y)$, i.e., depending only on the product $m=z\overline{z}$.)  Indeed, this holds for $n=1$ with 
\begin{equation}
G_1(m):=\int_0^m a(\mu)^2\,\dd\mu.
\label{eq:G1general}
\end{equation}
Furthermore, substituting \eqref{eq:fnradialform} into \eqref{eq:fneqn} gives a recurrence relation on the functions $G_n$:
\begin{equation}
G_n'(m)=\sum_{\ell=1}^{n-1}K_{n\ell}\left[(2(n-\ell)-1)G_\ell'(m)G_{n-\ell}(m)-mG_\ell'(m)G_{n-\ell}'(m)\right],\quad n\ge 2,
\label{eq:Gnprimerecurrence}
\end{equation}
where
\begin{equation}
K_{n\ell}:=\frac{2n-1}{(2\ell-1)(2(n-\ell)-1)}.
\end{equation}
In order to ensure that $f_n$ is smooth at the origin, we need to insist that $G_n$ vanish at $m=0$, and so once $G_n'(m)$ is known from \eqref{eq:Gnprimerecurrence}, we obtain $G_n$ itself by
\begin{equation}
G_n(m)=\int_0^mG_n'(\mu)\,\dd\mu.
\label{eq:Gnconstant}
\end{equation}
This guarantees only that $G_n(0)=0$ but sufficiently high-order vanishing at $m=0$ will be required to cancel the factor of $z^{2n-1}$ in the denominator of $f_n$ as given by \eqref{eq:fnradialform}.  We will need $G_n(m)=O(m^{2n-1})$ as $m\to 0$ to have the necessary smoothness.  We will also need to avoid rapid growth in $G_n(m)$ as $m\to\infty$ in order that $f_n$ decay as $z\to\infty$.  Although there is no additional freedom available once the recurrence \eqref{eq:Gnprimerecurrence} is solved and the integration constant is determined by \eqref{eq:Gnconstant}, these additional properties of $G_n$ are indeed present as can be confirmed in examples, to which we now proceed.

\begin{myremark}
The form \eqref{eq:fnradialform} shows that, in polar coordinates $z=r\ee^{\ii\phi}$, $f_n=\tilde{f}_n(r)\ee^{-\ii(2n-1)\phi}$, and thus the infinite series $f(x,y;k)-kz=\sum_{n=1}^\infty \delta^n f_n(x,y;k)$
is nothing but a Fourier series consisting of only negative odd harmonics $\ee^{-\ii\phi}$, $\ee^{-3\ii\phi}$, $\ee^{-5\ii\phi}$, etc.  Another important observation clear from \eqref{eq:fnradialform} and the fact that $G_n$ is independent of $k$ is that $f(x,y;k)-kz$ is a power series in negative odd powers of $k$ with coefficients depending on $(x,y)\in\mathbb{R}^2$.  These observations lead to a numerical approach to the eikonal problem for radial potentials with $S(x,y)\equiv 0$ that will be explained in Section~\ref{sec:numerical-Fourier-series}.  It is also clear that it is the asymptotic behavior of $G_n(m)$ as $n\to\infty$ that determines for a given $|z|$ the minimum value of $|k|$ for which the series \eqref{eq:fseries} converges.  
\end{myremark}

\subsubsection{Example:  Gaussian amplitude}
\label{sec:Gaussian}
Suppose that $A(x,y)=\ee^{-(x^2+y^2)}$, which we can write in the form \eqref{eq:Aradial}
with $a(m)=\ee^{-m}$.  
Since the Fourier transform of $A(x,y)^2=\ee^{-2(x^2+y^2)}$ by the definition \eqref{eq:W-norm} is 
$\ee^{-|\xi|^2/8}/(8\pi)>0$ where $|\xi|^2:=\xi_x^2+\xi_y^2$, it is easy to compute the Wiener norm of $A^2$ and we hence conclude that the series \eqref{eq:series-I} is convergent in $\mathrm{W}(\mathbb{R}^2)$ provided $|k|\ge \sqrt{\|A^2\|_\mathrm{W}}=\sqrt{A(0,0)^2}=1$.  Later in Section~\ref{sec:numerical-Gaussian} we will see convincing numerical evidence that this condition on $k$ is not sharp, and that the related series \eqref{eq:f-dbar-series} is convergent in $L^\infty(\mathbb{R}^2)$ for $|k|\ge\tfrac{1}{2}$.

Let us illustrate the analytical calculation of the terms in the series for this case.  From \eqref{eq:fnradialform}--\eqref{eq:G1general} we have
\begin{equation}
G_1(m)=\int_0^m\ee^{-2\mu}\,\dd\mu=\frac{1}{2}\left[1-\ee^{-2m}\right]\quad\implies\quad
f_1=\frac{1-\ee^{-2z\overline{z}}}{2kz}.
\end{equation}
With $G_1$ determined, \eqref{eq:Gnprimerecurrence} for $n=2$ reads
\begin{equation}
\begin{split}
G_2'(m)&=3G_1'(m)G_1(m)-3mG_1'(m)^2\\
&=3\ee^{-2m}\frac{1}{2}\left[1-\ee^{-2m}\right]-3m\ee^{-4m}\\
&=
\frac{3}{2}\ee^{-2m}-\left[\frac{3}{2}+3m\right]\ee^{-4m},
\end{split}
\end{equation}
and hence using \eqref{eq:Gnconstant} we get
\begin{equation}
G_2(m)=\frac{3}{16}\left[1-4\ee^{-2m}+(3+4m)\ee^{-4m}\right]\quad\implies\quad f_2=\frac{1-4\ee^{-2z\overline{z}}+(3+4z\overline{z})\ee^{-4z\overline{z}}}{16(kz)^3}.
\end{equation}
It can be checked by Taylor expansion that $f_2$ is smooth at the origin and it decays as $z\to\infty$.

This procedure can be continued explicitly to arbitrary order because one needs only to be able to integrate in closed form expressions of the form $m^p\ee^{-2qm}$ for non-negative integers $p$ and $q$:
\begin{equation}
\int_0^m \mu^p\ee^{-2q\mu}\,\dd\mu=\frac{p!}{(2q)^{p+1}}\left(1-\ee^{-2qm}\sum_{\ell=0}^p\frac{(2qm)^\ell}{\ell !}\right).
\end{equation}
Unfortunately, it seems difficult to deduce a closed form expression for $G_n(m)$ for general $n\ge 2$ (and prove its correctness by an induction argument). Rather than proceed in this direction, we turn to another example of a radial amplitude function $A(x,y)$ for which this procedure yields dramatic results.

\subsubsection{Example:  Lorentzian amplitude}  
\label{sec:Lorentzian}
Suppose now that $A(x,y)=(1+x^2+y^2)^{-1}$, which can be written in the form \eqref{eq:Aradial} with
$a(m)=(1+m)^{-1}$.  Using the definition \eqref{eq:W-norm}, the Fourier transform of $A(x,y)^2$ in this case turns out to be $|\xi|K_1(|\xi|)/(4\pi)$ where $|\xi|:=\sqrt{\xi_x^2+\xi_y^2}$ and $K_1$ is a modified Bessel function of order $1$ \cite[\S10.25]{dlmf}.  By an integral representation formula \cite[Eqn.\@ 10.32.9]{dlmf} it is obvious that $K_1(|\xi|)>0$, so again it is easy to calculate the Wiener norm of $A^2$ and hence observe that the series \eqref{eq:series-I} converges in $\mathrm{W}(\mathbb{R}^2)$ whenever $|k|\ge\sqrt{\|A^2\|_\mathrm{W}}=\sqrt{A(0,0)^2}=1$.  Again, this condition is not sharp, and we will see so below, without the need to resort to numerics, by explicit calculation of the terms $f_n$ given by \eqref{eq:fnradialform}.  

Indeed, from \eqref{eq:fnradialform}--\eqref{eq:G1general} we have
\begin{equation}
G_1(m)=\int_0^m\frac{\dd\mu}{(1+\mu)^2}=1-\frac{1}{1+m}=\frac{m}{1+m}\quad\implies\quad
f_1=\frac{\overline{z}}{k(1+z\overline{z})},
\end{equation}
and we note that $f_1$ is smooth at the origin and decays as $z\to\infty$.  We next claim that for general $n\ge 2$, the recurrence \eqref{eq:Gnprimerecurrence} and the normalization condition \eqref{eq:Gnconstant} are satisfied by taking $G_n$ in the form
\begin{equation}
G_n(m)=C_{n-1}\left(\frac{m}{1+m}\right)^{2n-1},\quad n\ge 1,
\label{eq:GnLorentzianGeneralForm}
\end{equation}
where $C_0,C_1,C_2,\dots$ are suitably chosen constants.  Indeed $G_n(0)=0$ for all $n\ge 1$, 
so \eqref{eq:Gnconstant} is obviously satisfied regardless of the choice of the constants $\{C_k\}_{k=0}^\infty$.  Also, the form \eqref{eq:GnLorentzianGeneralForm} is clearly correct for $n=1$ with the choice $C_0=1$.  Moreover, substituting \eqref{eq:GnLorentzianGeneralForm} into \eqref{eq:Gnprimerecurrence} shows that \eqref{eq:GnLorentzianGeneralForm} is correct for general $n$, provided that the constants $\{C_k\}_{k=0}^\infty$ satisfy the recurrence \eqref{eq:CatalanRecurrence} together with 
the initial condition $C_0=1$; i.e., the constant $C_n$ is the $n^\text{th}$ Catalan number, which is explicitly given by \eqref{eq:Catalan-numbers}.
Therefore, $G_n(m)$ has been determined in closed form for all $n$, and it follows that 
\begin{equation}
f_n=\frac{C_{n-1}}{2n-1}\left(\frac{1}{k}\cdot\frac{\overline{z}}{1+z\overline{z}}\right)^{2n-1},
\quad 
n\ge 1.
\label{eq:fn-Lorentzian}
\end{equation}
Note that $f_n$ is smooth at the origin and decays as $z\to\infty$ for every $n\ge 1$.

With the terms $f_n$ all explicitly determined, we directly analyze the convergence of the formal series \eqref{eq:fseries} for $f$ with $\delta=\tfrac{1}{4}$. 
Noting that by \eqref{eq:Catalan-Stirling} we have 
\begin{equation}
\frac{C_{n-1}}{2n-1}=\frac{4^n}{8\sqrt{\pi}n^{5/2}}(1+O(n^{-1})),\quad n\to\infty, 
\label{stirling}
\end{equation}
we see that the series \eqref{eq:fseries} with $\delta=\tfrac{1}{4}$ converges exactly when
\begin{equation}
|k|\ge \frac{|z|}{1+|z|^2}
\label{eq:convergencecondition}
\end{equation}
and diverges otherwise.  Moreover, given any $\sigma>1$, the convergence is absolute and uniform for $k$ and $z$ satisfying the condition
\begin{equation}
|k|\ge \sigma\frac{|z|}{1+|z|^2}.
\end{equation}
Since the function on the right-hand side of \eqref{eq:convergencecondition} achieves its maximum value of $1/2$ at $|z|=1$, we learn that if $|k|\ge 1/2$ the series on the right-hand side of \eqref{eq:fseries} converges uniformly on $\mathbb{R}^2$ to a continuous function vanishing at infinity.  


Proceeding further, the infinite series on the right-hand side of \eqref{eq:fseries} can be summed in closed form \cite{Mathematica} for those $k$ and $z$ for which it converges, yielding the explicit formula \eqref{eq:fW-Lorentzian}, in which the square root and the arcsin are both given by  principal branches.  This explicit expression for $g=f-kz$, which was originally defined as a function of $W$ by a power series convergent for $|W|<1$, defines an analytic continuation from the unit disk in the $W$-plane   to the whole complex $W$-plane with the exception of two slits joining the points $W=\pm 1$ to infinity (we may choose the branch cuts to be the real intervals $-\infty<W\le -1$ and $1\le W<+\infty$).  Moreover, one can directly check that regardless of whether $W$ is inside or outside of the unit disk, the explicit expression for $f(x,y;k)$ is an exact solution of the equation \eqref{eq:eikonal} in the case $S\equiv 0$ when $A(x,y)=(1+x^2+y^2)^{-1}$.

In this case, we can also solve explicitly for the scalar coefficient $\alpha_0(x,y;k)$, which completes the construction of the leading term $\boldsymbol{\phi}^{(0)}(x,y;k)$ in the WKB expansion.
By direct calculation using the definition of $W$ given in \eqref{eq:fW-Lorentzian},
\begin{equation}
\del W=-kW^2\quad\text{and}\quad \overline{z}^2\delbar W=kW^2.
\label{eq:W-derivs}
\end{equation}
Hence
\begin{equation}
\del f = \frac{k}{2}\left(1+(1-W^2)^{1/2}\right),
\label{eq:df-Lorentzian}
\end{equation}
and from the relevant eikonal equation $\del f\cdot\delbar f=\tfrac{1}{4}(1+z\overline{z})^{-2}$ we get
\begin{equation}
\overline{z}^2\delbar f=\frac{k}{2}\left(1-(1-W^2)^{1/2}\right).
\label{eq:dbarf-Lorentzian}
\end{equation}
Writing \eqref{eq:alpha-equation-general} in the special case of $S\equiv 0$ (and hence $w=0$) gives
\begin{equation}
A\delbar (\del f\cdot\alpha_0) + \delbar f\cdot\del (A\alpha_0)=0
\label{eq:alpha-equation-S-zero}
\end{equation}
as the equation to be solved by $\alpha_0(x,y;k)$ under the condition $\alpha_0(x,y;k)\to 1$ as $|z|\to\infty$.  Since $A=kW/\overline{z}$ according to \eqref{eq:fW-Lorentzian}, and $\del f$ and $\delbar f$ are given by \eqref{eq:df-Lorentzian}--\eqref{eq:dbarf-Lorentzian}, \eqref{eq:alpha-equation-S-zero} can be written as
\begin{equation}
W\overline{z}^2\delbar\left(\left(1+(1-W^2)^{1/2}\right)\alpha_0\right)+\left(1-(1-W^2)^{1/2}\right)\del (W\alpha_0)=0
\label{eq:alpha-PDE-Lorentzian}
\end{equation}
where we have used $k\neq 0$.  Now using \eqref{eq:W-derivs} it is clear that there is a solution of the form $\alpha_0=\alpha_0(W)$, i.e., that $\alpha_0$ depends on $(x,y)$ only via $W$.  Indeed, by the chain rule, the ansatz $\alpha_0=\alpha_0(W)$ in \eqref{eq:alpha-PDE-Lorentzian} leads to the ordinary differential equation
\begin{equation}
W\frac{d}{dW}\left(\left(1+(1-W^2)^{1/2}\right)\alpha_0(W)\right)-\left(1-(1-W^2)^{1/2}\right)\frac{\dd}{\dd W}(W\alpha_0(W))=0
\end{equation}
after canceling $kW^2$.  This can be rewritten in the equivalent form
\begin{equation}
\frac{\dd}{\dd W}\log\left(W(1-W^2)^{1/2}\alpha_0(W)^2\right)=W^{-1}(1-W^2)^{-1/2}=\frac{\dd}{\dd W}\log\left(\frac{W}{1+(1-W^2)^{1/2}}\right).
\end{equation}
Integrating, exponentiating, and solving for $\alpha_0$ gives
\begin{equation}\label{alphaexact}
\alpha_0(W)=C\left((1-W^2)^{1/2}(1+(1-W^2)^{1/2})\right)^{-1/2}
\end{equation}
where $C$ is an integration constant.  Since $|z|\to\infty$ means $W\to 0$, we need $C=\sqrt{2}$ to have $\alpha_0\to 1$ as $|z|\to\infty$, which gives \eqref{eq:alphaW-Lorentzian}.  

\begin{myremark}
For this example, we can explain the gap between the general sufficient condition for convergence, namely $|k|\ge\sqrt{\|A^2\|_\mathrm{W}}$ which works out to $|k|\ge 1$ in this case, and the actual condition $|k|\ge 1/2$ obtained by direct analysis of the explicit terms in the series.  Indeed, it is easy to check that when $f_n$ is given by \eqref{eq:fn-Lorentzian}, the corresponding functions $u_n:=k^{-1}\delbar f_n$ satisfy the identity $\mathcal{B}u_n=-\overline{z}^2 u_n$ for $n=1,2,3,\dots$.  If this specialized information is used in \eqref{eq:un-recurrence}, the recurrence becomes
\begin{equation}
u_1=k^{-2}A(x,y)^2,\quad u_n=\overline{z}^2\sum_{\ell=1}^{n-1}u_\ell u_{n-\ell}.
\end{equation}
Therefore, introducing $w_n:=\overline{z}^2 u_n$, we get a corresponding recurrence for $\{w_n\}_{n=1}^\infty$:
\begin{equation}
w_1=k^{-2}\overline{z}^2A(x,y)^2,\quad w_n=\sum_{\ell=1}^{n-1}w_\ell w_{n-\ell}.
\end{equation}
This recurrence relation can be studied in exactly the same way as \eqref{eq:un-recurrence}; one introduces $v_n$ by the rescaling $w_n=\|w_1\|^nv_n$ (here we can use the $L^\infty(\mathbb{R}^2)$ norm in place of the Wiener norm if desired because we need only the Banach algebra property having dispensed with the Beurling transform), and obtains the estimate \eqref{eq:vn-estimate}.  Hence the condition for convergence of the series \eqref{eq:series-I} now takes the form $\|w_1\|\le 1$.   Since by comparison with $u_1$, $w_1$ contains the additional factor of $\overline{z}^2$, it is easy to check that whereas $\|u_1\|_\mathrm{W}=\|u_1\|_\infty\le 1$ reads $|k|\ge 1$, the condition $\|w_1\|_\infty\le 1$ reads $|k|\ge\tfrac{1}{2}$.
\end{myremark}

\section{A Specialized Method for Radial Potentials with $S\equiv 0$ and $k=0$}
\label{sec:Riccati}
Suppose $S\equiv 0$, and fix the spectral parameter to be $k=0$.  It is well known that in this case the corresponding Zakharov-Shabat scattering problem \eqref{eq:ZS-system} that arises in the one-dimensional setting with $\lambda=0$ corresponding to $k=0$, namely $\epsilon\boldsymbol{\psi}'(x)=A(x)\sigma_1\boldsymbol{\psi}(x)$,  
can be solved explicitly by introducing a new coordinate $m$ satisfying $m'(x)=A(x)>0$.  Indeed, this monotone change of independent variable reduces the problem to the constant-coefficient system $\epsilon\boldsymbol{\psi}'(m)=\sigma_1\boldsymbol{\psi}(m)$.
Unfortunately, similar reasoning fails in the setting of the two-dimensional Davey-Stewartson scattering problem \eqref{eq:1.10}.  

In this section, we further assume that $A(x,y)$ is a function with radial symmetry, i.e., depending only on $|z|$, and show how the use of polar coordinates can be used to reduce the scattering problem to the study of a suitable ordinary differential equation.  We then study this equation in the semiclassical limit and obtain a formula for the reflection coefficient in this special case.  

We begin by writing the scattering problem \eqref{eq:1.10} in polar coordinates $(r,\phi)$, where $z=x+\ii y=r\ee^{\ii\phi}$ and $\overline{z}=r\ee^{-\ii\phi}$.  In polar coordinates, the operators defined by \eqref{eq:d-dbar} take the
form
\begin{equation}
\partial = \frac{\ee^{-\ii\phi}}{2r}\left(r\frac{\partial}{\partial r}-\ii\frac{\partial}{\partial\phi}\right)\quad\text{and}\quad
\overline{\partial}=\frac{\ee^{\ii\phi}}{2r}\left(r\frac{\partial}{\partial r} + \ii\frac{\partial}{\partial\phi}\right).
\label{eq:d-dbar-polar}
\end{equation}
Therefore, with $S\equiv 0$ and $A=A(r)$ being a smooth function with $A'(0)=0$, \eqref{eq:1.10} becomes
\begin{equation}
\begin{split}
\epsilon\frac{\ee^{\ii\phi}}{r}\left(r\psi_{1r} + \ii\psi_{1\phi}\right)&=A(r)\psi_2\\
\epsilon\frac{\ee^{-\ii\phi}}{r}\left(r\psi_{2r}-\ii\psi_{2\phi}\right)&=A(r)\psi_1.
\end{split}
\label{eq:dbar-system-real-radial}
\end{equation}
It is then convenient to introduce new dependent variables by $w_1:=\psi_1$ and $w_2=\overline{z}\psi_2$, so that the system takes the form
\begin{equation}
\begin{split}
\epsilon rw_{1r}+\ii\epsilon w_{1\phi}&=A(r)w_2\\
\epsilon rw_{2r}-\ii\epsilon w_{2\phi}&=r^2A(r)w_1.
\end{split}
\end{equation}
If $k=0$, then from \eqref{eq:psi-norm}--\eqref{eq:r-def}, we see that the solution we seek has the property that $w_1\to 1$ and $w_2\to\tfrac{1}{2}\overline{R_0^\epsilon(0)}$ as $r\to\infty$, thereby recovering the reflection coefficient evaluated at the origin.  Implicit is the assumption that $w_j$ are smooth functions on the plane.  We claim that in this situation, $w_j=w_j(r)$ are purely radial functions, reducing the problem to the study of the linear ordinary differential equations
\begin{equation}
\begin{split}
\epsilon r\frac{\dd w_1}{\dd r}&=A(r)w_2\\
\epsilon r\frac{\dd w_2}{\dd r}&=r^2A(r)w_1.
\end{split}
\label{eq:radial-ODEs-kzero}
\end{equation}
By the method of Frobenius, one can see that this system
has a one-dimensional space of solutions that are bounded with zero derivative at $r=0$, which is a regular singular point.  Indeed, assuming that $A(r)=A(0)+O(r^2)$ as $r\downarrow 0$, the system \eqref{eq:radial-ODEs-kzero} can be written in the form
\begin{equation}
\frac{\dd}{\dd r}\begin{bmatrix}w_1\\w_2\end{bmatrix}=\left(\frac{1}{r}  \begin{bmatrix}
0 & \epsilon^{-1}A(0)\\0&0\end{bmatrix}+O(1)\right)\begin{bmatrix}w_1\\w_2\end{bmatrix},\quad r\downarrow 0
\end{equation}
and hence the only indicial exponent for the origin is zero with non-diagonalizable coefficient matrix; therefore every solution is a linear combination of a solution analytic at $r=0$ proportional there to the nullvector $[1,0]^\top$ and a second independent solution that diverges logarithmically at the origin.  We can attempt to find $R_0^\epsilon(0)$ by normalizing an element of this subspace of solutions regular at the origin so that $w_1\to 1$ as $r\to\infty$.  Alternatively, we can take any nonzero element of this subspace and obtain $R_0^\epsilon(0)$ by the formula
\begin{equation}
R_0^\epsilon(0)=2\lim_{r\to\infty}\overline{w_2(r)}/\overline{w_1(r)}.
\label{eq:r-kzero-w1w2}
\end{equation}

\subsection{Riccati equation.  Formal asymptotic analysis}
\label{sec:Riccati-formal}
The formula \eqref{eq:r-kzero-w1w2} in turn motivates us to study the Riccati equation for $Q:=w_2/w_1$ implied by the coupled linear system \eqref{eq:radial-ODEs-kzero} for $w_j(r)$:
\begin{equation}
\epsilon \frac{\dd Q}{\dd r}=\frac{A(r)}{r}\left(r^2-Q^2\right).
\label{eq:Riccati}
\end{equation}
If \eqref{eq:Riccati} is solved subject to the initial condition $Q(r)=O(r^2)$ as $r\downarrow 0$ (corresponding to the regular subspace at the origin for \eqref{eq:radial-ODEs-kzero}), then the reflection coefficient $R_0^\epsilon(0)$ may be found as $R_0^\epsilon(0)=2\lim_{r\to\infty}Q(r)$ (using the fact that $Q(r)$ is real-valued).  Equivalently, we may introduce $X(r):=Q(r)/r$, which satisfies
\begin{equation}
\begin{split}
\epsilon \frac{\dd X}{\dd r}&= -A(r)X^2-\frac{\epsilon}{r}X + A(r)\\
&=-A(r)[X-X_+(r;\epsilon)][X-X_-(r;\epsilon)],\quad X(r)=O(r),\quad r\downarrow 0,
\end{split}
\label{eq:Riccati-2}
\end{equation}
for 
\begin{equation}
X_\pm(r;\epsilon):=\frac{1}{2A(r)}\left[-\frac{\epsilon}{r}\pm\sqrt{\frac{\epsilon^2}{r^2}+4A(r)^2}\right],
\label{eq:nullclines}
\end{equation}
and from which one obtains $R_0^\epsilon(0)$ by
\begin{equation}
R_0^\epsilon(0)=2\lim_{r\to\infty}rX(r).
\label{eq:r-zero-from-X}
\end{equation}
Note that, given the solution $X(r;\epsilon)$ of \eqref{eq:Riccati-2}, the solution of the original system \eqref{eq:dbar-system-real-radial} with the boundary conditions $\psi_1\to 1$ and $\psi_2=O(1/r)$ as $r\to\infty$ is given explicitly by
\begin{equation}
\begin{bmatrix}
\psi_1\\\psi_2\end{bmatrix}
=\tilde{\alpha}_0\begin{bmatrix}1\\ \ee^{\ii\phi}X(r;\epsilon)\end{bmatrix}\ee^{f/\epsilon},\quad\text{where}\quad
f=-\int_r^{+\infty}A(r')\,\dd r'
\label{eq:radial-kzero-psi-from-Riccati}
\end{equation}
and
\begin{equation}
\tilde{\alpha}_0=\exp\left(\frac{1}{\epsilon}\int_r^{+\infty}(1-X(r';\epsilon))A(r')\,\dd r'\right).
\label{eq:radial-kzero-psi-from-Riccati-2}
\end{equation}

Suppose that $A(r)$ is nonincreasing.  The nullclines for \eqref{eq:Riccati-2} are given by $X=X_\pm(r;\epsilon)$ (cf., \eqref{eq:nullclines}).  We have $X_+(r;\epsilon)>0>X_-(r;\epsilon)$, and $\dd X/\dd r>0$ for $X_-(r;\epsilon)<X<X_+(r;\epsilon)$ while $\dd X/\dd r<0$ if either $X>X_+(r;\epsilon)$ or $X<X_-(r;\epsilon)$.  The nullclines have the following asymptotic behavior for small $\epsilon$:
\begin{itemize}
\item If $r\ll\epsilon$, then $X_+(r;\epsilon)=A(0)r/\epsilon + O((r/\epsilon)^3)$ while $X_-(r;\epsilon)=-\epsilon/(A(0)r)+O(r/\epsilon)$.
\item If $\epsilon\ll r$ and $rA(r)\gg\epsilon$, then $X_\pm(r;\epsilon)=\pm 1 + o(1)$.
\item If $\epsilon\ll r$ and $rA(r)\ll\epsilon$, then $X_+(r;\epsilon)=[rA(r)/\epsilon](1+o(1))$ while $X_-(r;\epsilon)=-[\epsilon/(rA(r))](1+o(1))$.
\end{itemize}
Since $X=X(r;\epsilon)$ tends to zero as $r\downarrow 0$ for fixed $\epsilon$, only the nullcline $X_+(r,\epsilon)$ plays any role for small $r$ (given $\epsilon>0$ small).  Moreover, since $\dd X/\dd r$ is explicitly proportional to $\epsilon^{-1}$, $X(r;\epsilon)$ will very rapidly approach a small neighborhood of the nullcline $X=X_+(r;\epsilon)$ as $r$ increases; therefore for moderate values of $r$ in the regime where $r\gg\epsilon$ but $rA(r)\gg\epsilon$ (the latter condition avoiding the ``tail'' of the amplitude function $A(r)$) we will have $X(r;\epsilon)\approx X_+(r;\epsilon)\approx 1$ for small $\epsilon$.  
On the other hand, when $A(r)$ becomes small as $r$ increases, then \eqref{eq:Riccati-2} can be approximated by the linear equation
\begin{equation}
\epsilon\frac{\dd X}{\dd r}= -\frac{\epsilon}{r}X\quad\text{with general solution}\quad X(r;\epsilon)=\frac{C(\epsilon)}{r}.
\end{equation}
This approximation is exact wherever $A(r)\equiv 0$.
The constant $C(\epsilon)$ can be determined by matching the approximate solution $X(r;\epsilon)\approx C(\epsilon)/r$ onto the approximation $X(r;\epsilon)\approx 1$ at an appropriate value of $r$, say $r=r_\mathrm{Match}$.  If $A(r)$ has compact support, then we take the breakpoint $r_\mathrm{Match}$ to be the positive support endpoint; otherwise we take the breakpoint $r=r_\mathrm{Match}$ to be the root of the equation $rA(r)=\epsilon$ that is not small as $\epsilon\downarrow 0$.  In the latter case, $r_\mathrm{Match}\to\infty$ as $\epsilon\downarrow 0$ because $A$ is nonincreasing and $A(r)\to 0$ as $r\to\infty$.
Given $\epsilon\ll 1$ and the corresponding value of $r_\mathrm{Match}(\epsilon)>0$, we then determine $C=C(\epsilon)$ by setting $C/r_\mathrm{Match}=1$.  See Figures~\ref{fig:GaussianRiccati}--\ref{fig:TopHatRiccati} for further understanding of the solutions of the Riccati equation \eqref{eq:Riccati-2} and their relation to the nullcline $X=X_+(r;\epsilon)$ as $\epsilon$ decreases toward zero.%
\begin{figure}[h]
\begin{center}
\includegraphics{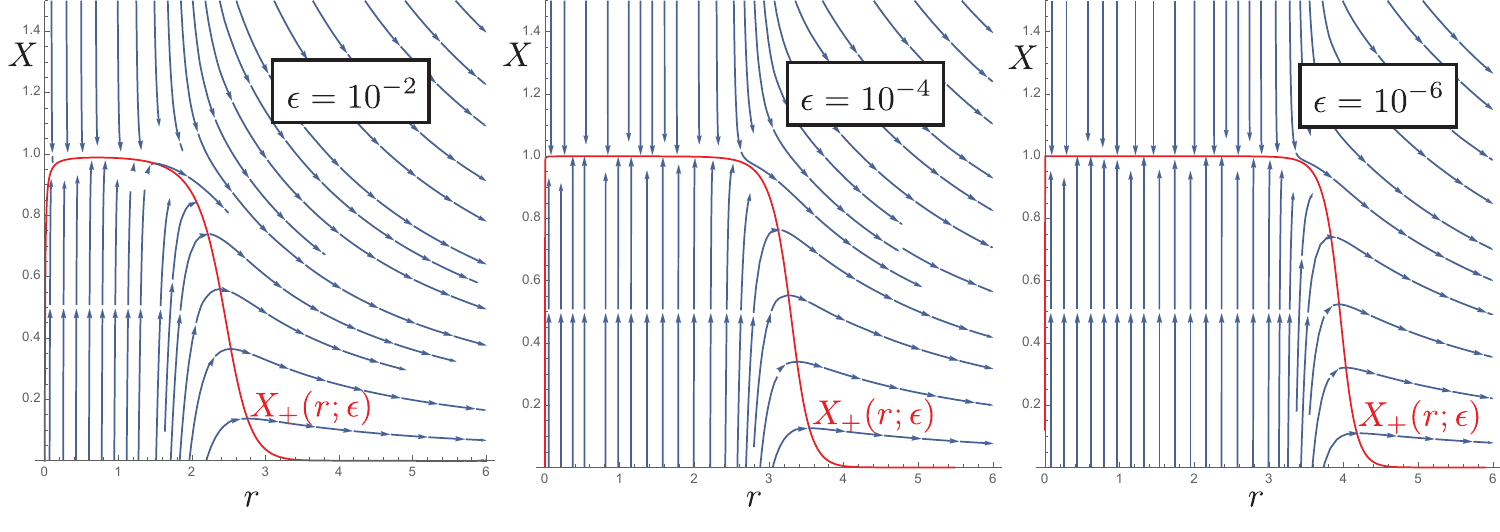}
\end{center}
\caption{The direction field of the Riccati equation \eqref{eq:Riccati-2} and its relation to the nullcline $X_+(r;\epsilon)$ in the case of a Gaussian amplitude $A(r)=\ee^{-r^2}$.  For small $\epsilon$, the solution $X(r;\epsilon)$ departs from the nullcline $X=X_+(r;\epsilon)$ near its ``shoulder,'' a feature that is increasingly well-defined as $\epsilon\to 0$ and is asymptotically located at $r=r_\mathrm{Match}(\epsilon)$.}
\label{fig:GaussianRiccati}
\end{figure}
\begin{figure}[h]
\begin{center}
\includegraphics{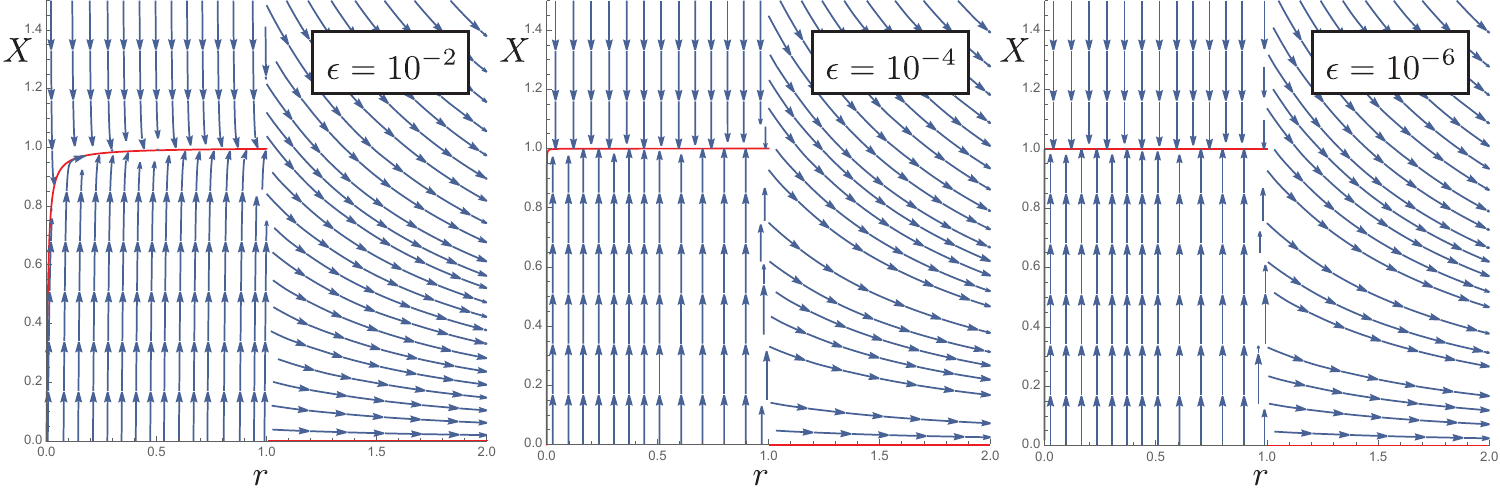}
\end{center}
\caption{The same as Figure~\ref{fig:GaussianRiccati} except for the potential $A(r)=\chi_{r\le 1}(r)$. As in Figure~\ref{fig:GaussianRiccati}, the (here, discontinuous) red curve is the nullcline $X=X_+(r;\epsilon)$. In this case for $r>1$ we have $X_+(r;\epsilon)\equiv 0$ and $X(r;\epsilon)=C/r$ exactly.}
\label{fig:TopHatRiccati}
\end{figure}
For the Gaussian example $A(r)=\ee^{-r^2}$, the implications of the behavior of $X(r;\epsilon)$ can be seen also in numerical solutions at $k=0$ of the direct spectral problem \eqref{eq:1.10}--\eqref{eq:psi-norm} carried out using the method described below in Section~\ref{sec:numerical-dbar}.  See Figure
~\ref{fig:Gaussianpsi}.
\begin{figure}[htb!]
\begin{center}
\includegraphics[width=0.32\textwidth]{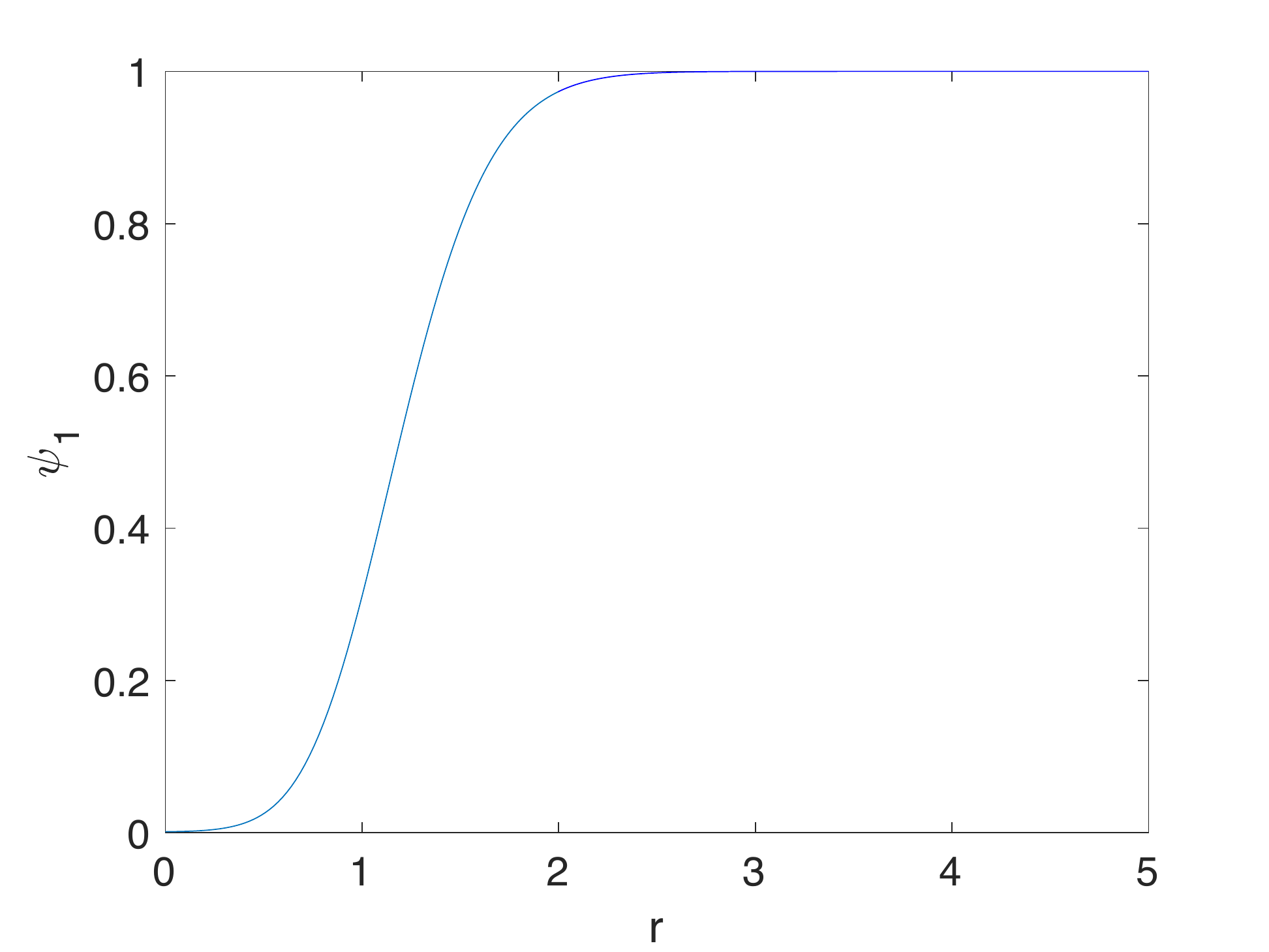}
\includegraphics[width=0.32\textwidth]{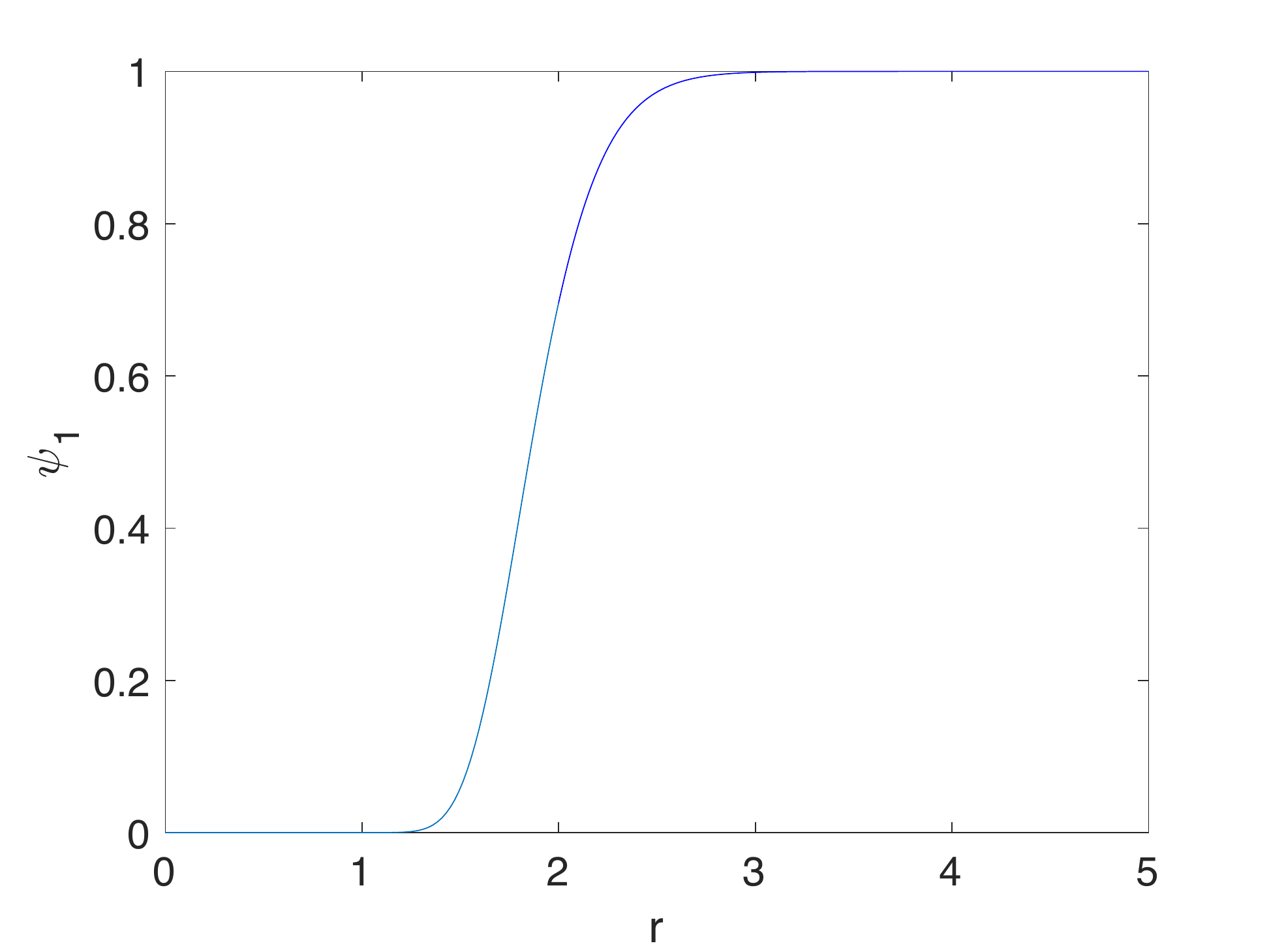}
\includegraphics[width=0.32\textwidth]{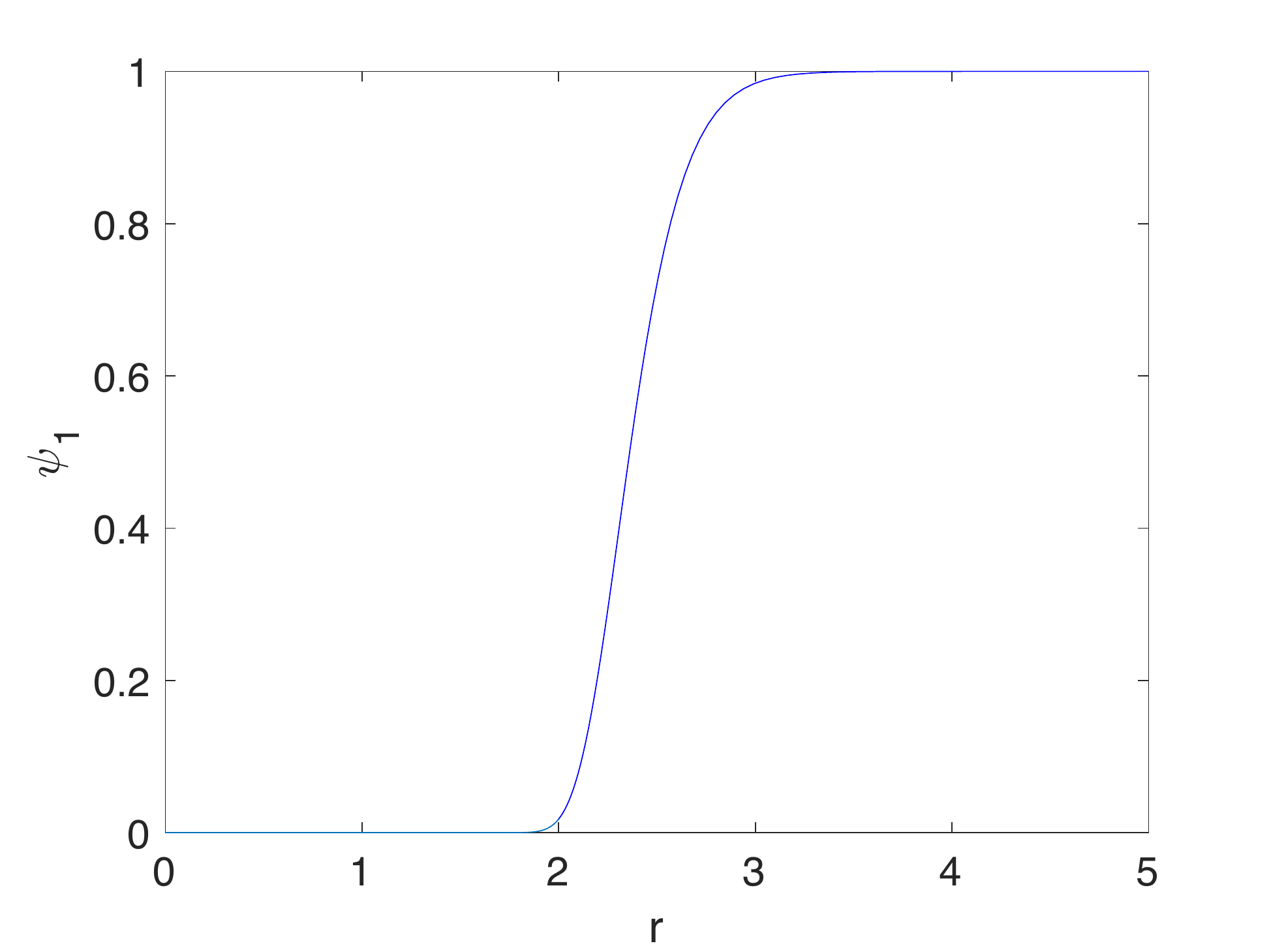}\\
\includegraphics[width=0.32\textwidth]{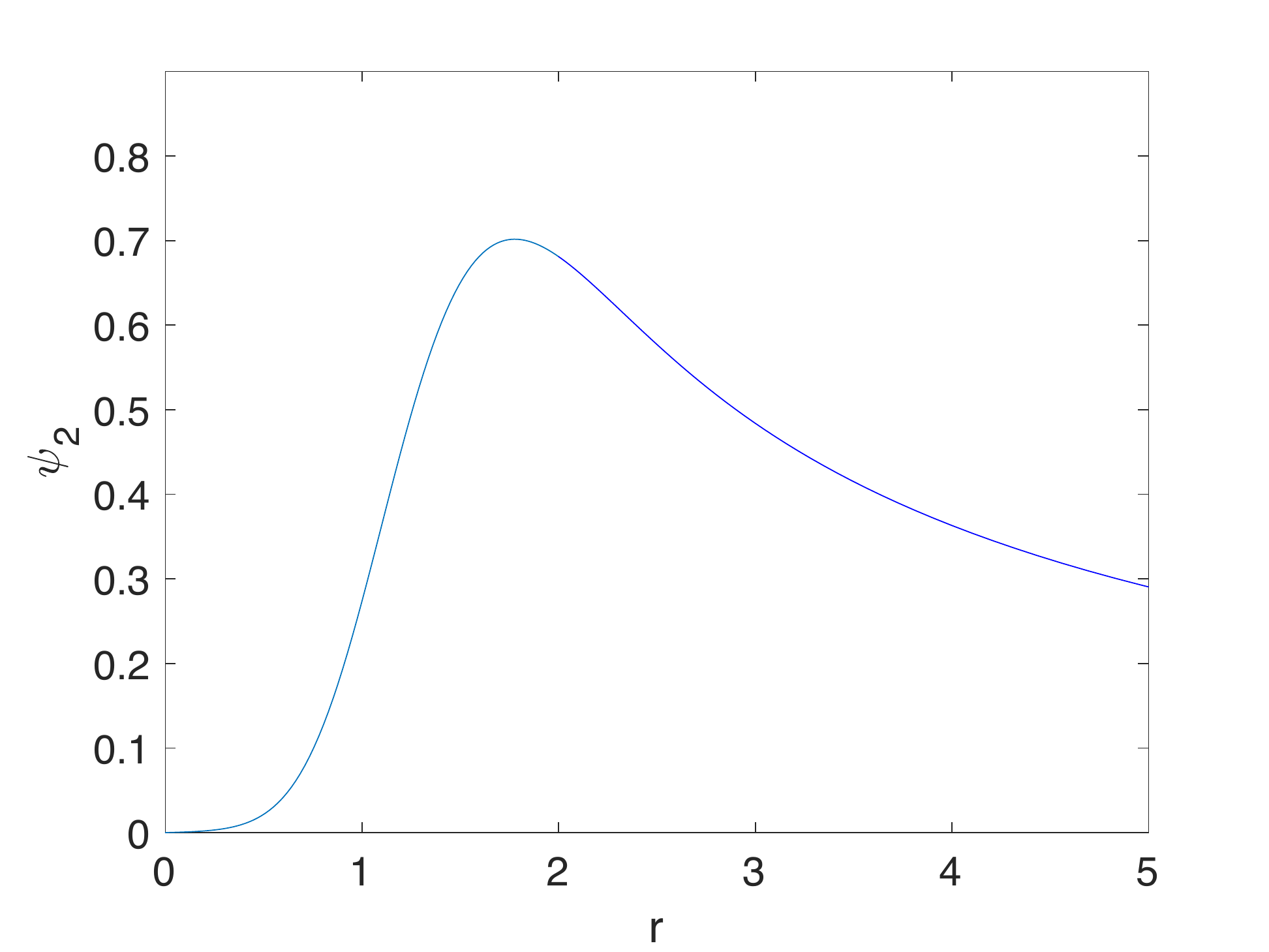}
\includegraphics[width=0.32\textwidth]{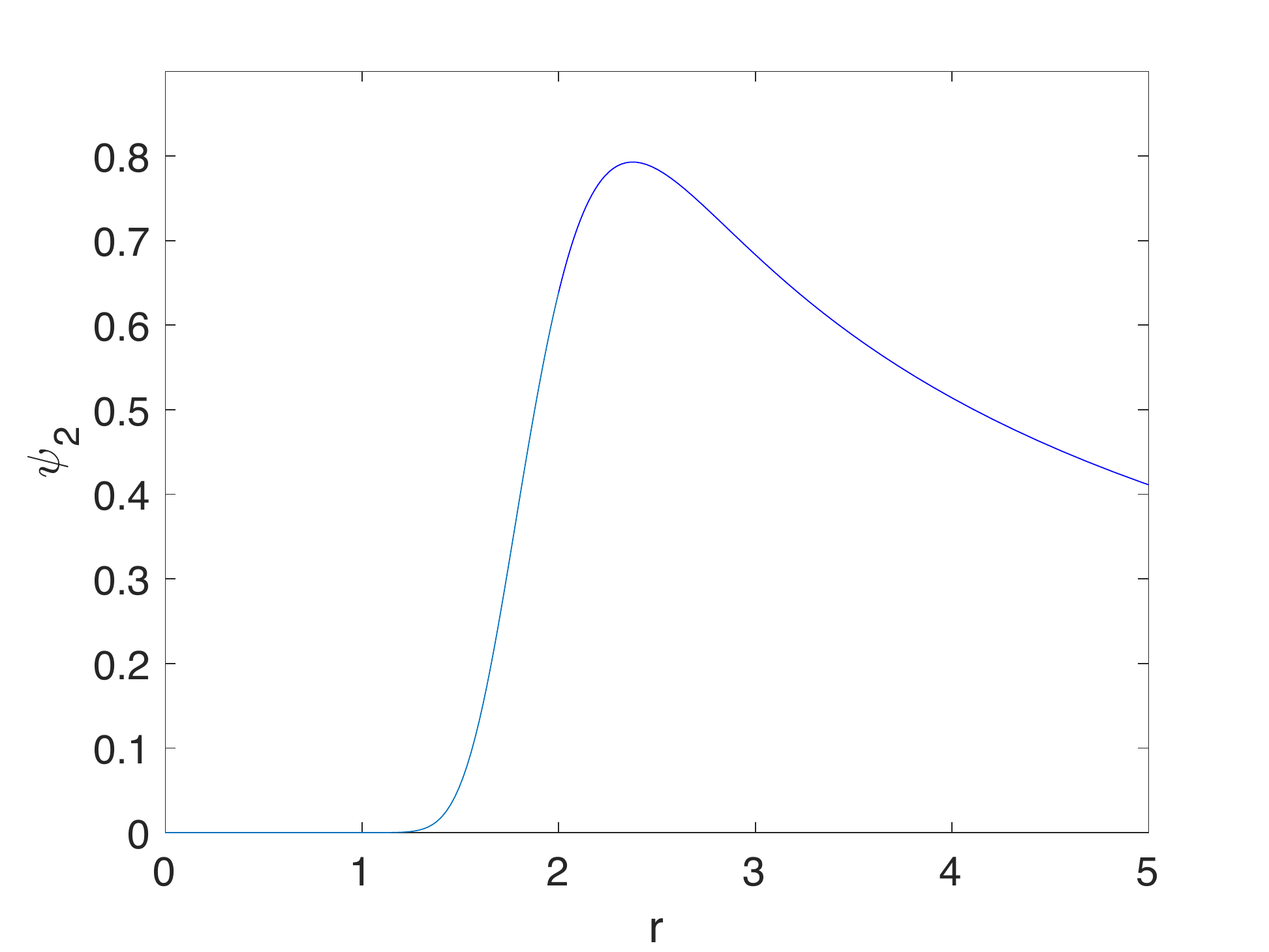}
\includegraphics[width=0.32\textwidth]{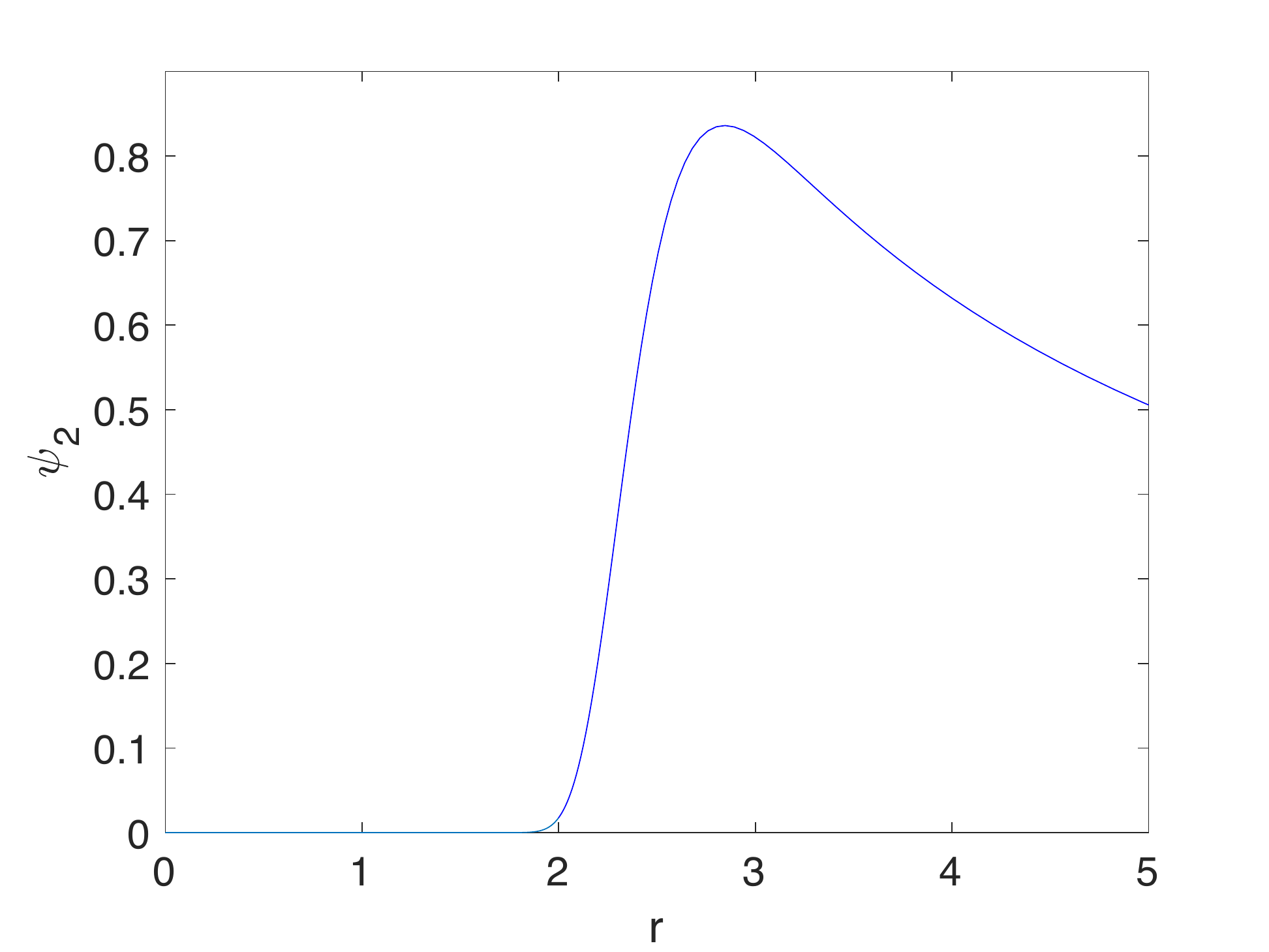}
\end{center}
\caption{Numerical solutions to the Dirac system (\ref{eq:1.10}) with normalization condition \eqref{eq:psi-norm} for the 
Gaussian potential $A(r)=\ee^{-r^2}$ at $k=0$ for the values $\epsilon=10^{-1}$, $10^{-2}$, 
$10^{-3}$  from left to right.  Upper row $\psi_{1}$, lower row 
$|\psi_{2}|$. }
\label{fig:Gaussianpsi}
\end{figure}

Our formal approximation of $X(r;\epsilon)$ in the limit $\epsilon\downarrow 0$ is then as follows:
\begin{itemize}
\item For $r=O(\epsilon)$, $X(r;\epsilon)$ makes a rapid transition from the initial value $X(0;\epsilon)=0$ to $X(r;\epsilon)\approx 1$.
\item $X(r;\epsilon)\approx 1$ for $r\gg\epsilon$ but $r\le r_\mathrm{Match}(\epsilon)$.
\item $X(r;\epsilon)\approx r_\mathrm{Match}(\epsilon)/r$ for $r>r_\mathrm{Match}(\epsilon)$.
\end{itemize}
Recalling \eqref{eq:r-zero-from-X} to calculate the reflection coefficient at $k=0$ gives
\begin{equation}
R_0^\epsilon(0)\approx 2r_\mathrm{Match}(\epsilon),\quad\epsilon\downarrow 0.
\label{eq:r-zero-approx}
\end{equation}
\begin{myremark}
Given this asymptotic description of $X(r;\epsilon)$, from the formula \eqref{eq:radial-kzero-psi-from-Riccati}--\eqref{eq:radial-kzero-psi-from-Riccati-2} we can see that the solution of \eqref{eq:dbar-system-real-radial} for $k=0$ is consistent with the approach based on the WKB method, \emph{but only in the intermediate regime $\epsilon\ll r\ll r_\mathrm{Match}(\epsilon)$} where $X(r;\epsilon)\approx 1$.  Note that the exponent $f$ satisfies the eikonal equation \eqref{eq:eikonal} in the form  \eqref{eq:eikonal-S-zero-radial} appropriate for radial potentials with $S\equiv 0$, and in particular the solution \eqref{eq:eikonal-S-zero-radial-kzero-two-solutions} with the lower sign is the one selected.  Indeed, it is easily checked that the vector $[1,\ee^{\ii\phi}X(r;\epsilon)]^\top$ lies nearly in $\mathrm{ker}(\mathbf{M})$ wherever $X(r;\epsilon)\approx 1$.  It should also be possible to prove that $\tilde{\alpha}_0$ given by \eqref{eq:radial-kzero-psi-from-Riccati-2} is $O(1)$ despite the explicit appearance of $\epsilon$ in the denominator of the exponent.  Indeed, except perhaps in small intervals near $r=0$ or near the ``shoulder'' or the nullcline $X_+(r;\epsilon)$, we will have $(1-X(r;\epsilon))A(r)=O(\epsilon)$ for $r\gg\epsilon$ away from the ``shoulder'' because either $X(r;\epsilon)=1+O(\epsilon)$ (for $\epsilon\ll r\le r_\mathrm{Match}(\epsilon)$) or $0<X(r;\epsilon)<1$ and $A(r)<\epsilon/r$ (for $r\ge r_\mathrm{Match}(\epsilon)$).  Finally, $(1-X(r;\epsilon))A(r)=O(1)$ near $r=0$, so one expects that with a bit more work the integral in the exponent in \eqref{eq:radial-kzero-psi-from-Riccati-2} can be shown to be uniformly $O(\epsilon)$ for all $r>0$.  This observation may help motivate the correct way to generalize the WKB formalism so that it applies for $|k|$ below the threshold where the eikonal function develops singularities.  
\end{myremark}

\subsubsection{Examples}
\label{sec:examples}
Before turning to a rigorous proof, let us apply \eqref{eq:r-zero-approx} in some examples.  
\medskip

\paragraph{\textit{Example 1:  characteristic function of a disk.}}  Suppose that $A$ is an arbitrary positive multiple of the characteristic function of the disk of radius $\rho$.  In this case $r_\mathrm{Match}=\rho$, and therefore $R_0^\epsilon(0)\approx 2\rho$ in the limit $\epsilon\downarrow 0$.  Observe that this result is independent of the amplitude of $A(r)$.  We prove that this result is accurate by an explicit calculation involving modified Bessel functions in Section~\ref{sec:characteristic-function-exact}.

\medskip

\paragraph{\textit{Example 2:  Gaussian amplitude.}}  Suppose that $A(r)=A_0\ee^{-r^2}$.  Then $r_\mathrm{Match}(\epsilon)$ satisfies the equation $\ln(r_\mathrm{Match}) +\ln(A_0) - r_\mathrm{Match}^2 = -\ln(\epsilon^{-1})$, and so $r_\mathrm{Match}\sim\sqrt{\ln(\epsilon^{-1})}$ as $\epsilon\downarrow 0$, and therefore also $R_0^\epsilon(0)\approx 2\sqrt{\ln(\epsilon^{-1})}$ in this limit.  Again, the leading order asymptotic is independent of the amplitude $A_0$.  We prove that this formula is accurate in the relative sense in Section~\ref{sec:Riccati-rigorous} below.

\subsection{Riccati equation.  Rigorous analysis}
\label{sec:Riccati-rigorous}
Theorem~\ref{theorem:Riccati-rigorous} amounts to a more careful formulation of \eqref{eq:r-zero-approx} under suitable conditions on the amplitude function $A(r)$.  
\begin{proof}[Proof of Theorem~\ref{theorem:Riccati-rigorous}]
Given the graph $X=\varphi(r)$ in the $(r,X)$-plane of an arbitrary function $\varphi(\cdot)$, we may compare the slope of the vector field of the Riccati equation \eqref{eq:Riccati-2} evaluated at a point on the graph with the slope of the graph itself.  If 
\begin{equation}
\left.\Delta X'\right|_{X=\varphi(r)}:=\left[\frac{A(r)}{\epsilon}(1-\varphi(r)^2)-\frac{1}{r}\varphi(r)\right]-\varphi'(r)
\label{eq:DeltaX-prime}
\end{equation}
is positive (negative) at a point $P=(r,\varphi(r))$, then the solution of \eqref{eq:Riccati-2} passing through $P$ enters the region above (below) the graph $X=\varphi(r)$ as $r$ increases.  By choosing appropriate functions $\varphi(\cdot)$ and calculating the sign of $\Delta X'$ we will be able to obtain upper and lower bounds on the unique solution $X(r;\epsilon)$ of \eqref{eq:Riccati-2} satisfying $X(r;\epsilon)\to 0$ as $r\downarrow 0$ that are sufficiently strong to establish the asymptotic behavior of the reflection coefficient $R_0^\epsilon(0)$ given by \eqref{eq:r-zero-from-X} up to a relative error term that vanishes with $\epsilon$.  

To get started, we need to first locate the desired solution $X(r;\epsilon)$ for small $r>0$.  Using $A(r)=A(0)+o(r)$ and $X(r;\epsilon)\to 0$ as $r\downarrow 0$ we see that $X(r;\epsilon)$ actually satisfies the stronger condition $X(r;\epsilon)=A(0)r/(2\epsilon)+o(r)$ as $r\downarrow 0$ (the $o(r)$ error term depends on $\epsilon$).

Now we look for simple bounds on the solution $X(r;\epsilon)$.  Consider firstly the quantity $\Delta X'$ defined by \eqref{eq:DeltaX-prime} for the graph of the constant function $X=\varphi_1(r):=1$.  Obviously, \begin{equation}
\left.\Delta X'\right|_{X=\varphi_1(r)}=-1/r<0,\quad \forall r>0, 
\end{equation}
so all solutions of \eqref{eq:Riccati-2} cross the horizontal line $X=1$ in the downward direction as $r$ increases.  (Equivalently, this horizontal line lies above the nullcline $X=X_+(r;\epsilon)$ for all $r>0$.)  Since for small $r$, the desired solution $X(r;\epsilon)$ certainly lies below this line, we obtain the inequality $X(r;\epsilon)<1$ for all $r>0$.

Next, observe that if $\epsilon<\tfrac{1}{2}A(0)$ we have the inequality $X(r;\epsilon)> r$ for sufficiently small $r>0$.  Computing the quantity $\Delta X'$ from \eqref{eq:DeltaX-prime} for the graph $X=\varphi_2(r):=r$ gives
\begin{equation}
\left.\Delta X'\right|_{X=\varphi_2(r)}=\frac{A(r)}{\epsilon}(1-r^2)-2.
\end{equation}
Clearly, $\left.\Delta X'\right|_{X=\varphi_2(r)}>0$ holds for small $r>0$ as a consequence of the inequality $\epsilon<\tfrac{1}{2}A(0)$, however it is equally clear that for $A(r)$ with exponential decay, $\left.\Delta X'\right|_{X=\varphi_2(r)}<0$ if $r$ is sufficiently large given $\epsilon>0$.  Let $r_0(\epsilon)$ denote the smallest positive value of $r$ for which $\left.\Delta X'\right|_{X=\varphi_2(r)}=0$.  It is easy to see that $r_0(\epsilon)=1-\epsilon A(1)^{-1} + o(\epsilon)$ as $\epsilon\to 0$.
Therefore, since $X(r;\epsilon)>r$ for small $r>0$ and since $\Delta X'$ for $X=\varphi_2(r):=r$ is positive for $0<r<r_0(\epsilon)$, the lower bound $X(r;\epsilon)>r$ persists for all $r\in (0,r_0(\epsilon))$.  In particular at $r=r_0(\epsilon)$ we learn that $X(r_0(\epsilon);\epsilon)\ge r_0(\epsilon) = 1-\epsilon A(1)^{-1}+o(\epsilon)$.  Combining this with the uniform upper bound $X(r;\epsilon)<1$ puts the solution $X(r;\epsilon)$ in an $O(\epsilon)$ neighborhood of the nullcline $X=X_+(r;\epsilon)$ for $r=r_0(\epsilon)\approx 1$.  

Now we try to get a lower bound on a larger interval, the length of which grows as $\epsilon\downarrow 0$.  For any constant $\delta\in (0,1)$, we consider the horizontal line $X=\varphi_3(r):=1-\delta$ and compute $\Delta X'$ from \eqref{eq:DeltaX-prime} for this graph:
\begin{equation}
\left.\Delta X'\right|_{X=\varphi_3(r)}=\frac{A(r)}{\epsilon}(2\delta-\delta^2)-\frac{1-\delta}{r}.
\end{equation}
Since $2\delta-\delta^2=\delta(1+(1-\delta))>0$ and $A(1)>0$ we have $\left.\Delta X'\right|_{X=\varphi_3(r)}>0$ for $r=r_0(\epsilon)$ and $\epsilon/\delta$ sufficiently small.  Because $rA(r)$ has a single maximum, the equation $\left.\Delta X'\right|_{X=\varphi_3(r)}=0$ has two roots when both $\delta$ and $\epsilon/\delta$ are small, obtained from 
\begin{equation}
rA(r)=\frac{\epsilon}{\delta}\cdot\frac{1-\delta}{2-\delta}.
\label{eq:r1-equation}
\end{equation}
(It is easy to see that these two roots coincide with the intersection points between the horizontal line $X=\varphi_3(r):=1-\delta$ and the graph of the nullcline $X=X_+(r;\epsilon)$.)
One of the roots obviously satisfies $r=O(\epsilon/\delta)$ and hence is less than $r_0(\epsilon)\approx 1$.  The other is large compared to $r_0(\epsilon)$ when $\epsilon/\delta$ is small.  Let us denote it by $r_1(\epsilon,\delta)$.  Now, given the bounds on the solution $X(r;\epsilon)$ established so far for $r=r_0(\epsilon)$, the assumption that $\epsilon/\delta$ is small implies in particular that $X(r_0(\epsilon);\epsilon)>1-\delta$, so since graphs of solutions of \eqref{eq:Riccati-2} cross the horizontal line $X=\varphi_3(r):=1-\delta$ in the upward direction for $r_0(\epsilon)\le r<r_1(\epsilon,\delta)$, it follows that the lower bound $X(r;\epsilon)\ge 1-\delta$ holds on the same interval.

To continue the lower bound for $r>r_1(\epsilon,\delta)$, we consider the graph $X=\varphi_4(r):=(1-\delta)r_1(\epsilon,\delta)/r$ and compute $\Delta X'$ for this graph from \eqref{eq:DeltaX-prime}:
\begin{equation}
\left.\Delta X'\right|_{X=\varphi_4(r)}=\frac{A(r)}{\epsilon}\left(1-\frac{(1-\delta)^2r_1(\epsilon,\delta)^2}{r^2}\right).
\end{equation}
Obviously we have $\left.\Delta X'\right|_{X=\varphi_4(r)}\ge 0$ for $r\ge r_1(\epsilon,\delta)>(1-\delta)r_1(\epsilon,\delta)$, so solutions of \eqref{eq:Riccati-2} cross the graph in the upwards direction provided $r\ge r_1(\epsilon,\delta)$.  Moreover, since $X(r;\epsilon)\ge 1-\delta$ holds at $r=r_1(\epsilon,\delta)$ the graph of the solution $X(r;\epsilon)$ lies above the graph of $X=\varphi_4(r):=(1-\delta)r_1(\epsilon,\delta)/r$ at $r=r_1(\epsilon,\delta)$, and therefore the lower bound $X(r;\epsilon)\ge (1-\delta)r_1(\epsilon,\delta)/r$ holds for all $r\ge r_1(\epsilon,\delta)$.

So far, the only upper bound we have is $X(r;\epsilon)<1$; however we can obtain an upper bound proportional to $r^{-1}$ for large $r$ by considering the graph of the function
\begin{equation}
X=\varphi_5(r):=\left(r_\mathrm{Match}(\epsilon)+\int_{r_\mathrm{Match}(\epsilon)}^r\frac{sA(s)}{\epsilon}\,\dd s\right)\frac{1}{r}.
\end{equation}
Note that $\varphi_5(r_\mathrm{match}(\epsilon))=1$ and that 
\begin{equation}
\varphi_5(r)=\frac{C}{r}(1+o(1)),\quad r\to\infty,\quad C:=r_\mathrm{Match}(\epsilon)+\int_{r_\mathrm{Match}(\epsilon)}^\infty\frac{sA(s)}{\epsilon}\,\dd s.
\end{equation}
The $o(1)$ error term depends on $\epsilon$ but this dependence is irrelevant for the calculation of the reflection coefficient.
Now, we calculate $\Delta X'$ from \eqref{eq:DeltaX-prime} for this graph:
\begin{equation}
\left.\Delta X'\right|_{X=\varphi_5(r)}=-\frac{A(r)}{\epsilon}X(r)^2<0,
\end{equation}
so trajectories of the Riccati equation \eqref{eq:Riccati-2} cross the graph of $X=\varphi_5(r)$ downwards.  Since $X(r_\mathrm{Match}(\epsilon);\epsilon)<1$ and since $\varphi_5(r_\mathrm{Match}(\epsilon))=1$, it follows that $X(r;\epsilon)<\varphi_5(r)$ for all $r\ge r_\mathrm{Match}(\epsilon)$.

To sum up, we have shown that the unique solution $X(r;\epsilon)$ of the Riccati equation \eqref{eq:Riccati-2} for which $X(r;\epsilon)\to 0$ as $r\downarrow 0$ satisfies, if $\epsilon>0$, $\delta>0$, and $\epsilon/\delta$ are all sufficiently small, the inequalities:
\begin{equation}
r<X(r;\epsilon)<1,\quad 0<r\le r_0(\epsilon),
\end{equation}
\begin{equation}
1-\delta<X(r;\epsilon)<1,\quad r_0(\epsilon)\le r\le r_1(\epsilon,\delta),
\end{equation}
\begin{equation}
\frac{(1-\delta)r_1(\epsilon,\delta)}{r}<X(r;\epsilon)<1,\quad r_1(\epsilon,\delta)\le r\le r_\mathrm{Match}(\epsilon),
\end{equation}
and finally,
\begin{equation}
\frac{(1-\delta)r_1(\epsilon,\delta)}{r}<X(r;\epsilon)<\left(r_\mathrm{Match}(\epsilon)+\int_{r_\mathrm{Match}(\epsilon)}^r\frac{sA(s)}{\epsilon}\,\dd s\right)\frac{1}{r},\quad r\ge r_\mathrm{Match}(\epsilon).
\end{equation}
See Figure~\ref{fig:ULBounds}. 
\begin{figure}[h]
\begin{center}
\includegraphics{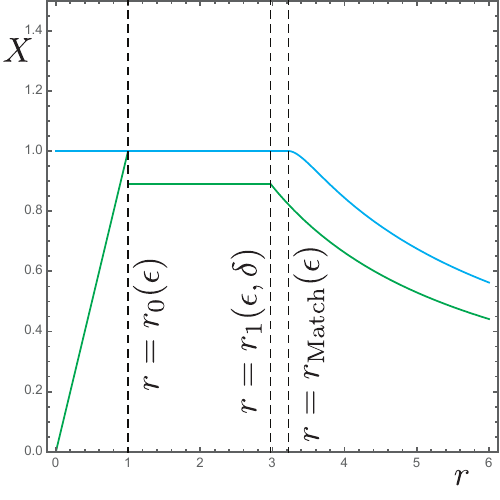}
\end{center}
\caption{The upper bounds (blue) and lower bounds (green) on the solution $X(r;\epsilon)$ in the case $A(r)=\ee^{-r^2}$ for $\epsilon=10^{-4}$ and $\delta=1/\ln(\epsilon^{-1})$.}
\label{fig:ULBounds}
\end{figure}
Setting $\delta=\delta(\epsilon):=1/\ln(\epsilon^{-1})$,
from \eqref{eq:r-zero-from-X} we then obtain the inequalities
\begin{equation}
\underline{R_0^\epsilon}:=2(1-\delta(\epsilon))r_1(\epsilon,\delta(\epsilon))<R_0^\epsilon(0) < 2r_\mathrm{Match}(\epsilon) + 2\int_{r_\mathrm{Match}(\epsilon)}^\infty
\frac{sA(s)}{\epsilon}\,\dd s =:\overline{R_0^\epsilon}.
\label{eq:r-zero-bounds}
\end{equation}
It remains to prove that the upper and lower bounds $\overline{R_0^\epsilon}$ and $\underline{R_0^\epsilon}$ may both be written in the form $2(b^{-1}\ln(\epsilon^{-1}))^{1/p}(1+o(1))$ in the limit $\epsilon\downarrow 0$.

First consider the upper bound $\overline{R_0^\epsilon}$.  The first term $2r_\mathrm{Match}(\epsilon)$ can be found from the logarithm of the defining relation for $r_\mathrm{Match}(\epsilon)$:
\begin{equation}
\ln(r_\mathrm{Match}(\epsilon))+\ln(A(r_\mathrm{Match}(\epsilon)))=-\ln(\epsilon^{-1}).
\end{equation}
Since $L\ee^{-br^p}\le A(r)\le U\ee^{-br^p}$ implies that $\ln(L)-br^p\le \ln(A(r))\le \ln (U)-br^p$, 
it follows that for large $r$, $\ln(A(r))=-br^p+O(1)$. Therefore, 
\begin{equation}
\ln(r_\mathrm{Match}(\epsilon))-br_\mathrm{Match}(\epsilon)^p+O(1)=-\ln(\epsilon^{-1}),
\end{equation}
and it is clear that the dominant balance occurs between the terms $-br_\mathrm{Match}(\epsilon)^p$ and $-\ln(\epsilon)^{-1}$, showing that $r_\mathrm{Match}(\epsilon)=(b^{-1}\ln(\epsilon^{-1}))^{1/p}(1+o(1))$ as $\epsilon\downarrow 0$.  
We estimate the (positive) second term in $\overline{R_0^\epsilon}$ as follows:
\begin{equation}
\begin{split}
\int_{r_\mathrm{Match}(\epsilon)}^\infty\frac{sA(s)}{\epsilon}\,\dd s&=\int_{r_\mathrm{Match}(\epsilon)}^\infty\frac{sA(s)}{r_\mathrm{Match}(\epsilon)A(r_\mathrm{Match}(\epsilon))}\,\dd s\\
&\le
\frac{U}{L}\int_{r_\mathrm{Match}(\epsilon)}^\infty\frac{s\ee^{-bs^p}}{r_\mathrm{Match}(\epsilon)\ee^{-br_\mathrm{Match}(\epsilon)^p}}\,\dd s\\
&=\frac{U}{L}r_{\mathrm{Match}}(\epsilon)\int_1^\infty t\ee^{-br_\mathrm{Match}(\epsilon)^p(t^p-1)}\,\dd t.
\end{split}
\end{equation}
It follows by dominated convergence that this upper bound is $o(r_\mathrm{Match}(\epsilon))$ in the limit $r_\mathrm{Match}(\epsilon)\uparrow\infty$, or equivalently, as $\epsilon\downarrow 0$.  This proves that $\overline{R_0^\epsilon}=2(b^{-1}\ln(\epsilon^{-1}))^{1/p}(1+o(1))$ as $\epsilon\downarrow 0$.

For the lower bound $\underline{R_0^\epsilon}$, since $\delta(\epsilon)=(\ln(\epsilon^{-1}))^{-1}\to 0$ as $\epsilon\downarrow 0$, it suffices to prove that $r_1(\epsilon,\delta(\epsilon))=(b^{-1}\ln(\epsilon^{-1}))^{1/p}(1+o(1))$ as $\epsilon\downarrow 0$.  For this we return to the defining relation \eqref{eq:r1-equation} for $r_1(\epsilon,\delta)$ and take a logarithm:
\begin{equation}
\ln(r_1(\epsilon,\delta(\epsilon))) +\ln(A(r_1(\epsilon,\delta(\epsilon))))=\ln(\ln(\epsilon^{-1}))-\ln(\epsilon^{-1})-\ln(2)+O((\ln(\epsilon^{-1}))^{-1}).
\end{equation}
Again using $\ln(A(r))=-br^p+O(1)$ as $r\uparrow\infty$, this becomes
\begin{equation}
\ln(r_1(\epsilon,\delta(\epsilon)))-br_1(\epsilon,\delta(\epsilon))^p=\ln(\ln(\epsilon^{-1}))-\ln(\epsilon^{-1})+O(1).
\end{equation}
As in the asymptotic calculation of $r_\mathrm{Match}(\epsilon)$, the dominant balance is between $-br_1(\epsilon,\delta(\epsilon))^p$ and $-\ln(\epsilon^{-1})$ and we indeed conclude that $r_1(\epsilon,\delta(\epsilon))=(b^{-1}\ln(\epsilon^{-1}))^{1/p}(1+o(1))$ as $\epsilon\downarrow 0$ as desired.
\end{proof}

The Gaussian $A(r)=\ee^{-r^2}$ satisfies the hypotheses of Theorem~\ref{theorem:Riccati-rigorous} with $L=U=1$, $b=1$, and $p=2$, and we are therefore guaranteed the corresponding relatively accurate approximation $R_0^\epsilon(0)=2\sqrt{\ln(\epsilon^{-1})}(1+o(1))$ as $\epsilon\downarrow 0$.  The upper and lower bounds $\overline{R^\epsilon_0}$ and $\underline{R^\epsilon_0}$ are compared with $2\sqrt{\ln(\epsilon^{-1})}$ and the numerical data for $R_0^\epsilon(0)$ from Figure~\ref{fig:reflection-Gaussian} in Figure~\ref{fig:UpperLowerReflectionBounds}.
\begin{figure}[h]
\begin{center}
\includegraphics{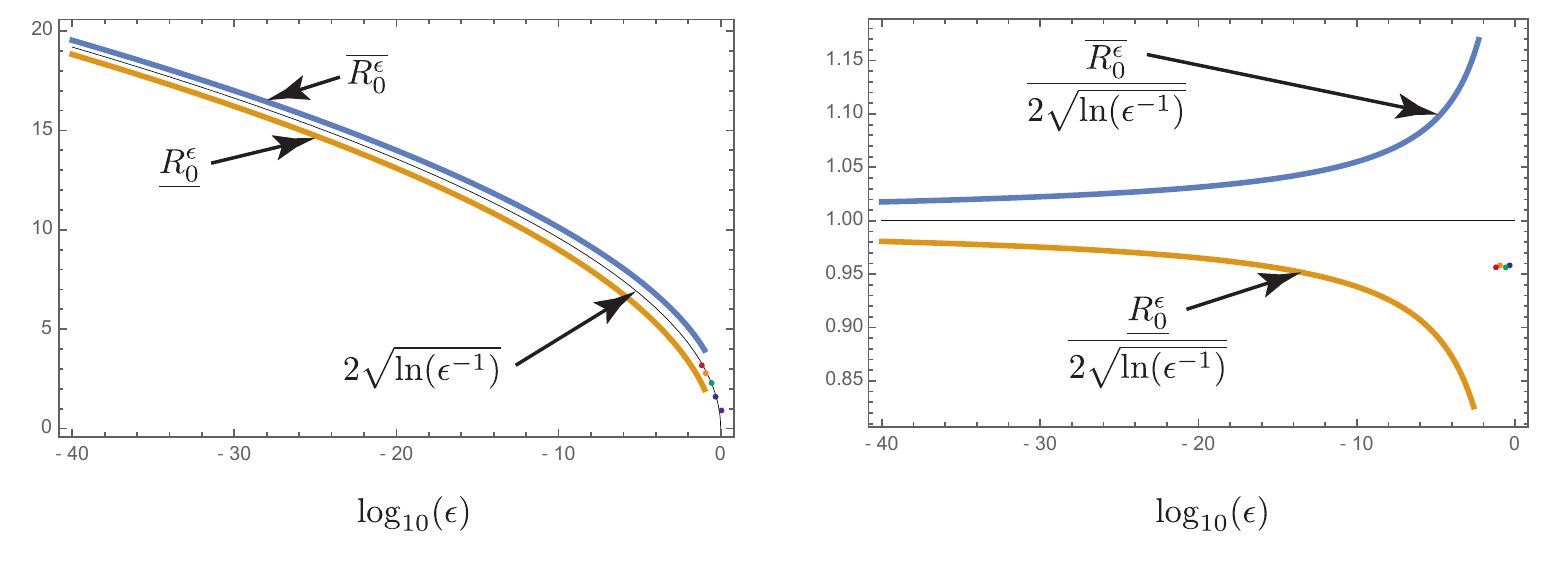}
\end{center}
\caption{The upper bound $\overline{R_0^\epsilon}$ and lower bound $\underline{R_0^\epsilon}$ for the Gaussian potential $A(r)=\ee^{-r^2}$, compared with the approximation $2\sqrt{\ln(\epsilon^{-1})}$ and the numerical data for $R_0^\epsilon(0)$ shown in Figure~\ref{fig:reflection-Gaussian} (the points are colored to correspond with the curves in that figure).  The left-hand panel illustrates absolute accuracy, while the right-hand panel illustrates relative accuracy.}
\label{fig:UpperLowerReflectionBounds}
\end{figure}
It is worth noting that the decay of the relative error as $\epsilon\downarrow 0$ is extremely slow.  Indeed, all of the numerical data that we have been able to reliably compute corresponds only to the colored points in the lower right-hand corner of the plot in the left-hand panel of Figure~\ref{fig:UpperLowerReflectionBounds}; although these points are apparently far from the asymptotic regime of convergence as $\epsilon\downarrow 0$, it is also clear that to the eye \emph{they lie nearly on top of the theoretically-predicted curve}.

\subsection{Exact direct scattering for $k=0$ with $S\equiv 0$ and $A$ being the characteristic function of a disk.}
\label{sec:characteristic-function-exact}
As it is formulated, Theorem~\ref{theorem:Riccati-rigorous} does not apply to compactly-supported potentials.  However,
the approximate formula \eqref{eq:r-zero-approx} for $R_0^\epsilon(0)$ can be confirmed by an exact calculation in the case that $A(x,y)$ is proportional to the characteristic function of the disk of radius $\rho$:  $A(r)=A_0\chi_{r<\rho}(r)$ .  Referring to \eqref{eq:radial-ODEs-kzero}, we have
\begin{equation}
\epsilon r\frac{\dd w_1}{\dd r}=A_0 w_2\quad\text{and}\quad \epsilon r\frac{\dd w_2}{\dd r}=A_0r^2w_1,\quad 0<r<\rho
\label{eq:characteristic-exact-kzero-interior}
\end{equation}
while $w_j(r)=w_j(\rho)$ for $r\ge \rho$ and $j=1,2$.  Eliminating $w_2$ from \eqref{eq:characteristic-exact-kzero-interior} gives
\begin{equation}
\left( r\frac{\dd}{\dd r}\right)^2w_1=\left(\frac{A_0r}{\epsilon}\right)^2w_1,\quad 0<r<\rho.
\end{equation}
With $A_0r/\epsilon = Z$, this equation becomes
\begin{equation}
\left(Z\frac{\dd}{\dd Z}\right)^2w_1=Z^2w_1,\quad 0<Z<\frac{A_0\rho}{\epsilon}.
\end{equation}
Thus $w_1$ is a solution of the modified Bessel equation of order $\nu=0$ \cite[Chapter 10]{dlmf}.
The general solution therefore is $w_1=c_1I_0(Z)+c_2K_0(Z)$.  In order that $w_1$ be bounded at the origin $r=0$ it is necessary to choose $c_2=0$ and then we may (without loss of generality, since only the ratio $w_2/w_1$ is important for the calculation of $R_0^\epsilon(0)$) take $c_1=1$.  Thus we have $w_1(r)=I_0(A_0r/\epsilon)$, and then from the first equation in \eqref{eq:characteristic-exact-kzero-interior} we get
\begin{equation}
w_1(r)=I_0(A_0r/\epsilon)\quad\text{and}\quad 
w_2(r)=\frac{\epsilon r}{A_0}\frac{\dd w_1}{\dd r} = rI_0'(A_0r/\epsilon),\quad 0\le r\le \rho.
\end{equation}
Then since $w_j(r)$ is independent of $r$ for $r>\rho$, we obtain from \eqref{eq:r-kzero-w1w2} the exact formula for the reflection coefficient at $k=0$:
\begin{equation}
R_0^\epsilon(0)=2\frac{\overline{w_2(\rho)}}{\overline{w_1(\rho)}} = 2\rho\frac{I_0'(A_0\rho/\epsilon)}{I_0(A_0\rho/\epsilon)}.
\end{equation}
According to \cite[eqns.\@ 10.40.1 and 10.40.3]{dlmf} (noting that in the notation of that reference $a_0(0)=b_0(0)=1$), we have $I_0'(Z)/I_0(Z)\to 1$ as $Z\to +\infty$, so it follows that 
\begin{equation}
R_0^\epsilon(0) = 2\rho + o(1),\quad \epsilon\downarrow 0,
\end{equation}
which agrees with the formal asymptotic result \eqref{eq:r-zero-approx} being as $r_\mathrm{Match}(\epsilon)=\rho$ by definition in the compact support case.

\section{Numerical approaches}
\label{sec:numerics}
In this section we discuss various numerical approaches to the 
problems appearing in the semiclassical limit of the defocusing DS-II 
equation: the solution of the eikonal problem \eqref{eq:eikonal}--\eqref{eq:f-norm}, the 
computation of the leading-order normalization function $\alpha_0$ appearing in \eqref{eq:Phi-zero}, and the 
solution of the full $\epsilon$-dependent direct scattering problem \eqref{eq:1.10}--\eqref{eq:psi-norm}. For the latter we just 
give a brief review of the approach for Schwartz class potentials in 
\cite{KleinMcL}. 

\begin{myremark}
In this section and the next the notation for Fourier transforms differs slightly from that defined in \eqref{eq:W-norm}.  Namely, here the Fourier and inverse Fourier transform operators denoted below as $\mathcal{F}$ and $\mathcal{F}^{-1}$ respectively are scaled by positive constants to be unitary on $L^2(\mathbb{R}^2)$.
\end{myremark}

For the ease of representation we 
concentrate on the case  $S\equiv 0$. 
Note, however, that it is straightforward to include a phase function $S$ bounded at infinity in the 
approaches discussed below.  With $S\equiv 0$, the relation $g=f-kz$ (cf., \eqref{eq:f-to-g}) defines a function vanishing at $|z|=\infty$, which is numerically convenient.
Using polar coordinates, we thus obtain from \eqref{eq:eikonal} with $S\equiv 0$ the following partial differential equation for $g$:
\begin{equation}
    g_{r}^{2}+\frac{1}{r^{2}}g_{\phi}^{2}+2k \ee^{\ii\phi}\left( 
    g_{r}+\frac{\ii}{r}g_{\phi}\right)=A^{2}
    \label{eikonalg}.
\end{equation}
This equation will be solved in the whole complex plane with a Fourier 
spectral method in $\phi$ and a multidomain spectral method in $r$. 
The ensuing system of nonlinear equations will be solved iteratively 
both with a fixed point method and a Newton iteration. The case of a radially 
symmetric potential $A$ is solved in addition with a series approach 
similar to Section~\ref{series}. 

This section is organized as follows: in Section~\ref{sec:spectral-approach} we collect some facts about the spectral methods to be 
used in the following. In Section~\ref{sec:numerical-eikonal} we present two iterative numerical approaches 
for the eikonal equation and an additional numerical approach based on Fourier series and adapted to radial potentials $A=A(r)$, and test them against the exact solution obtained in Section~\ref{sec:Lorentzian} for the case of the Lorentzian profile $A(x,y)=(1+x^2+y^2)^{-1}$. In Section~\ref{sec:numerical-alpha},
a numerical approach for computing the leading-order normalization function $\alpha_0$ for a given $f$
is presented and again checked against the corresponding exact solution for the Lorentzian profile. In 
Section~\ref{sec:numerical-dbar} we briefly summarize the approach of \cite{KleinMcL} for 
the problem \eqref{eq:1.10}--\eqref{eq:psi-norm} with a Schwartz class potential.

\subsection{Spectral methods}
\label{sec:spectral-approach}
To compute the derivatives in \eqref{eikonalg}, we use two different 
spectral techniques since spectral methods are known for their 
excellent approximation properties for smooth functions. In  
the situation that the eikonal equation is uniformly globally elliptic and the solution 
is regular, this should lead to a very efficient approach.

Since $g(r,\phi)$ is periodic in $\phi$, a Fourier spectral method is 
natural in this context. We write 
$g(r,\phi)=\sum_{n\in\mathbb{Z}}^{}a_{n}(r)\ee^{\ii n\phi}$ and 
approximate the Fourier series via a discrete Fourier transform, see 
for instance \cite{trefethen} and references therein, i.e., for even 
$N$
\begin{equation}
    g(r,\phi)\approx\sum_{n=-N/2+1}^{N/2}a_{n}(r)\ee^{\ii n\phi},\quad 
    g_{\phi}(r,\phi)\approx\sum_{n=-N/2+1}^{N/2}\ii na_{n}(r)\ee^{\ii n\phi}
    \label{gphi};
\end{equation}
and hence the derivative of $g(r,\phi)$ with respect to $\phi$ is approximated via the 
derivative of the sum approximating $g(r,\phi)$. Note that the \emph{Nyquist 
mode} $a_{N/2}(r)$ has to be put equal to zero in the approximation 
of $g_{\phi}$, see \cite{trefethen}. The discrete Fourier transform 
is computed efficiently via a Fast Fourier Transform (FFT). The numerical error in 
approximating the Fourier series with a truncated sum is of the order 
of the first neglected Fourier coefficient. Thus it decreases 
exponentially with $N$ for analytic functions, indicating the  
\emph{spectral convergence} of the method.

In order to obtain a spectral approach also in $r$, we consider two 
domains, I: $r\in[0,1]$ and II: $s=1/r\in[0,1]$ similar to 
\cite{birem} and references therein. In the coordinate 
$s$ equation \eqref{eikonalg} reads
\begin{equation}
    s^{4}g_{s}^{2}+s^{2}g_{\phi}^{2}+2k \ee^{\ii\phi}\left( 
    -s^{2}g_{s}+\ii sg_{\phi}\right)=A^{2}
    \label{eikonalgs}.
\end{equation}
It is assumed that $A$ vanishes as $s\to0$ at least as fast as $s$. Thus we 
can solve \eqref{eikonalgs} after division by $s^{2}$. Note 
that equation \eqref{eikonalgs} is singular for $s=0$ whereas 
equation \eqref{eikonalg} is singular for $r=0$. 

In both domains I and II we approximate the functions $a_{n}(r)$ (respectively 
$a_{n}(s)$; in an abuse of notation, we use the same symbol in both 
cases), $n=-N/2+1,\ldots,N/2$ via a sum of Chebychev polynomials. We only outline the 
approach for domain I, it is completely analogous for domain II. The 
idea of a \emph{Chebychev collocation method} is to introduce the 
collocation points $l_{j}=\cos(\pi j/N_{c})$, $j=0,1,\ldots,N_{c}$ 
and to approximate a function $F(l)$, $l\in[-1,1]$ via the sum
\begin{equation}
    F(l)\approx \sum_{m=0}^{N_{c}}b_{m}T_{m}(l),
    \label{cheb}
\end{equation}
where $T_{m}(l)=\cos(m\arccos(l))$ are the Chebychev polynomials \cite[\S18.3]{dlmf}. The 
\emph{spectral coefficients} $b_{m}$, $m=0,\ldots,N_{c}$ are 
determined by the relations following from imposing \eqref{cheb} as 
an equality at the collocation points, 
\begin{equation}
    F(l_{j}) = \sum_{m=0}^{N_{c}}b_{m}T_{m}(l_{j})\quad 
    j=0,\ldots,N_{c}
    \label{coll}.
\end{equation}
They can be determined conveniently via a \emph{Fast Cosine 
Transform} (FCT) which can be computed via the FFT, see \cite{trefethen}.  The 
numerical error in approximating a function via a truncated Chebychev 
series is  as in the case of discrete Fourier series: it 
decreases exponentially with $N_{c}$ for analytic functions making 
this again a spectral method. 

It is well known that the derivative of a Chebychev polynomial can be 
expressed itself in terms of Chebychev polynomials. The basis for 
this is the identity
\begin{equation}
    \frac{T_{m+1}'(l)}{m+1}-\frac{T_{m-1}'(l)}{m-1}=2T_{m}(l),\quad 
    m = 2,3,\ldots
    \label{chebder}
\end{equation}
and $T_{1}'(l)=T_{0}(l)$, $T_{0}'(l)=0$.
The action of a derivative on a Chebychev sum \eqref{cheb} can thus 
be expressed in terms of the action of a differentiation matrix $\mathbf{D}$ on 
the vector of spectral coefficients $b_{m}$, $m=0,\ldots,N_{c}$. 

In a similar way the multiplication of a function with $l$ can be 
expressed in terms of the action of a matrix on the vector of 
spectral coefficients. The 
approach, see for instance \cite{birem,joerg},  
is based on the well known recurrence formula for Chebyshev 
polynomials,
\begin{equation}
    T_{m+1}(l)+T_{m-1}(l)=2lT_{m}(l), \quad n=1,2,\ldots
    \label{chebrec}
\end{equation}
This identity allows multiplication and division in coefficient space by $l\pm1$. We 
define for given Chebyshev coefficients $b_{m}$ coefficients $\tilde{b}_{m}$ 
via 
$\sum_{m=0}^{\infty}\tilde{b}_{m}T_{m}(l):=\sum_{m=0}^{\infty}(l\pm1)b_{m}T_{m}(l)$. 

 We 
put $r=(1+l)/2$ in domain I. The coefficients $a_{n}(r)$, 
$n=-N/2+1,\ldots,N/2$ are thus 
approximated via the sum $a_{n}\approx 
\sum_{m=0}^{N_{c}}a_{nm}T_{m}(l)$. The action of the derivative with 
respect to $r$ is therefore approximated by the action of a matrix $\mathbf{D}$ 
following from \eqref{chebder} on 
the spectral coefficients, and similarly the action of division by $r$ 
becomes the action of a matrix $\mathbf{R}$ following from \eqref{chebrec} 
on the coefficients. Thus we 
approximate the derivatives via
\begin{equation}
    g_{r}\pm 
    \frac{\ii}{r}g_{\phi}\approx\sum_{n=-N/2+1}^{N/2}\sum_{m=0}^{N_{c}}\left(\sum_{j=0}^{N_{c}}(D_{mj}\mp n R_{mj})a_{nj}\right)
    \label{dbarapprox}.
\end{equation}

The same technique is used in domain II with $s=(1+l)/2$. The 
solutions obtained in domain I and II have to be matched for $r=s=1$ 
to be continuous. As in \cite{birem}, this is done with Lanczos' tau 
method \cite{tau}: one of the equations for each $n$ following from 
using the discretization \eqref{dbarapprox} in \eqref{eikonalg} is 
replaced by the condition that $a_{n}(r=1)=a_{n}(s=1)$, 
$n=-N/2+1,\ldots,N/2$. More concretely we replace for $n<0$ the 
equations corresponding to $m=N_{c}$ in domain I, and for $n>0$ the 
equations corresponding to $m=N_{c}$ in domain II. In addition the 
Nyquist mode is put equal to zero.

\subsection{Numerical approaches for the eikonal problem}
\label{sec:numerical-eikonal}
We now discuss two different numerical approaches for the eikonal 
problem \eqref{eq:eikonal}--\eqref{eq:f-norm}, each of which produces an approximation to the function $g=f-kz$ that solves \eqref{eikonalg} and satisfies $g\to 0$ as $|z|\to\infty$.

\subsubsection{Iterative methods for the discretized eikonal equation}
\label{sec:eikonal-iterative}
The spectral discretization described in Section~\ref{sec:spectral-approach} leads to an 
approximation of \eqref{eikonalg} in terms of a $(2N_{c}+2)N$-dimensional nonlinear system of equations. This system will be solved 
iteratively. 

A first approach is based on a fixed-point iteration. We write for 
$|k|>1/2$ the 
system corresponding to \eqref{eikonalg} in the form 
\begin{equation}
    \sum_{j=0}^{N_{c}}(D_{mj}\mp n R_{mj})a_{nj} = G(\{a_{nm}\})
        \label{fix},
\end{equation}
where 
\begin{equation}
    \begin{split}
   G(\{a_{nm}\})&:=
    \frac{1}{2k}\mathbb{F}\left(A^{2}-\sum_{n=-N/2+1}^{N/2}\sum_{m=0}^{N_{c}}
    \left(\sum_{j=0}^{N_{c}}(D_{mj}+ n R_{mj})a_{nj}\right)\right.\times\\
    &\left.\sum_{n=-N/2+1}^{N/2}
    \sum_{m=0}^{N_{c}}\left(\sum_{j=0}^{N_{c}}(D_{mj}- n 
    R_{mj})a_{nj}\right)\right),  
    \end{split}
        \label{Gdef}
\end{equation}
where $\mathbb{F}$ denotes the combined action of the FFT and the FCT on the angular and radial variables respectively. 
Since both FFT and FCT are fast, it is convenient when possible to switch between physical 
space and the space of spectral coefficients in order to compute 
products instead of convolutions in coefficient space.

We first solve \eqref{fix} by casting it in the form of a fixed-point iteration: 
$\sum_{j=0}^{N_{c}}(D_{mj}- n R_{mj})a^{K+1}_{nj} = 
G(\{a^{K}_{nm}\})$. Here we choose as the initial iterate the 
solution of the $\delbar$-problem $\sum_{j=0}^{N_{c}}(D_{mj}\mp n 
R_{mj})a^{0}_{nj} = \mathbb{F}[\ee^{-\ii\phi}A^{2}/(2k)]$. Numerical resolution in 
each domain is controlled via the decrease of the spectral 
coefficients with $N$ and $N_{c}$. As discussed for the examples below, 
numerical resolution is ideal if the coefficients decrease to the 
order of machine precision both in the Fourier and Chebyshev 
dependence. If $|k|$ is large enough, the fixed-point iteration 
converges linearly, i.e., $\|a_{nm}^{K+1}-a_{nm}^K\|_\infty=O(K^{-1})$, as might be expected.

Alternatively we can use a Newton iteration. To this end 
we write the equation following from \eqref{eikonalg} 
after the spectral discretization described in Section~\ref{sec:spectral-approach} in the form 
$F(\{a_{nm}\})=0$ and solve it with a standard Newton iteration:
\begin{equation}
    a_{nm}^{K+1}= a_{nm}^{K}-\mathrm{Jac}(F(\{a^{K}_{nm}\}))^{-1}F(\{a^{K}_{nm}\})
    \label{newton};
\end{equation}
here the tau method is applied in the inversion of the Jacobian, the 
action of which is computed as a convolution in the space of 
coefficients. Using again the 
solution of the $\delbar$-problem $\sum_{j=0}^{N_{c}}(D_{mj}- n 
R_{mj})a^{0}_{nj} = \mathbb{F}[\ee^{-\ii\phi}A^{2}/(2k)]$ for $|k|>1/2$ as the initial iterate, we 
observe the expected quadratic convergence typical for Newton's method, i.e., $\|a_{nm}^{K+1}-a_{nm}^K\|_\infty=O(K^{-2})$. The disadvantage of the 
approach is that the Jacobian is a $(2N_{c}+2)N\times(2N_{c}+2)N$ 
matrix, but the quadratic convergence implies that the iteration 
takes roughly the same amount of time as the fixed-point iteration to reach a 
residual of $10^{-10}$ which is generally where iterations are stopped.

\subsubsection{A Fourier series method for the radially symmetric case}
\label{sec:numerical-Fourier-series}
In the radially symmetric case $A=A(r)$, we can proceed as in
Section~\ref{series} and solve for a series in $\ee^{\ii\phi}$: 
writing\footnote{The fact that only negative odd harmonics appear in \eqref{gseries} is a consequence of the form of the coefficients $f_n$ in the series approach for radial potentials with $S(x,y)\equiv 0$ described in Section~\ref{series}.  Comparing with \eqref{eq:fseries} for $\delta=\tfrac{1}{4}$ and \eqref{eq:fnradialform} we see that in the notation of Section~\ref{series}, $c_n(r)=G_{n+1}(r^2)/[2(2n+1)r^{2n+1}]$.} 
\begin{equation}
    g(r,\phi)=\sum_{n=0}^{\infty}\frac{c_{n}(r)}{(2k)^{2n+1}}\ee^{-\ii(2n+1)\phi}
    \label{gseries},
\end{equation}
where $c_n=c_n(r)$, we find from \eqref{eikonalg} that
\begin{equation}
    c_{0}'+\frac{1}{r}c_{0}=A(r)^{2},
    \label{c0}
\end{equation}
and for $n>0$,
\begin{equation}
    c_{n}'+\frac{2n+1}{r}c_{n}=\sum_{j=0}^{n-1}\left(1-\frac{2j+1}{r}\right)
    \left(1+\frac{2(n-j-1)+1}{r}\right)c_jc_{n-j-1}
    \label{cn},
\end{equation}
where the prime denotes differentiation with respect to $r$, and it is required that $c_n(r)\to 0$ as $r\to\infty$ for all $n\ge 0$.
Equations \eqref{c0} and \eqref{cn} are solved again with the 
Chebychev collocation method described above. 
As noted in Section~\ref{sec:radial}, the series \eqref{gseries} is a power series in odd negative powers of $k$ and hence for given $(x,y)\in\mathbb{R}^2$ will converge if $|k|$ is 
large enough. For smaller $|k|$, it can only converge if the $c_{n}$ 
decrease rapidly enough as $n\to\infty$. In applications only 
convergent cases are interesting where the series can be effectively truncated 
for some $n=N_{\phi}$. In this case a coupled system of $N_{\phi}$ ordinary differential equations of the form 
\eqref{c0} and \eqref{cn} has to be solved with zero initial conditions at $r=\infty$.  \emph{On the other hand,
the numerical computation of the $L^\infty(\mathbb{R}_+)$-norms of $c_n(\cdot)$ and their analysis for increasing $n$ allows one to make a good prediction of the critical radius $|k|$ above which one has a global smooth solution of the eikonal problem and below which the latter solution necessarily develops singularities analogous to turning points in the one-dimensional problem.}

\subsubsection{Comparison with the exact solution for the Lorentzian profile}
Note that the numerical approaches for the eikonal equation presented 
above are essentially independent and can be thus used as 
mutual tests. To illustrate how the codes work in practice and to 
establish which accuracies can be expected, we test them for the 
example of the exact solution \eqref{eq:fW-Lorentzian} for a 
Lorentzian amplitude $A(x,y)=(1+x^2+y^2)^{-1}$.  To compare with the numerics, we get $g(x,y;k)$ from \eqref{eq:fW-Lorentzian} simply by omitting the term $kz$ on the right-hand side.  Since $W$ is invariant under $z\mapsto 1/\overline{z}$, the exact solution $g$ is symmetric with respect to reflection through the unit circle in the $z$-plane.  As described in Section~\ref{sec:Lorentzian}, $g$ is smooth provided $|k|>1/2$.  
We first plot the exact solution for $k=1$ in Figure~\ref{lorentzk1}.  
\begin{figure}[htb!]
  \includegraphics[width=0.49\textwidth]{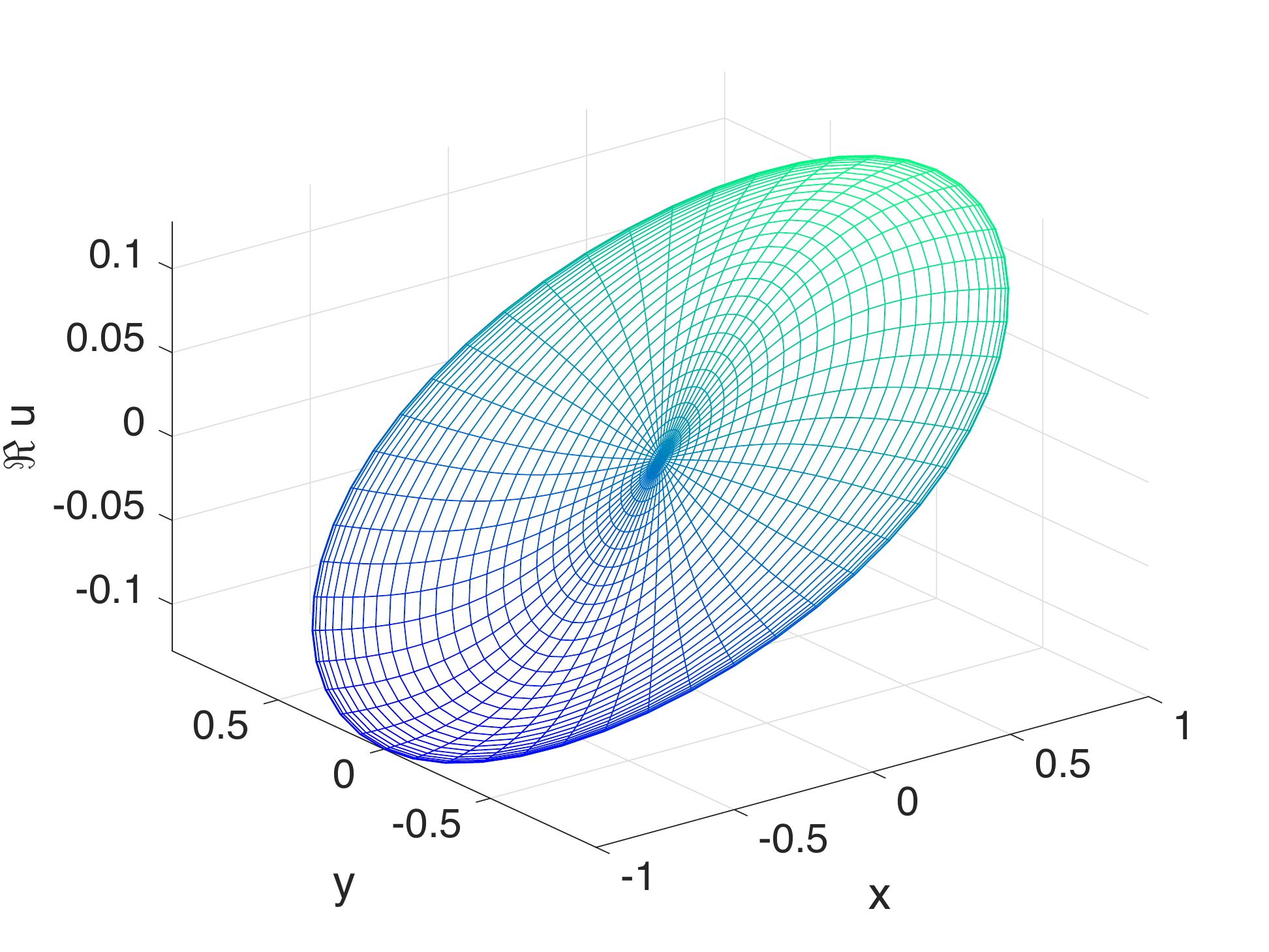}
  \includegraphics[width=0.49\textwidth]{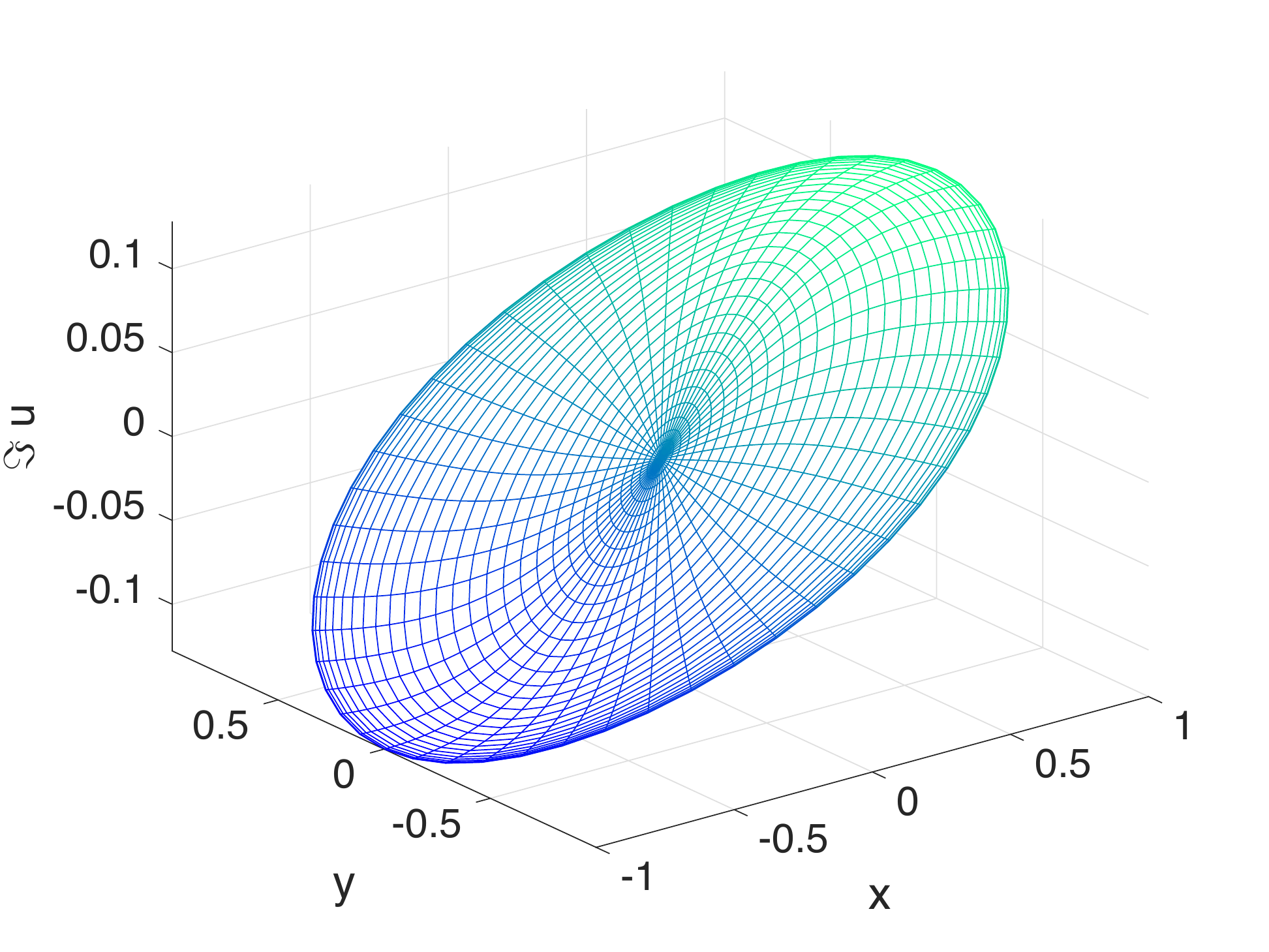}
 \caption{The exact solution $g(x,y;k)=f(x,y;k)-kz$ for the Lorentzian potential $A(x,y)=(1+x^2+y^2)^{-1}$ and $S(x,y)\equiv 0$ for $k=1$. Left: $\re(g(x,y;1))$.  Right:  $\imag(g(x,y;1))$.  By exact reflection symmetry through the unit circle, we only show it for $r\leq1$.}
 \label{lorentzk1}
\end{figure}

For the iterative solution of this problem, we use $N_{c}=32$ 
Chebychev polynomials and $N=50$ Fourier modes. It can be seen in the left-hand panel of 
Figure~\ref{lorentzk1coeff} that the coefficients $a_{nm}$ decrease 
exponentially in $(n,m)$ and that they reach machine precision well before the boundary of the spectral domain. This indicates that the solution is numerically resolved. The 
fixed-point iteration is stopped for this case when the difference 
between consecutive iterates is less than a given threshold, 
$\|a^{K+1}_{nm}-a^{K}_{nm}\|_{\infty}<10^{-10}$. This is achieved in 
this example in 10 iterations of the fixed-point method. The difference between the numerical result and 
the exact solution is shown in the right-hand panel of Figure~\ref{lorentzk1coeff}. It 
can be seen that the numerical error is largest near the origin and that 
it is of the order of $10^{-11}$. Note that this error is not 
affected if the iteration is stopped at a smaller threshold; it is 
due to the large condition numbers of the differentiation matrices which are for 
Chebychev differentiation of the order $N_{c}^{2}$, see e.g., the discussion in 
\cite{trefethen}. Thus the maximally achievable accuracy is of the 
order $10^{-11}$ with this approach for this example. If the problem 
required a higher numerical resolution (a larger $N_{c}$), the maximally achievable 
accuracy would be slightly lower. This problem can be addressed by 
introducing more than two domains in the $r$ variable, but this will not be needed for the  
examples studied here.

Note that the fixed-point code finds the 
symmetry of the solution with respect to $r\to1/r$ with the same 
accuracy, i.e., the same difference between numerical and exact 
solution will be found for $r>1$. Therefore we do not show the 
solution for $r>1$ in this case though it is obtained with a 
precision of the order of $10^{-11}$ in the whole complex plane. 
\begin{figure}[htb!]
  \includegraphics[width=0.49\textwidth]{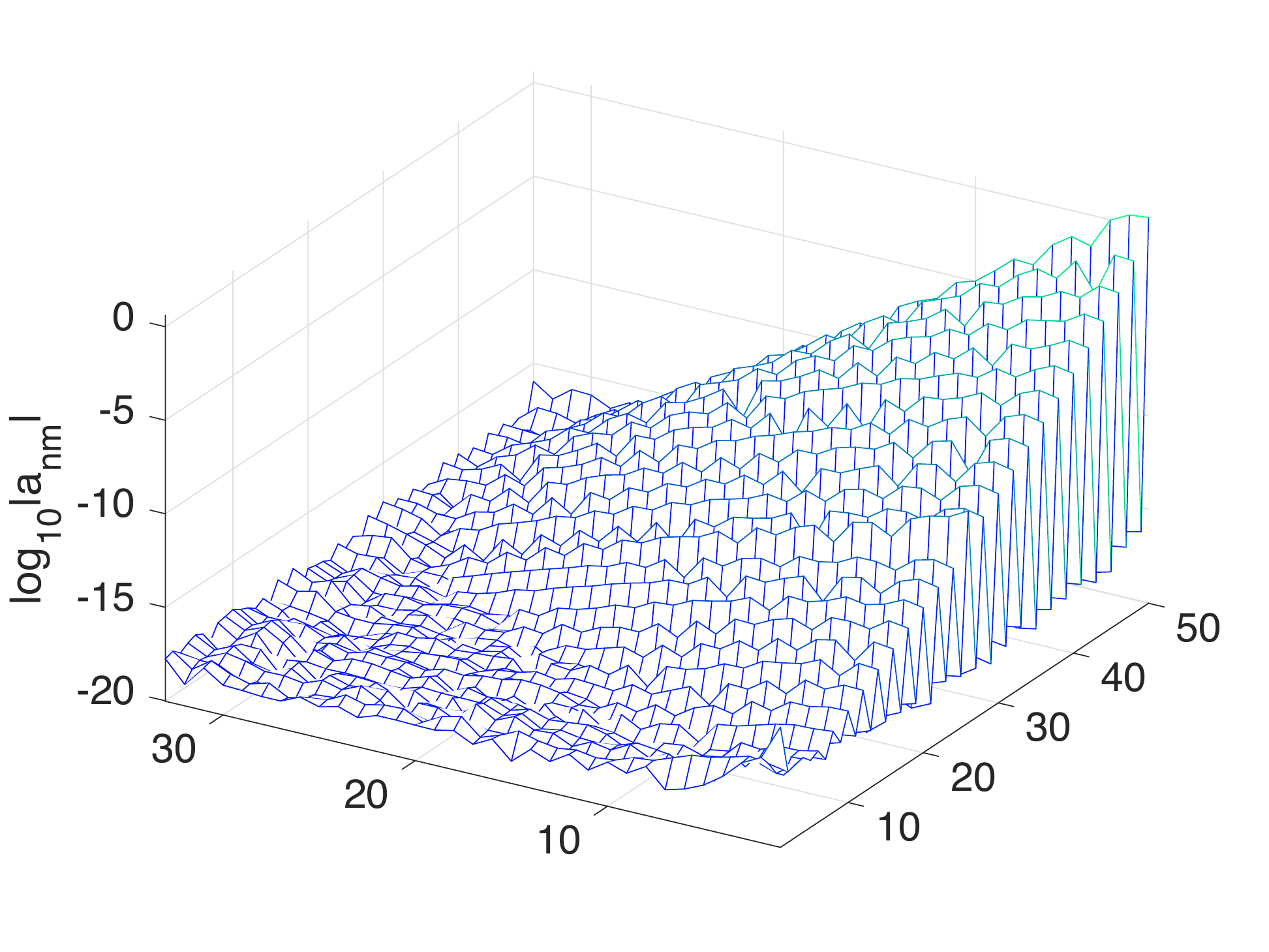}
  \includegraphics[width=0.49\textwidth]{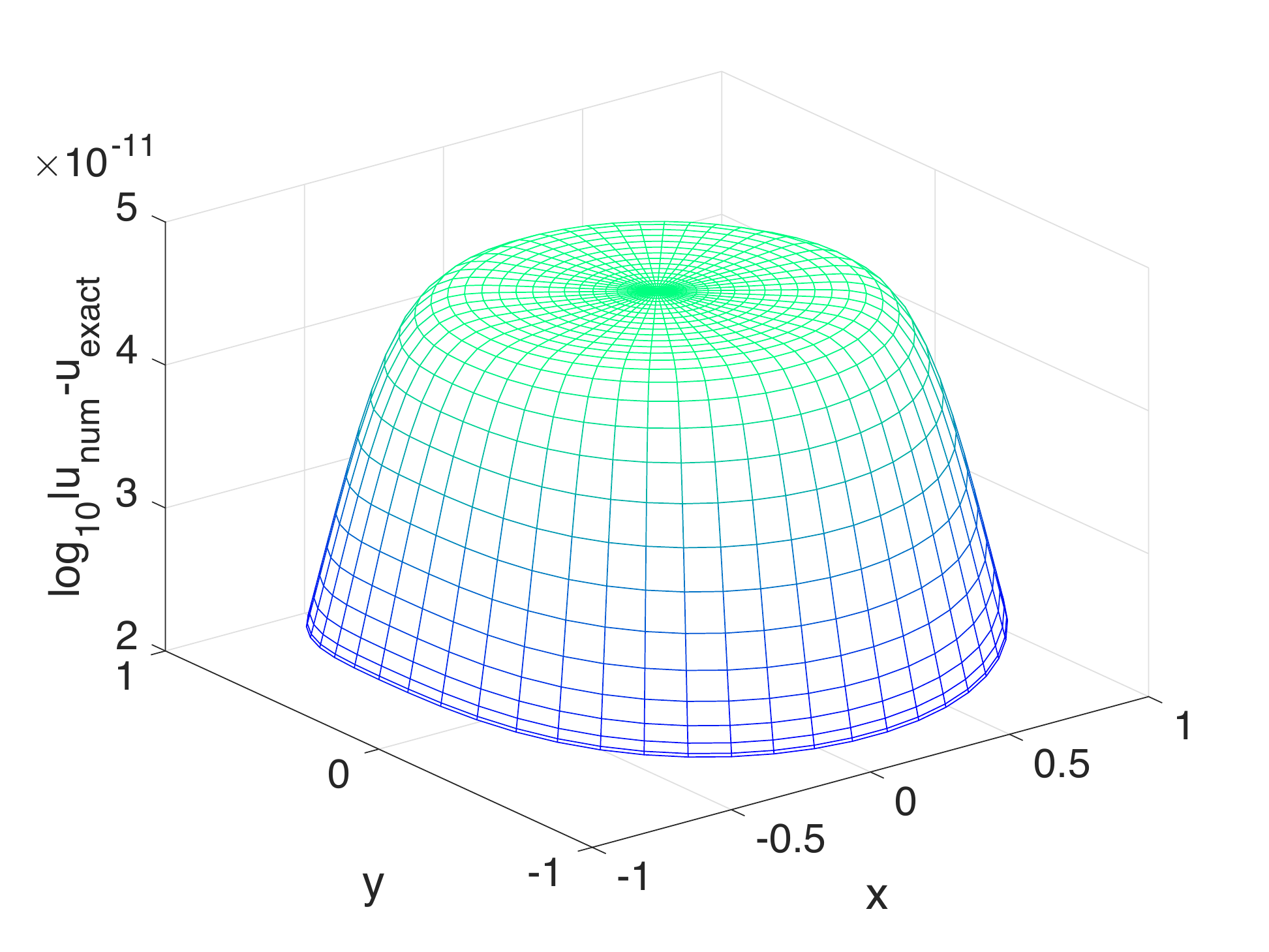}
 \caption{Chebychev and Fourier coefficients of the fixed-point approximation of the solution in 
 Figure~\ref{lorentzk1} plotted over the $(n,m)$-grid on the left, and the difference of the 
 numerical solution and the 
 exact solution on the right.}
 \label{lorentzk1coeff}
\end{figure}

The Newton iteration converges in this case after 3 iterations to the 
same precision. In practice it takes longer than the fixed-point 
iteration since the computation of the convolutions and the inversion 
of the Jacobian are computationally expensive. Krylov subspace techniques might be 
helpful in this context, but have not been explored so far. 

As the Lorentzian $A(x,y)=(1+x^2+y^2)^{-1}$ is radially symmetric, the numerical series approach described in Section~\ref{sec:numerical-Fourier-series} also applies, and we use the same discretization resolution parameters as in the iterative approaches, namely
$N_{c}=32$ and $N=50$ (recall that only odd powers of $\ee^{-\ii\phi}$ 
appear in this approach). The difference between numerical and exact 
solution in this case is of the order of $10^{-14}$, mainly near the 
origin as can be seen in the left-hand panel of Figure~\ref{lorentzk1rad}. The reason for a 
smaller error in this case is that only ordinary differential equations have to be solved as
the effect of the Fourier discretization of $\phi$ is essentially 
decoupled. Since there is no iteration, this method is also the 
fastest of the three discussed here.  The $L^{\infty}$ norm of the 
functions $c_{n}$ decreases as $n^{-5/2}$ as expected by the formulae \eqref{eq:fn-Lorentzian} 
and \eqref{stirling} (also noting that $|z|/(1+|z|^2)\le\tfrac{1}{2}$). Note that the finite precision employed delimits the 
number of coefficients $c_{n}$ which can be used in practice. 
\begin{figure}[htb!]
  \includegraphics[width=0.49\textwidth]{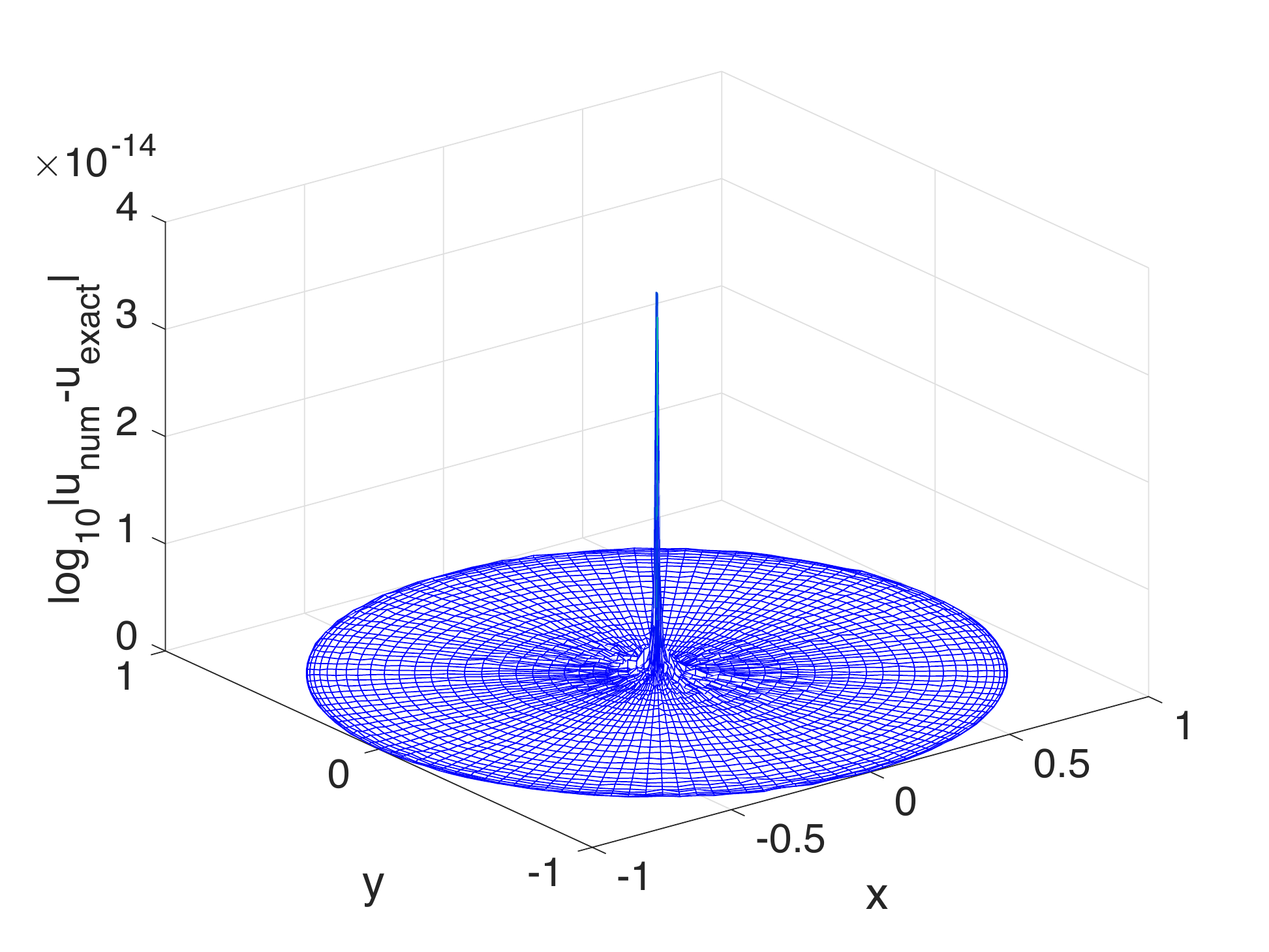}
  \includegraphics[width=0.49\textwidth]{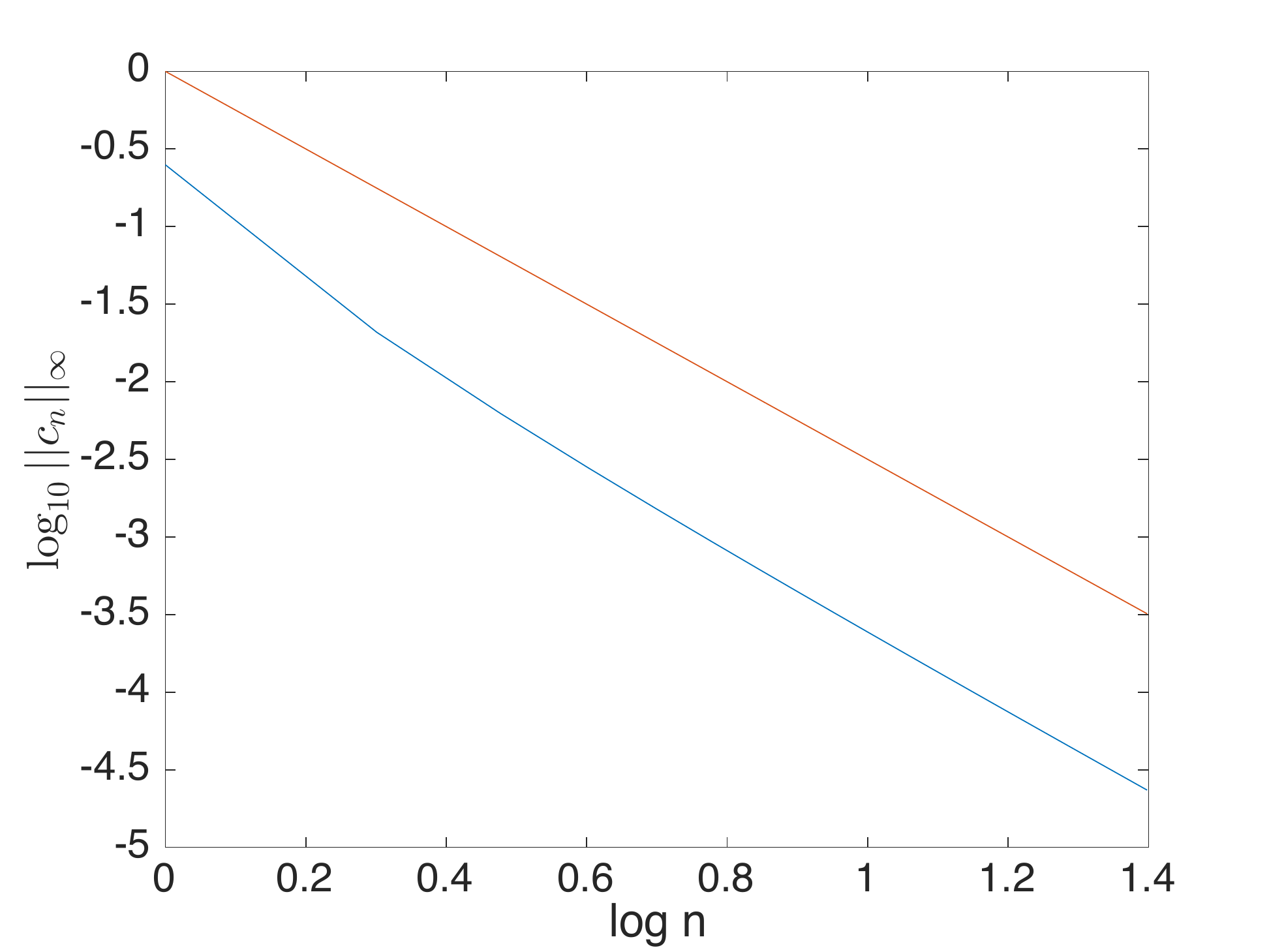}
 \caption{Left:  the difference between the exact solution $g(x,y;k)$ for the case of 
 the Lorentzian amplitude $A(x,y)=(1+x^2+y^2)^{-1}$ with $S(x,y)\equiv 0$ and the numerical solution 
 constructed with the series approach described in Section~\ref{sec:numerical-Fourier-series} for $k=1$. 
 Right:  the $L^{\infty}$ 
 norm of the coefficients $c_n(\cdot)$ in a log-log plot in blue (for reference the red line has slope $-5/2$).}
 \label{lorentzk1rad}
\end{figure}

The series approach also makes clear which problems are to be 
expected for smaller $k$. As $k$ decreases toward the critical value of $1/2$, 
more and more terms in the series \eqref{gseries} will be needed to obtain the same accuracy, and
for even smaller values of $k$ the series fails to converge (the continuation of the 
exact solution, however, will be bounded with jump 
discontinuities along some branch cuts as shown in Figure~\ref{fig:LorentzianBranchCuts}). For $k=0.6$, we thus need a 
considerably higher resolution in $\phi$ as can be seen in 
Figure~\ref{lorentzk06} where we use $N_{c}=40$ and $N=140$. The 
decrease of the Fourier modes is visibly much slower than before. The 
fixed-point iteration converges after 22 iterations, and the 
difference to the exact solution is of the order of $10^{-10}$ (this 
time it is largest near the rim of the disk). The Newton iteration 
converges in just 4 iterations, but becomes too slow at these 
parameters in comparison to the fixed-point iteration.  Using the same 
parameters for the series approach based on \eqref{gseries}, we get as before 
a difference of the order of $10^{-14}$ between the numerical and exact 
solutions. 
\begin{figure}[htb!]
  \includegraphics[width=0.49\textwidth]{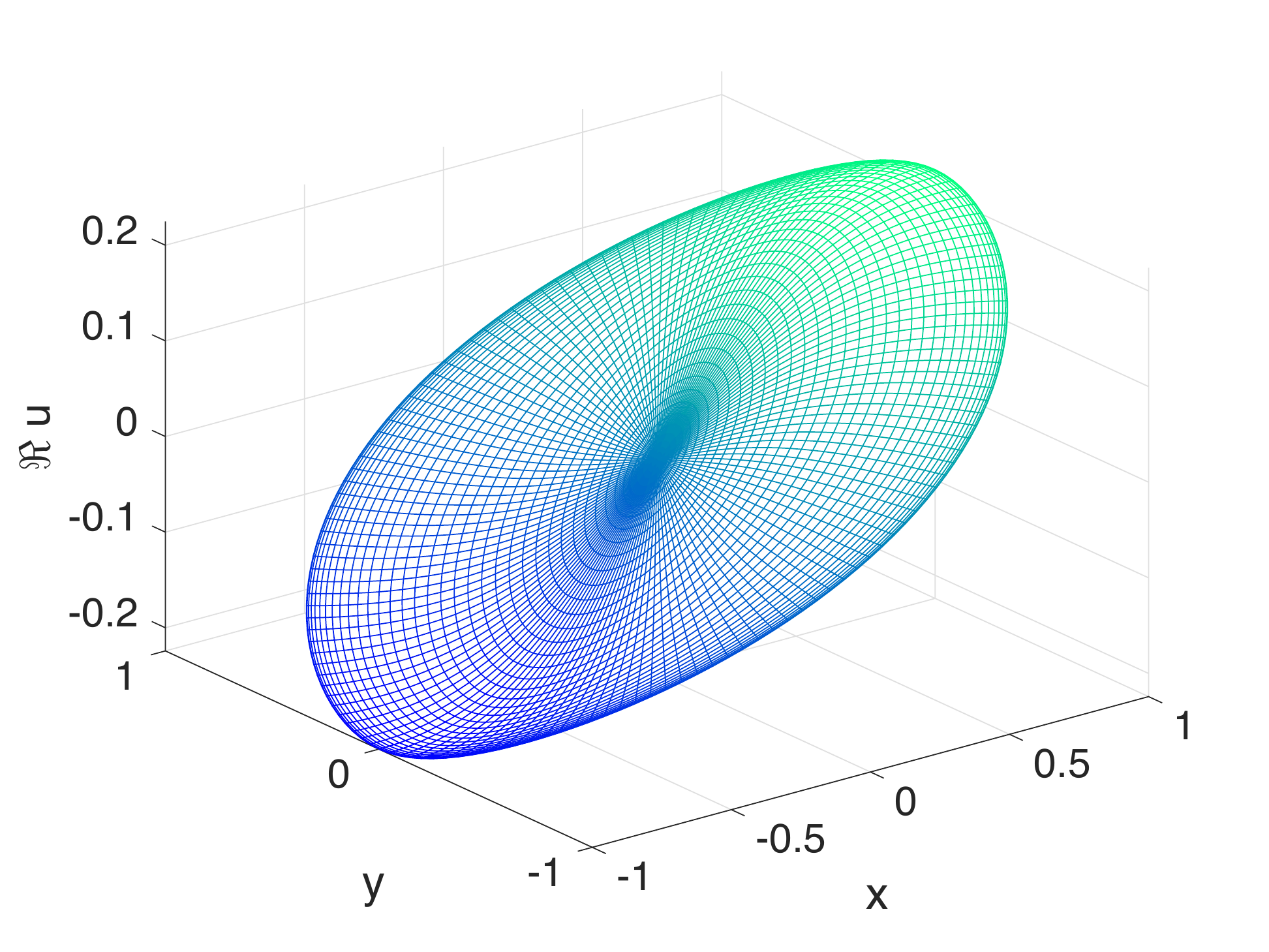}
  \includegraphics[width=0.49\textwidth]{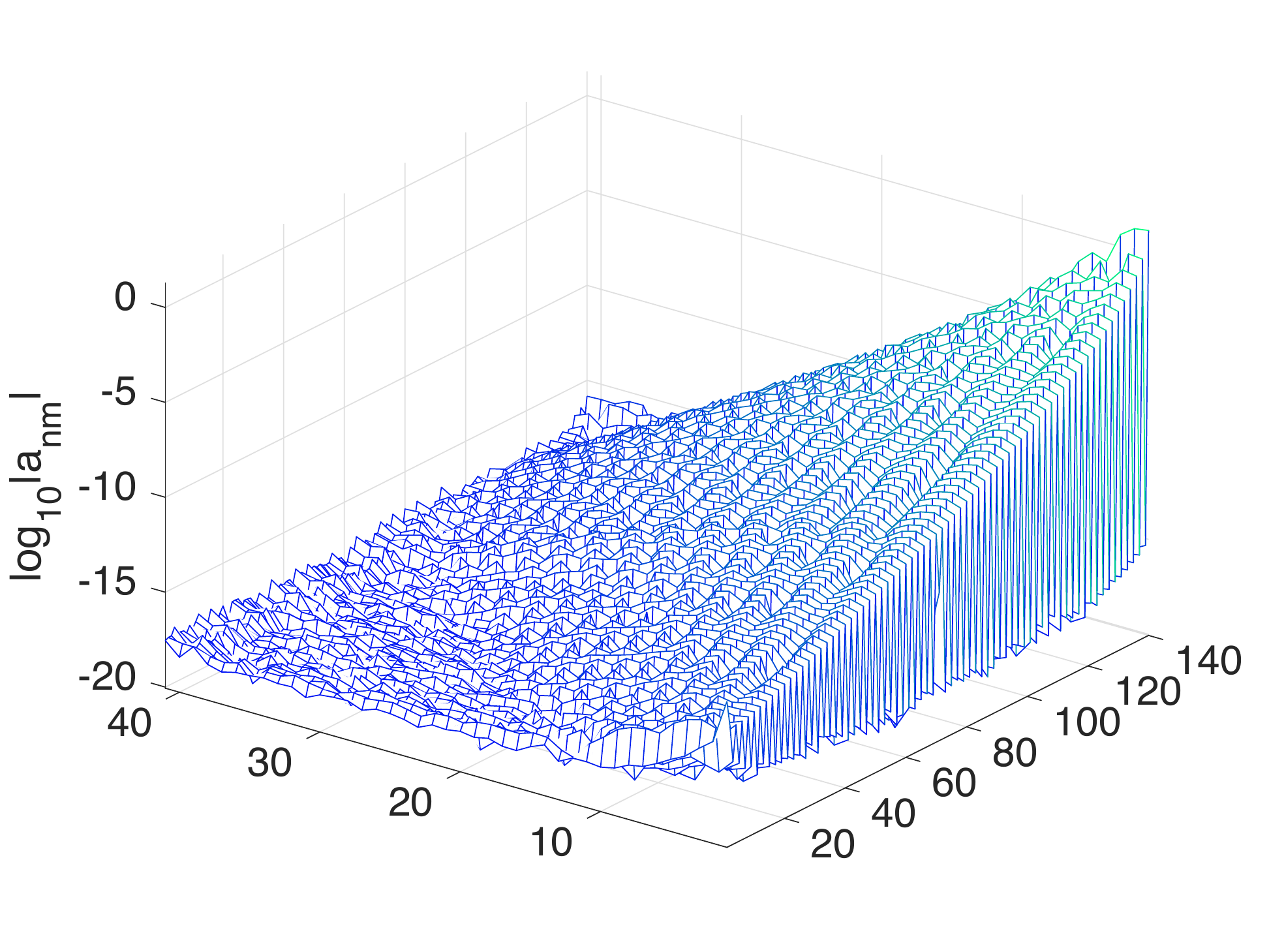}
 \caption{Left:  $\re(g(x,y;k))$ at $k=0.6$ for the case of the exact solution corresponding to the Lorentzian amplitude $A(x,y)=(1+x^2+y^2)^{-1}$ and $S(x,y)\equiv 0$.  Right: the 
 corresponding numerically-computed Fourier and Chebychev coefficients plotted over the $(n,m)$-grid.}
 \label{lorentzk06}
\end{figure}

\begin{myremark}
    Since the equations \eqref{Gdef} and \eqref{newton} are 
    nonlinear, \emph{aliasing errors}, see e.g.,~\cite{trefethen}, can 
    play a role in this context due to the use of truncated series in 
    the computation of products. To address this we use a filtering 
    in coefficient space: if the iteration is stopped at a given 
    level \texttt{tol} (typically $10^{-10}$), all spectral coefficients with 
    $|a_{nm}<\mathtt{tol}|$ are put equal to zero. 
\end{myremark}

Summing up, for radially symmetric potentials $A$, the series 
approach described in Section~\ref{sec:numerical-Fourier-series} is the most efficient of the ones presented 
here. If sufficient numerical resolution is provided, which can be 
controlled via the decrease of the $c_{n}(r)$ in $n$ and of their 
Chebychev coefficients, an accuracy of the order of $10^{-14}$ can be 
reached. We note that for potentials without radial symmetry, the series method is inapplicable but the fixed-point iteration method remains as an efficient option.

\subsection{Numerical computation of the leading-order normalization function $\alpha_0$}
\label{sec:numerical-alpha}
Once the eikonal problem is solved for $f(x,y;k)$, corresponding function $\alpha_0$ 
has to be determined in order to complete the construction of the leading term in the WKB expansion of the solution of the direct scattering problem.  In the case $S\equiv 0$, $\alpha_0$ is the solution of 
the linear equation
\eqref{eq:alpha-equation-S-zero} that satisfies $\alpha_0\to 1$ as $|z|\to\infty$. In polar coordinates, this equation 
reads 
\begin{equation}
    \begin{split}
    &2\left((\ee^{-\ii\phi}g_{r}+k)\alpha_{0r}+\frac{1}{r}\left(\ee^{-\ii\phi}\frac{g_{\phi}}{r}+\ii k\right)\alpha_{0\phi}\right)\\
    &
    +     \left[\left(g_{rr}+\frac{1}{r}g_{r}+\frac{1}{r^{2}}g_{\phi\phi}\right)
    +\left(g_{r}+\frac{\ii}{r}g_{\phi}\right)\left((\ln 
    A)_{r}-\frac{\ii}{r}(\ln A)_{\phi}\right) 
    \right]\ee^{-\ii\phi}\alpha_0=0.
    \label{alpharadial}
        \end{split}
\end{equation}
This equation is treated numerically in a similar way as was the eikonal problem in Section~\ref{sec:numerical-eikonal} which allows the use of the same numerical grid. The 
derivatives of both the potential $A$ and the function $g$ are computed as described in Section~\ref{sec:spectral-approach}
with spectral methods in the two radial domains that meet at the unit circle. 

\begin{myremark}
    The derivatives of the solution $f$ in \eqref{alpharadial} 
    contain divisions by $r$. As can be checked for the exact solution 
   \eqref{eq:fW-Lorentzian} in the case of the Lorentzian potential 
   $A$ the terms divided by $r$ appearing in the action of the Laplacian on $f$ 
   do not all vanish for $r=0$. This implies that analytically 
   unbounded terms cancel which is numerically challenging even for a
   spectral method to resolve. Thus a loss in accuracy near $r=0$ is to be 
   expected in the computation of $\alpha_0$ via \eqref{alpharadial}.  
\end{myremark}

The numerical solution of \eqref{alpharadial} with $\alpha_0\to 1$ as $|z|\to\infty$ in the case $A(x,y)=1/(1+x^{2}+y^2)$ and $S\equiv 0$ is 
shown for $r<1$ in the left-hand panel of Figure~\ref{alphalorentz} (note that the 
corresponding exact solution \eqref{eq:alphaW-Lorentzian} is symmetric under the reflection mapping 
$r\mapsto1/r$). The right-hand panel of the same figure shows a plot of the corresponding spectral coefficients. It can be seen that the solution is well resolved for 
$N_{r}=N_{\phi}=64$. The divisions by $r$ in the 
expressions in \eqref{alpharadial} lead, however, to a saturation 
level for the coefficients of the order of $10^{-10}$. Loosely 
speaking this is the level of the numerical error. 
\begin{figure}[htb!]
  \includegraphics[width=0.49\textwidth]{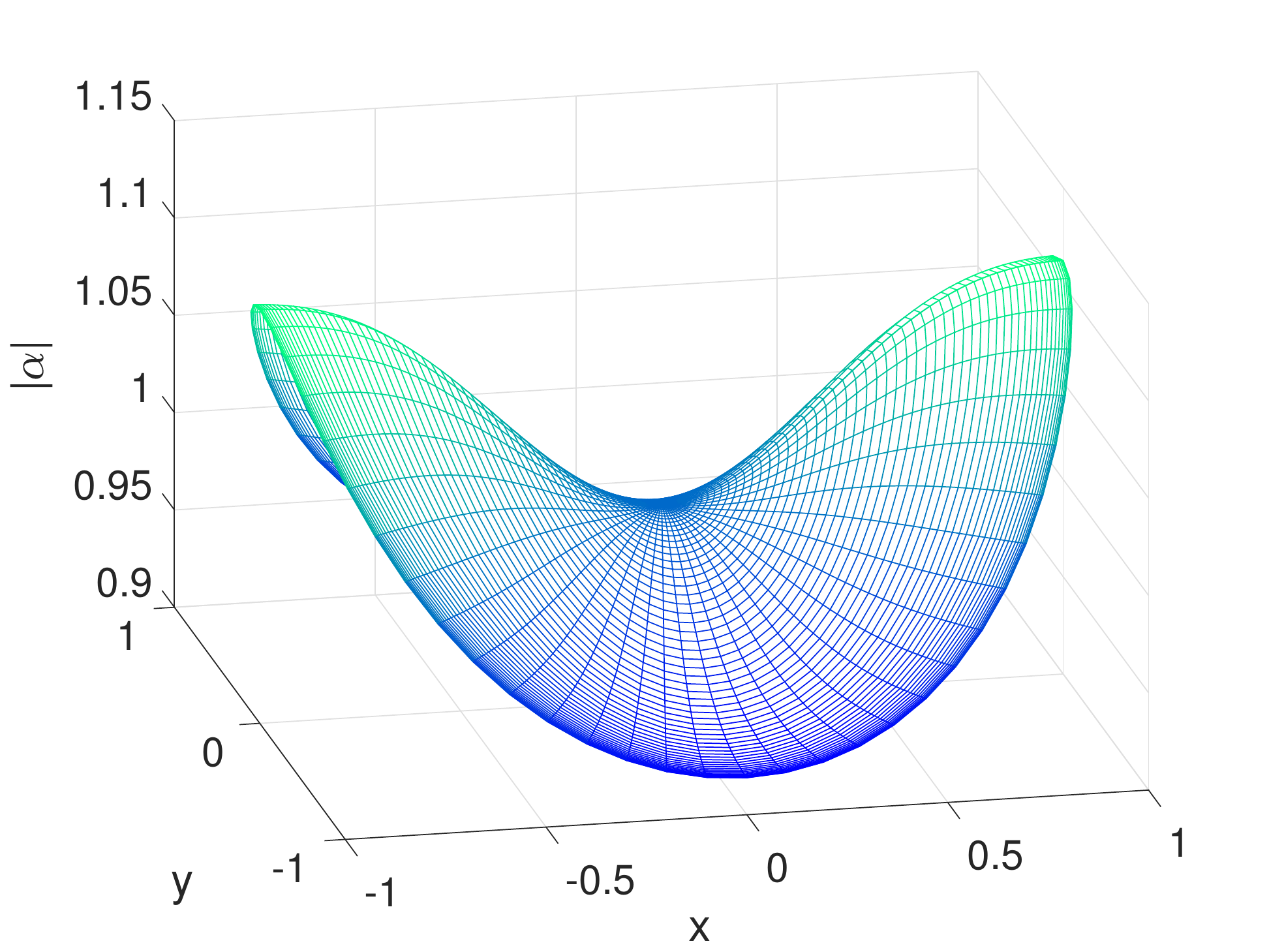}
  \includegraphics[width=0.49\textwidth]{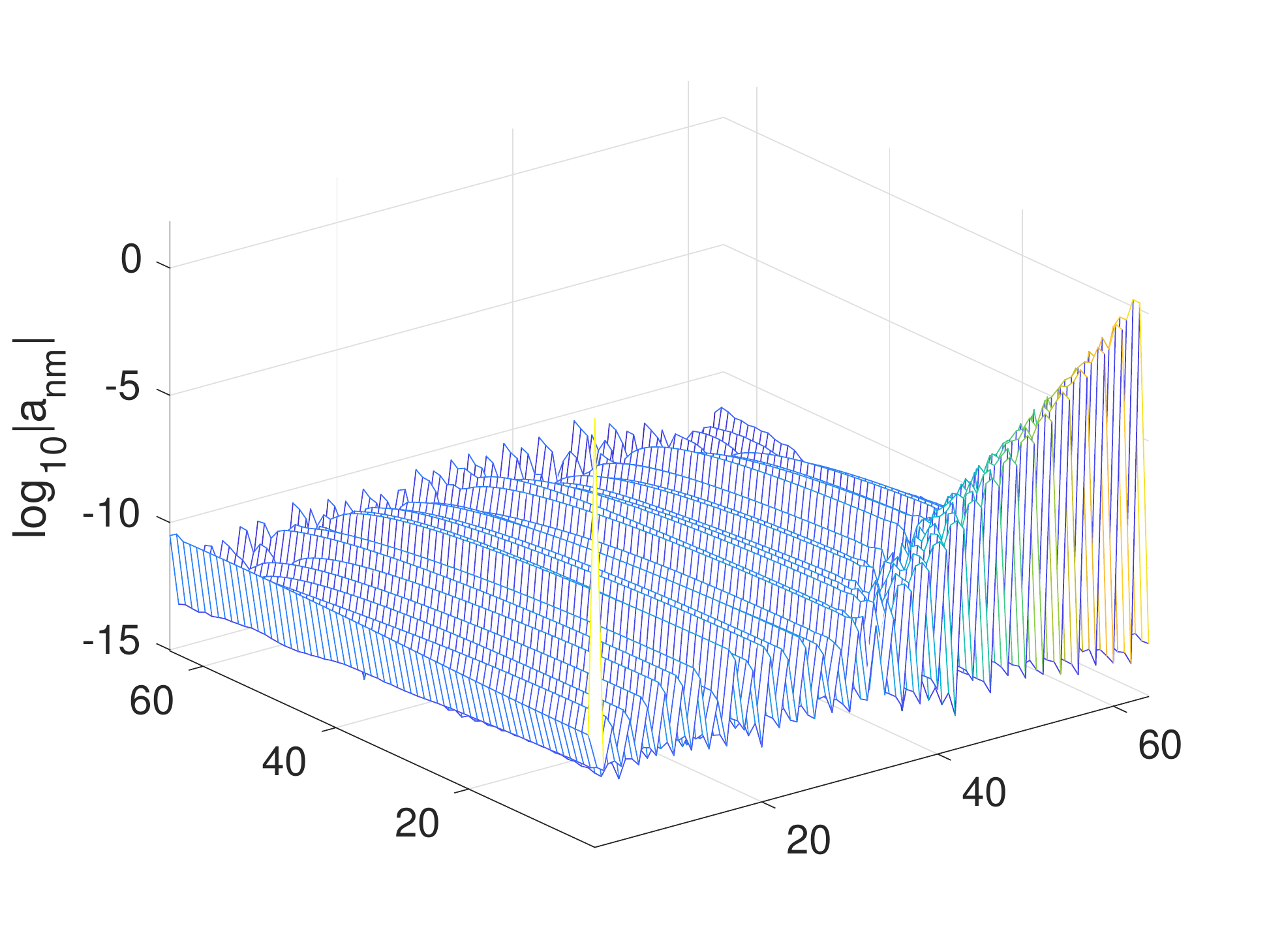}
 \caption{Left:  the numerical solution to the equation 
 \eqref{alpharadial} with $\alpha_0\to 1$ as $|z|\to\infty$ for the Lorentzian potential $A(r)=1/(1+r^{2})$ at $k=1$. Right: the modulus of the spectral 
 coefficients  in a logarithmic plot over the $(n,m)$-grid.}
 \label{alphalorentz}
\end{figure}

The expected accuracy is thus of the order of $10^{-10}$, and 
that this is indeed the case can be seen in Figure~\ref{alphatest} 
where the difference between the numerical and the exact solution \eqref{eq:alphaW-Lorentzian} is shown 
for $r<1$ on the left and for $r>1$ on the right. It appears that the largest 
errors occur for $r=0$. For other values of the radius, the 
numerical error is of the order $10^{-11}$. This shows that the 
solution of \eqref{alpharadial} can be obtained with high accuracy on 
the whole complex plane. 
\begin{figure}[htb!]
  \includegraphics[width=0.49\textwidth]{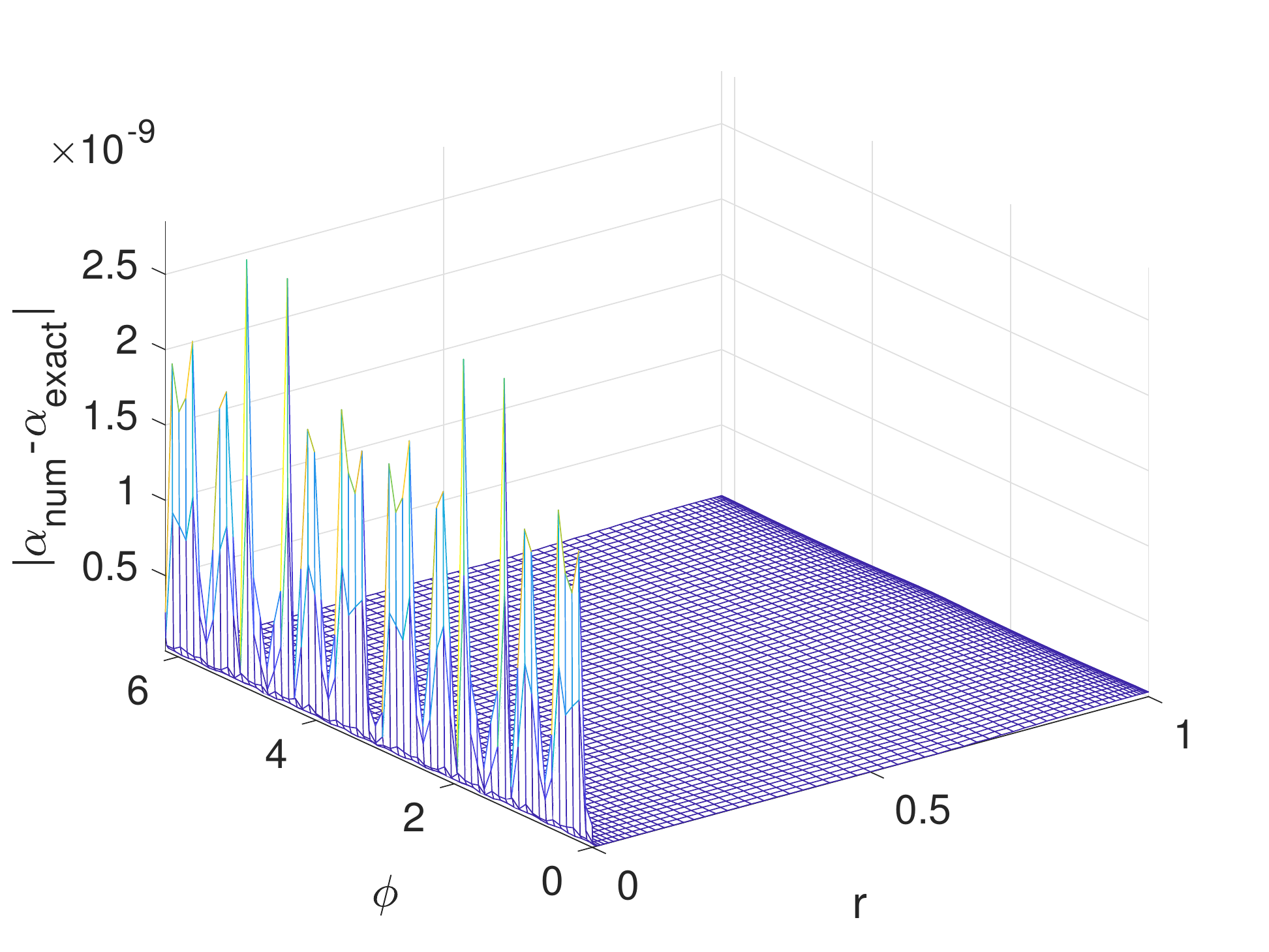}
  \includegraphics[width=0.49\textwidth]{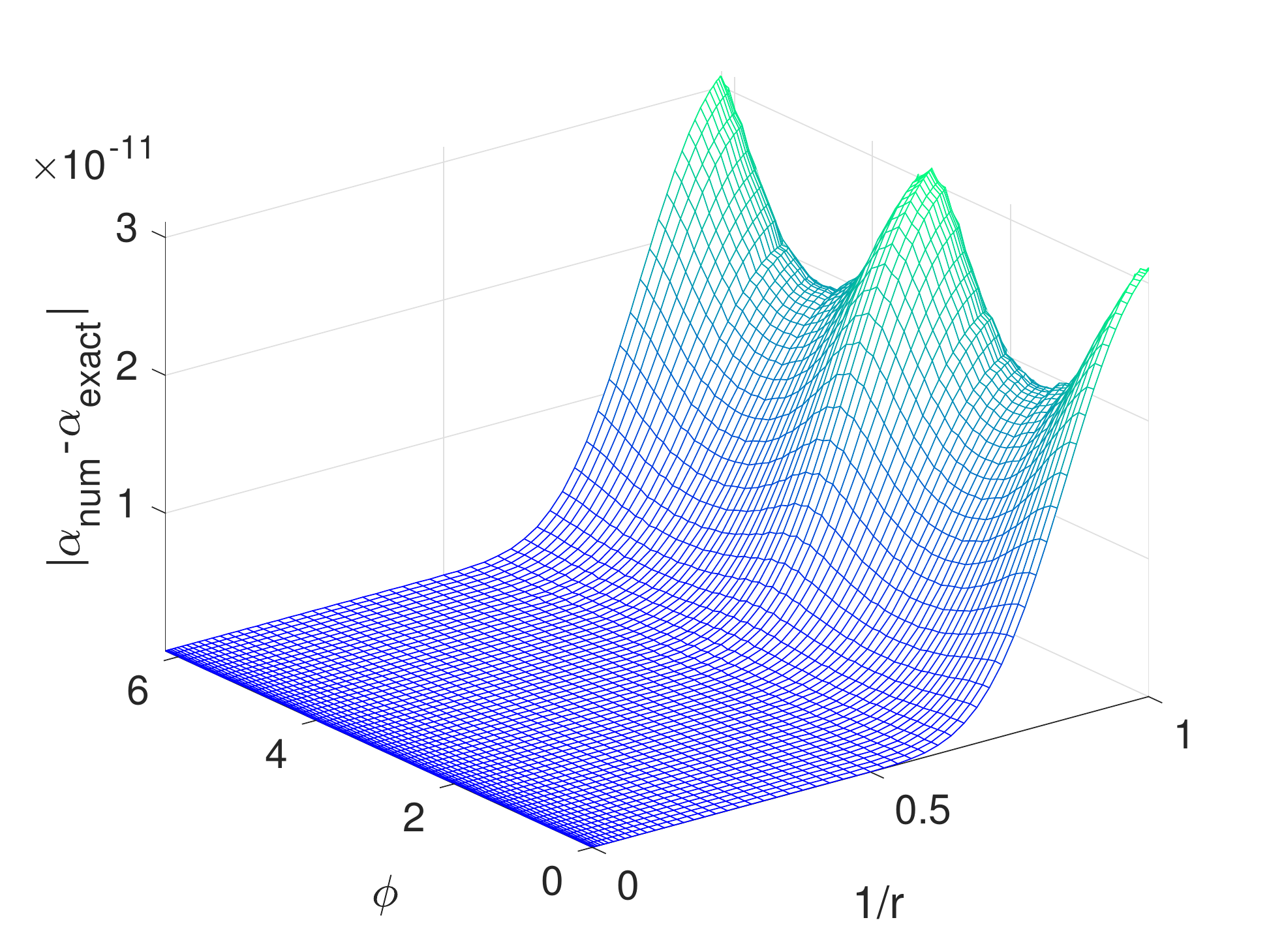}
 \caption{Difference between the numerical solution to the equation 
 \eqref{alpharadial} for the Lorentzian potential $A=1/(1+r^{2})$ and the exact 
 solution \eqref{eq:alphaW-Lorentzian} on the numerical grid. Left: $r<1$. Right: $r>1$.}
 \label{alphatest}
\end{figure}

\subsection{A spectral method for the $\epsilon$-dependent direct 
scattering problem for Schwartz class potentials}
\label{sec:numerical-dbar}
A spectral approach to solve the $\bar{\partial}$-problem for potentials in the Schwartz
class was developed by Klein and McLaughlin in \cite{KleinMcL}. We briefly summarize the approach here;
the reader is referred to \cite{KleinMcL} for details. 

The idea is to introduce the functions
\begin{equation}
    m^{\pm}(z;k) = \ee^{-kz/\epsilon}\left(\psi_{1}(z;k)\pm \overline{\psi_{2}(z;k})\right)-1,
    \label{mpm}
\end{equation}
satisfying the boundary conditions $\lim_{z\to\infty}m^{\pm}(z;k)=0$.   In these variables, the system 
\eqref{eq:1.10} becomes diagonal and takes the form
\begin{equation}
    \delbar m^{\pm}=\pm\frac{q}{2\epsilon}\ee^{(\overline{kz}-kz)/\epsilon}\cdot(\overline{m^{\pm}}+1).    \end{equation}
We write both of these equations in the common form
\begin{equation}
\delbar m = \frac{Q}{2\epsilon}\ee^{(\overline{kz}-kz)/\epsilon}\cdot(\overline{m}+1)
\label{mpmsys}
\end{equation}
where $m=m^\pm$ and $Q=\pm q$.  
Since under the Fourier transform $\mathcal{F}$ (cf., \eqref{eq:W-norm}, here scaled to be $L^2(\mathbb{R}^2)$-unitary) we have $\mathcal{F}\{\delbar m\}=\tfrac{1}{2}\ii (\xi_x+\ii\xi_y)\mathcal{F}\{m\}=\tfrac{1}{2}\ii\xi\mathcal{F}\{m\}$ for $\xi=\xi_x+\ii\xi_y$ the dual Fourier variable to $z=x+\ii y$, 
in the Fourier domain the system \eqref{mpmsys} becomes 
\begin{equation}
    S(\xi) =  - \ii \mathcal{F}\left\{ \frac{Q}{\epsilon}\ee^{(\overline{kz}-kz)/\epsilon}\cdot  \overline{ \mathcal{F}^{-1} \left\{ \frac{1}{\xi} S(\xi) \right\} }   \right\} 
 -\ii  \mathcal{F}\left\{  \frac{Q}{\epsilon}\ee^{(\overline{kz}-kz)/\epsilon}   \right\}
    \label{iefourier}
\end{equation}
where 
$S := \xi \mathcal{F}\{m\}=-2\ii\mathcal{F}\{\delbar m\}$. 

In the numerical approach \cite{KleinMcL} the Fourier transforms in 
\eqref{iefourier} are approximated by discrete Fourier transforms 
computed by a two-dimensional FFT. The integrand in \eqref{iefourier} 
is regularized in the form  
\begin{equation}
    \mathcal{F}^{-1} \left\{ \frac{1}{\xi} S(\xi) \right\} = \mathcal{F}^{-1} \left\{ \frac{1}{\xi} \left( S(\xi)  - G(\xi) \right) \right\}  + \mathcal{F}^{-1} \left\{ \frac{1}{\xi} G(\xi) \right\}, 
    \label{reg}
\end{equation}
where $G(\xi)$ is chosen such that 
$(S(\xi)-G(\xi))/\xi$ is regular to machine precision (as indicated by the fact that the Fourier coefficients decrease exponentially to the order of the roundoff
error), and 
also such that $\mathcal{F}^{-1}\{G(\xi)/\xi\}$  can be computed explicitly. 
A useful choice for $G$ is
\begin{equation}
    G(\xi)=\ee^{-|\xi|^{2}}\sum_{n=0}^{M}\frac{\delbar_\xi^{\,n}S(0)}{n!}\overline{\xi}^{\,n},\quad\delbar_\xi:=\frac{1}{2}\left(\frac{\partial}{\partial\xi_x}+\ii\frac{\partial}{\partial\xi_y}\right),
    \label{Gchoice}
\end{equation}
since it cancels the most offending terms in $S$ near the origin, while $\mathcal{F}^{-1}\{G(\xi)/\xi\}$ can be calculated with the help of the identity
\begin{equation}
     \mathcal{F}^{-1} \left\{\frac{ \overline{\xi}^{\,n} }{\xi}\ee^{ - |\xi 
    |^{2} } \right\}= 
    \ii(2\ii)^{n}\frac{n!}{z^{n+1}}\left[1-\ee^{-|z|^{2}/4}\sum_{k=0}^{n}
    \frac{1}{k!}\left(\frac{|z|^2}{4}\right)^{k}
    \right].
    \label{Gexplicit}
\end{equation}

The factor $e^{(\overline{kz}-kz)/\epsilon}$ appearing in \eqref{iefourier} leads to a shift in 
Fourier space of the Fourier transform of a function multiplied by 
it.  Indeed, if we introduce the shift operator $\mathcal{S}_{k/\epsilon}$ whose action on a function $f$ of $\xi$ is given by $\mathcal{S}_{k/\epsilon}f(\xi):=f(\xi+2\ii\overline{k}/\epsilon)$, then \eqref{iefourier} can be recast in the form
\begin{equation}
S(\xi)=\mathcal{S}_{k/\epsilon}\circ\mathcal{K}_0 S(\xi)+\mathcal{S}_{k/\epsilon}F(\xi),
\label{eq:shifted-equation}
\end{equation}
where the operator $\mathcal{K}_0$ and forcing function $F$ are independent of $k$:
\begin{equation}
\mathcal{K}_0S(\xi):=-\ii\mathcal{F}\left\{\frac{Q}{\epsilon}\cdot\overline{\mathcal{F}^{-1}\left\{\frac{1}{\xi}S(\xi)\right\}}\right\}\quad\text{and}\quad
F(\xi):=-\ii\mathcal{F}\left\{\frac{Q}{\epsilon}\right\}.
\end{equation}
As discussed in detail in \cite{KleinMcL}, for larger values of $|k|/\epsilon$ the effect of the shift is that the benefit of the regularization procedure \eqref{reg} is diminished because it is effectively removing a singularity that is not present at all since the shifted transform is large near the boundary of the (spectral) computational domain but vanishes to machine precision near the origin $\xi=0$.
%

To address this problem, the equation \eqref{eq:shifted-equation} may be replaced by a system of equations for two functions, $f$ and $h$:
\begin{equation}
\begin{split}
h&=\mathcal{K}_0f + F\\
f&=\mathcal{S}_{k/\epsilon}\circ\mathcal{K}_0\circ\mathcal{S}_{k/\epsilon}h.
\end{split}
\label{hf}
\end{equation}
It is a direct matter to check that if $(f,h)$ solves \eqref{hf}, then $S=f+\mathcal{S}_{k/\epsilon}h$ solves \eqref{eq:shifted-equation}.  However, since it turns out that for large $|k|/\epsilon$ both functions $f$ and $h$ are small near the boundary of the spectral computational domain, the system \eqref{hf} is better suited to regularization via \eqref{reg} than is \eqref{eq:shifted-equation} itself.
Moreover, to recover the reflection coefficient, the function $f$ is 
not needed, and it can therefore be explicitly eliminated from the first equation of \eqref{hf} using the second equation, leading to a closed equation for a single function $h$.  See \cite[Section 5.2]{KleinMcL} for details.

Numerically this integral equation is solved by standard discretization amenable to the two-dimensional FFT.
The resulting system of algebraic equations is not complex linear in 
 $h$ due to the complex conjugation present in the operator $\mathcal{K}_0$, but rather real linear in its real and
imaginary parts.  This linear system is solved with GMRES 
\cite{GMRES}, a Krylov subspace
approach that is especially useful in our setting because it avoids the necessity of storage of the coefficient matrix.  As discussed in \cite{KleinMcL}, the numerical error in the solution is of the order of the Fourier 
coefficients of the largest values of $\xi$ carried in the computation.    

Recall that $S=-2\ii\mathcal{F}\{\delbar m\}$, so that once $S$ is found, $\delbar m$ is available via the (spectrally-accurate) FFT. 
As discussed in \cite{KleinMcL},  
knowledge of $\delbar m^\pm$ is sufficient to compute the reflection coefficient. In 
order to obtain $m^{\pm}$, as will be needed to compare numerical solutions with the WKB approximations introduced in Section~\ref{sec:WKB}, we invert $\delbar$ 
in the Fourier domain for $S=f+\mathcal{S}_{k/\epsilon}h$ via division by $\xi$ and using again the 
regularization procedure \eqref{reg} with a shift in the 
Fourier domain for the second term in the expression for $S$ for $k\neq 0$. It is important to realize that 
the quantities $m^{\pm}$ decrease only as $1/|z|$ for $z\to\infty$ 
and are thus not themselves suitable for a Fourier spectral approach (the 
periodically continued functions would not be differentiable at the 
computational boundary), but that the function $(S(\xi)-G(\xi))/\xi$ is 
in the Schwartz class. For the latter term, Fourier spectral methods on a 
sufficiently large computational domain are very efficient and show 
spectral convergence, which is controlled as always by the decay of 
the modulus of the Fourier coefficients at the boundaries of the 
computational domain in Fourier space. 

As an example of the result of a computation using
this numerical approach, we show the 
solutions to the Dirac system \eqref{eq:1.10} with normalization \eqref{eq:psi-norm} obtained with a Gaussian 
potential $A(x,y)=\ee^{-(x^2+y^2)}$ and $S(x,y)\equiv 0$ for $k=0$ and $\epsilon=1$ in Figure~\ref{gausspsi12k0e1}. 
The function $\psi_{1}$ has minimal modulus at the origin and tends 
to $1$ at infinity, whereas the function $\psi_{2}$ vanishes at the 
origin and decreases slowly to $0$ as $z\to\infty$. 
\begin{figure}[htb!]
  \includegraphics[width=0.49\textwidth]{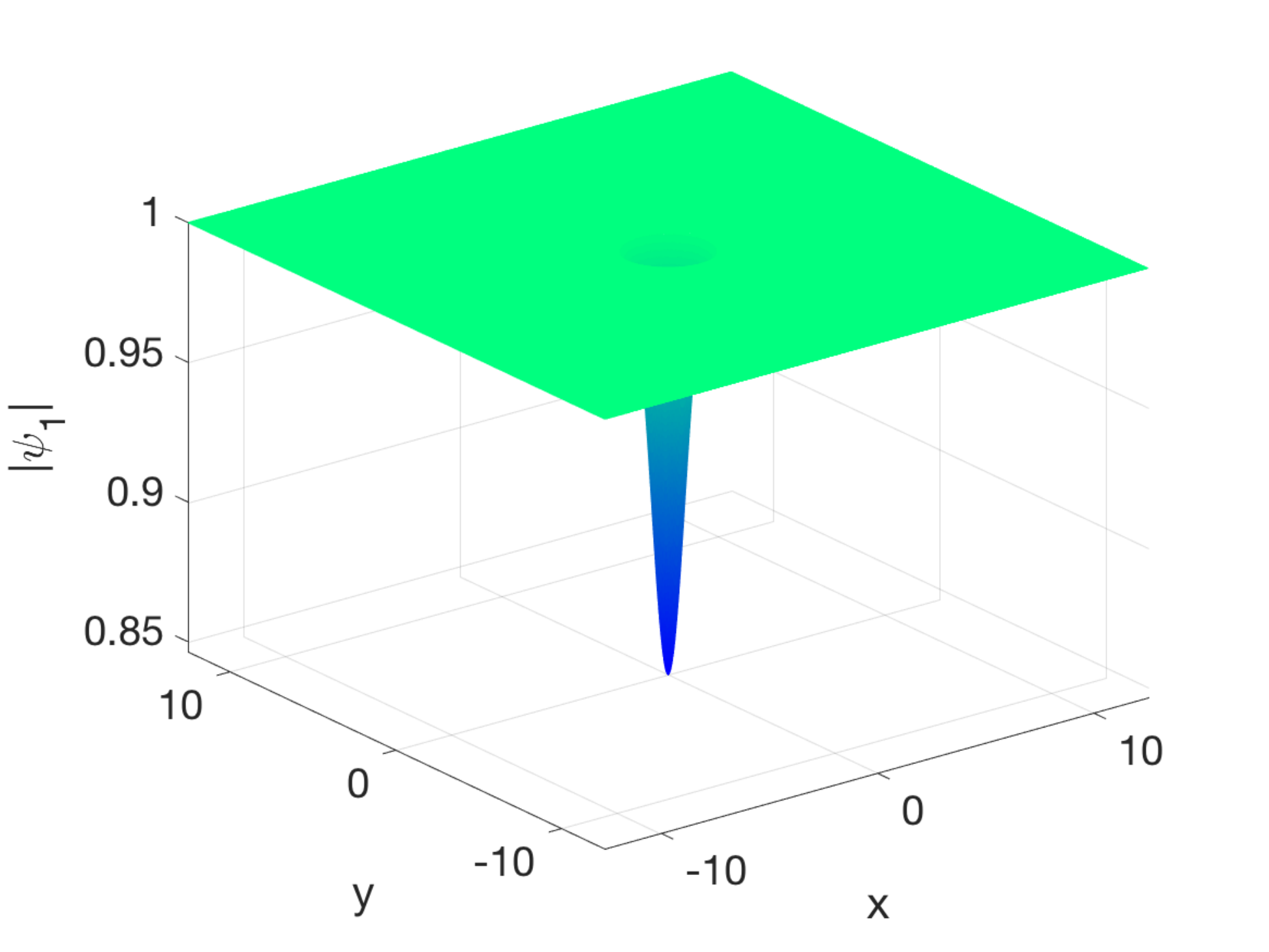}
  \includegraphics[width=0.49\textwidth]{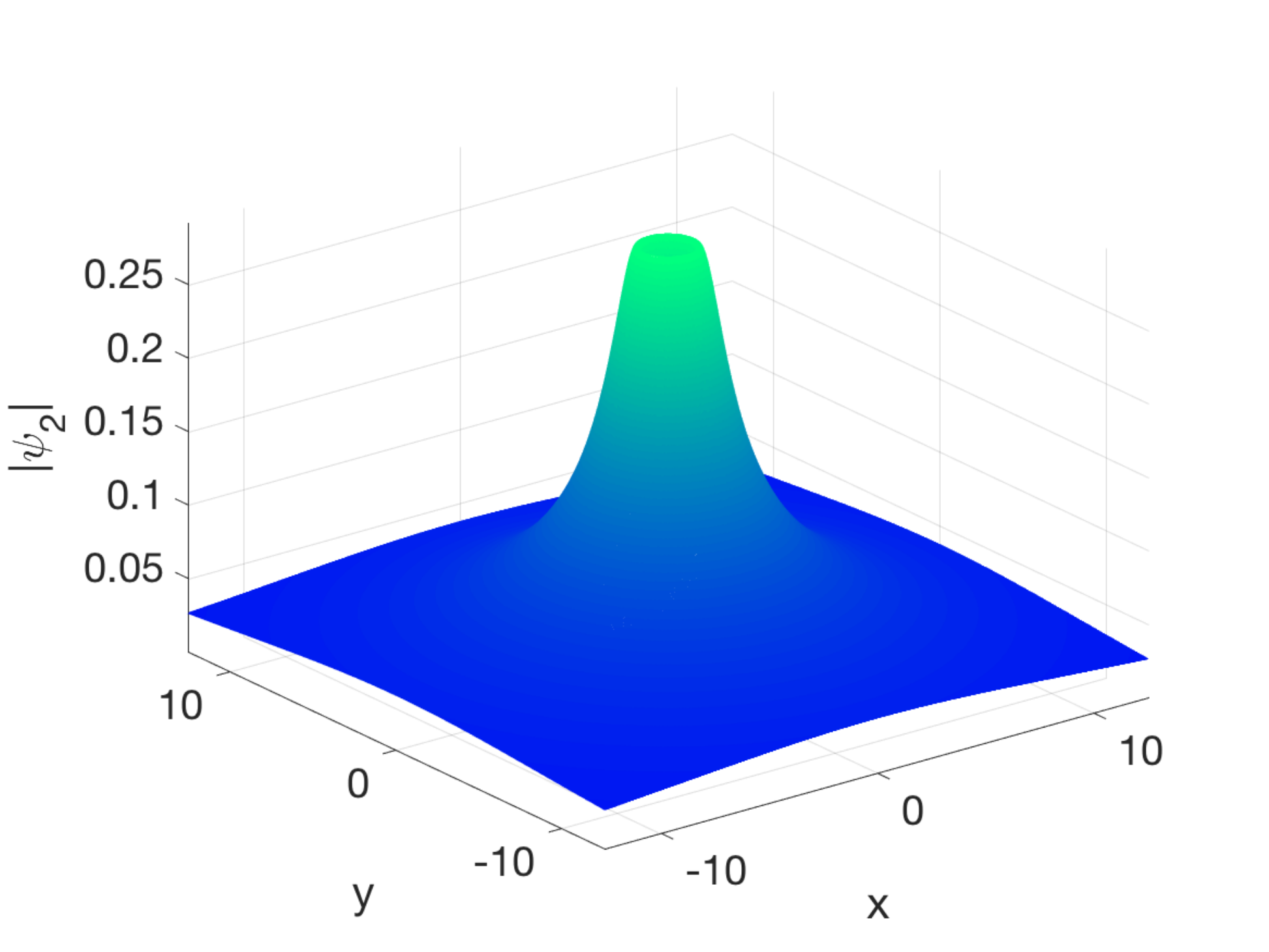}
 \caption{Numerical solution to the Dirac system \eqref{eq:1.10}--\eqref{eq:psi-norm} with a Gaussian 
 potential for $k=0$ and $\epsilon=1$. Left:  modulus of $\psi_{1}$.  
 Right: modulus of $\psi_{2}$.  }
 \label{gausspsi12k0e1}
\end{figure}

\section{Numerical examples}
\label{sec:numerical-examples}
In this section we test the conjectures formulated in the previous 
sections for several examples with and without radial symmetry. We 
first address the case of a Gaussian potential for various values of 
$\epsilon$ and compare the solution to the  Dirac system  
\eqref{eq:1.10} for 
sufficiently large $k$ to the leading order semiclassical solution 
built from the solution to  the eikonal problem. A similar study  is 
presented for a non-radially symmetric potential in the Schwartz 
class. 

\subsection{Gaussian potential}
\label{sec:numerical-Gaussian}
As a first computational example outside the realm of potentials $A(x,y)$ for which the eikonal problem \eqref{eq:eikonal}--\eqref{eq:f-norm} has a known solution, we consider here
the Gaussian
\begin{equation}
    A(x,y)=A(r) = \ee^{-r^{2}}
    \label{gauss}
\end{equation}
as a canonical example of a smooth, rapidly decaying, and radially-symmetric potential.
Since the reflection coefficient is a function of $|k|$ only for radial potentials such as \eqref{gauss},
we will here restrict attention to real positive $k$.

Firstly, we numerically solve the eikonal problem for this potential
using the series approach of 
Section~\ref{sec:numerical-Fourier-series} with discretization parameter $N_{r}=40$ and 
200 terms in the series (\ref{gseries}).
%
The coefficients $\{c_{n}(r)\}$ as computed via \eqref{gseries} have $L^{\infty}(\mathbb{R}_+)$ norms 
exhibiting algebraic decay as $n\to\infty$ as suggested by 
Figure~\ref{gausskcoeff}, where a log-log plot of $\|c_{n}(\cdot)\|_{\infty}$ 
is shown on the left. The essentially linear behavior of the plot for  
larger values of $n$ indicates algebraic (predominantly power-law) decay as $n\to\infty$.
We can fit the norms $\{\|c_{n}(\cdot)\|_{\infty}\}$ with a least squares 
method to $\ln \|c_{n}(\cdot)\|_{\infty}\sim -\alpha n-\beta \ln n - \gamma$ 
and find 
    $\alpha=10^{-4}$, $\beta=    1.0951$ and 
    $\gamma=1.1122$ for values of $n>20$ (the results do not change 
    much if the fitting is done for $n>50$). The results of the 
    fitting can be seen in the right-hand panel of 
    Figure~\ref{gausskcoeff} in the form of the quantity $\Delta:=\ln 
    \|c_{n}(\cdot)\|_{\infty}-(-\alpha n-\beta \ln n - \gamma)$. 
The fact that $\alpha$ is essentially zero while $\beta$ is finite is strong numerical evidence that the series \eqref{gseries} converges for $|k|>\tfrac{1}{2}$ and diverges for $|k|<\tfrac{1}{2}$.  Of course this threshold value of $|k|=\tfrac{1}{2}$ is the known exact value for the Lorentzian potential, but for the Gaussian $A(x,y)=\ee^{-(x^2+y^2)}$ the best analytical estimate we have is, as explained in Section~\ref{sec:Gaussian}, that the $\delbar$ derivative of \eqref{gseries} converges in $\mathrm{W}(\mathbb{R}^2)$ if $|k|\ge 1$.  In general, there is obviously nothing special about the value $|k|=\tfrac{1}{2}$; indeed if $\{c_n(r)\}_{n=0}^\infty$ are the coefficients for the potential $A(r)$, then  from \eqref{c0}--\eqref{cn} we see that $\{M^{2n+2}c_n(r)\}_{n=0}^\infty$ are the coefficients for the rescaled potential $MA(r)$ for any $M>0$, and it follows that if $|k|=\tfrac{1}{2}$ is the threshold value for $A(r)$, then $|k|=\tfrac{1}{2}M$ is the threshold value for $MA(r)$.  The coincidence of threshold values for the Gaussian and Lorentzian potentials is perhaps related to the fact that for both potentials $\|A^2\|=1$ in $\mathrm{W}(\mathbb{R}^2)$ as well as in $L^\infty(\mathbb{R}^2)$, as explained in Sections~\ref{sec:Gaussian} and \ref{sec:Lorentzian} respectively.
In any case, since it is known from the explictly-solvable Lorentzian case that upon decreasing $|k|$ through the convergence threshold singularities appear in the solution $f(x,y;k)$ of the eikonal problem \eqref{eq:eikonal}--\eqref{eq:f-norm} at certain points in the $(x,y)$-plane, we may reasonably conjecture that some kind of singularity formation for a critical value of $|k|$ is a generic feature at least for radial potentials.
\begin{figure}[htb!]
  \includegraphics[width=0.49\textwidth]{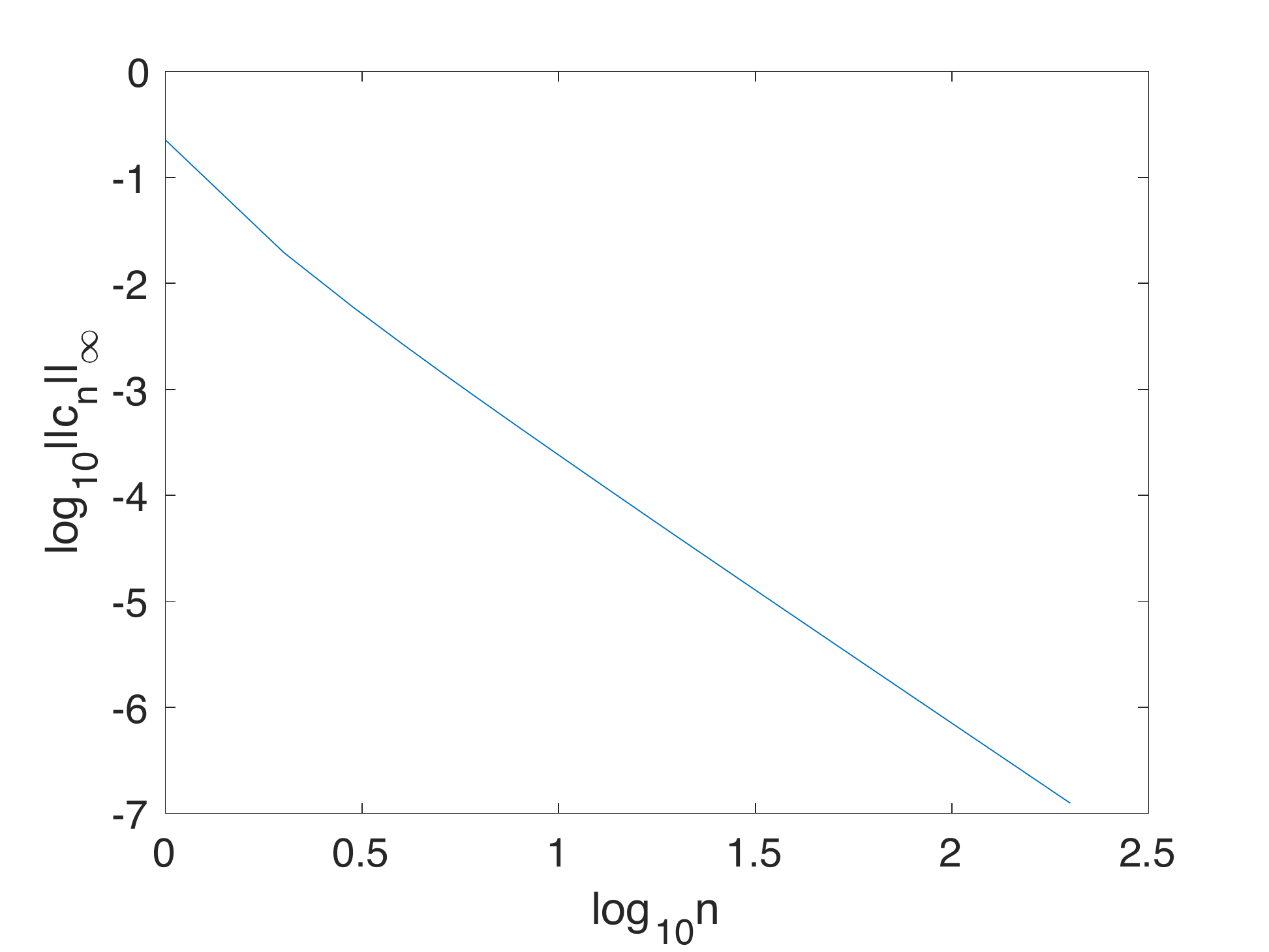}
  \includegraphics[width=0.49\textwidth]{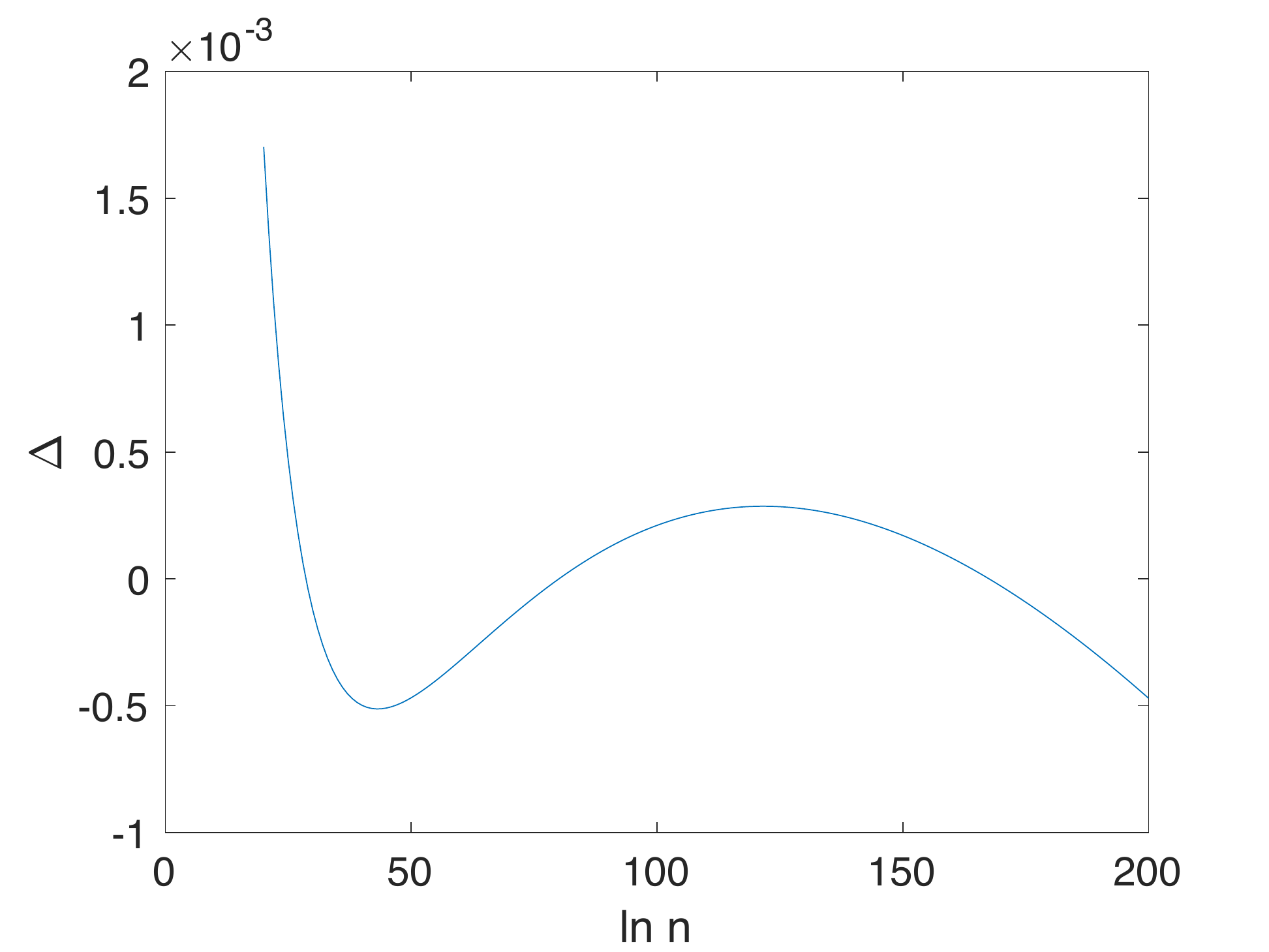}
 \caption{Left:  a log-log plot of the $L^\infty(\mathbb{R}_+)$ norms of the coefficients $\{c_{n}(r)\}$
 appearing in the series solution \eqref{gseries} of the 
 eikonal problem \eqref{eq:eikonal}--\eqref{eq:f-norm} for a Gaussian 
 potential on the left.  Right: the quantity $\Delta:=\ln 
    \|c_{n}(\cdot)\|_{\infty}-(-\alpha n-\beta \ln n - \gamma)$ obtained 
    after a linear least-squares regression.  }
 \label{gausskcoeff}
\end{figure}

To solve the Dirac system \eqref{eq:1.10} with normalization \eqref{eq:psi-norm} for the Gaussian potential 
for various values of $\epsilon$, we use the approach of 
Section~\ref{sec:numerical-dbar} with $N_{x}=N_{y}=2^{12}$ Fourier 
modes for $(x,y)\in 4[-\pi,\pi]\times 4[-\pi,\pi]$. The first
row of Figure~\ref{fig:conjecture} shows plots of the
modulus (scaled by $\ee^{-kz/\epsilon}$)  of 
the components of the solution obtained for $k=1$ and $\epsilon=1/16$. 

In order to compare solutions to the eikonal problem  
\eqref{eq:eikonal}--\eqref{eq:f-norm} for a given potential as well 
as the corresponding leading-order normalization function 
$\alpha_{0}$ to a solution to the $\epsilon$-dependent direct scattering problem  
\eqref{eq:1.10}--\eqref{eq:psi-norm}, we have to interpolate from the mixed 
Chebychev-Fourier (polar coordinate) grid used for $g$ and $\alpha_{0}$ to the two-dimensional 
Fourier (Cartesian) grid used for the computation of $\psi_{1}$ and $\psi_{2}$. 
There are efficient ways to do this. For simplicity we use here 
simply the definition of the spectral approximations of $g$ and 
$\alpha_{0}$. A function $f$ is approximated in each of the radial 
domains under consideration as 
\begin{equation}
    f(r,\phi)\approx 
    \sum_{n=0}^{N_{r}}\sum_{m=-N/2+1}^{N/2}c_{nm}T_{n}(l)e^{\mathrm{i}m\phi}.
    \label{cnm}
\end{equation}
Thus for given spectral coefficients $c_{nm}$, the corresponding 
function can be computed for arbitrary values of $r$ and $\phi$. For 
the Gaussian potential, the solution to the eikonal problem 
\eqref{eq:eikonal}--\eqref{eq:f-norm} can be seen after  
interpolation to a Cartesian grid in Figure~\ref{Fig:ggauss}.
\begin{figure}[htb!]
  \includegraphics[width=0.49\textwidth]{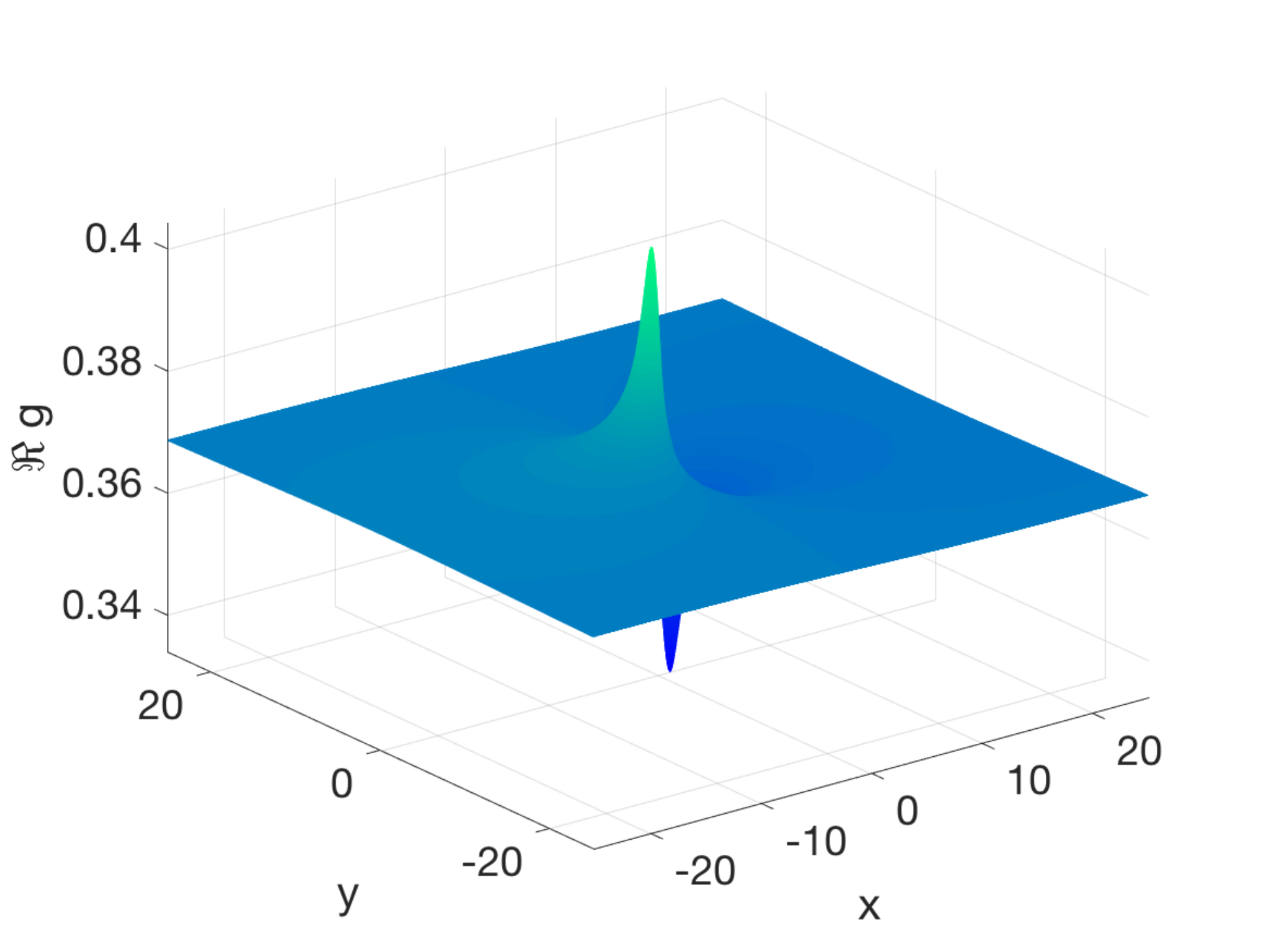}
  \includegraphics[width=0.49\textwidth]{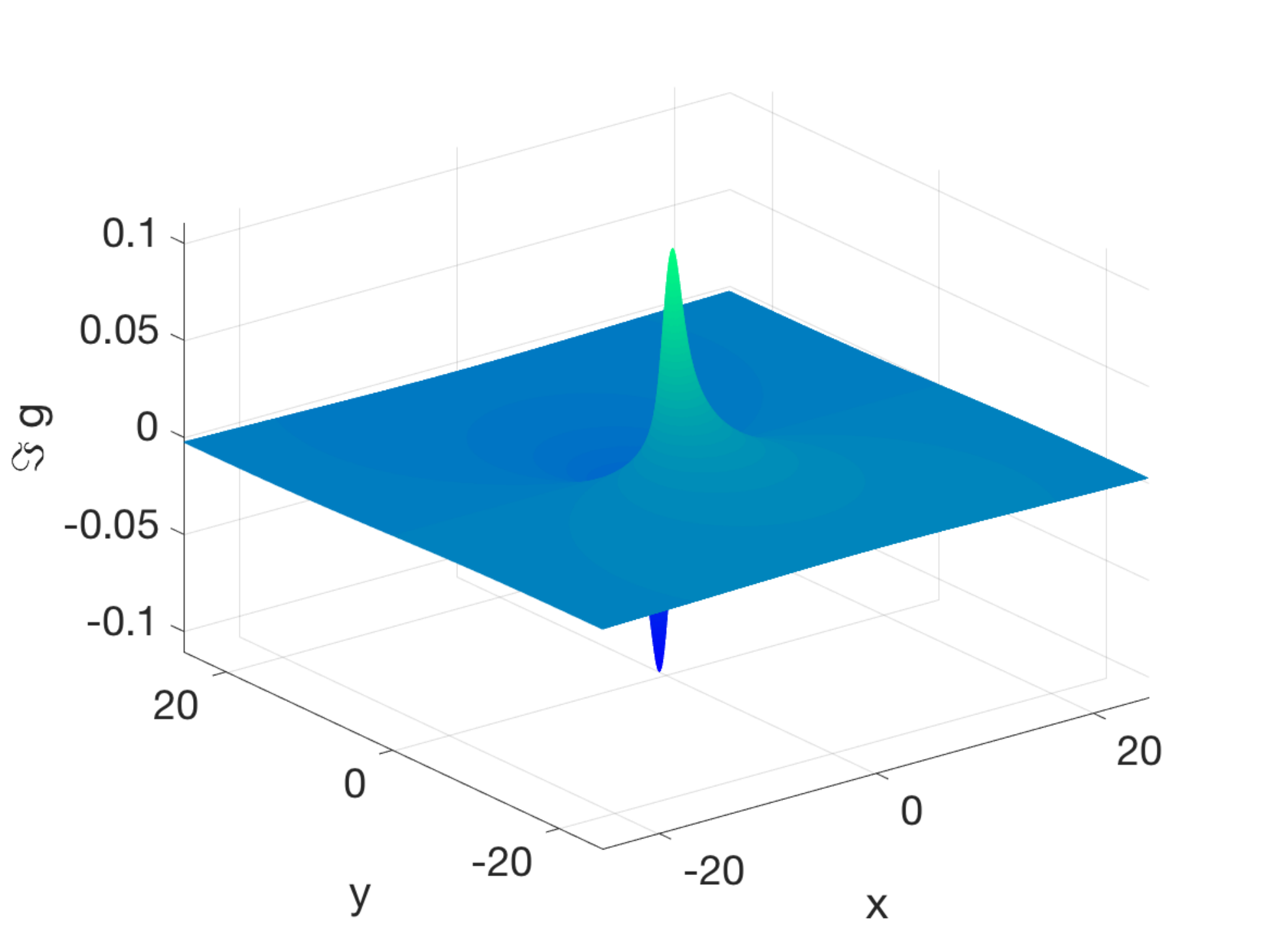}
 \caption{Solution to the 
 eikonal problem \eqref{eq:eikonal}--\eqref{eq:f-norm} for the Gaussian 
 potential with $k=1$.  Left: real part, right: imaginary part. }
 \label{Fig:ggauss}
\end{figure}
The corresponding Cartesian interpolation of the leading order normalization function $\alpha_{0}$ 
can be seen in Figure~\ref{Fig:alphagauss}.
\begin{figure}[htb!]
  \includegraphics[width=0.49\textwidth]{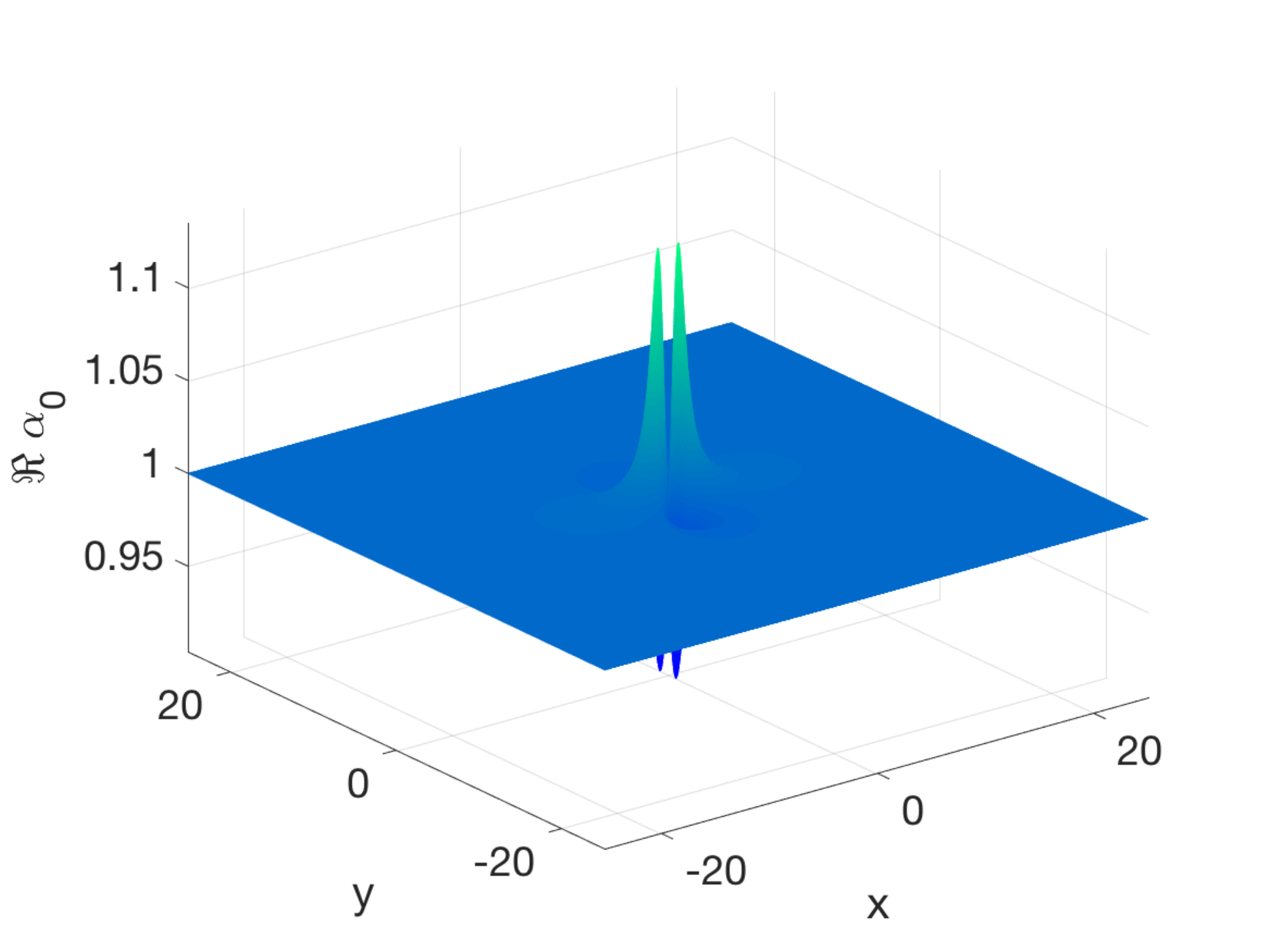}
  \includegraphics[width=0.49\textwidth]{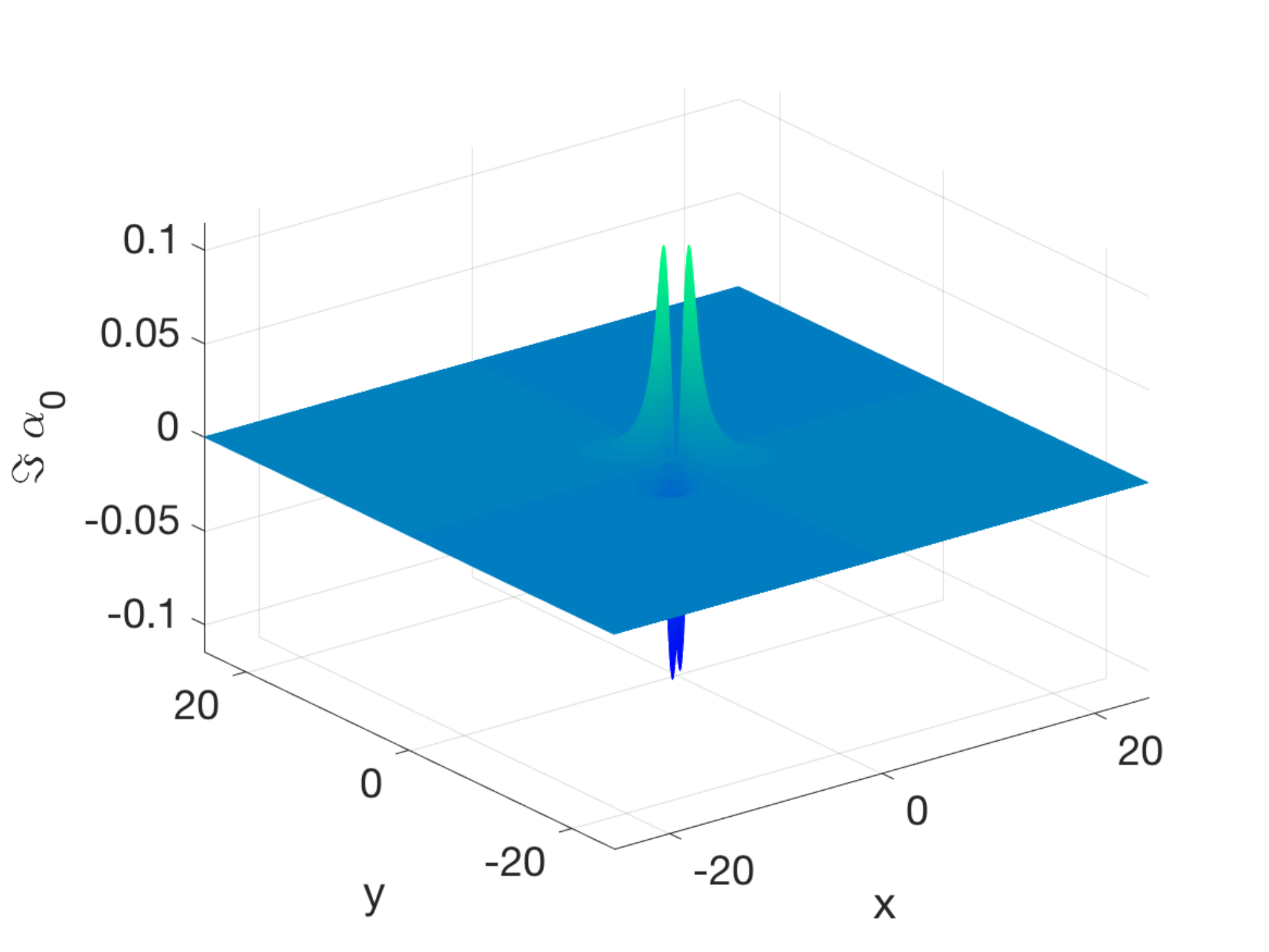}
 \caption{Leading order normalization function $\alpha_{0}$ for the Gaussian 
 potential with $k=1$.  Left: real part, right: imaginary part. }
 \label{Fig:alphagauss}
\end{figure}

With the numerical computations of $f$ and $\alpha_0$ complete, we may construct the leading term of the formal WKB approximation described in Section~\ref{sec:WKB} for the solution of the direct scattering problem \eqref{eq:1.10}--\eqref{eq:psi-norm}.  We now are in a position to compare this approximation with numerically-computed solutions to the direct scattering problem 
obtained as described in Section~\ref{sec:numerical-dbar}.  To quantify the comparison, we use \eqref{eq:conjecture-equation} for $S(x,y)\equiv 0$ to define 
the 
quantities
\begin{equation}
    \Delta_{1}:=\left|\psi_{1}\ee^{-f/\epsilon}-\frac{\alpha_0}{k}\del f\right|
    \label{D1}
\end{equation}
and 
\begin{equation}
    \Delta_{2}:=\left|\psi_{2}\ee^{-f/\epsilon}-\frac{\alpha_0 
    A}{2k}\right|.
    \label{D2}
\end{equation}
Conjecture~\ref{conjecture:WKB} asserts that both of these quantities should be proportional to $\epsilon$ as $\epsilon\downarrow 0$. For the Gaussian potential at $k=1$ we plot 
$\Delta_{1}$ and $\Delta_2$ for four values of $\epsilon$ in Figures~\ref{Delta1} and \ref{Delta2} respectively.
\begin{figure}[htb!]
  \includegraphics[width=\textwidth]{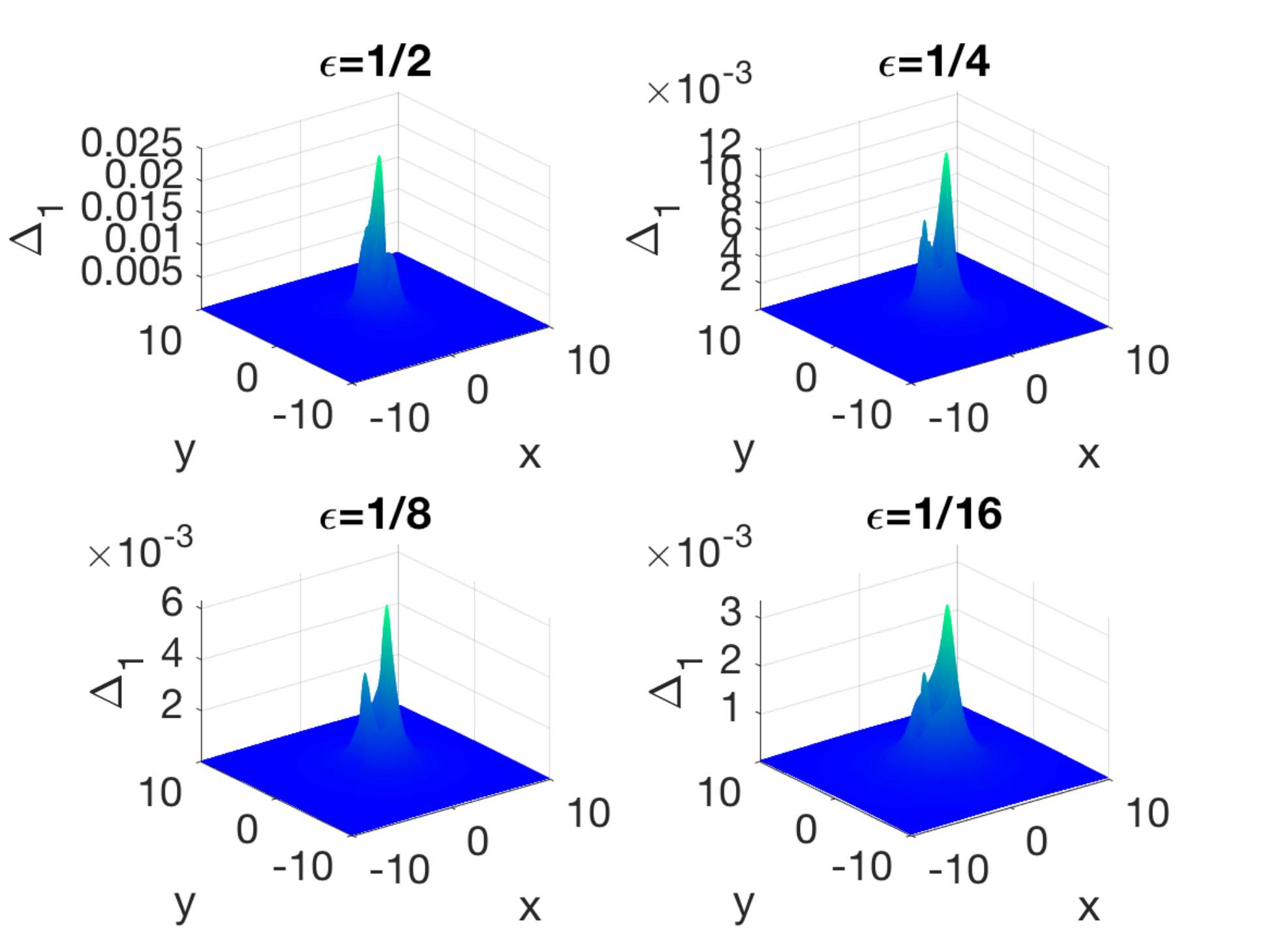}
 \caption{The quantity $\Delta_{1}$ of \eqref{D1} for the Gaussian 
 potential with $k=1$ for $\epsilon=1/2,1/4,1/8,1/16$.}
 \label{Delta1}
\end{figure}
\begin{figure}[htb!]
  \includegraphics[width=\textwidth]{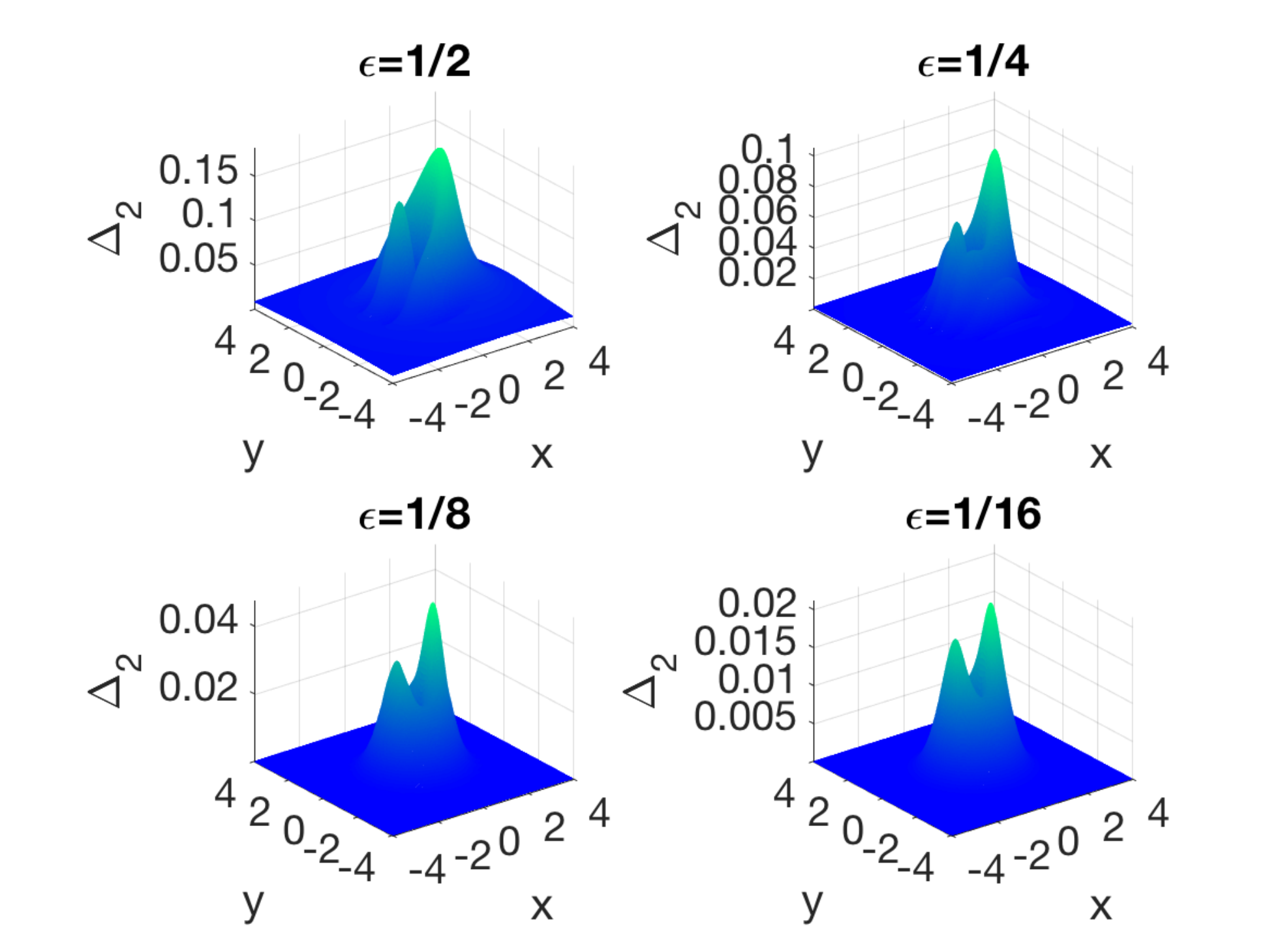}
 \caption{The quantity $\Delta_{2}$ of \eqref{D2} for the Gaussian 
 potential with $k=1$ for $\epsilon=1/2,1/4,1/8,1/16$. }
 \label{Delta2}
\end{figure}

The expected scaling in 
$\epsilon$ can indeed be seen from these plots, but it is even more obvious from
the results of a linear regression to determine the best fit to the logarithms of the  
$L^{\infty}$ norms of $\Delta_{1}$ and $\Delta_2$ as functions of $\ln(\epsilon)$ as is shown 
in Figure~\ref{Deltareg}. The data for the regression is calculated  
for the values $\epsilon=2^{0},2^{-1},\ldots,2^{-5}$, although we should keep in mind that for the larger values of $\epsilon$, accuracy of the WKB approximation might not be expected.
On the serial computers we used for our numerical simulations, we cannot go much lower 
than $\epsilon=0.04$ for lack of resolution.  The precise results of the linear regression are as follows.  In the left panel of 
Figure~\ref{Deltareg}, it can be seen that 
$\log_{10}\|\Delta_{1}\|_{\infty}\sim \alpha \log_{10}\epsilon+\beta$ 
with $\alpha=0.99$ and $\beta=-1.24$. In the same way we get for 
$\Delta_{2}$ the values $\alpha=0.99$ and $\beta=-0.46$ as can be 
seen in the right panel of Figure~\ref{Deltareg}. \emph{Thus in both cases 
the expected linear dependence in $\epsilon$ predicted by Conjecture~\ref{conjecture:WKB} is numerically confirmed.}
\begin{figure}[htb!]
  \includegraphics[width=0.49\textwidth]{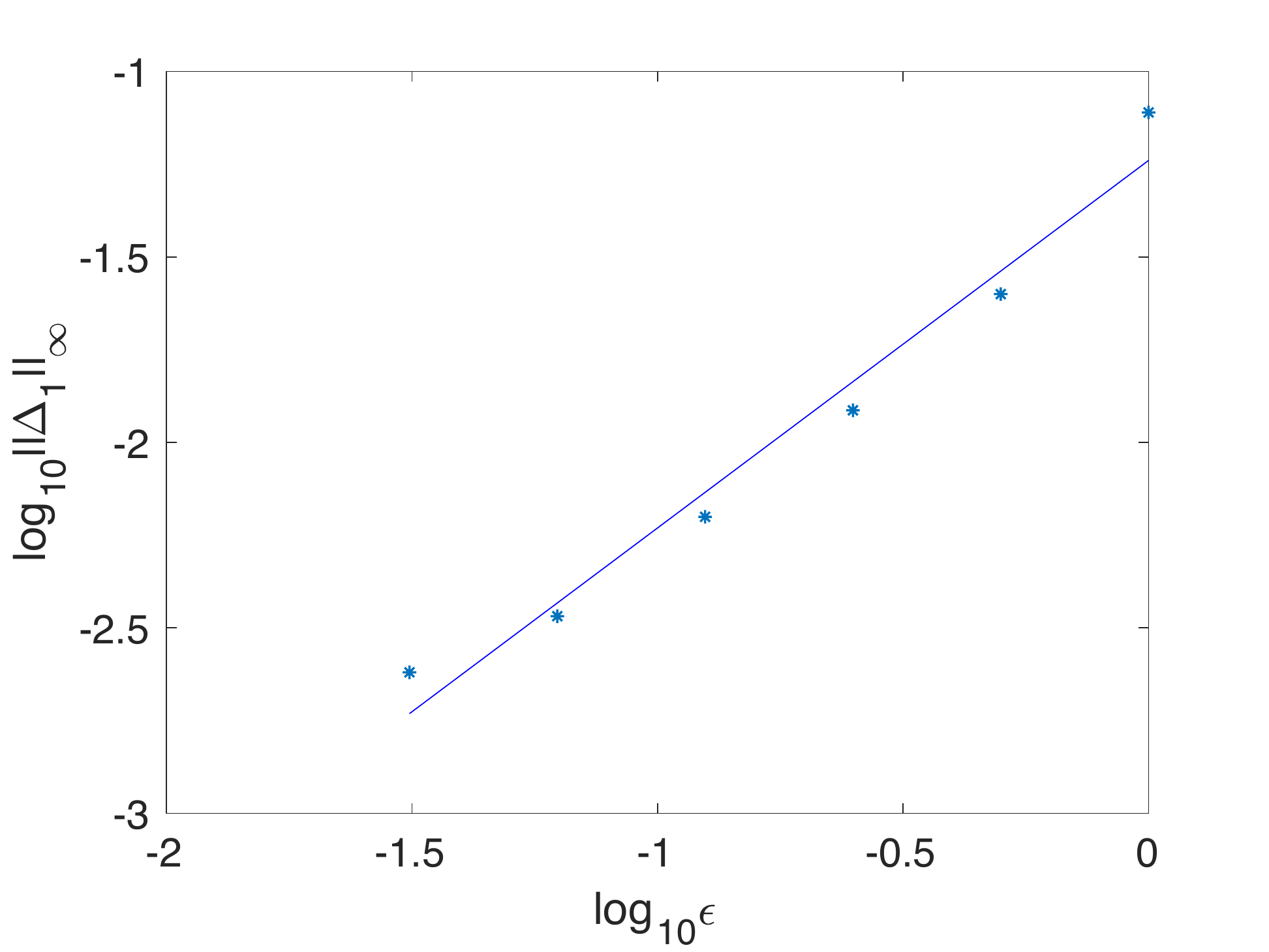}
  \includegraphics[width=0.49\textwidth]{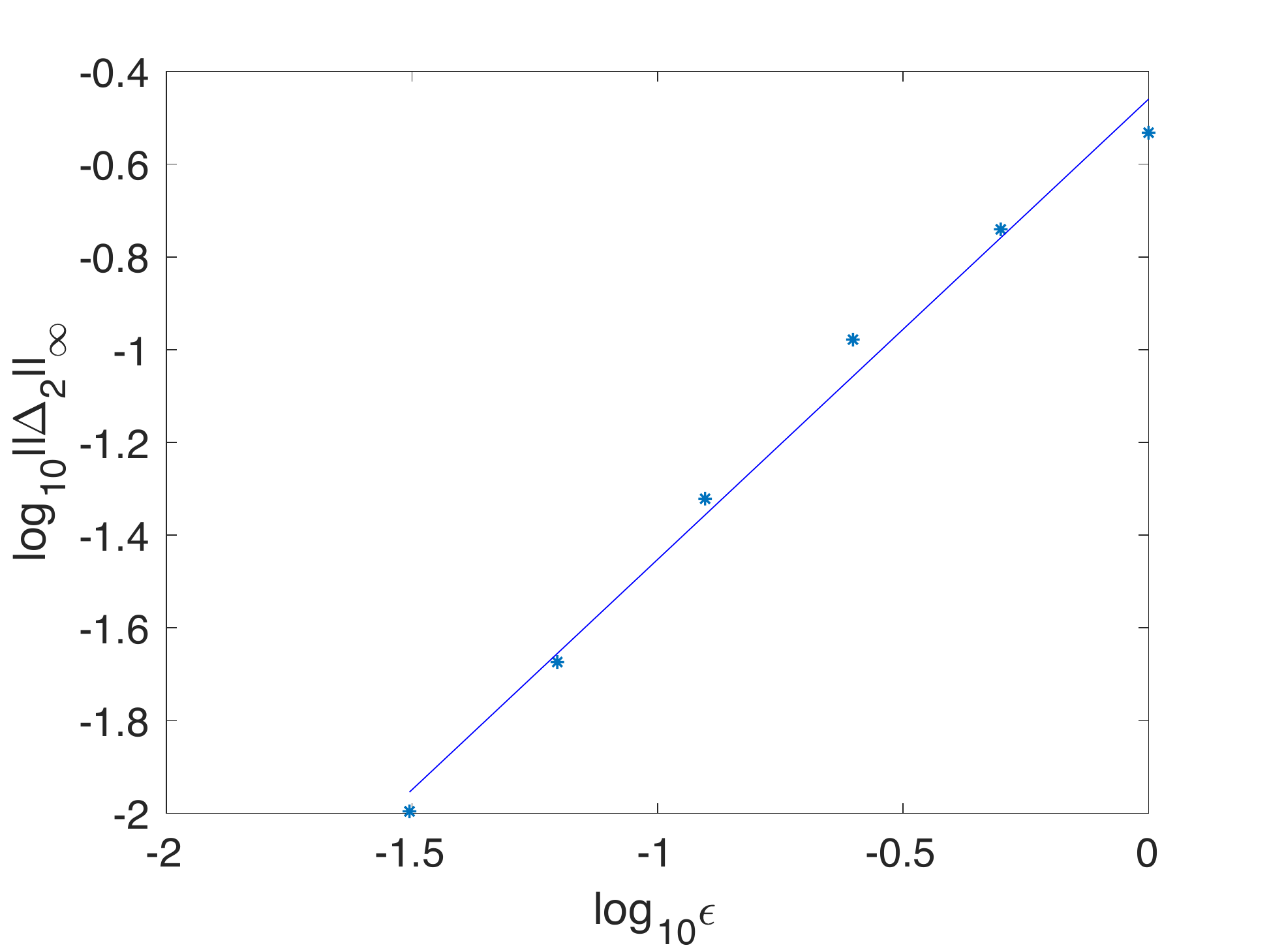}
 \caption{Dependence of $\|\Delta_{1}\|_{\infty}$ (left panel) and $\|\Delta_{2}\|_{\infty}$ (right panel) on $\epsilon$ for $k=1$, together with the result of linear least-squares regression for the logarithms. }
 \label{Deltareg}
\end{figure}

To show that the good agreement between numerics and conjecture is 
not due to a special choice of the spectral parameter $k$, we make similar
plots as shown in Figure~\ref{Deltareg} for two more values of $k$.
The upper and lower rows of Figure~\ref{Deltareg2k} correspond to $k=0.75$ and $k=1.25$ respectively.
(Note that the solution of the 
eikonal problem is expected to become singular for sufficiently small $k$.)
Even though for $\epsilon$ as large as $\epsilon=1$, $\Delta_1$ might not be expected to be small, still the regression line taking the corresponding data into account has the slope $0.97$. For $\Delta_{2}$ the slope of the 
regression line is $0.99$. For $k=1.25$ we find that the slope of the 
line for $\Delta_{1}$ is $0.88$, and for $\Delta_{2}$ it is $1.03$. 
Thus in all cases the results are compatible with the expected $O(\epsilon)$
scaling.  The slopes (exponents) obtained from regression would be expected to be even closer to $1$ if numerical simulations for smaller values of $\epsilon$ were performed; however such experiments are out of reach for the serial computer we used for our simulations. 
\begin{figure}[htb!]
  \includegraphics[width=0.49\textwidth]{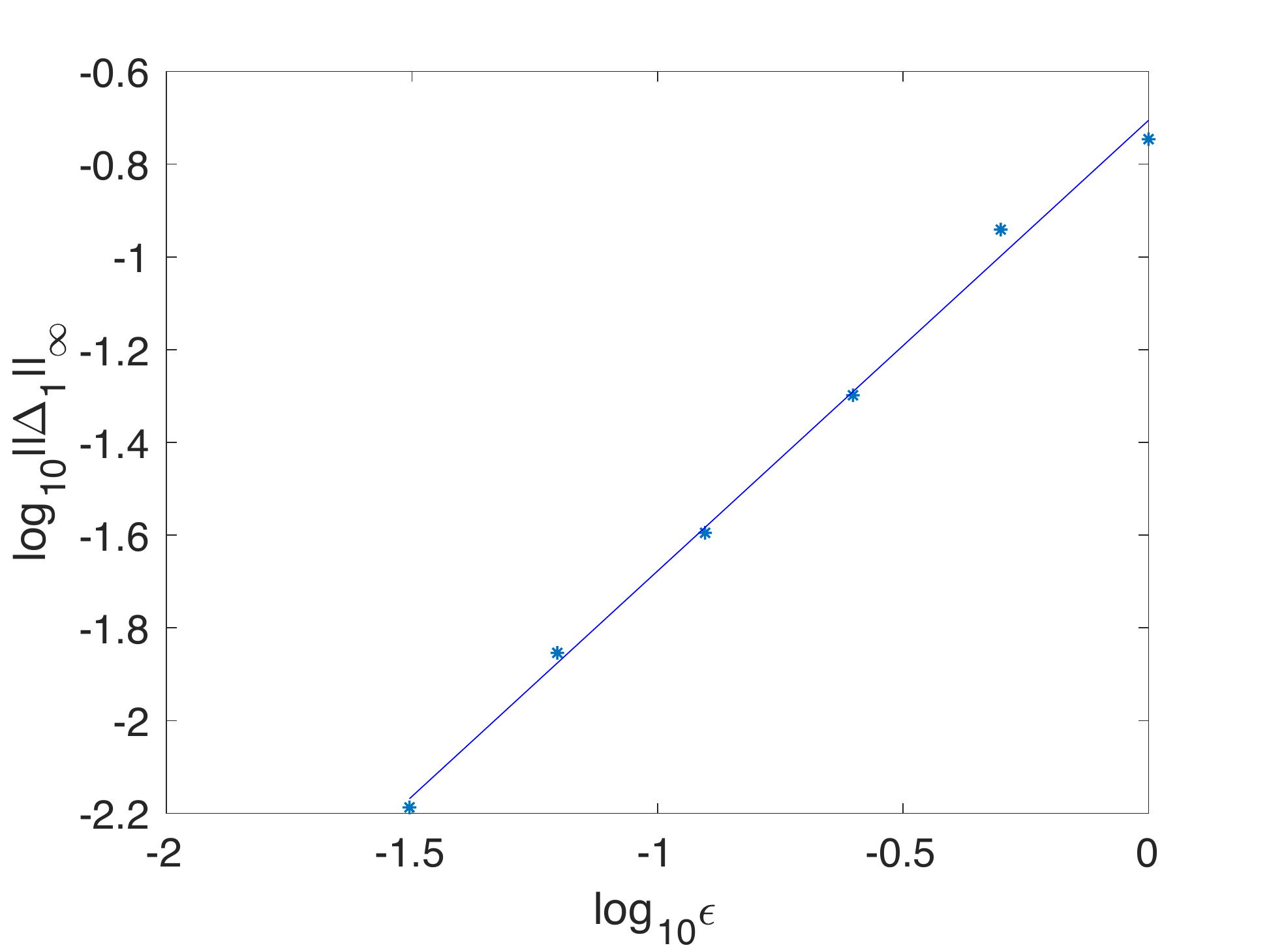}
  \includegraphics[width=0.49\textwidth]{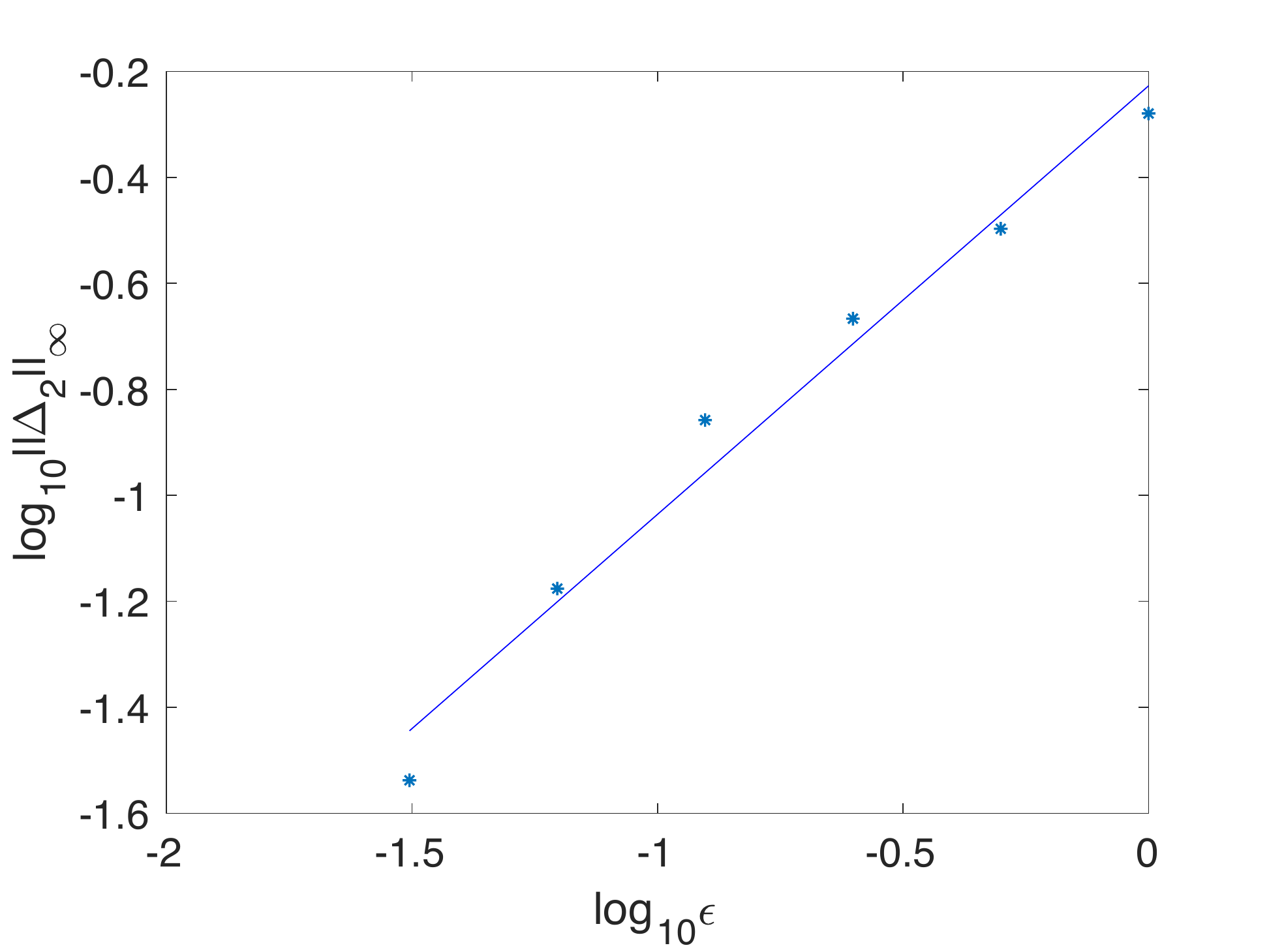}\\
    \includegraphics[width=0.49\textwidth]{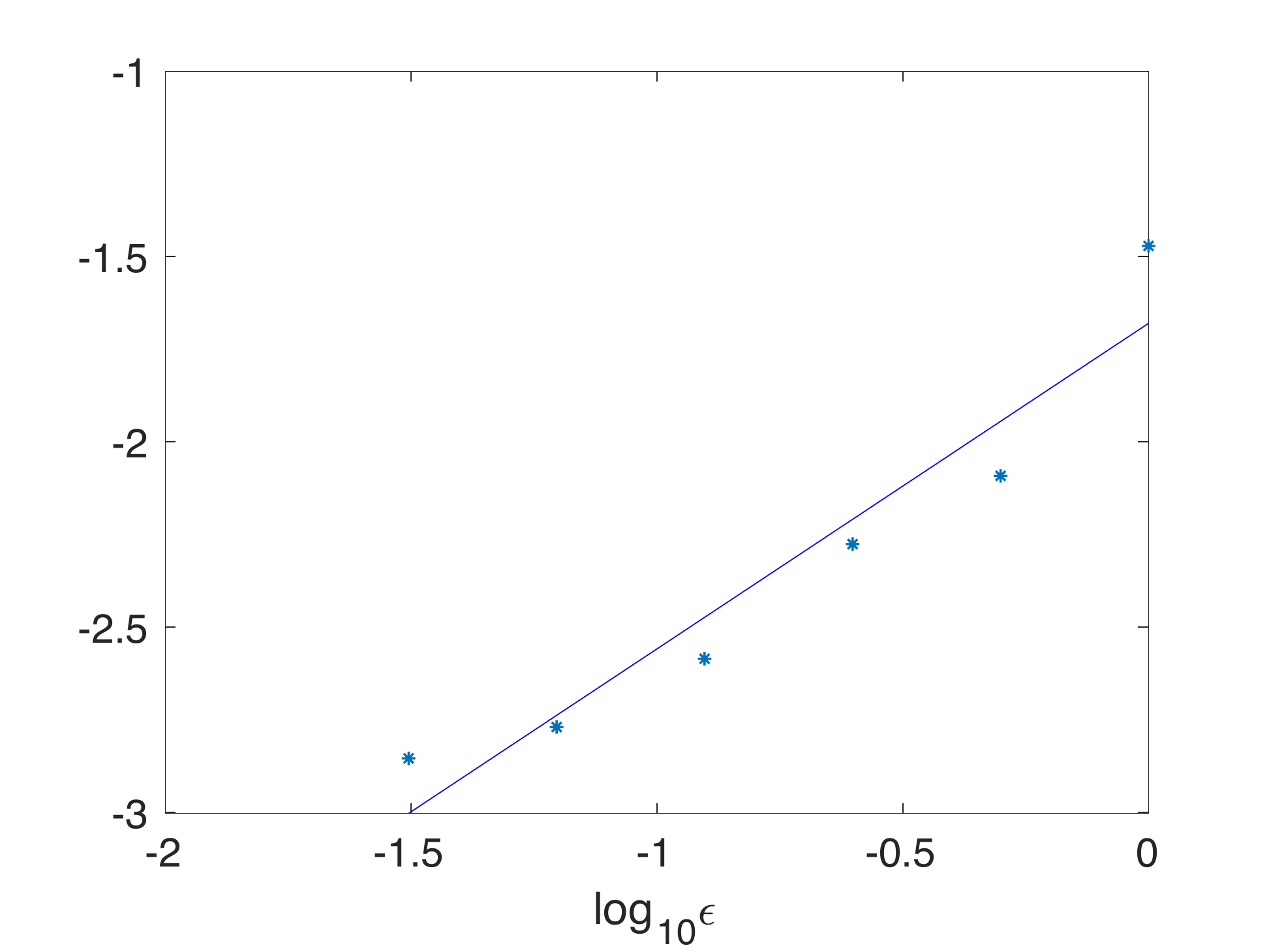}
  \includegraphics[width=0.49\textwidth]{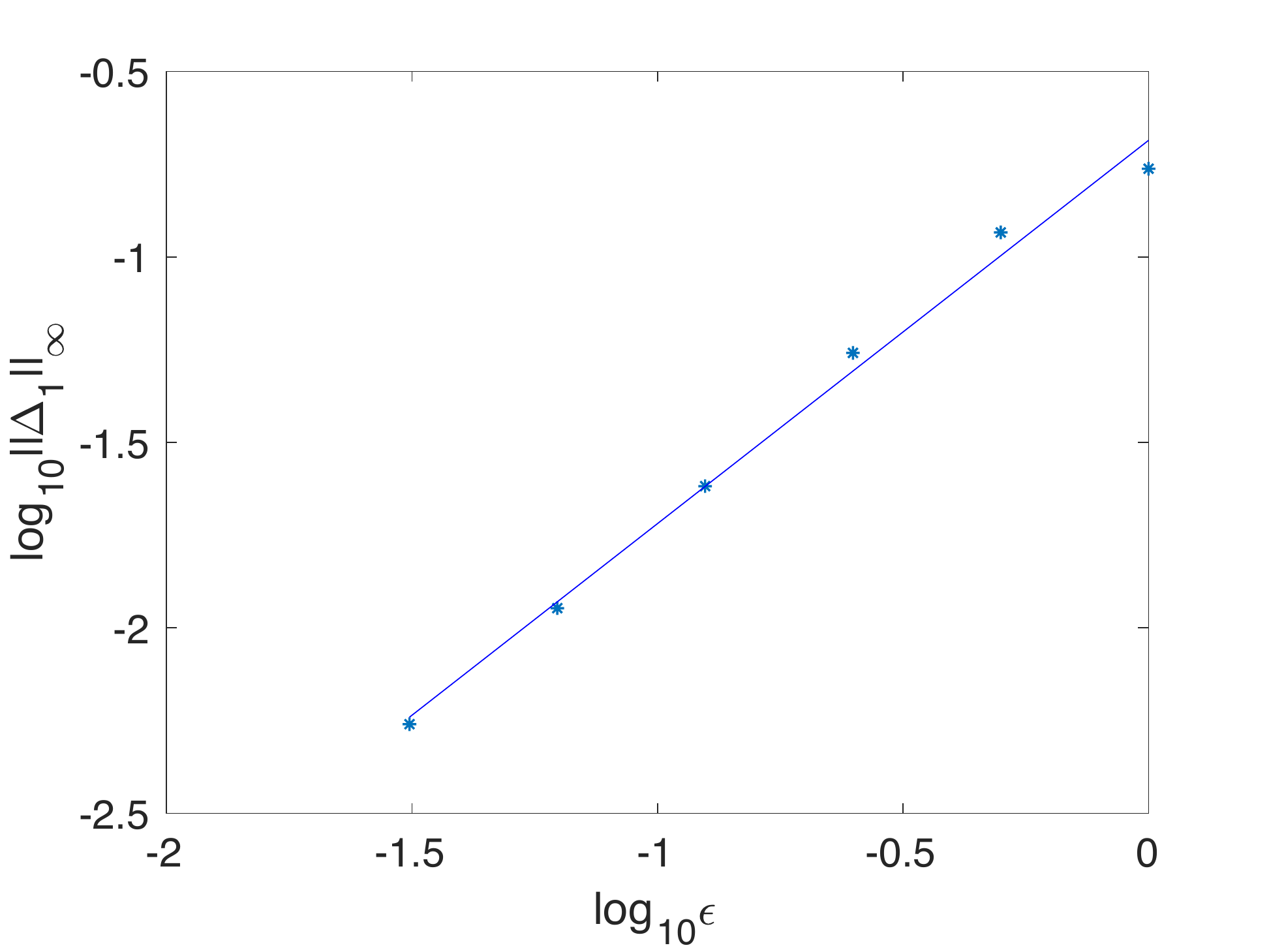}
 \caption{Dependence of $\|\Delta_{1}\|_{\infty}$ (left column) and $\|\Delta_2\|_\infty$ (right column) on $\epsilon$ together with the results of  
 linear least-squares regression for the logarithms; upper row for $k=0.75$, lower 
 row for $k=1.25$. }
 \label{Deltareg2k}
\end{figure}

\begin{myremark} A comparison between the WKB approximation and the numerical solution of the direct scattering problem can be made only if the eikonal problem has a global solution, hence allowing the construction of the WKB approximation globally in the $(x,y)$-plane.  According to Theorem~\ref{theorem:eikonal-solve}, this is guaranteed for $|k|$ sufficiently large.  The lower bound on $|k|$ sufficient to guarantee a global solution is given in \eqref{eq:mod-k-lower-bound}.  In Section~\ref{sec:k-large} it is shown that for phase-free potentials ($S(x,y)\equiv 0$) the lower bound \eqref{eq:mod-k-lower-bound} can be optimized by choice of the constant $B$ to $|k|>\sqrt{\|A^2\|_\mathrm{W}}$, and in Section~\ref{sec:Gaussian} the lower bound is calculated for the Gaussian potential to be $|k|>1$.  
%
However, even the optimized lower bound is only a sufficient condition for the global solvability of the eikonal problem \eqref{eq:eikonal}--\eqref{eq:f-norm}.  Since the hypotheses of Conjecture~\ref{conjecture:WKB} only refer to the existence of a global solution of \eqref{eq:eikonal}--\eqref{eq:f-norm}, we chose in our study to deal with values of $k$ for which the eikonal problem can be solved numerically (which as pointed out above appears to be possible for $|k|$ larger than $\tfrac{1}{2}$), even if those values lie on or within the optimal radius $|k|=1$ for Theorem~\ref{theorem:eikonal-solve} to make a theoretical prediction about the eikonal problem.
\end{myremark}

\subsection{Potential without radial symmetry}
Next we consider the numerical solution of the eikonal problem \eqref{eq:eikonal}--\eqref{eq:f-norm} and coincident construction of the leading-order WKB approximation together with the numerical solution of the $\epsilon$-dependent direct scattering problem \eqref{eq:1.10}--\eqref{eq:psi-norm} in the case of a phase-free potential ($S\equiv 0$) and an amplitude $A(x,y)$ in the class 
of rapidly decaying smooth functions, but now \emph{without radial 
symmetry even asymptotically for large $|z|$}. Concretely, we consider the potential
\begin{equation}
    A(x,y)=\ee^{-(x^{2}+5y^{2}+3xy)}.
    \label{nonsym}
\end{equation}

To solve the Dirac system \eqref{eq:1.10}--\eqref{eq:psi-norm} for the  potential 
\eqref{nonsym}
for various values of $\epsilon$, we once more use the approach of 
Section~\ref{sec:numerical-dbar} with $N_{x}=N_{y}=2^{12}$ Fourier 
modes for $(x,y)\in 4[-\pi,\pi]\times 4[-\pi,\pi]$. The modulus of 
the solutions obtained for $k=1$ and $\epsilon=1/16$ can be seen 
in Figure~\ref{psi12nr}. 
\begin{figure}[htb!]
  \includegraphics[width=0.49\textwidth]{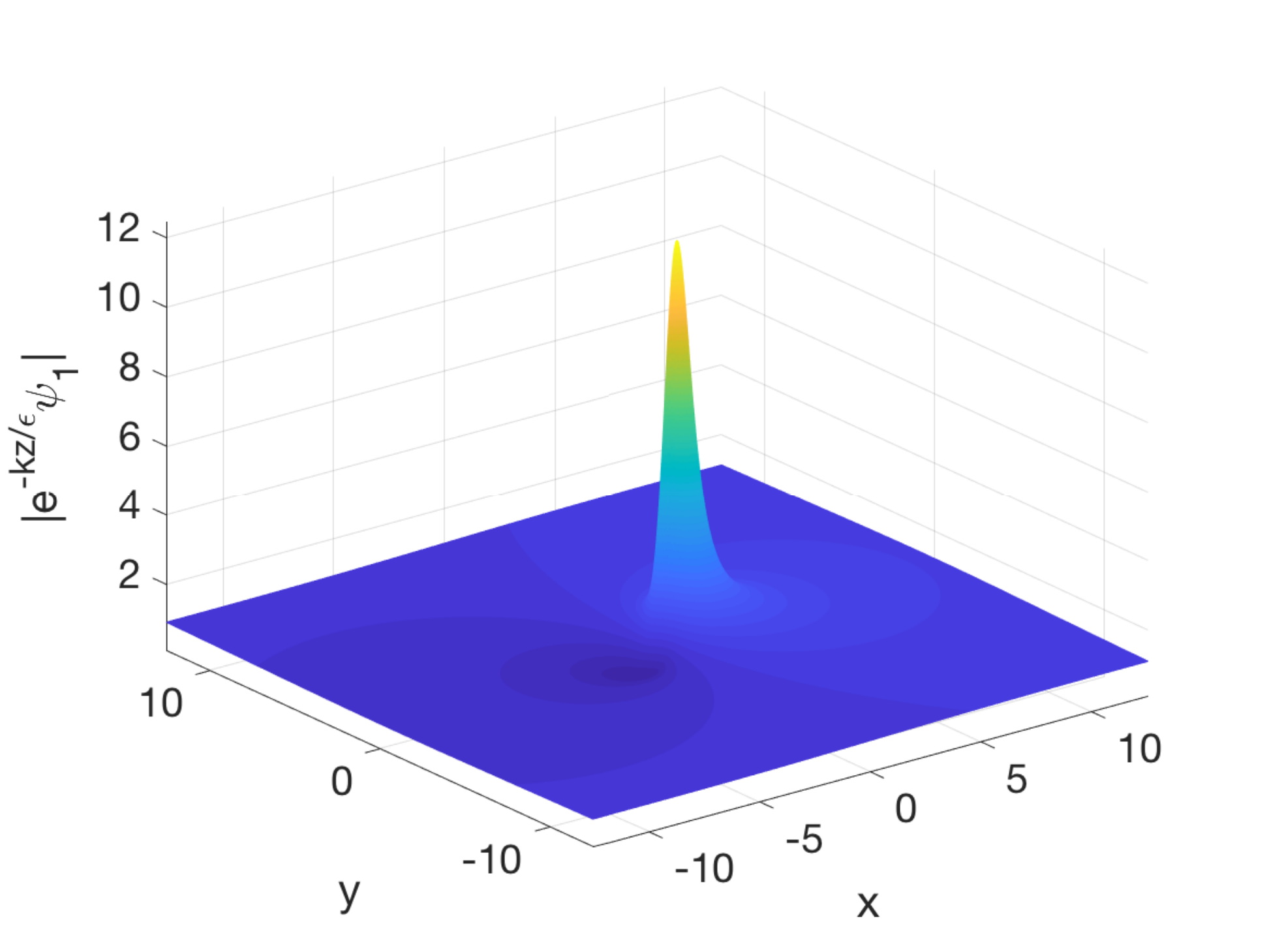}
  \includegraphics[width=0.49\textwidth]{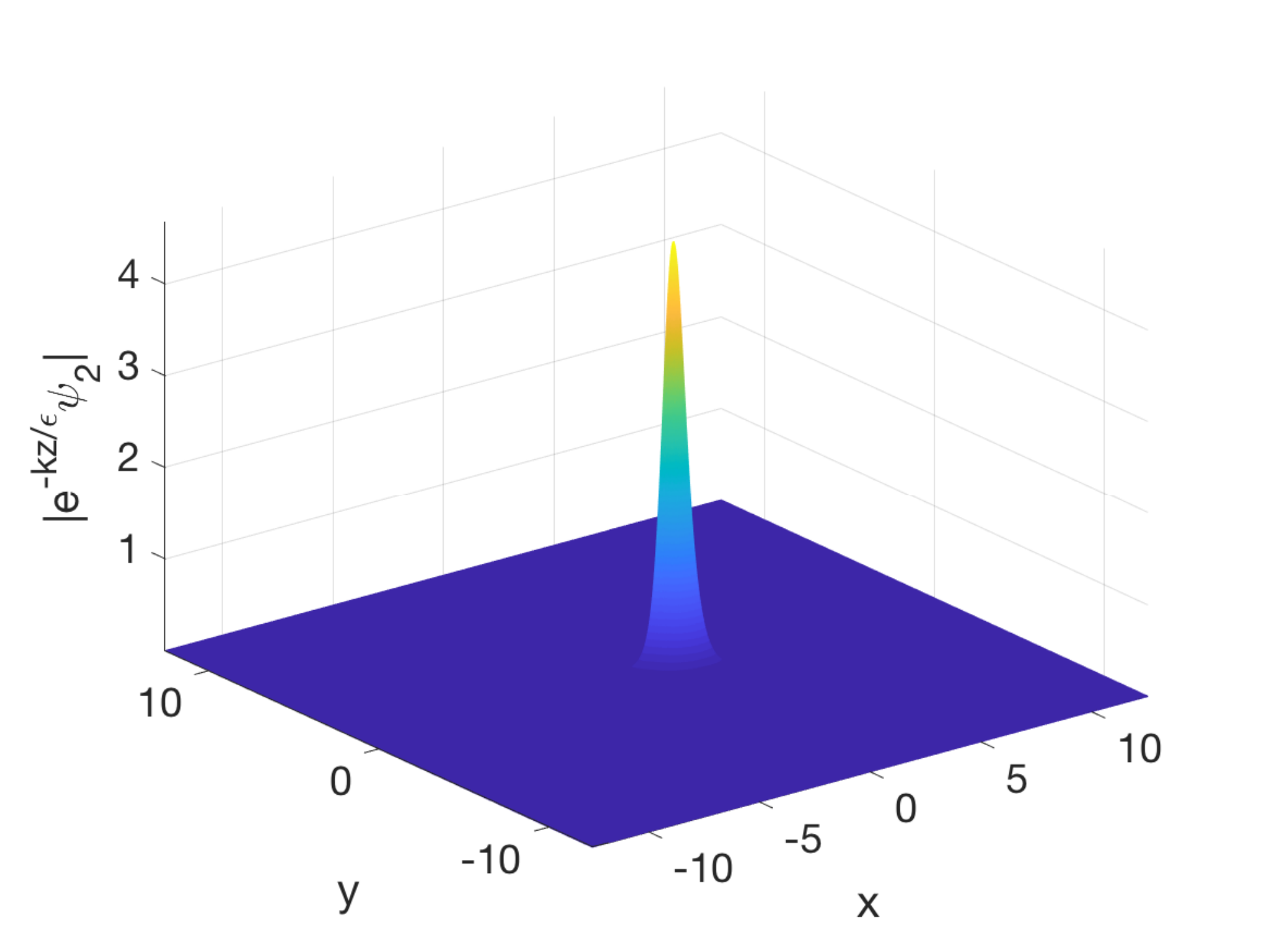}
 \caption{
    Numerical solutions to the Dirac system \eqref{eq:1.10}--\eqref{eq:psi-norm} with  
 potential \eqref{nonsym} for $k=1$ and $\epsilon=1/16$. Left:  the modulus 
 of $\ee^{-kz/\epsilon}\psi_{1}$.  Right: the modulus 
 of $\ee^{-kz/\epsilon}\psi_{2}$.}
 \label{psi12nr}
\end{figure}

Since the potential \eqref{nonsym} is not radially symmetric, the numerical series-based approach described in Section~\ref{sec:numerical-Fourier-series} does not apply, so we must use instead an iterative approach to the eikonal problem as described in Section~\ref{sec:eikonal-iterative}, and it turns out that we will also
need higher resolution in $\phi$ than for radially-symmetric potentials to effectively solve for $g=f-kz$.  We use $N_{r}=64$ Chebychev polynomials and $N_{\phi}=128$ Fourier modes for the 
case $k=1$. The real and imaginary part of the function 
$g(x,y;k)=f(x,y;k)-kz$ are plotted  in the left and right panels of Figure~\ref{gnr} respectively.
\begin{figure}[htb!]
  \includegraphics[width=0.49\textwidth]{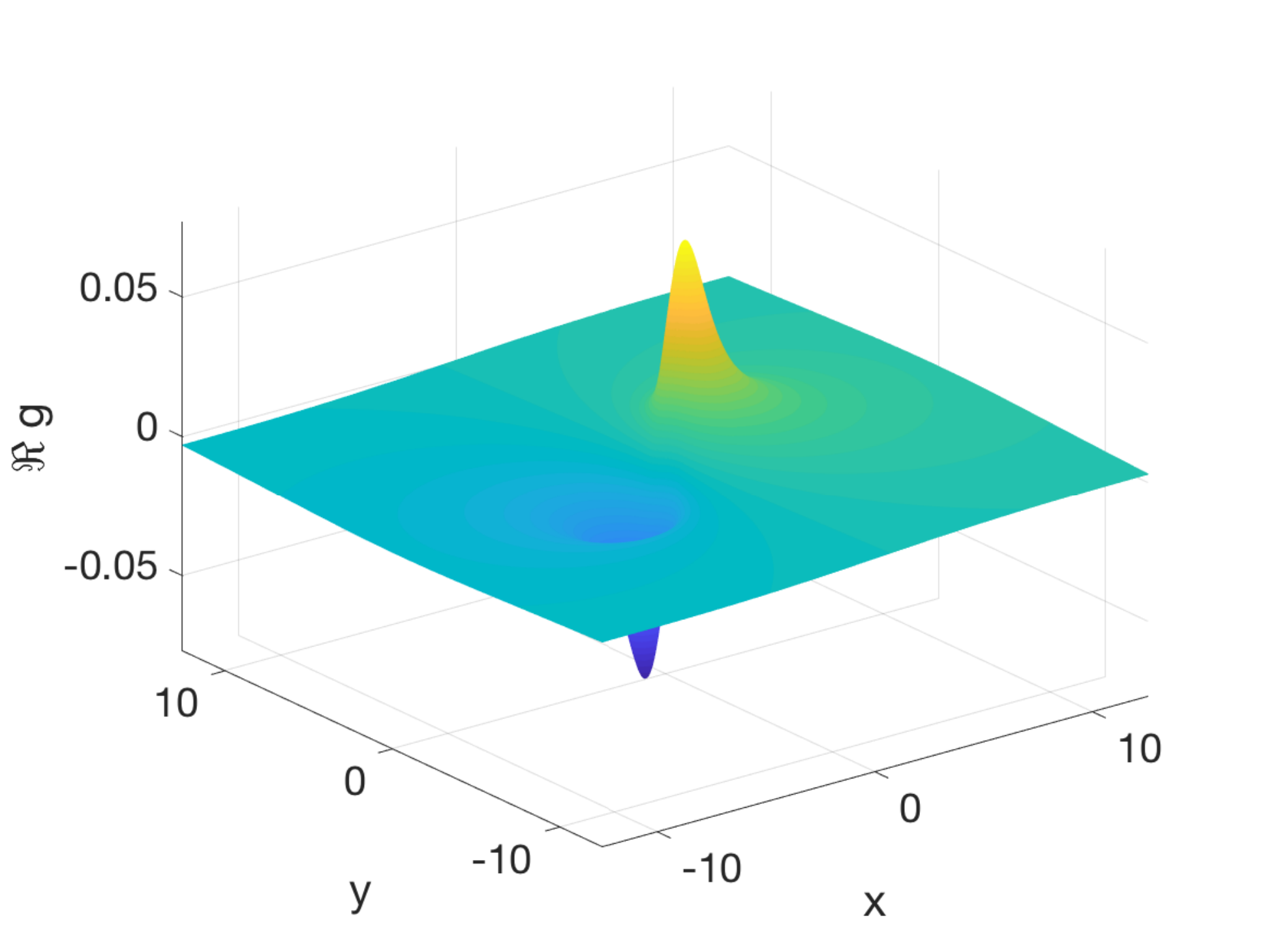}
  \includegraphics[width=0.49\textwidth]{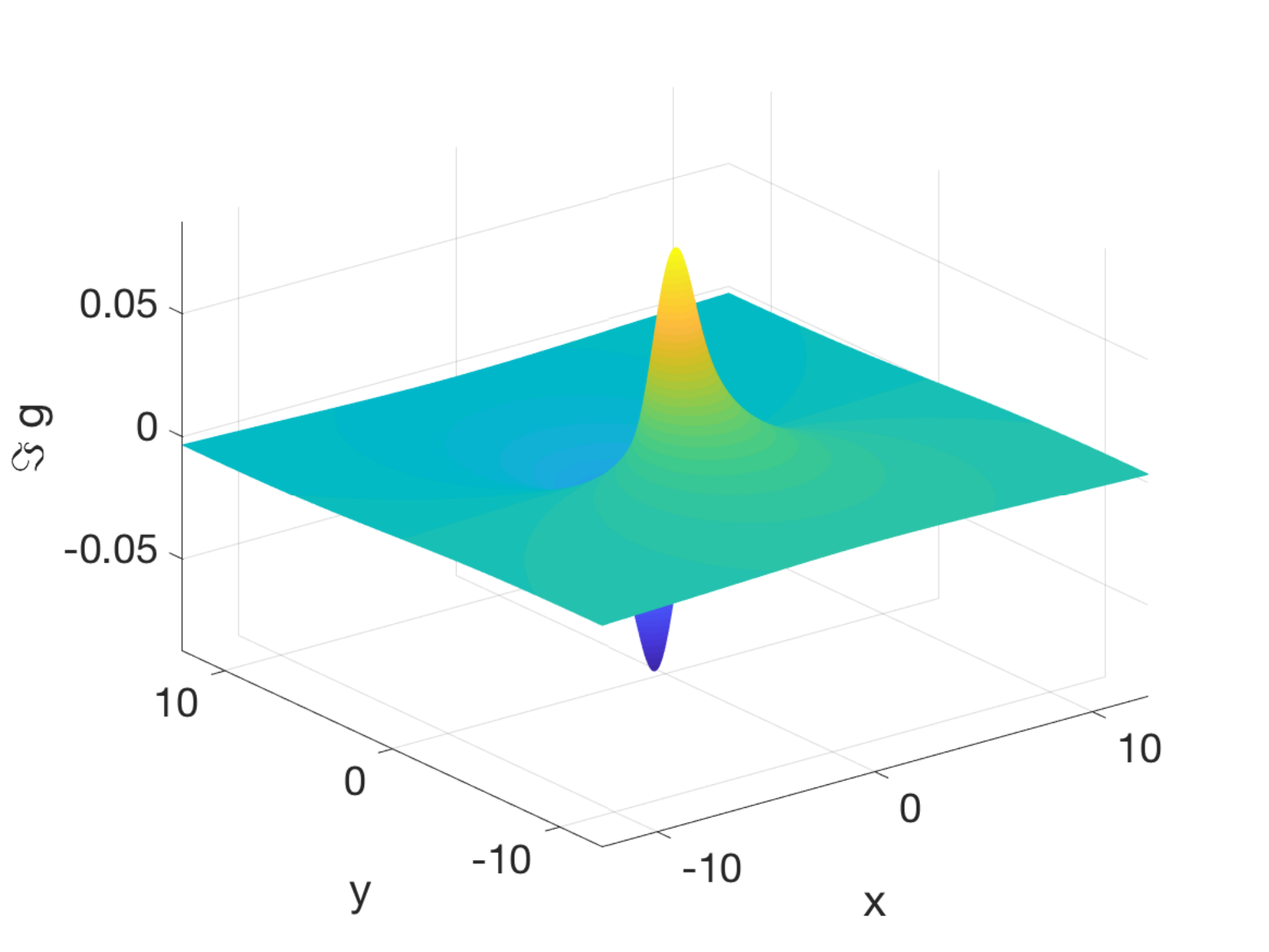}
 \caption{Numerical solution $g=f-kz$ to the eikonal problem \eqref{eq:eikonal}--\eqref{eq:f-norm} with $k=1$ for the 
 potential \eqref{nonsym} with $S\equiv 0$.  Left:  $r\leq1$. Right:
 $r\geq1$.}
 \label{gnr}
\end{figure}
The corresponding spectral coefficients $r\leq1$ and $r\geq1$ are shown in Figure~\ref{gaussasymk1coeff}, indicating that the 
solution is well resolved. 
\begin{figure}[htb!]
  \includegraphics[width=0.49\textwidth]{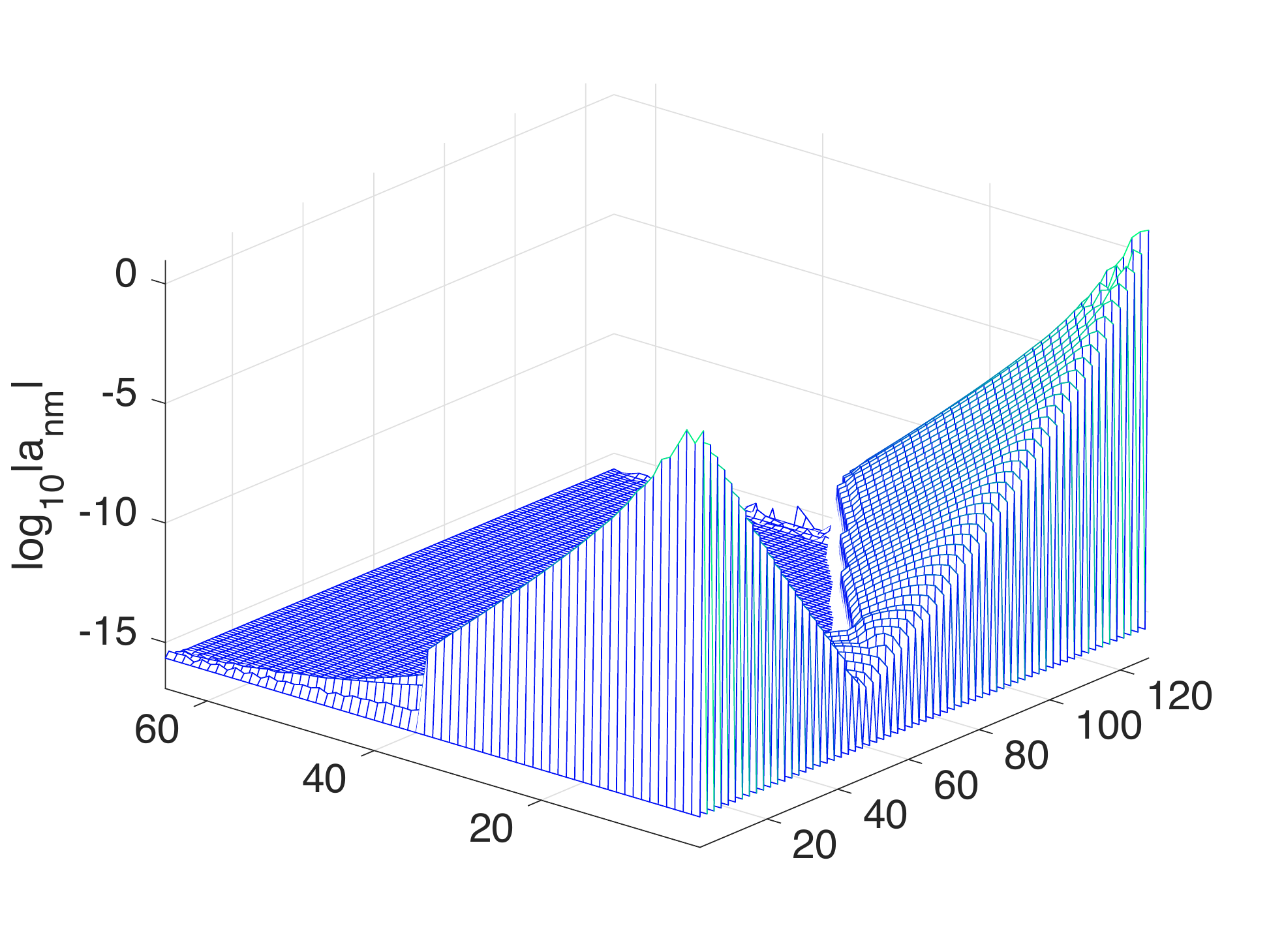}
  \includegraphics[width=0.49\textwidth]{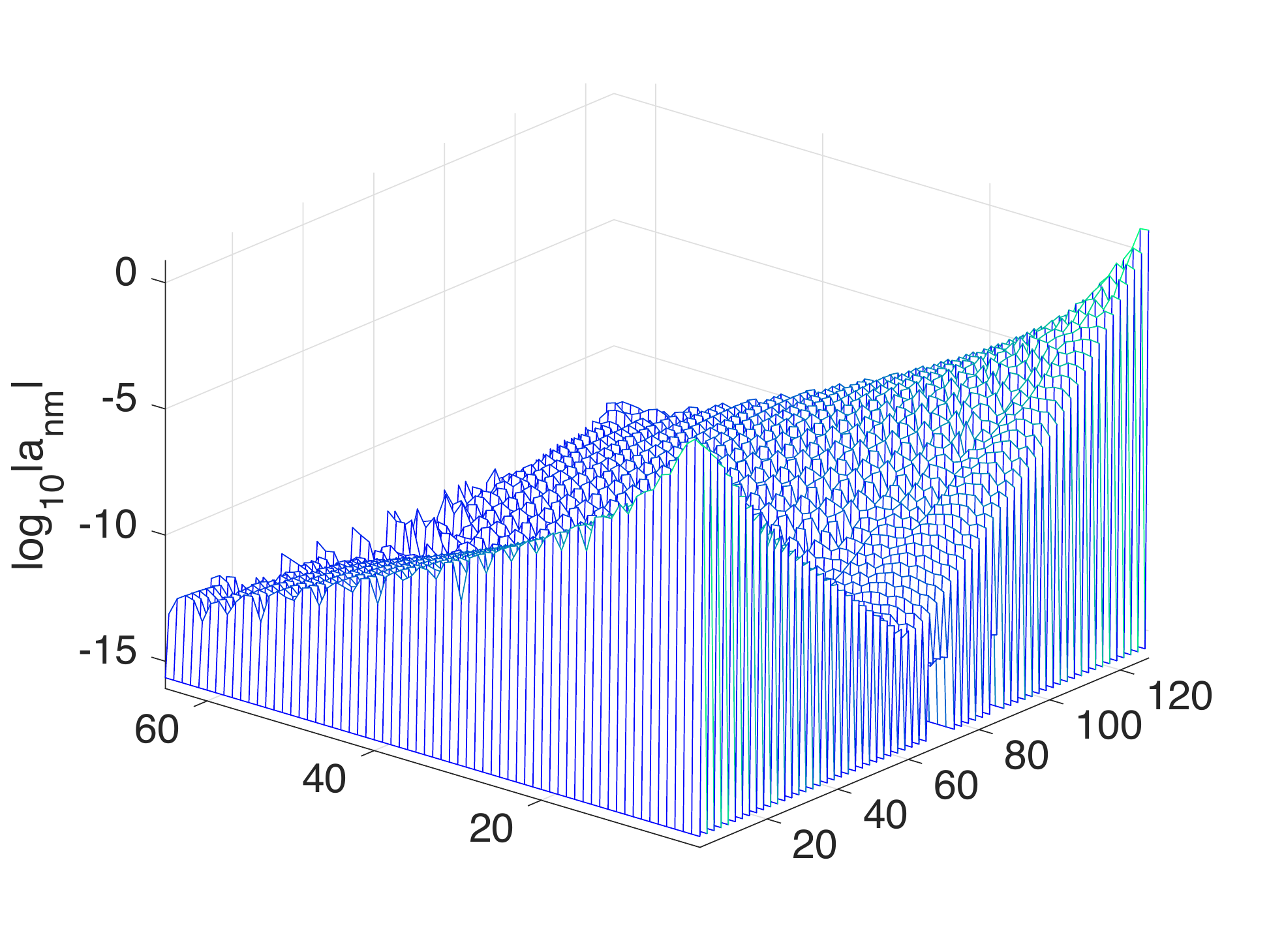}
 \caption{Chebychev and Fourier spectral coefficients plotted over the $(n,m)$-grid for the 
 solution shown in Figure~\ref{gnr}. Left:  the coefficients for $r\leq1$. Right: 
 the coefficients for $r\geq1$.}
 \label{gaussasymk1coeff}
\end{figure}

Next, we numerically solve for the leading-order normalization function $\alpha_0(x,y;k)$ also for $k=1$ as described in Section~\ref{sec:numerical-alpha}.  The real and imaginary parts of the the numerically-computed $\alpha_{0}(x,y;1)$ 
can be seen in Figure~\ref{alphanr}.
\begin{figure}[htb!]
  \includegraphics[width=0.49\textwidth]{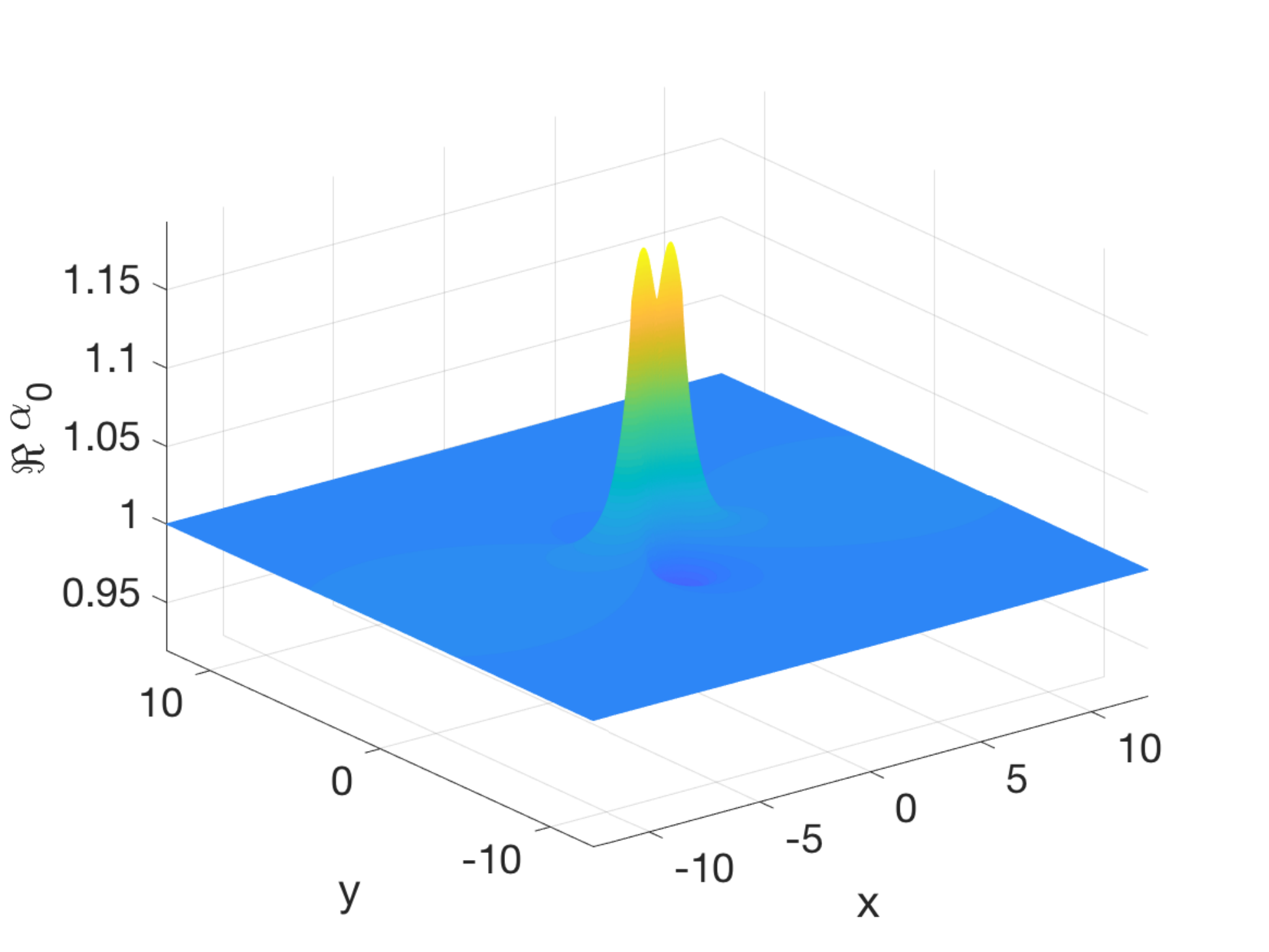}
  \includegraphics[width=0.49\textwidth]{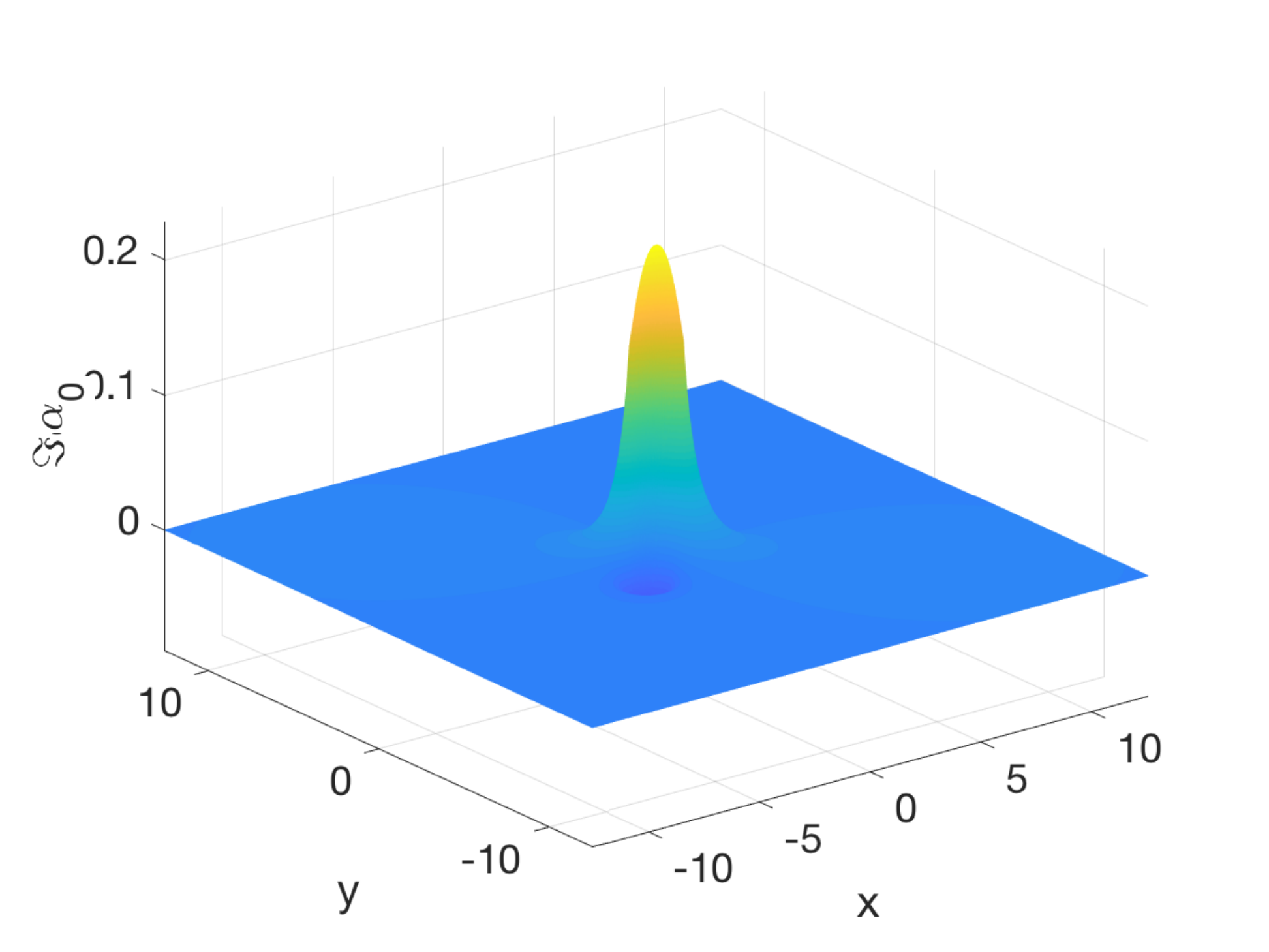}
 \caption{The numerically-computed leading-order normalization function $\alpha_{0}$ for the potential \eqref{nonsym} without radial symmetry at $k=1$. 
Left: real part, right: imaginary part. }
 \label{alphanr}
\end{figure}

Given $f$ and $\alpha_0$, we may again compare the numerical solution to the 
Dirac system \eqref{eq:1.10}--\eqref{eq:psi-norm} to the leading term of the formal WKB 
approximation described in Section~\ref{sec:WKB}.   For the 
potential \eqref{nonsym} at $k=1$ we plot 
$\Delta_{1}$ and $\Delta_2$ defined by \eqref{D1}--\eqref{D2} for four values of $\epsilon$ in Figures~\ref{Delta1nr} and \ref{Delta1nr} respectively.
\begin{figure}[htb!]
  \includegraphics[width=\textwidth]{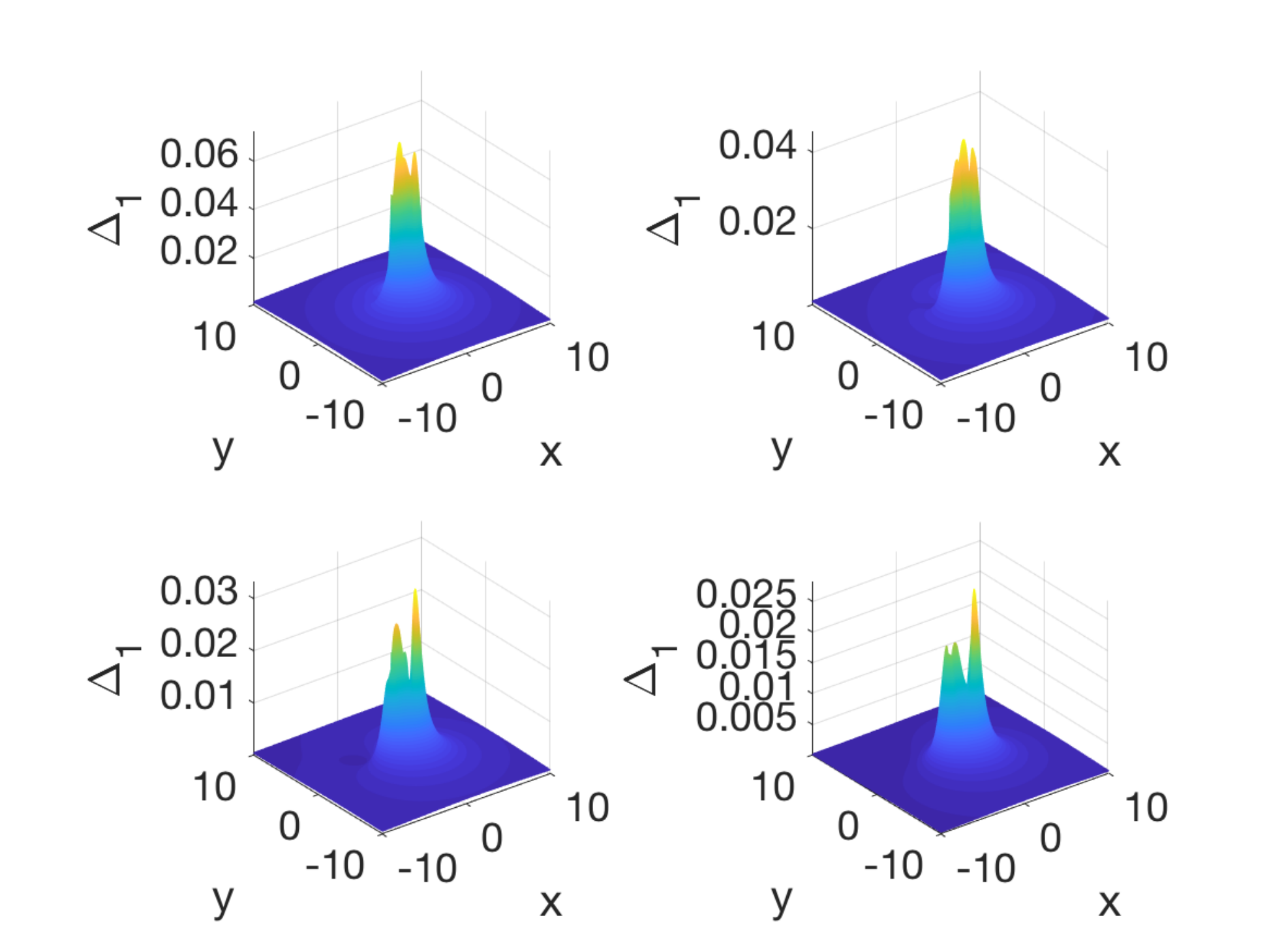}
 \caption{The quantity $\Delta_{1}$ of \eqref{D1} for the 
 potential \eqref{nonsym} with $k=1$ for $\epsilon=1/2,1/4,1/8,1/16$.}
 \label{Delta1nr}
\end{figure}
\begin{figure}[htb!]
  \includegraphics[width=\textwidth]{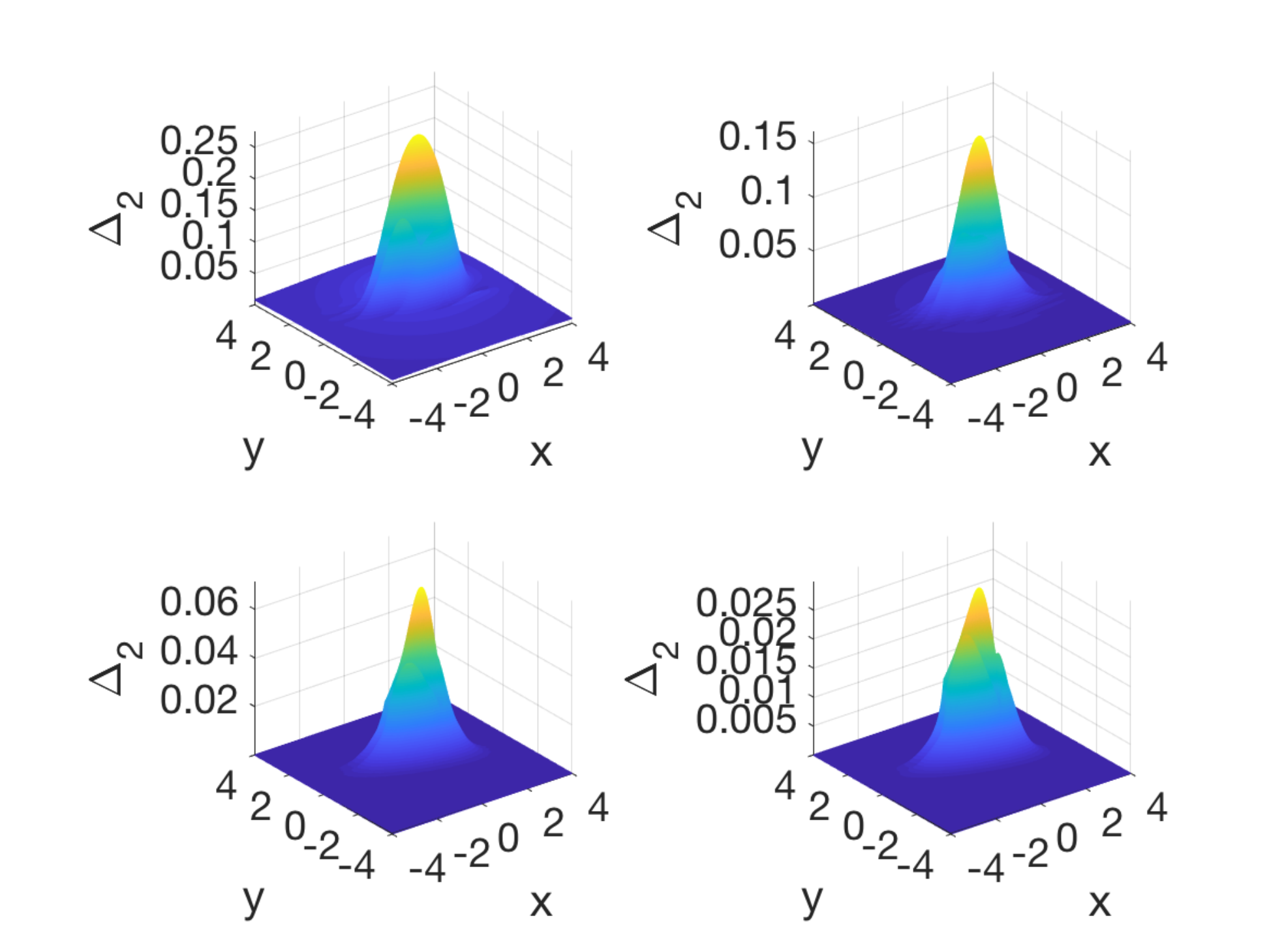}
 \caption{The quantity $\Delta_{2}$ of \eqref{D2} for the
 potential \eqref{nonsym} with $k=1$ for $\epsilon=1/2,1/4,1/8,1/16$. }
 \label{Delta2nr}
\end{figure}

It is clear that the numerical treatment of potentials lacking radial symmetry is 
considerably more challenging than for radially symmetric potentials such as the Gaussian considered in Section~\ref{sec:numerical-Gaussian}. Thus the 
numerical errors for small values of $\epsilon$ are larger, and we would 
need access to parallel computers in order to get the same accuracy 
as in the Gaussian case for a given small $\epsilon$. Nonetheless we computed the quantities 
$\Delta_{1}$ and $\Delta_{2}$ of \eqref{D1} and \eqref{D2} 
respectively for the same values of $\epsilon$ as in the Gaussian case. In 
Figure~\ref{Delta12nr} we plot the $L^\infty(\mathbb{R}^2)$-norms of these quantities for the potential 
\eqref{nonsym} for various values of $\epsilon$ and compare the data in a log-log plot with lines of slope $1$, which would correspond to the $O(\epsilon)$ relative error predicted by Conjecture~\ref{conjecture:WKB}. Obviously the somewhat surprising good agreement for 
values of $\epsilon\approx 1$ observed in the Gaussian case is not present here, 
and for small values of $\epsilon$ the above mentioned resolution 
problems in the solution of the Dirac system \eqref{eq:1.10}--\eqref{eq:psi-norm} are 
visible. Nonetheless compatibility with the conjectured scaling 
proportional to $\epsilon$ can be recognized. 
\begin{figure}[htb!]
  \includegraphics[width=0.49\textwidth]{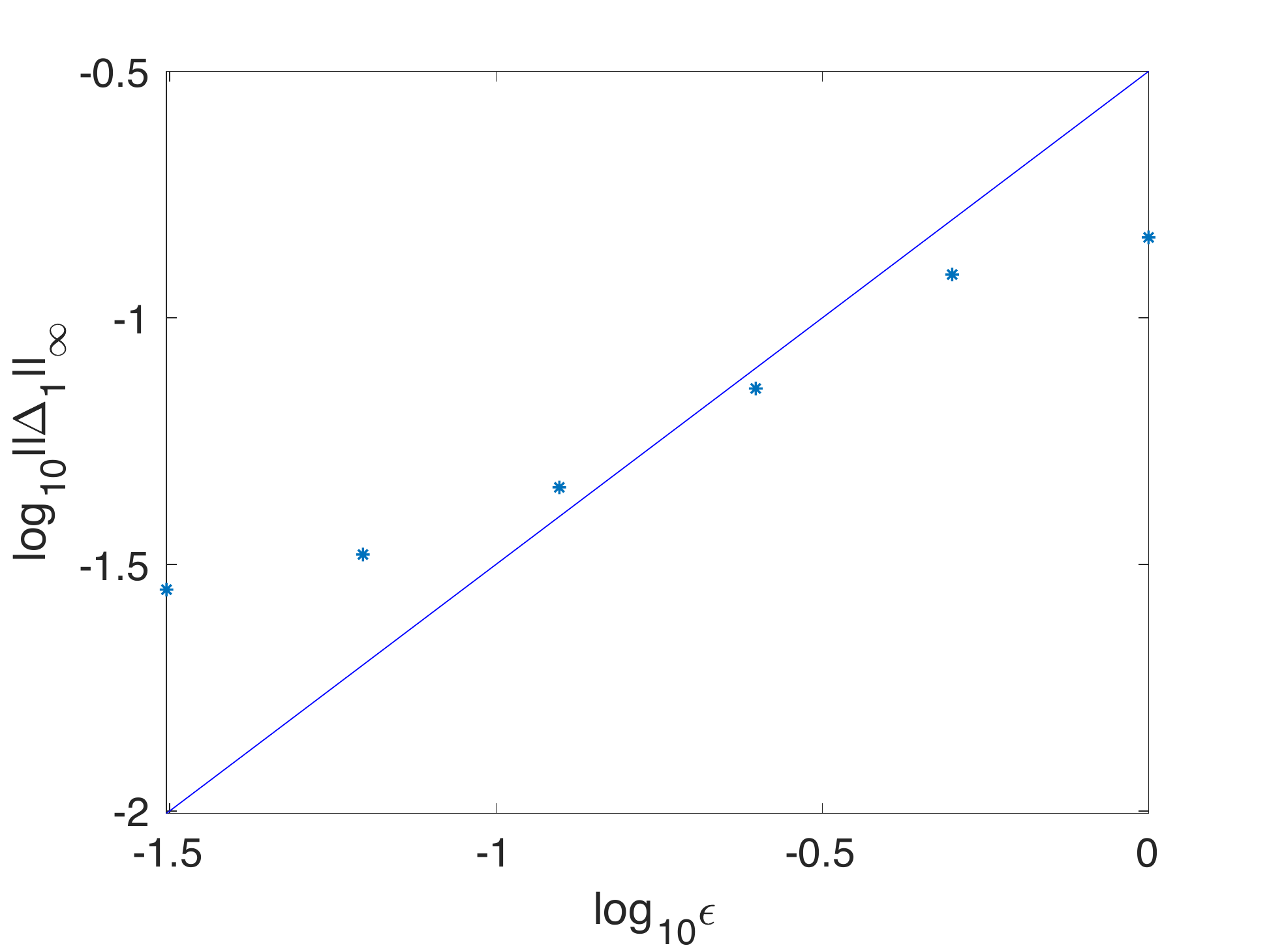}
  \includegraphics[width=0.49\textwidth]{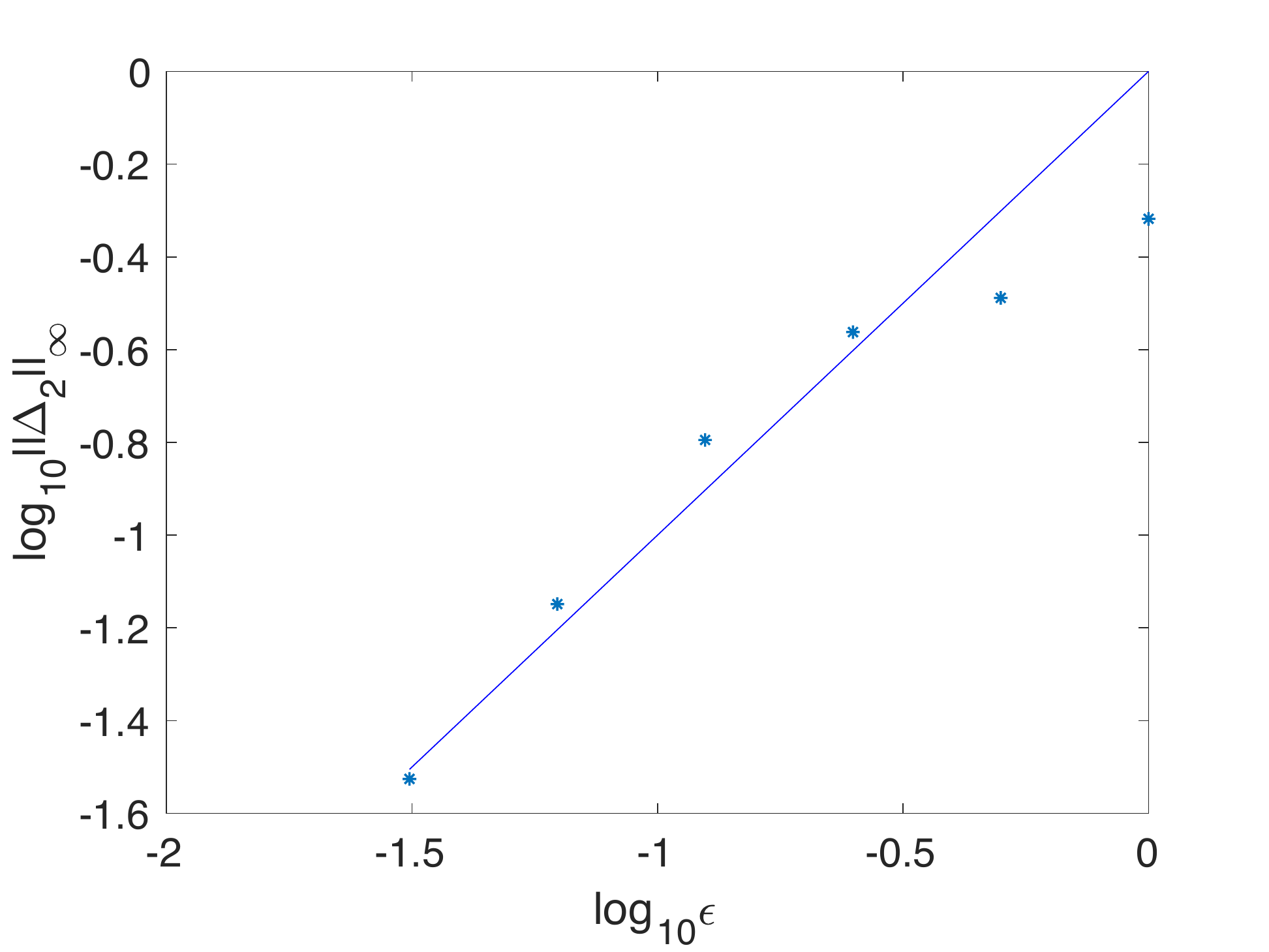}
 \caption{Dependence of $\|\Delta_{1}\|_{\infty}$ (left panel) and 
 $\|\Delta_{2}\|_{\infty}$ (right panel) on $\epsilon$, together 
 with lines with slope $1$. }
 \label{Delta12nr}
\end{figure}
\emph{Thus, our numerical computations also confirm Conjecture~\ref{conjecture:WKB} for certain potentials outside the class of radially-symmetric functions.}  We leave the numerical study of potentials $A(x,y)\ee^{\ii S(x,y)/\epsilon}$ for which $S(x,y)\not\equiv 0$ and the investigation of Conjecture~\ref{conjecture:WKB} in such cases for the future.

\end{document}